\documentclass[12pt]{article}
\usepackage[left=1in,top=1in,right=1in,bottom=1in,head=.1in,nofoot]{geometry}

\setlength{\footskip}{24pt} %
\usepackage{setspace,url,bm,amsmath} %

\usepackage{titlesec} 
\titlelabel{\thetitle.\quad} %
\titleformat*{\section}{\bf\Large\center} %

\usepackage{graphicx} %
\usepackage{bbm}
\usepackage{latexsym}
\usepackage{caption}
\usepackage[margin=20pt]{subcaption}
\usepackage{hyperref}
\usepackage{booktabs}
\usepackage{multirow}
\usepackage{enumitem}
\usepackage[table]{xcolor}

\newcommand{\GG}[1]{}
\usepackage{amsthm}
\usepackage{amssymb}
\usepackage{amsmath}
\usepackage{color}

\usepackage{comment}
\theoremstyle{definition}

\newtheorem*{theorem*}{Theorem}
\newtheorem{theorem}{Theorem}
\newtheorem*{rmk*}{Remark}
\newtheorem{rmk}{Remark}

\newtheorem{lemma}{Lemma}

\newtheorem{condition}{Condition}

\newtheorem{definition}{Definition}

\newtheorem{corollary}{Corollary}
\newtheorem*{corollary*}{Corollary}

\def\pr{\mathbb{P}}

\def\sta{\star}

\usepackage{natbib} %
\bibpunct{(}{)}{;}{a}{}{,} %

\usepackage{etoolbox} %
\apptocmd{\sloppy}{\hbadness 10000\relax}{}{} %

\usepackage{color}
\usepackage{listings}

\newcommand{\lxr}{\color{black}}

\def\ind{\begin{picture}(9,8)
         \put(0,0){\line(1,0){9}}
         \put(3,0){\line(0,1){8}}
         \put(6,0){\line(0,1){8}}
         \end{picture}
        }

\def\Var{\text{Var}}

\def\convergeas{\stackrel{\textup{a.s.}}{\longrightarrow}}
\def\converged{\stackrel{d}{\longrightarrow}}

\def\iidsim{\stackrel{i.i.d.}{\sim}}

\def\I{\mathbbm{1}}
\def\E{\mathbb{E}}

\def\bs{\bm}

\def\obs{\text{obs}}

\allowdisplaybreaks

\def\vecge{\succcurlyeq}
\def\vecle{\preccurlyeq}
\usepackage{textcomp}

\def\thmadj{}
\def\simu{\text{sim}}

\usepackage{algorithm}
\usepackage{algorithmic}

\usepackage{soul}

\def\Unif{\text{Unif}}

\usepackage[textsize=scriptsize, textwidth = 2cm, shadow]{todonotes}

\RequirePackage[normalem]{ulem}

\def\add{\color{black}}

\def\DW{\color{black}}

\usepackage{tikz}
\usetikzlibrary{shapes.geometric, arrows}
\tikzstyle{io} = [trapezium, trapezium left angle=70, trapezium right angle=110, minimum width=1cm, minimum height=1cm, text centered, draw=black,  trapezium stretches=true, thick]
\tikzstyle{process} = [rectangle, minimum width=3cm, minimum height=1cm, text centered, draw=black, thick]
\tikzstyle{decision} = [diamond, minimum width=1cm, minimum height=1cm, text centered, draw=black, thick]
\tikzstyle{arrow} = [thick,->,>=stealth]

\usepackage{subcaption}

\usepackage{accents}
\newcommand{\dbtilde}[1]{\accentset{\approx}{#1}}
\newcommand{\vardbtilde}[1]{\tilde{\raisebox{0pt}[0.85\height]{$\tilde{#1}$}}}

\DeclareFontFamily{OT1}{pzc}{}
\DeclareFontShape{OT1}{pzc}{m}{it}{<-> s * [1.10] pzcmi7t}{}
\DeclareMathAlphabet{\mathpzc}{OT1}{pzc}{m}{it}

\interfootnotelinepenalty=1

\begin{document}

\title{\bf 
Sensitivity Analysis for Quantiles of Hidden Biases in Matched Observational Studies
}
\author{
Dongxiao Wu and Xinran Li
\footnote{
Dongxiao Wu is Doctoral Candidate, Department of Statistics, University of Illinois at Urbana-Champaign, Champaign, IL 61820 (e-mail: \href{mailto:dw12@illinois.edu}{dw12@illinois.edu}). 
Xinran Li is Assistant Professor, Department of Statistics, University of Chicago, Chicago, IL 60637 (e-mail: \href{mailto:xinranli@uchicago.edu}{xinranli@uchicago.edu}). 
}
}
\date{}
\maketitle

\begin{abstract}
\onehalfspacing
Causal conclusions 
from observational studies 
may be sensitive to unmeasured confounding. In such cases, a sensitivity analysis is often conducted, 
which
tries to infer 
the minimum amount of hidden biases 
or the minimum strength of unmeasured confounding
needed in order to explain away the observed association between treatment and outcome.  
If the needed bias is large, 
then the treatment is likely to have significant effects.  
The Rosenbaum sensitivity analysis is a modern 
approach for conducting sensitivity analysis in matched observational studies. It investigates what magnitude the maximum of hidden biases from all matched sets needs to be in order to explain away the observed association.  
However, such a sensitivity analysis can be overly conservative and pessimistic, especially when 
investigators 
suspect 
that some matched sets may have exceptionally large hidden biases. 
In this paper, we generalize 
Rosenbaum's framework to conduct sensitivity analysis on 
quantiles of hidden biases from all matched sets, 
which are more robust than the maximum.  
Moreover, 
the proposed sensitivity analysis is simultaneously valid across all quantiles of hidden biases and is thus a free lunch added to the conventional 
sensitivity analysis.
The proposed approach works for general outcomes, general matched studies and general test statistics. 
In addition, 
we demonstrate that the proposed sensitivity analysis also works for bounded null hypotheses 
when 
the test statistic satisfies certain properties.
An R package implementing the proposed approach 
is 
available online. 
\end{abstract}

{\bf Keywords}: 
causal inference, potential outcome, 
unmeasured confounding, randomization test, 
bounded null hypothesis

\onehalfspacing

\section{Introduction}
\subsection{Observational studies, confounding and sensitivity analysis}

Randomized experiments are the gold standard for drawing causal inference, under which all potential confounders are balanced on average. 
However, in observational studies, we do not have control of the treatment assignment, 
and the choice of it is often based on the subjects themselves. Thus, different treatment groups may differ significantly in some covariates, 
making them not directly comparable 
{\lxr when}
inferring causal effects. 
Various approaches have been proposed to adjust for covariate imbalance between different treatment groups 
\citep[see, e.g.,][]{Imbens15a, rosenbaum2010book, robins2020}. 
{\lxr However,}
such adjustments can generally {\lxr only} be performed for observed covariates.
Inevitably, one may be concerned that the unmeasured confounding will drastically change the inferred causal conclusions. 

In the presence of unmeasured confounding, causal effects are generally not identifiable. 
A sensitivity analysis is often conducted, 
which asks what amount of unmeasured confounding is needed 
in order 
to explain away the observed treatment-outcome association, 
assuming that the treatment has no causal effect. 
The idea of sensitivity analysis dates back to 
\citet{cornfield1959smoking}. 
They demonstrated that, 
assuming smoking has no effect on lung cancer and 
to attribute the observed 
smoking-cancer
association 
to a binary confounder,  
the relative risk of smoking on this confounder must be 
no less than 
that of smoking on lung cancer.  
If hidden confounding of such strength is not likely to exist, then smoking is likely to cause lung cancer. 
There have been many efforts generalizing Cornfield’s sensitivity analysis.
We mention a few below; see \citet{liu2013introduction}, \citet{Richardson2014}, \citet{dingli2018} and references therein for a more comprehensive review. 
First, 
\citet{rubinsen1983} and \citet{imbens2003} conducted model-based sensitivity analysis involving an unobserved 
confounder for both treatment assignment and potential outcome models. 
Second, 
\citet{Rosenbaum1987, Rosenbaum02a} proposed sensitivity analysis for matched observational studies with bounds on the odds ratio of 
treatment probabilities 
for units within 
each 
matched set. 
Third, 
\citet{Vanderweele2014, ding2016sensitivity} generalized Cornfield's inequality 
by allowing a categorical unmeasured confounder 
and extended it to other measures such as risk difference. 
Fourth, 
\citet{robins2000} and \citet{Franks2020} 
considered sensitivity analysis using the contrast of potential outcome distributions between treated and control units. 

\vspace{-0.5ex}
\subsection{Matched observational studies and Rosenbaum's sensitivity analysis}\label{sec:match_Rosen_sen}

We focus on matched observational studies with two treatment arms, 
{\lxr where}
matching {\lxr is used}
to adjust for the imbalance of observed covariates.  
When there is no unmeasured confounding and matching is exact, 
units within each matched set have the same 
treatment probability. 
In this case, the matched study reduces to a matched randomized experiment, and Fisher's randomization test 
can be used to test, say, the null hypothesis 
of no treatment effect 
for any unit \citep{Fisher35a}. 
However, in practice,
we can rarely rule out the existence of unmeasured confounding, and units within each matched set, although similar in observed covariates, 
{\lxr may}
have rather different 
unmeasured confounders. 
Thus, 
randomization tests pretending completely randomized treatment assignments within each matched set generally provide 
biased or invalid 
inferences for causal effects.

\citet{Rosenbaum02a} proposed an assumption-lean sensitivity analysis framework for matched observational studies. 
He measures the hidden bias or strength of unmeasured confounding 
for each matched set by the maximum odds ratio of  
treatment probabilities 
for any two units, and 
assumes that the hidden biases for all matched sets are bounded by some constant $\Gamma_0$.  
He then constructs valid 
$p$-values for testing the sharp null hypothesis of no causal effect under each value of 
$\Gamma_0$, and investigates how the $p$-value varies with 
$\Gamma_0$. 
Usually, the maximum $\Gamma_0$, denoted by $\hat{\Gamma}_0$, such that the $p$-value is significant at, say, $5\%$ level, will be reported. 
If the hidden biases in these matched sets are unlikely to exceed $\hat{\Gamma}_0$, then the causal effects are likely to be significant. 
As discussed later,
$\hat{\Gamma}_0$ is actually a $95\%$ lower confidence limit for the maximum hidden bias from all matched sets, assuming that the treatment has no causal effect. 
This is analogous to \citet{cornfield1959smoking}.

Rosenbaum's sensitivity analysis considers mainly a uniform bound on the hidden biases in all matched sets,
and thus 
focuses on the maximum hidden bias from all matched sets. 
However, in practice, the hidden bias is likely to vary across sets, 
and 
such a sensitivity analysis 
may lose power if one suspects that some sets have exceptionally large 
hidden biases. 
As discussed in \citet{HS2017} and \citet{FH2017}, 
the sensitivity analysis on maximum hidden bias will be conservative if the worst-case hidden bias does not hold in every matched pair, 
and will yield an unduly pessimistic perception of a study's robustness to hidden confounding if rare cases of extreme hidden biases are suspected. 
To overcome this issue, 
they proposed sensitivity analysis for the average hidden bias from matched pairs.
Instead, we 
will 
generalize the conventional 
sensitivity analysis 
to quantiles of hidden biases, 
which provides 
a new way to mitigate its overly conservative issue.  
Our proposal is closely related to \citet[][Section 4]{Rosenbaum1987}, who considered 
cases where 
hidden biases of most matched pairs are bounded by some constant while the remaining can be unbounded.

\subsection{A motivating example}\label{sec:motivate}

We consider a matched observational study consisting of 512 matched sets, each containing one smoker and two matched nonsmokers, from the 2005-2006 US National Health and Nutrition Examination Survey.
We are interested in the causal effect of smoking on the 
{\lxr levels}
of cadmium ($\mu$g/l) and lead ($\mu$g/dl) in the blood; see Section \ref{sec:application} for details. 
Many studies have shown {\lxr associations} between the levels of these heavy metals and the risk of cancer and blood pressure elevation \citep[see, e.g.,][]{an2017association,oh2014blood,staessen1984urinary}.

We consider the effect of smoking on the blood cadmium level.
Assuming matching has 
taken into account all the confounding, the $p$-value 
from the randomization test for 
Fisher's null of no effect is almost zero, indicating a significant effect of smoking. 
However, we may suspect that such a significant smoking-cadmium association is due to unmeasured confounding. 
We then conduct Rosenbaum's sensitivity analysis.  
When the 
maximum hidden bias is 
bounded by {\DW $82.44$}, the $p$-value for the sharp null of no effect 
is significant at the $5\%$ level; when the maximum hidden bias
exceeds {\DW $82.44$}, the $p$-value becomes insignificant. 
Such an evidence is generally considered to have a high degree of insensitivity to unmeasured confounding. 

However, although it is unlikely for all matched sets or a moderate proportion of them to have hidden biases greater than {\DW $82.44$}, 
it 
{\lxr is} 
possible that a 
{\lxr small}
proportion of the sets have hidden biases exceeding {\DW $82.44$}. 
{\lxr In this case,}
the maximum hidden bias may not be representative of the majority of 
the matched sets, 
{\lxr rendering}
the conventional analysis 
{uninformative}.
For example, if one 
unit 
is almost certain to smoke due to, say, unmeasured 
high peer pressure
\citep{HS2017}, 
then the corresponding matched set will have infinite hidden bias.
In this case, the conventional sensitivity analysis loses power to detect significant causal effect 
and we will conclude that the results from the study are sensitive to unmeasured confounding. 

\begin{table}
    \caption{Sensitivity analysis for the effect of smoking. 
    The 1st and 2nd columns show bounds on hidden biases and 
    the 
    corresponding 
    $p$-values for testing Fisher's null of no effect. The 3rd column gives conditions under which the $p$-value becomes insignificant.
    {\add The conclusions of the study are sensitive only if all conditions in the 3rd column are likely to hold simultaneously.} 
    }
    \label{tab:motivate}
    \centering
    \resizebox{0.9\columnwidth}{!}{%
    \begin{tabular}{rrr}
        \toprule
        Constraints on Hidden biases & $p$-value &   conclusions of the study are sensitive  \\
         & & if all of the following are likely to be true 
        \\
        \midrule
        the maximum hidden bias $\le$  {\DW $82.44$} &  {\DW $0.04999661$} & 
        at least one matched set has bias $>$ {\DW $82.44$}\\
        the 52nd largest hidden bias $\le$ {\DW $72.52$} & {\DW $0.04999786$} &  
        at least $10\%$ of matched sets have biases $>$ {\DW $72.52$} \\
        the 154th largest hidden bias $\le${\DW $46.90$} & {\DW $0.04995089$} & 
        at least $30\%$ of matched sets have biases $>$ {\DW $46.90$}\\
        the 256th largest hidden bias $\le$ {\DW $26.88$} & {\DW $0.04989957$} & 
        at least $50\%$ of matched sets have biases $>$ {\DW $26.88$}\\
        the 359th largest hidden bias $\le$ {\DW $11.66$} & {\DW $0.04985813$} & 
        at least $70\%$ of matched sets have biases $>$ {\DW $11.66$} \\
        \bottomrule
    \end{tabular}%
    }
\end{table}

We
generalize Rosenbaum's sensitivity analysis and consider bounding quantiles of hidden biases instead of merely the maximum. 
In particular, we show that the $p$-value for the null of no effect is significant at $5\%$ level 
when the 52nd largest hidden bias from all sets is bounded by {\DW $72.52$}
and when the 359th largest hidden bias is bounded by {\DW$11.66$}, respectively. 
Equivalently, 
to attribute the observed smoking-cadmium association to unmeasured confounding, at least $52/512 \approx 10\%$ of the matched sets need to have biases greater than {\DW $72.52$}, 
and 
at least  $359/512 \approx 70\%$ of the matched sets need to have biases greater than {\DW $11.66$}. 
Table \ref{tab:motivate} also shows the sensitivity analysis for other quantiles of hidden biases. 
These additional results on quantiles of hidden biases strengthen the evidence for a causal effect of smoking on the blood cadmium level, and they are
not sensitive to extreme hidden biases in a small proportion of the matched sets, which 
mitigates the limitation of the conventional sensitivity analysis.

Given the rich results from the proposed sensitivity analysis (see, e.g., Table \ref{tab:motivate}), 
one may be concerned about which quantiles to focus on and how to interpret the results from the analyses based on multiple quantiles of hidden biases.  
In practice, 
we suggest investigators to conduct sensitivity analyses for all quantiles of hidden biases, because they are simultaneously valid without the need of any adjustment due to multiple analyses.
Taking Table \ref{tab:motivate} as an example,
to attribute the observed smoking-cadmium association to unmeasured confounding, we need to believe that, with $95\%$ confidence level,  all the statements in the last column of Table \ref{tab:motivate} hold simultaneously, 
i.e., there are at least one, $10\%$, $30\%$, $50\%$ and $70\%$ of the matched sets having hidden biases greater than {\DW $82.44, 72.52,46.90,26.88$, and $11.66$}, respectively.
Equivalently, {\add the conclusions of the study are sensitive only if the investigator believes that all the statements in the last column of Table \ref{tab:motivate} are likely to hold simultaneously,} i.e., there are small proportions of matched sets having huge hidden biases and also moderate proportions of sets having moderate hidden biases.
Therefore, 
our 
sensitivity 
analysis on all quantiles of hidden biases 
is 
actually
a free lunch added to the conventional sensitivity analysis. 

\subsection{Our contribution}
We propose a new approach to {\lxr address} the 
{\lxr overly conservative}
issue of the conventional sensitivity analysis. 
Instead of considering only the maximum hidden bias, 
we propose sensitivity analysis based on quantiles of hidden biases from all matched sets, which can be more robust in the presence of heterogeneous and extreme hidden biases.  
{\lxr In particular, we consider the constraint that the $(1-\beta)$th quantile of hidden biases is bounded by a constant for a given $\beta \in [0,1]$. This constraint is equivalent to stating that at least $(1-\beta)$ proportion of matched sets have hidden biases within this bound, while the remaining $\beta$ proportion can have unbounded hidden biases.}
Our proposal is indeed 
{\lxr in}
the same spirit as \citet[][Section 4]{Rosenbaum1987}
for matched pair studies. We generalize it to matched observational studies allowing more than one control (or treated) units and varying set sizes. 
Moreover, 
we show that the sensitivity analysis 
{\lxr across}
all quantiles of hidden biases can be simultaneously valid. 

We first construct (asymptotic) valid $p$-values
for testing any sharp null hypothesis with bounds on a certain quantile of hidden biases from all matched sets. 
The proposed approach 
works for general outcomes, general test statistics and general matched studies. 
It 
is also computationally efficient, where the involved optimizations have closed-form solutions.

We then provide an equivalent view of Rosenbaum's sensitivity analysis framework and 
demonstrate that our sensitivity analysis on all quantiles of hidden biases
is simultaneously valid. 
Specifically, the test for the null of no causal effect under a certain constraint on the hidden biases is equivalent to the test for whether
the true hidden biases satisfy the constraint, 
assuming that the treatment has no causal effect. 
So 
finding the minimum amount of hidden biases 
that renders the $p$-value insignificant
is equivalent to 
performing {\lxr a standard test inversion} to construct confidence sets for the true hidden biases.
For example, 
the cutoff $\hat{\Gamma}_0$ from the conventional sensitivity analysis is also an (asymptotic) lower confidence limit for the maximum of true hidden biases from all matched sets, assuming that the null of no effects holds. 
Our sensitivity analysis for quantiles of hidden biases can also be viewed as  
constructing confidence sets for quantiles of the true hidden biases. We show that the confidence sets for all quantiles of true hidden biases are simultaneously valid, i.e., the intersection of them can be viewed as an $I$-dimensional confidence set for all the true hidden biases. 
Therefore, our sensitivity analysis is 
a free lunch added to the conventional Rosenbaum's sensitivity analysis, 
and it helps yield more robust and powerful causal conclusions. 

{\lxr 
Finally, we extend the proposed sensitivity analysis for quantiles of hidden biases, including the conventional one for the maximum hidden bias, to test bounded null hypotheses \citep{li2020quantile} that, for example, assume the treatment is not harmful for every unit.}

The paper proceeds as follows. 
Section \ref{sec:notation} introduces the framework and reviews Rosenbaum's sensitivity analysis. 
Section \ref{sec:pair} studies 
sensitivity analysis for quantiles of hidden biases in matched pair studies.
Section \ref{sec:set} 
extends to 
general matched studies. 
Section \ref{sec:application} applies the proposed method to real data. 
Section \ref{sec:discussion} concludes 
with extensions and discussions.

\section{Framework and Notation}\label{sec:notation}
\subsection{Matched sets, potential outcomes and treatment assignments}
Consider a matched observational study with $I$ matched sets, 
where 
each set $i$ contains $1$ treated unit and $n_i-1$ control units; the approach can be extended straightforwardly to cases where each set has either one treated or one control unit
\citep[][pp. 161--162]{Rosenbaum02a}.
We invoke the potential outcome framework \citep{Neyman23a, Rubin:1974wx} 
and assume the stable unit treatment value assumption \citep{rubin1980randomization}.  
For each unit $j$ in matched set $i$, 
let $Y_{ij}(1)$ and $Y_{ij}(0)$ denote the treatment and control potential outcomes, 
$\bs{X}_{ij}$ denote the pretreatment covariate vector,
and 
$Z_{ij}$ denote the treatment assignment, 
where $Z_{ij}=1$ if the unit 
receives 
treatment and 0 otherwise. 
The corresponding observed outcome 
is then 
$Y_{ij} = Z_{ij} Y_{ij}(1) + (1-Z_{ij}) Y_{ij}(0)$. 
Let 
$\bs{Y}(1) = (Y_{11}(1), Y_{12}(1), \ldots, Y_{In_I}(1))^\top$, 
$\bs{Y}(0) = ( Y_{11}(0), Y_{12}(0), \ldots, Y_{In_I}(0) )^\top$, 
$\bs{Z} = ( Z_{11}, Z_{12}, \ldots, Z_{In_I} )^\top$
and 
$\bs{Y} = ( Y_{11}, Y_{12}, \ldots, Y_{In_I})^\top$ 
denote the treatment potential outcome, control potential outcome, treatment assignment and observed outcome vectors 
for all $N = \sum_{i=1}^I n_i$ units. 
Let {\lxr $\mathcal{F} = \{(Y_{ij}(1), Y_{ij}(0), \bs{X}_{ij}): 1\le i \le I, 1\le j \le n_i\}$}, and let  
$\mathcal{Z} = \{\bs{z}\in \{0,1\}^N: \sum_{j=1}^{n_i} z_{ij}=1, 1\le i\le I\}$ denote the set of treatment assignments such that each matched set contains exactly one treated unit.

Throughout the paper, 
we will conduct the 
finite population inference, 
also called the
randomization-based inference \citep{lidingclt2016, ShiLi2024}. 
That is, all the potential outcomes and covariates
are viewed as fixed constants (or equivalently being conditioned on), 
and the randomness in the observed data 
comes solely from the random assignment $\bs{Z}$. 
Thus, 
the distribution of $\bs{Z}$, often called the treatment assignment mechanism, 
is crucial for statistical inference, 
as discussed below. 

\subsection{Treatment assignment mechanism and hidden biases due to confounding}\label{sec:mechanism}

If all confounding has been considered in matching and matching is exact, 
then units within the same matched set have the same 
probability to receive treatment, 
under which we can analyze the data as if they were from a matched randomized experiment. 
However, in practice, matching may be inexact, 
and more importantly, there may exist
unmeasured confounding. 
We will therefore consider sensitivity analysis to assess the robustness of causal conclusions to hidden biases.  
Below we first characterize the hidden bias in the treatment assignment within each matched set.

{\lxr We assume that, conditional on all potential outcomes, covariates, and $\bs{Z} \in \mathcal{Z}$ (abbreviated as conditional on $(\mathcal{F}, \mathcal{Z})$), the treatment assignments are mutually independent across all matched sets.
Consequently, the treatment assignment $(Z_{i1}, Z_{i2}, \ldots, Z_{in_i})$ within each matched set $i$ follows a multinomial distribution
and the distribution of the whole assignment vector $\bs{Z}$ has the following equivalent forms:
\begin{align}\label{eq:true_treat_assign_mech}
    \pr\left( 
    \bs{Z} = \bs{z} \mid \mathcal{F}, \mathcal{Z}
    \right)
    & 
    = 
    \prod_{i=1}^I  \prod_{j=1}^{n_i} (p_{ij}^\star)^{z_{ij}}
    = 
    \prod_{i=1}^I 
    \frac{\exp(\gamma_i^{\sta} \sum_{j=1}^{n_i} z_{ij} u^\sta_{ij})}{\sum_{j=1}^{n_i} \exp(\gamma_i^{\sta} u^\sta_{ij})}. 
\end{align}
Below we explain the notation in \eqref{eq:true_treat_assign_mech}. 
For any $i$ and $j$, $p_{ij}^\star$ denotes the probability that unit $j$ in matched set $i$ receives the treatment, given that the set has exactly one treated unit. 
Obviously, $\sum_{j=1}^{n_i} p_{ij}^\star = 1$ for all $i$. 
The last equivalent form in \eqref{eq:true_treat_assign_mech} 
aligns with the typical form in Rosenbaum’s sensitivity analysis; see, e.g., \citet[][Section 2.2]{Rosenbaum2007m} for the derivation.
Specifically, $\gamma_i^\star = \log \Gamma_i^\star$ with
\begin{align}\label{eq:bias}
    \Gamma_i^\star = 
    \max_{1\le j \le n_i} p_{ij}^\star/ \min_{1\le k\le n_i} p_{ik}^\star
    \ge 1, 
\end{align}
and $u_{ij}^\star = ( \log p_{ij}^\star - \min_{1\le k\le n_i} \log p_{ik}^\star )/\gamma_i^\star \in [0,1]$ with $\min_{j} u_{ij}^\sta = 0$ and $\max_{j} u_{ij}^\sta = 1$ for all $i$\footnote{
Note that, 
if $\min_{1\le k\le n_i} p_{ik}^\star = 0$ for some $i$, then the corresponding $\Gamma_i^\star, \gamma_i^\star, u_{ij}^\star$ may not be well-defined. 
For descriptive convenience, we will let $\Gamma_i^\star = \gamma_i^\star = \infty$ and assume that there still exist $u_{ij}^\star$'s in $[0,1]$ such that \eqref{eq:true_treat_assign_mech} holds.}.} 
Obviously, if $\Gamma_i^\sta= 1$, then all units within the matched set $i$ have the same chance to receive the treatment. Otherwise, 
the chance depends on $u_{ij}^\sta$'s, where  
units with larger $u_{ij}^\sta$'s are more likely to receive the treatment.
Moreover, some units are $\Gamma_i^\star$ times more likely to receive the treatment than other units within the same matched set. 
Therefore, 
we can understand $\Gamma_i^\star$ as the hidden bias (i.e., the strength of hidden confounding) for matched set $i$, 
and intuitively view the $u_{ij}^\sta$'s as the hidden confounders associated with all the units.

\begin{rmk}\label{rmk:match}
The model in \eqref{eq:true_treat_assign_mech} can be justified when we have independently sampled units in the original data prior to matching and matching is exact, under which the hidden bias for each matched set $i$ is equivalently the maximum odds ratio between the 
treatment probabilities 
of any two units in the set, i.e., 
$\Gamma_{i}^{\sta} = \max_{1\le j\le n_i} \pi_{ij}/(1-\pi_{ij})/\{ \min_{1\le k \le n_i}\pi_{ik}/(1-\pi_{ik}) \}$ for all $i$, 
where $\pi_{ij} = \Pr(Z_{ij} = 1 \mid \bs{X}_{ij}, Y_{ij}(1), Y_{ij}(0))$ is the 
treatment probability
for unit $j$ in set $i$. 
If the covariates contain all the confounding, i.e., there exists a function $\lambda(\cdot)$ such that $\pi_{ij} = \lambda(\bs{X}_{ij})$ for any $i,j$, 
then $\Gamma_{i}^{\sta}$'s will reduce to $1$, 
and the matched observational study becomes equivalent to a matched randomized experiment. We relegate the details to the supplementary material. 
When matching is inexact, the conditional mutual independence of treatment assignments across all matched sets as in \eqref{eq:true_treat_assign_mech} may not hold, an issue that has been noted by, e.g., \citet{Pashley2021} and \citet{pimentel2022covariate}. 
How to take into account the matching algorithm into Rosenbaum's sensitivity analysis framework is still an open problem, and is beyond the scope of this paper; see \citet{li2024sensitivity} for some recent progress. 
In the paper we will focus on model \eqref{eq:true_treat_assign_mech} for the treatment assignment mechanism of the matched units. 
\end{rmk}

In practice, the actual treatment assignment mechanism (including both true hidden biases $\Gamma_i^\sta$'s and the associated hidden confounding $u_{ij}^\sta$'s in \eqref{eq:true_treat_assign_mech}) is generally unknown. 
Motivated by \eqref{eq:true_treat_assign_mech}, we will
consider the following class of models for the treatment assignments.

\begin{definition}\label{def:sen_model}
The treatment assignment is said to follow a sensitivity model with hidden bias $\bs{\Gamma} = (\Gamma_1, \ldots, \Gamma_I)^\top \in [1,\infty)^I$ and hidden confounding $\bs{u} = (u_{11}, \ldots, u_{In_I}) \in [0,1]^N$, denoted by $\bs{Z} \sim \mathcal{M}(\bs{\Gamma}, \bs{u})$, if and only if 
the distribution of $\bs{Z}$ 
has the following form:
\begin{align}\label{eq:assign_mechanism_u}
\pr(\bs{Z} = \bs{z} \mid 
\mathcal{F}, \mathcal{Z}) = \prod_{i=1}^I \frac{\exp(\gamma_i \sum_{j=1}^{n_i} z_{ij} u_{ij})}{\sum_{j=1}^{n_i} \exp(\gamma_i u_{ij})}, 
\qquad
\text{where $\gamma_i = \log \Gamma_i$ for $1\le i \le I$. }
\end{align}
\end{definition}

In practice, 
we often conduct sensitivity analysis assuming that the true hidden biases are bounded from the above, under which the assignment mechanism must fall in a class of sensitivity models in Definition \ref{def:sen_model}.  
Let $\bs{\Gamma}^{\sta} = (\Gamma^{\sta}_1, \ldots, \Gamma^{\sta}_I)^\top$ be the vector of true hidden biases in all $I$ matched sets. 
For any $\bs{\Gamma} \in \mathbb{R}^I$, 
we say $\bs{\Gamma}^{\sta} \vecle \bs{\Gamma}$ if and only if 
$\Gamma_i^{\sta} \le \Gamma_i$ for all $1\le i \le I$.
From \eqref{eq:true_treat_assign_mech},
we can verify that
$\bs{\Gamma}^\sta \vecle \bs{\Gamma}$ 
if and only if 
there exists $\bs{u} = (u_{11},\ldots, u_{In_I}) \in [0,1]^{N}$ such that $\bs{Z} \sim \mathcal{M}(\bs{\Gamma}, \bs{u})$. 
Thus, we say that the treatment assignment follows a sensitivity model with hidden biases at most $\bs{\Gamma}$ or $\bs{\Gamma}^\sta \vecle \bs{\Gamma}$ if $\bs{Z} \sim \mathcal{M}(\bs{u}, \bs{\Gamma})$ for some $\bs{u}\in [0,1]^N$. 
Similar to the discussion after \eqref{eq:true_treat_assign_mech}, we will allow $\bs{\Gamma}$ to have infinite elements in Definition \ref{def:sen_model}, 
so that the sensitivity model can allow some units to have probability 0 or 1 for receiving treatment.

\subsection{Randomization test and sensitivity analysis}\label{sec:test_rand_sen}

We consider Fisher's null hypothesis of no causal effect: 
\begin{align}\label{eq:H0}
    H_0: Y_{ij}(0)=Y_{ij}(1) \text{ for }  1\le j\le n_i \text{ and } 1\le i \le I 
    \text{ or equivalently } \bs{Y}(0) = \bs{Y}(1). 
\end{align}
We will discuss extensions to other null hypotheses at the end of the paper. 
Under Fisher's $H_0$, we are able to impute the potential outcomes for all units based on the observed data,  
i.e., 
$\bs{Y}(1) = \bs{Y}(0) = \bs{Y}$. 
We then consider testing $H_0$ using a general test statistic of the following form: 
\begin{align}\label{eq:test_stat}
    T = t(\bs{Z}, \bs{Y}(0)) = \sum_{i=1}^I T_i = \sum_{i=1}^I \sum_{j=1}^{n_i} Z_{ij} q_{ij}, 
\end{align}
where $T_i = \sum_{j=1}^{n_i} Z_{ij} q_{ij}$ denotes the test statistic for set $i$, 
and 
$q_{ij}$'s can be arbitrary 
functions of 
$\bs{Y}(0)$. 
The test statistic in \eqref{eq:test_stat} can be quite general and requires only an additive form: it is the summation of some set-specific statistics; 
see, e.g., m-statistics suggested by \citet{Rosenbaum2007m} with details in the supplementary material. 
{\lxr 
Throughout the remaining of the paper, we will conduct inference conditional on $(\mathcal{F}, \mathcal{Z})$, under which the randomness in the test statistic $T$ comes solely from the random  
assignment 
$\bs{Z}$. For conciseness, we will make the conditioning implicit. 
}

In a matched randomized experiment, there is no hidden bias in treatment assignment, i.e., $\bs{\Gamma}^\sta=\bs{1}$, and thus $\bs{Z}$ is uniformly distributed over $\mathcal{Z}$. 
Consequently, the null distribution of $T$ 
under $H_0$ is known exactly, whose tail probability has the following form:
\begin{align*}
    \pr\left( T \ge c \right)
    = 
    \pr\left\{ t(\bs{Z}, \bs{Y}(0)) \ge c \right\}
    = 
    |\{\bs{z} \in \mathcal{Z}: t(\bs{z}, \bs{Y}(0)) \ge c\}|/
    |\mathcal{Z}|
    ,
\end{align*}
where $|\cdot|$ denotes the cardinality of a set. 
A valid $p$-value for testing $H_0$ can then be achieved by evaluating the tail probability at the observed value of the test statistic. 

In a matched observational study, 
however, there may exist hidden confounding such that $\bs{\Gamma}^\sta\vecge \bs{1}$. 
\citet{Rosenbaum02a} 
assumed 
that hidden biases in treatment assignments from all matched sets are bounded by some constant $\Gamma_0$, i.e., $\bs{\Gamma}^\sta \vecle \Gamma_0 \bs{1}$, 
and proposed valid tests for $H_0$ at various values of $\Gamma_0$. 
When $\Gamma_0 > 1$, 
unlike matched randomized experiments,
the null distribution of the test statistic $T$ is no longer known exactly. 
In particular,  
when $\bs{\Gamma}^\sta \vecle \Gamma_0 \bs{1}$, 
$\bs{Z} \sim \mathcal{M}(\Gamma_0 \bs{1}, \bs{u})$ for some unknown hidden confounding $\bs{u}$, 
and the tail probability of $T$ is: 
\begin{align*}
    \pr\left( T\ge c \right) 
    = 
    \pr_{\bs{Z} \sim \mathcal{M}(\Gamma_0 \bs{1}, \bs{u})} \left(t(\bs{Z}, \bs{Y}(0)) \ge c \right)
    =
    \sum_{\bs{z} \in \mathcal{Z}}
    \left[ 
    \I\left\{ t(\bs{z}, \bs{Y}(0)) \ge c \right\}
    \prod_{i=1}^I \frac{\exp(\gamma_0 \sum_{j=1}^{n_i} z_{ij} u_{ij})}{\sum_{j=1}^{n_i} \exp(\gamma_0 u_{ij})} 
    \right], 
\end{align*}
where $\gamma_0 = \log \Gamma_0$. 
\citet{Rosenbaum02a} derived an (asymptotic) upper bound of the tail probability over all possible $\bs{u} \in [0,1]^{N}$, 
whose value evaluated at the observed test statistic 
then provides an (asymptotic) valid $p$-value for testing $H_0$.

In practice, 
we 
investigate how the $p$-value for $H_0$ varies with $\Gamma_0$. 
Note that the class of assignment mechanisms satisfying $\bs{\Gamma}^\sta \vecle \Gamma_0 \bs{1}$ increases with $\Gamma_0$. 
Thus, the $p$-value will also increase with $\Gamma_0$. 
In practice, we often report the maximum value of $\Gamma_0$, denoted by $\hat{\Gamma}_0$, such that the $p$-value is significant, and conclude that the causal effect is robust to hidden biases at magnitude $\hat{\Gamma}_0$ but becomes sensitive when 
hidden biases 
are likely to 
exceed 
$\hat{\Gamma}_0$. 
Intuitively, the larger the $\hat{\Gamma}_0$, the more robust the inferred causal effect 
is to unmeasured confounding. 
Such a value of $\hat{\Gamma}_0$ is also termed as sensitivity value in \citet{senvalue2019}. 

\subsection{Another way to understand Rosenbaum's sensitivity analysis}\label{sec:equiv_sen}

We give another way to understand the sensitivity value $\hat{\Gamma}_0$. 
We can always view a valid $p$-value for testing $H_0$ under the sensitivity model $\bs{\Gamma}^\sta \vecle \Gamma_0 \bs{1}$
equivalently as a valid $p$-value for testing the null hypothesis $\bs{\Gamma}^\sta \vecle \Gamma_0 \bs{1} \Leftrightarrow \max_{1\le i\le I} \Gamma_i^\sta \vecle \Gamma_0$ assuming that $H_0$ holds. 
Thus, $\hat{\Gamma}_0$ is equivalently a $1-\alpha$ lower confidence limit of $\max_{1\le i\le I} \Gamma_i^\sta$ from the 
{\lxr standard}
test inversion, assuming that $H_0$ holds. 
That is, if $H_0$ holds,
then we are $1-\alpha$ confident that at least one  matched set has hidden bias exceeding $\hat{\Gamma}_0$. 
If $\max_{1\le i \le I} \Gamma_i^\sta> \hat{\Gamma}_0$ is not plausible, then $H_0$ is not likely to hold, indicating a significant causal effect. 
Otherwise, 
the observed treatment-outcome association can be explained by a plausible amount of hidden biases and the inferred causal effect is sensitive to hidden confounding. 

\subsection{Quantiles of hidden biases in all matched sets}\label{sec:quantile}

As discussed 
in Section \ref{sec:motivate}, 
the conventional sensitivity analysis 
focuses 
on the maximum hidden bias from all matched sets and is thus sensitive to extreme hidden biases,
which 
deteriorates its power. 
For instance, it is possible that one matched set has an exceptionally large hidden bias while all other matched sets have small hidden biases, under which $\max_{1\le i \le I} \Gamma_i^\sta>\Gamma_0$ is likely to hold even for a relatively large $\Gamma_0$. 
To make it more robust and powerful,  
we generalize the conventional sensitivity analysis to quantiles of hidden biases across matched sets.  
Let $\Gamma^\sta_{(1)} \le \ldots \le \Gamma^\sta_{(I)}$ be the true hidden biases for all matched sets sorted in an increasing order\footnote{
Note that 
any quantile of the hidden biases across all matched sets (or more precisely their empirical distribution)  must be $\Gamma^\sta_{(k)}$ for some $1\le k\le I$.}. 
We will conduct test for 
$H_0$ under the sensitivity model assuming that the hidden bias at rank $k$ is bounded by some constant $\Gamma_{0}$, i.e., 
$\Gamma^\sta_{(k)} \le \Gamma_0$; 
when $k=I$, 
it 
reduces to the conventional sensitivity analysis. 
Define further $I^\sta(\Gamma_0) = \sum_{i=1}^I \I(\Gamma^\sta_i > \Gamma_0)$ as the number of matched sets with hidden biases exceeding $\Gamma_0$. 
Then $\Gamma^\sta_{(k)} \le \Gamma_0$ is equivalent to $I^\sta(\Gamma_0) \le I - k$. 
Therefore, the sensitivity analysis for $\Gamma^\sta_{(k)}$ can equivalently provide sensitivity analysis for $I^\sta(\Gamma_0)$. Again, these measures, $\Gamma^\sta_{(k)}$'s and $I^\sta(\Gamma_0)$'s, can be more robust than the maximum hidden bias.

From Section \ref{sec:equiv_sen}, 
the sensitivity analyses for $\Gamma^\sta_{(k)}$ and $I^\sta(\Gamma_0)$ equivalently infer confidence sets
for 
them 
given that 
$H_0$ holds. 
In practice, it is often desirable to infer $\Gamma^\sta_{(k)}$ and $I^\sta(\Gamma_0)$ for multiple $k$'s and $\Gamma_0$'s. 
As shown later, our sensitivity analysis for all of them is simultaneously valid, 
without the need of any adjustment due to multiple analyses. 
Because the sensitivity analysis for 
$\Gamma_{(I)}^\sta$ is the same as the conventional one,  
the simultaneous analysis for all the quantiles $\Gamma^\sta_{(k)}$'s as well as 
$I^\sta(\Gamma_0)$'s is actually a free lunch added to the conventional 
sensitivity analysis.

\section{Sensitivity Analysis for Matched Pairs}\label{sec:pair}

In this section 
we study sensitivity analysis 
in a matched pair study, i.e., $n_i = 2$ for all $i$. 
We first consider sensitivity models with pair-specific bounds on the hidden biases $\Gamma_i^\sta$'s. 
We then extend to sensitivity models bounding only a certain quantile of the $\Gamma_i^\sta$'s.
Finally, we demonstrate that our sensitivity analyses for all quantiles $\Gamma^\sta_{(k)}$'s and numbers $I^\sta(\Gamma_0)$'s are simultaneously valid over all $1\le k\le I$ and $\Gamma_0 \ge 1$. 
Let $[1, \infty] \equiv [1, \infty) \cup \{\infty\}$. 

\subsection{Sensitivity analysis with pair-specific bounds on hidden biases}\label{sec:pair_specific}

For any given 
$\bs{\Gamma} = (\Gamma_1, \ldots, \Gamma_I)^\top \in [1, \infty]^I$,
we are interested in conducting valid tests for Fisher's null $H_0$ 
under the sensitivity model with hidden biases at most $\bs{\Gamma}$. 
The exact distribution of the treatment assignment as shown in \eqref{eq:true_treat_assign_mech} depends on hidden confounding and is generally unknown. 
To ensure the validity of the test, 
similar to 
Section \ref{sec:test_rand_sen}, 
we seek 
an upper bound of the tail probability of the test statistic under the sensitivity model with $\bs{\Gamma}^\sta \vecle \bs{\Gamma}$. 
By almost the same logic as the conventional sensitivity analysis \citep[][Chapter 4]{Rosenbaum02a}, 
we can derive the sharp upper bound of the tail probability in closed form.

\begin{theorem}\label{thm:pair_vector_gamma}
Assume 
that 
$H_0$ in \eqref{eq:H0} holds. 
If the true hidden bias $\bs{\Gamma}^\sta$ for all pairs is bounded by $\bs{\Gamma}\in [1, \infty]^I$, i.e., $\bs{\Gamma}^\sta\vecle \bs{\Gamma}$, then $T$ is stochastically smaller than or equal to $\overline{T}(\bs{\Gamma})$, i.e., $\pr (T \ge c) \le \pr(\overline{T}(\bs{\Gamma}) \ge c)$ for $c\in \mathbb{R}$, where $\overline{T}(\bs{\Gamma})$ 
follows the distribution of the test statistic $T = t(\bs{
Z}, \bs{Y}(0))$ in \eqref{eq:test_stat} when 
$\bs{Z}
\sim 
\mathcal{M}(\bs{\Gamma}, \bs{u})$
with 
$u_{i1} = 1 - u_{i2} = \I(q_{i1} \ge q_{i2})$ 
for $1\le i \le I$.
\end{theorem}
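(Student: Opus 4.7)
The plan is to reduce the theorem to a one-pair problem by exploiting the independence of treatment assignments across matched pairs, and then to verify within each pair that the proposed worst-case confounder maximizes the tail probability of $T_i$.

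First, I would note that under any assignment mechanism of the form $\mathcal{M}(\bs{\Gamma}^\star, \bs{u}^\star)$ in Definition \ref{def:sen_model}, the vectors $(Z_{i1}, Z_{i2})$ are mutually independent across $i = 1, \ldots, I$, conditional on $(\mathcal{F}, \mathcal{Z})$. The same independence holds for the worst-case mechanism defining $\overline{T}(\bs{\Gamma})$. Thus both $T = \sum_i T_i$ and $\overline{T}(\bs{\Gamma}) = \sum_i \overline{T}_i(\Gamma_i)$ are sums of independent summands. Since stochastic dominance is preserved by convolution of independent distributions (which can be seen by a monotone coupling of each pair independently and then joining them), it suffices to show pair-by-pair that $T_i \stle \overline{T}_i(\Gamma_i)$.

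Next, I fix a pair $i$ with $n_i = 2$. Under the true mechanism, $T_i$ takes the value $q_{i1}$ with probability $p_{i1}^\star$ and the value $q_{i2}$ with probability $1 - p_{i1}^\star$, where by \eqref{eq:bias} the constraint $\Gamma_i^\star \le \Gamma_i$ is equivalent to $p_{i1}^\star \in [1/(1+\Gamma_i),\, \Gamma_i/(1+\Gamma_i)]$ (allowing endpoints $0$ and $1$ when $\Gamma_i = \infty$). Without loss of generality assume $q_{i1} \ge q_{i2}$; the other case is symmetric and corresponds to the other choice of $u_{i1}, u_{i2}$ in the theorem. For such a two-point distribution the map $p_{i1}^\star \mapsto \pr(T_i \ge c)$ is nondecreasing for every $c \in \mathbb{R}$, so the worst case is attained at $p_{i1}^\star = \Gamma_i / (1 + \Gamma_i)$. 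Plugging $u_{i1} = 1$ and $u_{i2} = 0$ into \eqref{eq:assign_mechanism_u} shows that the worst-case mechanism defining $\overline{T}_i(\Gamma_i)$ yields exactly $p_{i1} = e^{\gamma_i}/(1 + e^{\gamma_i}) = \Gamma_i/(1+\Gamma_i)$, so $T_i \stle \overline{T}_i(\Gamma_i)$.

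Finally, I would combine these pair-level dominations, via the convolution-preservation property noted above, to conclude $T \stle \overline{T}(\bs{\Gamma})$, which is the claimed inequality $\pr(T \ge c) \le \pr(\overline{T}(\bs{\Gamma}) \ge c)$ for all $c$. The only mildly delicate point is careful bookkeeping when $\Gamma_i = \infty$, so that $p_{i1}^\star$ may equal $0$ or $1$; in that regime the same monotonicity argument goes through with the convention $\Gamma_i/(1+\Gamma_i) = 1$. There is no real obstacle; the core of the argument is the elementary monotonicity of a two-point distribution in its larger-atom probability, and the rest is organization (independence across pairs plus the trivial preservation of stochastic order under independent sums).
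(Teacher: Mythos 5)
Your proof is correct and follows essentially the same route as the paper: reduce to a single pair by independence and preservation of stochastic order under independent sums (the paper's Lemmas \ref{lemma:inverse cdf} and \ref{lemma:sum_order}, proved by exactly the monotone coupling you invoke), then verify the one-pair worst case by the elementary monotonicity of a two-point distribution's tail probability in the larger-atom probability (the paper's Lemma \ref{lemma:Zq_sum_max}). Your reparametrization of the constraint $\Gamma_i^\star \le \Gamma_i$ as $p_{i1}^\star \in [1/(1+\Gamma_i),\,\Gamma_i/(1+\Gamma_i)]$ is accurate and slightly streamlines the one-pair step compared with the paper's direct case analysis in terms of $(u_1,u_2)$.
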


Theorem \ref{thm:pair_vector_gamma} is a straightforward extension of \citet[][Chapter 4.3]{Rosenbaum02a}
with identical $\Gamma_i$'s. 
Below we give some intuition. 
The test statistic $T$ is the summation of $T_i = Z_{i1} q_{i1} + Z_{i2} q_{i2}$ over all pairs, 
where $q_{i1}$ and $q_{i2}$ are prespecified constants.  
Due to the constraint $\Gamma_i^\sta \le \Gamma_i$, 
$T_i$ equals $q_{i1}$ (or $q_{i2}$) with probability between $1/(1+\Gamma_i)$ and $\Gamma_i/(1+\Gamma_i)$. 
Thus, 
$T_i$ must be stochastically smaller than or equal to the random variable $\overline{T}_{i}(\Gamma_i)$ that equals $\max\{q_{i1}, q_{i2}\}$ with probability $\Gamma_i/(1+\Gamma_i)$ and $\min\{q_{i1}, q_{i2}\}$ with probability $1/(1+\Gamma_i)$. 
Theorem \ref{thm:pair_vector_gamma} then follows from  the mutual independence of treatment assignments across matched pairs.

\begin{rmk}\label{rmk:pair_gamma_infinite}
In Theorem \ref{thm:pair_vector_gamma} we allow elements of $\bs{\Gamma}$ to take infinite values. 
This is important when we consider sensitivity models bounding only a certain quantile of hidden biases, under which some hidden biases are unconstrained and are allowed to take infinite values. 
\end{rmk}
From Theorem \ref{thm:pair_vector_gamma}, we can immediately obtain a valid $p$-value under the null $H_0$ with true hidden biases satisfying $\bs{\Gamma}^\sta\vecle \bs{\Gamma}$, based on which we can test the significance of treatment effects under various upper bounds on the true hidden biases. 
\begin{corollary}\label{cor:pair_vector_gamma}
If 
Fisher's null 
$H_0$ 
holds and 
the true hidden bias vector $\bs{\Gamma}^\sta$
for all pairs is bounded by $\bs{\Gamma} \in [1, \infty]^I$, i.e., $\bs{\Gamma}^\sta\vecle \bs{\Gamma}$, 
then $\overline{p}_{\bs{\Gamma}} \equiv \overline{G}_{\bs{\Gamma}}(T)$ is a valid $p$-value, i.e., 
$\pr( \overline{p}_{\bs{\Gamma}} \le \alpha ) \le \alpha$ for $\alpha\in (0,1)$, 
where $\overline{G}_{\bs{\Gamma}}(c) = \pr(\overline{T}(\bs{\Gamma}) \ge c)$ 
and $\overline{T}(\bs{\Gamma})$ is defined 
as in Theorem \ref{thm:pair_vector_gamma}.
\end{corollary}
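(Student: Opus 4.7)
The plan is to reduce the corollary to the standard fact that a $p$-value defined as a survival function evaluated at the statistic is sub-uniform whenever the statistic is stochastically dominated by the distribution defining the survival function. The only input I need is Theorem \ref{thm:pair_vector_gamma}, which says that under $H_0$ and $\bs{\Gamma}^\sta \vecle \bs{\Gamma}$,
$$\pr(T \ge c) \;\le\; \pr\bigl(\overline{T}(\bs{\Gamma}) \ge c\bigr) \;=\; \overline{G}_{\bs{\Gamma}}(c)
\quad \text{for every } c \in \mathbb{R}.$$
I would begin by recording two properties of $\overline{G}_{\bs{\Gamma}}$ that come for free: (i) it is non-increasing in $c$, and (ii) it is left-continuous, since $\overline{G}_{\bs{\Gamma}}(c) = \pr(\overline{T}(\bs{\Gamma}) \ge c)$ and $\overline{T}(\bs{\Gamma})$ is a sum of independent random variables each supported on $\{q_{i1},q_{i2}\}$.

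Next I would introduce the critical cutoff
$$c_\alpha \;=\; \inf\bigl\{c \in \mathbb{R} : \overline{G}_{\bs{\Gamma}}(c) \le \alpha\bigr\}.$$
By monotonicity of $\overline{G}_{\bs{\Gamma}}$, the event $\{\overline{G}_{\bs{\Gamma}}(T) \le \alpha\}$ is contained in $\{T \ge c_\alpha\}$. The conclusion $\pr(\overline{G}_{\bs{\Gamma}}(T) \le \alpha) \le \alpha$ then follows by combining this containment with the stochastic-dominance bound from Theorem \ref{thm:pair_vector_gamma}. Concretely, I would split into two cases depending on whether the infimum is attained: if $\overline{G}_{\bs{\Gamma}}(c_\alpha) \le \alpha$, then
$$\pr\bigl(\overline{G}_{\bs{\Gamma}}(T) \le \alpha\bigr) \le \pr(T \ge c_\alpha) \le \overline{G}_{\bs{\Gamma}}(c_\alpha) \le \alpha;$$
otherwise $\overline{G}_{\bs{\Gamma}}(c_\alpha) > \alpha$ forces $\{\overline{G}_{\bs{\Gamma}}(T) \le \alpha\} \subseteq \{T > c_\alpha\}$, and taking $c \downarrow c_\alpha$ in Theorem \ref{thm:pair_vector_gamma} yields $\pr(T > c_\alpha) \le \lim_{c \downarrow c_\alpha}\overline{G}_{\bs{\Gamma}}(c) \le \alpha$.

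The main obstacle, such as it is, is purely bookkeeping around the discreteness of $\overline{T}(\bs{\Gamma})$: because $\overline{T}(\bs{\Gamma})$ takes only finitely many values, the survival function $\overline{G}_{\bs{\Gamma}}$ is a right-continuous-looking step function that is actually left-continuous in the $\pr(\cdot \ge c)$ convention used here, so $\overline{G}_{\bs{\Gamma}}$ can genuinely jump past the level $\alpha$ at $c_\alpha$. The two-case split above handles this cleanly; without it one risks an off-by-one error at an atom. I also need to check that the argument still goes through when some coordinates of $\bs{\Gamma}$ equal $\infty$ (Remark \ref{rmk:pair_gamma_infinite}); in that case the corresponding $\overline{T}_i(\Gamma_i)$ degenerates to $\max\{q_{i1},q_{i2}\}$ almost surely, the sum $\overline{T}(\bs{\Gamma})$ is still well-defined, and both the monotonicity and the stochastic-dominance inequality from Theorem \ref{thm:pair_vector_gamma} continue to apply verbatim, so the same argument yields the corollary.
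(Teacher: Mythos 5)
Your proof is correct and, at its core, uses the same ideas as the paper. The paper factors the argument into two steps: it first observes from Theorem \ref{thm:pair_vector_gamma} that the true survival function $G(c)=\pr(T\ge c)$ satisfies $G\le\overline G_{\bs\Gamma}$ pointwise, so that $\{\overline G_{\bs\Gamma}(T)\le\alpha\}\subseteq\{G(T)\le\alpha\}$, and then invokes a separate lemma (Lemma \ref{lemma:survival_dominant_by_uniform}) stating that a survival function evaluated at its own random variable is stochastically no smaller than $\Unif(0,1)$. The proof of that lemma is exactly your two-case split around the critical cutoff, applied to $G$. You instead carry out the cutoff/two-case argument directly on $\overline G_{\bs\Gamma}$, using $\pr(T\ge c)\le\overline G_{\bs\Gamma}(c)$ in place of $\pr(T\ge c)=G(c)$, which merges the paper's two steps into one and avoids ever naming $G$. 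This is a clean reorganization rather than a genuinely different route; the paper's factorization has the modest advantage that its lemma is reused verbatim in several later results. One small expository nit: the left-continuity of $c\mapsto\pr(\overline T(\bs\Gamma)\ge c)$ holds for any random variable, not because of the finite support of $\overline T(\bs\Gamma)$; but you do not actually use left-continuity anywhere, so this is harmless.
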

In practice, we can approximate 
$\overline{G}_{\bs{\Gamma}}(\cdot)$ 
using the Monte Carlo method, 
by randomly drawing assignments from the sensitivity model $\mathcal{M}(\bs{\Gamma}, \bs{u})$ 
with hidden confounding $u_{i1} = 1 - u_{i2} = \I(q_{i1} \ge q_{i2})$ for all $i$ and calculating the corresponding values of the test statistic. 

\begin{rmk}
Recently, \citet{heng2020sharpening} considered sensitivity analysis with pair-specific bounds on hidden biases, taking into account interactions between observed and
unobserved confounders. 
Their sensitivity analysis can be viewed as a special case of Theorem \ref{thm:pair_vector_gamma} and Corollary \ref{cor:pair_vector_gamma}, but with carefully specified structure on the upper bound $\bs{\Gamma}$. 
\end{rmk}
\subsection{Sensitivity analysis for quantiles of hidden biases}\label{sec:sen_quantile_pair}

We now consider 
sensitivity analysis assuming that the true hidden bias at rank $k$ is bounded by a constant $\Gamma_0 \ge 1$, i.e., $\Gamma_{(k)}^\sta \le \Gamma_0$. 
This implies that the $k$ matched pairs with the smallest hidden biases, denoted by $\mathcal{I}^\sta_k$, have their $\Gamma_i^\sta$'s at most $\Gamma_0$, 
i.e., $\Gamma_i^\sta \le \Gamma_0$ for $i \in \mathcal{I}^\sta_k$.
There is no restriction on the $I-k$ matched sets with the largest $I-k$ hidden biases, 
and $\Gamma_i^\sta$ for $i\notin \mathcal{I}^\sta_k$ is allowed to be arbitrarily large. 
For any set $\mathcal{I} \subset \{1, 2, \ldots, I\}$,
define 
$\bs{1}_{\mathcal{I}}$ as the indicator vector whose $i$th coordinate is $\I\{i \in \mathcal{I}\}$, 
and $\infty \cdot \bs{1}_{\mathcal{I}}$ as the vector whose $i$th coordinate is infinite if $i \in \mathcal{I}$ and 0 otherwise. 
Define $\mathcal{I}^\complement = \{1, 2,\ldots, I\}\setminus \mathcal{I}$ as the complement of the set $\mathcal{I}$, and  $\bs{1}_{\mathcal{I}}^{\complement} = \bs{1}_{\mathcal{I}^{\complement}} = \bs{1} - \bs{1}_{\mathcal{I}}$ for notational simplicity. 
We can verify that $\Gamma_{(k)}^\sta \le \Gamma_0$ if and only if $\bs{\Gamma}^\sta \vecle \Gamma_0 \cdot \bs{1}_{\mathcal{I}^\sta_k} + \infty \cdot \bs{1}_{\mathcal{I}^\sta_k}^\complement$,
which is further equivalent to $\bs{Z} \sim \mathcal{M}(\Gamma_0 \cdot \bs{1}_{\mathcal{I}} + \infty \cdot \bs{1}_{\mathcal{I}}^\complement, \bs{u})$ for some $\bs{u} \in [0,1]^N$ and some subset $\mathcal{I}\subset \{1,2,\ldots, I\}$ with cardinality $k$.   

To ensure the validity of the test for the null $H_0$ under the sensitivity model, 
we need to find the upper bound of the $p$-value (or the tail probability of $T$) over all possible treatment assignment mechanisms satisfying $\Gamma_{(k)}^\sta \le \Gamma_0$. 
From Theorem \ref{thm:pair_vector_gamma} and the discussion before, 
if $\mathcal{I}_k^\sta$ is known, 
then $T$ can be stochastically bounded 
by $\overline{T}(\Gamma_0 \cdot \bs{1}_{\mathcal{I}^\sta_k} + \infty \cdot \bs{1}_{\mathcal{I}_k^{\sta}}^\complement)$, defined in the same way as in Theorem  \ref{thm:pair_vector_gamma}. 
However, the set $\mathcal{I}^\sta_k$ is generally unknown  and can take $I$ choose $k$ possible values, for which the brute-force enumeration can be NP-hard. 
Fortunately, we can find a closed-form upper bound for the tail probability of $\overline{T}(\Gamma_0 \cdot \bs{1}_{\mathcal{I}} + \infty \cdot \bs{1}_{\mathcal{I}}^\complement)$ over all possible subset $\mathcal{I} \subset \{1, 2, \ldots, I\}$ with cardinality $k$. 
Moreover, this upper bound is sharp.
\begin{theorem}\label{thm:pair_quantile_gamma}
Assume that Fisher's null $H_0$ holds.
For $1\le k \le I$, 
if 
the true hidden bias at rank $k$
is bounded by $\Gamma_0 \in [1, \infty]$, i.e., $\Gamma_{(k)}^\sta \le \Gamma_0$, 
then $T$ 
is stochastically smaller than or equal to $\overline{T}(\Gamma_0; k) = \overline{T}(\Gamma_0 \cdot \bs{1}_{\mathcal{I}_k} + \infty \cdot \bs{1}_{\mathcal{I}_k}^\complement)$, 
i.e., 
$\pr(T\ge c) \le \pr( \overline{T}(\Gamma_0; k) \ge c )$ for $c\in \mathbb{R}$, 
where $\mathcal{I}_{k}$ consists of indices of matched pairs with the $k$ smallest values of $|q_{i1}-q_{i2}|$, 
and $\overline{T}(\Gamma_0 \cdot \bs{1}_{\mathcal{I}_k} + \infty \cdot \bs{1}_{\mathcal{I}_k}^\complement)$ is defined 
as in Theorem \ref{thm:pair_vector_gamma}. 
Specifically, 
$\overline{T}(\Gamma_0; k)$ follows the distribution of 
$T = t(\bs{Z}, \bs{Y}(0))$ in \eqref{eq:test_stat} when 
$\bs{Z}
\sim 
\mathcal{M}(\Gamma_0 \cdot \bs{1}_{\mathcal{I}_k} + \infty \cdot \bs{1}_{\mathcal{I}_k}^\complement, \bs{u})$ with 
$u_{i1} = 1 - u_{i2} = \I(q_{i1} \ge q_{i2})$ for all $i$. 
\end{theorem}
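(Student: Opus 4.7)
My plan is to reduce Theorem \ref{thm:pair_quantile_gamma} to Theorem \ref{thm:pair_vector_gamma} by (i) converting the quantile constraint into a pair-specific bound on an unknown subset, then (ii) showing that the worst case over all such subsets is achieved by $\mathcal{I}_k$, the set indexing the $k$ pairs with the smallest $|q_{i1}-q_{i2}|$.

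First I would observe that $\Gamma^{\sta}_{(k)}\le \Gamma_0$ is equivalent to the existence of a subset $\mathcal{I}^{\sta}\subset\{1,\ldots,I\}$ with $|\mathcal{I}^{\sta}|=k$ such that $\Gamma_i^{\sta}\le \Gamma_0$ for $i\in\mathcal{I}^{\sta}$ (with no restriction on the remaining $I-k$ pairs), which can be written as $\bs{\Gamma}^{\sta}\vecle \Gamma_0\cdot \bs{1}_{\mathcal{I}^{\sta}}+\infty\cdot \bs{1}_{\mathcal{I}^{\sta}}^{\complement}$. Applying Theorem \ref{thm:pair_vector_gamma}, $T$ is stochastically dominated by $\overline{T}(\Gamma_0;\mathcal{I}^{\sta}):=\overline{T}(\Gamma_0\cdot\bs{1}_{\mathcal{I}^{\sta}}+\infty\cdot\bs{1}_{\mathcal{I}^{\sta}}^{\complement})$. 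Since $\mathcal{I}^{\sta}$ is unknown, the goal reduces to showing that
\begin{equation*}
\overline{T}(\Gamma_0;\mathcal{I}) \overset{\text{st}}{\le} \overline{T}(\Gamma_0;\mathcal{I}_k)\qquad \text{for every subset } \mathcal{I}\subset\{1,\ldots,I\} \text{ with } |\mathcal{I}|=k.
\end{equation*}

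The key step is a coupling argument based on an explicit representation. Set $M_i=\max\{q_{i1},q_{i2}\}$ and $a_i=|q_{i1}-q_{i2}|$. Under the specified hidden confounding $u_{i1}=1-u_{i2}=\I(q_{i1}\ge q_{i2})$, for a generic subset $\mathcal{I}$ we have
\begin{equation*}
\overline{T}(\Gamma_0;\mathcal{I}) \;\stackrel{d}{=}\; \sum_{i=1}^{I} M_i \;-\; \sum_{i\in \mathcal{I}} a_i\, B_i,
\end{equation*}
where $B_i$ are independent $\mathrm{Bernoulli}(1/(1+\Gamma_0))$, because pairs outside $\mathcal{I}$ have $\Gamma_i=\infty$ and contribute $M_i$ deterministically, while pairs inside $\mathcal{I}$ contribute $M_i$ or $M_i-a_i$ with probabilities $\Gamma_0/(1+\Gamma_0)$ and $1/(1+\Gamma_0)$. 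To prove stochastic dominance, I would couple the two representations on a common sequence of Bernoulli variables, matched with the coefficients sorted in increasing order. Since $\mathcal{I}_k$ indexes the $k$ smallest $a_i$'s, for any $\mathcal{I}$ of size $k$ one has $a_{(j)}^{\mathcal{I}_k}\le a_{(j)}^{\mathcal{I}}$ for every $j=1,\ldots,k$ (after sorting each), so pathwise $\sum_{i\in \mathcal{I}_k} a_i B_i \le \sum_{i\in \mathcal{I}} a_i B_i$, which yields the desired stochastic ordering on $\overline{T}$.

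Combining the two steps gives $T\overset{\text{st}}{\le}\overline{T}(\Gamma_0;\mathcal{I}^{\sta})\overset{\text{st}}{\le}\overline{T}(\Gamma_0;\mathcal{I}_k)=\overline{T}(\Gamma_0;k)$, which is the claim. The anticipated obstacle is the second step: it is tempting to think that concentrating the ``unconstrained'' pairs where $|q_{i1}-q_{i2}|$ is large would inflate variance rather than raise the tail, so one needs to see clearly that putting pairs with large $a_i$ in $\mathcal{I}^{\complement}$ locks in their full $M_i$ contribution, whereas putting pairs with small $a_i$ in $\mathcal{I}$ makes the random deductions $a_i B_i$ as small as possible. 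The coupling above turns this intuition into a clean pathwise inequality, avoiding any combinatorial enumeration over the $\binom{I}{k}$ candidate subsets. Sharpness then follows by noting that equality in distribution is attained when $\bs{\Gamma}^{\sta}=\Gamma_0\cdot\bs{1}_{\mathcal{I}_k}+\infty\cdot\bs{1}_{\mathcal{I}_k}^{\complement}$ with the stated $u_{ij}$'s.
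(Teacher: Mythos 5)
Your proposal is correct, and it takes a genuinely different route from the paper's own proof of the second-step dominance $\overline{T}(\Gamma_0;\mathcal{I})\overset{\text{st}}{\le}\overline{T}(\Gamma_0;\mathcal{I}_k)$. The paper establishes this via a pairwise swap lemma (its Lemma A5): it compares two matched pairs at a time, showing that swapping a ``constrained'' pair with large $|q_{i1}-q_{i2}|$ for an ``unconstrained'' pair with small $|q_{i1}-q_{i2}|$ increases the tail, and then iterates the swaps using the independence-based stochastic-sum lemma (its Lemma A3). You instead observe the clean Bernoulli representation $\overline{T}(\Gamma_0;\mathcal{I})\stackrel{d}{=}\sum_i M_i - \sum_{i\in\mathcal{I}} a_i B_i$ with $M_i=\max\{q_{i1},q_{i2}\}$, $a_i=|q_{i1}-q_{i2}|$, and iid $B_i\sim\mathrm{Bernoulli}(1/(1+\Gamma_0))$, sort the coefficients, and couple both sums on a common Bernoulli sequence. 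The dominance then falls out pathwise from the order-statistic inequality $a^{\mathcal{I}_k}_{(j)}\le a^{\mathcal{I}}_{(j)}$ for all $j$, which holds because $\mathcal{I}_k$ contains the $k$ smallest $a_i$'s. This coupling collapses the paper's combinatorial swap-and-iterate scheme into a single monotone argument and is arguably more transparent, though it is specific to the matched-pair structure (it relies on each $T_i$ being two-valued, which is what makes the Bernoulli representation exact); the paper's swap lemma is stated in a form that more readily suggests the extension to general matched sets handled later via Gaussian approximations. Both routes first reduce the quantile constraint to a pair-specific bound on an unknown subset $\mathcal{I}^\sta$ and apply Theorem \ref{thm:pair_vector_gamma}, so the first step is identical.

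One small precision worth adding if you were to write this up: the pathwise inequality should be stated after re-indexing both sums by sorted coefficients, i.e.\ one compares $\sum_{j=1}^k a^{\mathcal{I}_k}_{(j)}B_j$ with $\sum_{j=1}^k a^{\mathcal{I}}_{(j)}B_j$ using the same Bernoulli sequence $(B_j)_{j\le k}$; since the $B_i$'s are iid this relabeling does not change the marginal laws, so the coupling is legitimate. Your intuition remark is also exactly right and matches the paper's heuristic: sending pairs with large $|q_{i1}-q_{i2}|$ to infinity locks in their full $M_i$ contribution, while the potential random deductions $a_iB_i$ are minimized on the pairs with small $a_i$.
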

Theorem \ref{thm:pair_quantile_gamma} is 
also implied by \citet[][Theorem 3]{Rosenbaum1987},
since the test statistic in \eqref{eq:test_stat} 
for a matched pair study satisfies the decreasing reflection property introduced there. 
From Theorem \ref{thm:pair_quantile_gamma}, the upper bound of the tail probability of $T$ under $\Gamma_{(k)}^\sta \le \Gamma_0$ is achieved when we make the hidden biases infinitely large for the $I-k$ matched pairs with the largest absolute differences between $q_{i1}$ and $q_{i2}$, and make the hidden biases $\Gamma_0$ for the remaining $k$ pairs. 
This is intuitive because
larger hidden biases for pairs with larger absolute differences 
can provide more increase for the tail probability of $T$.

Theorem \ref{thm:pair_quantile_gamma} immediately provides a valid $p$-value under the null $H_0$ 
with true hidden biases satisfying 
$\Gamma_{(k)}^\sta \le \Gamma_0$, based on which we can test the significance of treatment effects under various upper bounds on the true hidden bias at any rank. 
\begin{corollary}\label{cor:pval_pair_quantile}
If Fisher's null $H_0$ holds   
and the bias at rank $k$ is bounded by $\Gamma_0$, i.e., $\Gamma_{(k)}^\sta \le \Gamma_0$, 
then 
$
\overline{p}_{\Gamma_0; k} \equiv \overline{G}_{\Gamma_0; k}(T)
$
is a valid $p$-value, i.e., 
$\pr( \overline{p}_{\Gamma_0; k} \le \alpha ) \le \alpha$ for $\alpha\in (0,1)$, 
where $\overline{G}_{\Gamma_0; k}(c) = \pr(\overline{T}(\Gamma_0; k) \ge c)$ for all $c\in \mathbb{R}$ and $\overline{T}(\Gamma_0; k)$ is defined 
as in Theorem \ref{thm:pair_quantile_gamma}.
\end{corollary}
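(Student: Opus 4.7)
The plan is to derive Corollary \ref{cor:pval_pair_quantile} as a routine consequence of the stochastic dominance established in Theorem \ref{thm:pair_quantile_gamma}, combined with the standard probability integral transform argument that a survival function evaluated at its own random input is stochastically larger than or equal to Uniform$(0,1)$.

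First, I would invoke Theorem \ref{thm:pair_quantile_gamma} to assert that, under $H_0$ and the constraint $\Gamma_{(k)}^\sta \le \Gamma_0$, the test statistic $T$ is stochastically dominated by $\overline{T}(\Gamma_0; k)$. Since $\overline{G}_{\Gamma_0; k}(c) = \pr(\overline{T}(\Gamma_0; k) \ge c)$ is a non-increasing function of $c$, applying this monotone transformation preserves (and reverses the direction of) the stochastic order, giving
\[
\pr\bigl( \overline{G}_{\Gamma_0; k}(T) \le \alpha \bigr) \le \pr\bigl( \overline{G}_{\Gamma_0; k}(\overline{T}(\Gamma_0; k)) \le \alpha \bigr)
\]
for every $\alpha \in (0,1)$.

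Second, I would invoke the classical fact that if $X$ is a real-valued random variable with survival function $G(c) = \pr(X \ge c)$, then $\pr(G(X) \le \alpha) \le \alpha$ for all $\alpha \in (0,1)$. This handles the discrete/atomic nature of $\overline{T}(\Gamma_0; k)$ — the distribution of $\overline{T}(\Gamma_0; k)$ is a finite mixture over values of the form $\sum_i (\overline{T}_i \text{ or } \underline{T}_i)$, so $\overline{G}_{\Gamma_0; k}(\overline{T}(\Gamma_0; k))$ is not exactly Uniform, only stochastically at least Uniform, which is precisely what validity of the $p$-value requires. Combining the two displayed inequalities yields $\pr(\overline{p}_{\Gamma_0; k} \le \alpha) \le \alpha$, as desired.

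There is no real obstacle to surmount here: the conceptual work was already done in proving Theorem \ref{thm:pair_quantile_gamma}, which identifies the worst-case distribution of $T$ in closed form. The only mildly nontrivial point to flag is that one must use the ``$\ge_{\mathrm{st}}$ Uniform'' version of the probability integral transform rather than the equality version, because $\overline{T}(\Gamma_0; k)$ has a discrete distribution with atoms; but this is standard and requires no new calculation. I would keep the proof to a few lines and simply cite the monotonicity of $\overline{G}_{\Gamma_0;k}$ and the generalized probability integral transform.
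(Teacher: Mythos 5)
Your proof is correct and uses the same two ingredients as the paper's: Theorem \ref{thm:pair_quantile_gamma} for the stochastic dominance of $T$ by $\overline{T}(\Gamma_0;k)$, and the generalized probability integral transform (Lemma~A4 in the supplement, stating $\pr(G(W)\le\alpha)\le\alpha$ for any random variable $W$ with survival function $G$) for the uniform bound. The composition differs slightly. You apply the PIT to the reference variable $\overline{T}(\Gamma_0;k)$ with its own survival function $\overline{G}_{\Gamma_0;k}$, and transfer the bound to $T$ via the fact that a non-increasing transform reverses stochastic order. The paper instead applies the PIT to the true statistic $T$ with its own (unknown) survival function $G(c)=\pr(T\ge c)$, and passes from $\overline{G}_{\Gamma_0;k}$ to $G$ via the pointwise inequality $G(c)\le\overline{G}_{\Gamma_0;k}(c)$ for all $c$ (which is literally what Theorem \ref{thm:pair_quantile_gamma} asserts), yielding the event inclusion $\{\overline{G}_{\Gamma_0;k}(T)\le\alpha\}\subseteq\{G(T)\le\alpha\}$. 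The two routes are interchangeable; the paper's is marginally leaner because it needs only a pointwise comparison of survival functions rather than an appeal to the monotone-transform lemma for stochastic order. Your observation that the discrete, atomic distribution of $\overline{T}(\Gamma_0;k)$ forces the ``at least Uniform'' form of the PIT rather than exact uniformity is exactly right and is what Lemma~A4 is designed to supply.
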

\begin{rmk}
We emphasize that
analyses from Theorem \ref{thm:pair_quantile_gamma} and Corollary \ref{cor:pval_pair_quantile}
do not point to a particular subset of matched pairs. 
If we are interested in sensitivity analysis that involves constraints on hidden biases for a particular subset of matched pairs, we can apply Theorem \ref{thm:pair_vector_gamma} and Corollary \ref{cor:pair_vector_gamma} with pair-specific bounds on hidden biases.
\end{rmk}

\subsection{Simultaneous sensitivity analysis for all $\Gamma_{(k)}^\sta$'s and $I^\sta(\Gamma_0)$'s}\label{sec:simultaneous_pair}

From 
Section \ref{sec:equiv_sen}, 
we can equivalently understand 
sensitivity analysis as constructing confidence sets for the true hidden bias $\bs{\Gamma}^\sta$ or a certain function of it, e.g., its maximum or quantiles.  
Assuming that Fisher's null $H_0$ holds,  
$\overline{p}_{\Gamma_0; k}$ in Corollary \ref{cor:pval_pair_quantile} is indeed a valid $p$-value for testing the null hypothesis $\Gamma^\sta_{(k)} \le \Gamma_0 \Longleftrightarrow I^\sta(\Gamma_0) \le I - k$ regarding the true hidden bias. 
{\lxr The standard}
test inversion
can then provide confidence sets for $\Gamma^\sta_{(k)}$ and $I^\sta(\Gamma_0)$. 
Moreover, we can verify that the $p$-value $\overline{p}_{\Gamma_0;k}$ is increasing in $\Gamma_0$ and decreasing in $k$, 
which simplifies the forms of the confidence sets. 
We summarize the results below. 
For simplicity, we define $\Gamma_{(0)}^\sta = 1$ and $\overline{p}_{\Gamma_0;0}= 1$, which are intuitive given that $I^\sta(\Gamma_0) \le I$ always holds. 
\begin{corollary}\label{cor:pair_ci_gamma3}
Assume that Fisher's null $H_0$ holds. 
(i) For any $1\le k \le I$ and $\alpha\in (0,1)$, 
a $1-\alpha$ confidence set for $\Gamma_{(k)}^\sta$ is 
$\{\Gamma_0 \in [1, \infty]: \overline p_{\Gamma_0;k} > \alpha\}$, which is equivalently $(\hat{\Gamma}_{(k)}, \infty]$ or $[\hat{\Gamma}_{(k)}, \infty]$ with $\hat{\Gamma}_{(k)} \equiv  \inf \{\Gamma_0\ge 1: \overline p_{\Gamma_0;k} > \alpha\}$.
(ii)
For any $\Gamma_0\in \mathbb{R}$ and $\alpha\in (0,1)$, 
a $1-\alpha$ confidence set for $I^\sta(\Gamma_0)$ is 
$\{I-k: \overline p_{\Gamma_0;k} > \alpha, 0 \le k \le I\} = \{
\hat{I}(\Gamma_0), \hat{I}(\Gamma_0)+1, \ldots, I
\}$, 
with $\hat{I}(\Gamma_0) \equiv I - \sup \{k: \overline p_{\Gamma_0;k} > \alpha, 0\le k \le I\}$.
\end{corollary}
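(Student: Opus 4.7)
The plan is to apply standard test inversion to the family of valid $p$-values from Corollary \ref{cor:pval_pair_quantile}, and then to exploit monotonicity of $\overline{p}_{\Gamma_0;k}$ in its two arguments to reduce the resulting confidence sets to the stated one-sided forms. Under $H_0$, Corollary \ref{cor:pval_pair_quantile} tells us that $\overline{p}_{\Gamma_0;k}$ is a valid $p$-value for $\Gamma_{(k)}^\sta \le \Gamma_0$, which is equivalent to $I^\sta(\Gamma_0) \le I-k$. So my first step is to invoke this validity to conclude that $\{\Gamma_0:\overline{p}_{\Gamma_0;k}>\alpha\}$ covers $\Gamma_{(k)}^\sta$ with probability at least $1-\alpha$ (fixing $k$), and that $\{I-k:0\le k\le I,\,\overline{p}_{\Gamma_0;k}>\alpha\}$ covers $I^\sta(\Gamma_0)$ with probability at least $1-\alpha$ (fixing $\Gamma_0$). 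The convention $\overline{p}_{\Gamma_0;0}=1$ automatically includes the value $I$ in the latter set, matching the stated definition of $\hat{I}(\Gamma_0)$.

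The main technical step, and the only one needing real work, is to show that $\overline{p}_{\Gamma_0;k}=\overline{G}_{\Gamma_0;k}(T)$ is non-decreasing in $\Gamma_0$ and non-increasing in $k$. Because $\overline{G}_{\Gamma_0;k}(c)=\pr(\overline{T}(\Gamma_0;k)\ge c)$, I would deduce this from stochastic monotonicity of $\overline{T}(\Gamma_0;k)$. By Theorem \ref{thm:pair_quantile_gamma}, $\overline{T}(\Gamma_0;k)$ is an independent sum over pairs in which $i\in \mathcal{I}_k$ contributes $\max\{q_{i1},q_{i2}\}$ with probability $\Gamma_0/(1+\Gamma_0)$ and $\min\{q_{i1},q_{i2}\}$ with probability $1/(1+\Gamma_0)$, while $i\notin \mathcal{I}_k$ contributes $\max\{q_{i1},q_{i2}\}$ deterministically (the limit $\Gamma_i\to\infty$). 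Increasing $\Gamma_0$ only shifts mass toward the larger value at each $i\in \mathcal{I}_k$, so each summand is stochastically non-decreasing in $\Gamma_0$ and hence so is their independent sum. Incrementing $k$ by one uses $\mathcal{I}_{k+1}=\mathcal{I}_k\cup\{i^\sta\}$, where $i^\sta$ is the pair with the $(k+1)$th smallest $|q_{i1}-q_{i2}|$; the single coordinate $i^\sta$ converts from deterministic $\max\{q_{i^\sta 1},q_{i^\sta 2}\}$ to a nontrivial $\Gamma_0/(1+\Gamma_0)$--weighted mixture of $\max$ and $\min$, which is stochastically smaller, while all other coordinates are unchanged; this gives $\overline{T}(\Gamma_0;k+1)\stle\overline{T}(\Gamma_0;k)$.

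With these two monotonicities in hand, the remaining deduction is purely set-theoretic. The set $\{\Gamma_0\in[1,\infty]:\overline{p}_{\Gamma_0;k}>\alpha\}$ is an upper set in $\Gamma_0$ and therefore equals either $(\hat{\Gamma}_{(k)},\infty]$ or $[\hat{\Gamma}_{(k)},\infty]$, with $\hat{\Gamma}_{(k)}=\inf\{\Gamma_0\ge 1:\overline{p}_{\Gamma_0;k}>\alpha\}$, depending on whether the infimum is attained; this yields part (i). Likewise, $\{k\in\{0,1,\ldots,I\}:\overline{p}_{\Gamma_0;k}>\alpha\}$ is a lower set containing $0$, so it equals $\{0,1,\ldots,\hat{k}\}$ with $\hat{k}=\sup\{k:\overline{p}_{\Gamma_0;k}>\alpha,\,0\le k\le I\}$, and its image under $k\mapsto I-k$ is exactly $\{\hat{I}(\Gamma_0),\hat{I}(\Gamma_0)+1,\ldots,I\}$, yielding part (ii). The hardest part will be verifying monotonicity in $k$, since the indexing set $\mathcal{I}_k$ itself grows with $k$; fortunately the nestedness $\mathcal{I}_k\subset \mathcal{I}_{k+1}$ built into Theorem \ref{thm:pair_quantile_gamma} localizes the change to a single coordinate and makes the stochastic-ordering argument transparent.
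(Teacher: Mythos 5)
Your proposal is correct and follows essentially the same approach as the paper: invert the family of valid $p$-values and use monotonicity of $\overline{p}_{\Gamma_0;k}$ in $\Gamma_0$ and $k$ to reduce the confidence sets to one-sided forms. The paper encapsulates this monotonicity in a standalone lemma (Lemma A1) by observing that $\Gamma_0 \cdot \bs{1}_{\mathcal{I}_k} + \infty \cdot \bs{1}_{\mathcal{I}_k}^\complement \vecle \Gamma_0' \cdot \bs{1}_{\mathcal{I}_{k'}} + \infty \cdot \bs{1}_{\mathcal{I}_{k'}}^\complement$ whenever $\Gamma_0\le\Gamma_0'$ and $k\ge k'$, then invoking the stochastic ordering built into the proof of Theorem \ref{thm:pair_vector_gamma}; your per-coordinate decomposition (including the explicit nestedness $\mathcal{I}_{k+1}=\mathcal{I}_k\cup\{i^\sta\}$) is the same idea spelled out in more detail. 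The one place the two diverge slightly is the validity step: you establish marginal coverage directly by evaluating the $p$-value at $\Gamma_0 = \Gamma_{(k)}^\sta$ (respectively $k = I - I^\sta(\Gamma_0)$) and invoking Corollary \ref{cor:pval_pair_quantile}, whereas the paper defers validity of Corollary \ref{cor:pair_ci_gamma3} to Theorem \ref{thm:pair_joint_ci} on simultaneous coverage and reads off the marginal statements as special cases. Your route is a touch more self-contained for the marginal result alone; the paper's route is more economical given that it proves the stronger simultaneous claim anyway. Both are sound.
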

In Corollary \ref{cor:pair_ci_gamma3}, the lower confidence limit $\hat{\Gamma}_{(I)}$ is the same as the threshold reported in the conventional sensitivity analysis, 
which can be 
sensitive to
extreme hidden biases. 
The additional results from Corollary \ref{cor:pair_ci_gamma3}
on quantiles of hidden biases as well as numbers (or proportions) of pairs with hidden biases exceeding any threshold can provide us more robust causal conclusions. 
Moreover, 
as demonstrated below, 
the sensitivity analysis for all $\Gamma^\sta_{(k)}$'s and $I^\sta(\Gamma_0)$'s are simultaneously valid. 
Therefore, the additional sensitivity analysis for $\Gamma^\sta_{(k)}$'s with $k<I$ and $I^\sta(\Gamma_0)$'s is actually a free lunch added to the conventional sensitivity analysis.

{\lxr
Below we explain the simultaneous validity of the confidence intervals in Corollary \ref{cor:pair_ci_gamma3}. Assume that Fisher's null hypothesis $H_0$ holds, and suppose that the true hidden bias $\Gamma_{(k)}^\star$ lies outside the confidence interval $\{\Gamma_0 \in [1, \infty]: \overline p_{\Gamma_0;k} > \alpha\}$ for some $1 \le k \le I$. This would imply that $\overline p_{\Gamma_{(k)}^\star;k} \le \alpha$ for some $1 \le k \le I$. By the construction of the $p$-value from Theorem \ref{thm:pair_quantile_gamma} and Corollary \ref{cor:pval_pair_quantile}, regardless of the value of $k$, $\alpha \ge \overline p_{\Gamma_{(k)}^\star;k}$ must be greater than or equal to the true tail probability of the test statistic $T$ evaluated at its observed value. 
This occurs with probability at most $\alpha$, due to the validity of usual randomization test. 
We summarize the results in the theorem below.}

\begin{theorem}\label{thm:pair_joint_ci}
Assume Fisher's null $H_0$ holds. 
For any $\alpha \in (0,1)$, the intersection of all $1-\alpha$ confidence sets for the quantiles of true hidden biases $\Gamma_{(k)}^\sta$'s, viewed as a confidence set for $\bs{\Gamma}^\sta$, is the same as that of all $1-\alpha$ confidence sets for the 
$I^\sta(\Gamma_0)$'s.  
Moreover, it is indeed a $1-\alpha$ confidence set with probability at least $1-\alpha$ to cover the true hidden bias $\bs{\Gamma}^\sta$.
\end{theorem}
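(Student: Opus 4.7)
My plan has three main steps: (i) rewrite the intersection of the quantile-based confidence sets and the intersection of the count-based confidence sets as subsets of $[1, \infty]^I$ in a unified form; (ii) invoke the monotonicity of $\overline{p}_{\Gamma_0;k}$ (increasing in $\Gamma_0$, decreasing in $k$, as noted after Corollary \ref{cor:pair_ci_gamma3}) to show the two resulting sets coincide; and (iii) bound the failure probability by comparing every worst-case $p$-value simultaneously to the true tail probability of $T$.

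For step (i), I would use Corollary \ref{cor:pair_ci_gamma3} to lift each confidence set into $\bs{\Gamma}$-space. Let $\Gamma_{(k)}$ denote the $k$th order statistic of the coordinates of a generic $\bs{\Gamma}$, and let $I^{\bs{\Gamma}}(\Gamma_0) = \sum_i \I(\Gamma_i > \Gamma_0)$. Then the intersection over $k$ of the quantile-based confidence sets is $A = \{\bs{\Gamma} \in [1,\infty]^I : \overline{p}_{\Gamma_{(k)};k} > \alpha \text{ for every } 1 \le k \le I\}$, while the intersection over $\Gamma_0 \ge 1$ of the count-based confidence sets is $B = \{\bs{\Gamma} \in [1,\infty]^I : \overline{p}_{\Gamma_0; I - I^{\bs{\Gamma}}(\Gamma_0)} > \alpha \text{ for every } \Gamma_0 \ge 1\}$.

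For step (ii), I would prove $A = B$ by double inclusion. Given $\bs{\Gamma} \in A$ and any $\Gamma_0 \ge 1$, setting $k = I - I^{\bs{\Gamma}}(\Gamma_0)$ forces $\Gamma_{(k)} \le \Gamma_0$, so monotonicity in $\Gamma_0$ gives $\overline{p}_{\Gamma_0;k} \ge \overline{p}_{\Gamma_{(k)};k} > \alpha$, placing $\bs{\Gamma}$ in $B$. Conversely, for $\bs{\Gamma} \in B$ and any $1 \le k \le I$, choosing $\Gamma_0 = \Gamma_{(k)}$ yields $I - I^{\bs{\Gamma}}(\Gamma_{(k)}) \ge k$, and monotonicity in $k$ then gives $\overline{p}_{\Gamma_{(k)};k} \ge \overline{p}_{\Gamma_{(k)}; I - I^{\bs{\Gamma}}(\Gamma_{(k)})} > \alpha$, so $\bs{\Gamma} \in A$.

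For step (iii), the simultaneous coverage bound hinges on a single observation: Theorem \ref{thm:pair_quantile_gamma} guarantees that, under $H_0$ and the true assignment mechanism, $T$ is stochastically dominated by $\overline{T}(\Gamma_{(k)}^\sta;k)$ for every $k$. Hence $\overline{p}_{\Gamma_{(k)}^\sta;k} = \overline{G}_{\Gamma_{(k)}^\sta;k}(T) \ge \bar{F}^\sta(T)$ almost surely, where $\bar{F}^\sta(\cdot)$ denotes the true tail function of $T$. Consequently $\{\bs{\Gamma}^\sta \notin A\} = \bigcup_{k=1}^I \{\overline{p}_{\Gamma_{(k)}^\sta;k} \le \alpha\} \subseteq \{\bar{F}^\sta(T) \le \alpha\}$, and the latter has probability at most $\alpha$ by the standard probability-integral-transform argument for $\bar{F}^\sta(T)$. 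The main obstacle I anticipate is step (ii): ties among the coordinates of $\bs{\Gamma}$ mean that $\Gamma_0 = \Gamma_{(k)}$ need not imply $I^{\bs{\Gamma}}(\Gamma_{(k)}) = I-k$ exactly, so one must lean on monotonicity in $k$ to reduce the tied cases to the most conservative index; once this is in hand, the rest is essentially bookkeeping.
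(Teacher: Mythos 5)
Your proposal is correct, and the coverage argument in step (iii) is essentially identical to the paper's: under $H_0$ and the true mechanism, $\Gamma_{(k)}^\sta \le \Gamma_{(k)}^\sta$ holds trivially, so Theorem \ref{thm:pair_quantile_gamma} gives $\overline{p}_{\Gamma_{(k)}^\sta;k}\ge \bar{F}^\sta(T)$ pointwise in $k$, and Lemma~\ref{lemma:survival_dominant_by_uniform} closes the argument. Where you diverge is in the set-equality part. The paper lifts each individual confidence statement (for $\Gamma_{(k)}^\sta$ at fixed $k$, or for $I^\sta(\Gamma_0)$ at fixed $\Gamma_0$) into $\bs{\Gamma}$-space as $\bigcap_{\Gamma_0:\overline{p}_{\Gamma_0;k}\le\alpha}\mathcal{S}_{\Gamma_0;k}^\complement$ (respectively $\bigcap_{k:\overline{p}_{\Gamma_0;k}\le\alpha}\mathcal{S}_{\Gamma_0;k}^\complement$), and then the identity of the two overall intersections is immediate since both collapse to the joint intersection $\bigcap_{(\Gamma_0,k):\overline{p}_{\Gamma_0;k}\le\alpha}\mathcal{S}_{\Gamma_0;k}^\complement$ — no double inclusion required, the Fubini-for-intersections step is free. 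Your route instead defines $A$ and $B$ as pointwise conditions on $\bs{\Gamma}$ (via $\overline{p}_{\Gamma_{(k)};k}$ and $\overline{p}_{\Gamma_0; I-I^{\bs{\Gamma}}(\Gamma_0)}$) and proves $A=B$ by explicit double inclusion with order-statistic substitution. Both rely on the same monotonicity (Lemma~\ref{lemma:property_p_gamma_k}), and your worry about ties is real but handled correctly — with ties $\Gamma_0=\Gamma_{(k)}$ gives only $I-I^{\bs{\Gamma}}(\Gamma_{(k)})\ge k$, and monotonicity of the $p$-value in $k$ saves the inclusion, as you noted. Net effect: the paper's route is slicker and makes the equality structurally obvious; yours is more hands-on and perhaps clearer about \emph{why} the two families of constraints agree, at the cost of explicit bookkeeping.
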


From Theorem \ref{thm:pair_joint_ci}, in practice, we suggest to conduct sensitivity analysis for all quantiles of hidden biases simultaneously, in addition to the conventional analysis. 
These additional analyses provide more robust and powerful sensitivity analysis for hidden biases, 
without any reduction on the confidence level. 
Specifically, to conclude that a study's conclusion is sensitive to hidden confounding,  
we need to believe that the quantiles of true hidden biases $\Gamma_{(k)}^\sta$'s are uniformly bounded from below by our derived thresholds $\hat{\Gamma}_{(k)}$'s; 
this can greatly strengthen causal evidences compared to the conventional analysis on the maximum hidden bias.
Moreover, the inference results on all $\Gamma_{(k)}^\sta$'s as well as $I^\sta(\Gamma_0)$'s can be easily visualized and interpreted. 
In particular, we can draw confidence intervals for all $\Gamma_{(k)}^\sta$'s in the same figure, and simply count the number of intervals that does not contain any given threshold $\Gamma_0$ to get the lower confidence limit for $I^\sta(\Gamma_0)$. 
We will illustrate this in detail 
in the later application. 

\begin{rmk}
    {\lxr Below we give more intuition for the gain from Theorem \ref{thm:pair_joint_ci}. 
    When performing sensitivity analysis for multiple quantiles of hidden biases, 
    we essentially impose the constraint that, for some $1 \leq k \leq I$, the hidden bias at rank $k$ is bounded by a threshold $\Gamma_{k}$; noting that the threshold itself can vary with $k$ as well. 
    Such a constraint on unmeasured confounding is weaker than usual constraints on a single quantile, say the maximum, and can be robust to extreme hidden confounding. 
    Once $\max_{1\le k \le I} \overline{p}_{\Gamma_k;k} \le \alpha$, we will then have stronger evidence for treatment effects.}
\end{rmk}

\begin{rmk}
    {\add
    From the alternative understanding of sensitivity analysis explained in Section \ref{sec:equiv_sen}, 
    we can equivalently view the $p$-value $\overline{p}_{\bs{\Gamma}}$ in Corollary \ref{cor:pair_vector_gamma} as a valid $p$-value for testing whether the true hidden biases satisfy $\bs{\Gamma}^\sta\vecle \bs{\Gamma}$, assuming that the treatment has no effect. 
    By 
    {\lxr the standard}
    test inversion, 
    for any $\alpha\in (0,1)$, 
    the set of $\bs{\Gamma}\in [1, \infty]^I$ such that  $\overline{p}_{\bs{\Gamma}}>\alpha$ is then a confidence set for $\bs{\Gamma}^\sta$. 
    However, such a test inversion is generally computationally infeasible. 
    Besides, the resulting $I$-dimensional confidence set may not be easy to understand and interpret. 
    The confidence set we constructed in Theorem \ref{thm:pair_joint_ci} is actually a superset of this computationally infeasible one.
    Importantly, 
    the set in Theorem \ref{thm:pair_joint_ci} 
    can be efficiently computed; see the supplementary material for detailed discussion about its computational complexity. 
    Moreover, it can be easily visualized and interpreted;  
    see Section \ref{sec:application} for details.}
\end{rmk}

\begin{rmk}\label{rmk:average_bias}
Theorem \ref{thm:pair_joint_ci} can also 
provide 
lower confidence limits for average hidden biases of form $\bar{\Gamma}^\sta_g = g^{-1}( I^{-1} \sum_{i=1}^I g(\Gamma_i^\sta) ) 
$ for any 
increasing 
function $g$. 
For example, \citet{HS2017} and \citet{FH2017} 
considered 
$g(\Gamma_i^\sta) = \Gamma_i^\sta/(1+\Gamma_i^\sta)$. 
From Theorem \ref{thm:pair_joint_ci}, 
under $H_0$, 
a $1-\alpha$ lower confidence limit for $\bar{\Gamma}^\sta_g$ is 
$\hat{\Gamma}_g = g^{-1}( I^{-1} \sum_{k=1}^I g(\hat{\Gamma}_{(k)}) )$. 
Moreover, 
such a lower confidence limit is simultaneously valid over arbitrary choices of the increasing function $g$. 
\end{rmk}

\section{Sensitivity Analysis for Matched sets}\label{sec:set}

We now 
study 
general matched sets with $n_i \ge 2$ for all $i$. 
Unlike matched pairs, deriving the exact upper bound for the tail probability of 
$T$ 
under a sensitivity model becomes 
generally 
difficult, both theoretically and computationally. 
We will instead seek asymptotic upper bounds that can lead to asymptotic valid $p$-values, utilizing 
ideas in \citet{Rosenbaum2000sep}. 
We first consider sensitivity models with set-specific bounds on hidden biases, then extend 
to sensitivity models bounding only a certain quantile of hidden biases, 
and 
finally 
demonstrate the simultaneous validity of sensitivity analyses for all quantiles. 
{\add 
For conciseness, we will omit the regularity conditions for asymptotics and relegate them to the supplementary material. 
}

\subsection{Sensitivity analysis with set-specific bounds on hidden biases}\label{sec:sen_gamma_vec}

For any $\bs{\Gamma} = (\Gamma_1,\Gamma_2, \ldots, \Gamma_I)^\top \in [1, \infty]^I$, 
we want to conduct valid test for Fisher's null $H_0$ under the sensitivity model with hidden biases at most $\bs{\Gamma}$. 
We achieve this by seeking an asymptotic upper bound for the tail probability of $T$ in \eqref{eq:test_stat} over all possible assignment mechanisms satisfying $\bs{\Gamma}^\sta \vecle \bs{\Gamma}$. 
Intuitively, 
the test statistic $T=\sum_{i=1}^I T_i$, as a summation of $I$ independent random variables, is asymptotically Gaussian with mean $\E(T) = \sum_{i=1}^I \E(T_i)$ and variance $\Var(T) = \sum_{i=1}^I \Var(T_i)$ as $I\rightarrow \infty$.  
For any $c$ no less than the maximum possible value of $\E(T)$, 
the Gaussian approximation of 
$\pr( T \ge c )$ is increasing in both mean $\E(T)$ and variance $\Var(T)$. 
Moreover, because 
the mean $\E(T)$ generally increases much faster (usually at the order of $I$) compared to the standard deviation $\sqrt{\Var(T)}$ (usually at the order of $\sqrt{I}$ ), 
it is more important to maximize the mean. 
Specifically, 
we should first maximize 
$\E(T)$, and then maximize 
$\Var(T)$ given that the mean has been maximized: 
\begin{equation}\label{eq:max_mean_var_vector}
\mu(\bs{\Gamma}) \equiv
\max_{\bs{u}\in [0,1]^N} \E_{\bs{Z} \sim \mathcal{M}(\bs{\Gamma}, \bs{u})}\{t(\bs{Z}, \bs{Y}(0))\}, 
\quad 
\sigma^2(\bs{\Gamma}) \equiv 
\max_{\bs{u}\in \mathcal{U}} \Var_{\bs{Z} \sim \mathcal{M}(\bs{\Gamma}, \bs{u})}\{t(\bs{Z}, \bs{Y}(0))\}, 
\end{equation}
where $\mathcal{U} = \{\bs{u}\in [0,1]^N: \E_{\bs{Z} \sim \mathcal{M}(\bs{\Gamma}, \bs{u})}\{t(\bs{Z}, \bs{Y}(0))\} = \mu(\bs{\Gamma})\}$ contains the maximizer 
for the mean maximization.
By the mutual independence of assignments across matched sets, it suffices to consider 
each matched set $i$ separately. 
That is, 
we want to maximize $\E(T_i)$ as well as $\Var(T_i)$ given that $\E(T_i)$ is maximized, under the constraint that 
$\Gamma_i^\sta \le \Gamma_i$. 
From \citet{Rosenbaum2000sep}, 
such an optimization has an almost closed-form solution, {\add requiring only searching over $n_i-1$ possible configurations of the hidden confounding; see the supplementary material for details.
Let $\mu_{i}(\Gamma_i)$ and $v_{i}^2(\Gamma_i)$ denote the resulting maximum mean and variance of $T_i$. 
We can then efficiently compute \eqref{eq:max_mean_var_vector} by 
$\mu(\bs{\Gamma}) = \sum_{i=1}^I \mu_i(\Gamma_i)$ and 
$\sigma^2(\bs{\Gamma}) = \sum_{i=1}^I v_i^2(\Gamma_i)$.}

The theorem below 
summarizes the 
asymptotic upper bound for the tail probability of
$T$ under 
$H_0$ and the sensitivity model with hidden biases at most $\bs{\Gamma}$.  
Let $\Phi(\cdot)$ denote
the standard Gaussian distribution function. 

\begin{theorem}\label{thm:set_vector_gamma}
Assume Fisher's null $H_0$ 
holds. 
If 
the true hidden bias $\bs{\Gamma}^\sta$
for all sets is bounded by $\bs{\Gamma} \in [1, \infty]^I$, i.e., $\bs{\Gamma}^\sta\vecle \bs{\Gamma}$, 
and 
certain regularity conditions 
hold, 
then for any $c \ge 0$, 
$$
\limsup_{I \rightarrow \infty}
\pr\big\{ T \ge \mu(\bs{\Gamma}) + c \sigma(\bs{\Gamma}) \big\} \le 
\pr\big\{
\tilde{T}(\bs{\Gamma}) \ge \mu(\bs{\Gamma}) + c \sigma(\bs{\Gamma})
\big\} = 1 - \Phi(c), 
$$
where $\tilde{T}(\bs{\Gamma}) \sim \mathcal{N}( \mu(\bs{\Gamma}), \sigma^2(\bs{\Gamma}) )$ with mean and variance defined as in \eqref{eq:max_mean_var_vector}. 
\end{theorem}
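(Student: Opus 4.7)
My plan is to combine a Lyapunov central limit theorem for $T=\sum_{i=1}^I T_i$ with the extremal property of the pair $(\mu(\bs{\Gamma}),\sigma^2(\bs{\Gamma}))$ established by the two-step maximization in \eqref{eq:max_mean_var_vector}. Because $\bs{\Gamma}^\sta \vecle \bs{\Gamma}$, Definition \ref{def:sen_model} lets us write the true distribution as $\bs{Z}\sim \mathcal{M}(\bs{\Gamma},\bs{u}^\sta)$ for some $\bs{u}^\sta\in[0,1]^N$, under which the per-set statistics $T_i$ are mutually independent and each bounded by $\max_j|q_{ij}|$. Denote $\mu^\sta=\E(T)=\sum_i\E(T_i)$ and $(\sigma^\sta)^2=\Var(T)=\sum_i\Var(T_i)$. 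The maximization defining $\mu(\bs{\Gamma})$ immediately gives $\mu^\sta\le\mu(\bs{\Gamma})$. Under the regularity conditions (boundedness of $q_{ij}$'s, lower growth of $\sigma^\sta$) deferred to the supplement, a Lyapunov CLT yields $(T-\mu^\sta)/\sigma^\sta\Rightarrow \mathcal{N}(0,1)$.

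Next I would rewrite the tail event as
\begin{equation*}
\pr\{T\ge \mu(\bs{\Gamma})+c\sigma(\bs{\Gamma})\}=\pr\!\left\{\frac{T-\mu^\sta}{\sigma^\sta}\ge r_I\right\},\qquad r_I=\frac{\mu(\bs{\Gamma})-\mu^\sta}{\sigma^\sta}+c\,\frac{\sigma(\bs{\Gamma})}{\sigma^\sta},
\end{equation*}
so that the CLT plus a subsequence argument reduces the theorem to showing $\liminf_I r_I\ge c$ uniformly over the feasible $\bs{u}^\sta$. This naturally splits into two scenarios. In the "mean-saturating" scenario $\mu^\sta=\mu(\bs{\Gamma})$, the configuration $\bs{u}^\sta$ lies in the feasible set $\mathcal{U}$, so by definition of $\sigma^2(\bs{\Gamma})$ we have $\sigma^\sta\le \sigma(\bs{\Gamma})$, hence $r_I=c\,\sigma(\bs{\Gamma})/\sigma^\sta\ge c$ directly.

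The main obstacle is the complementary scenario $\mu^\sta<\mu(\bs{\Gamma})$, in which $\sigma^\sta$ is no longer controlled by $\sigma(\bs{\Gamma})$ and can in principle exceed it. Here I would invoke the separable-maximization idea of \citet{Rosenbaum2000sep}: for each matched set $i$, $\E_{\bs{u}}(T_i)$ is a continuous function on the compact cube $[0,1]^{n_i}$ whose maximizer $\mu_i(\Gamma_i)$ is attained on a closed sub-cube, and the within-set variance is also continuous. Consequently, any per-set deficit $\mu_i(\Gamma_i)-\E(T_i)$ and per-set variance excess $\Var(T_i)-v_i^2(\Gamma_i)$ are linked through a uniform modulus-of-continuity bound; summing across $i$ then shows that whenever $(\sigma^\sta)^2$ exceeds $\sigma^2(\bs{\Gamma})$ by an asymptotically non-negligible amount, the aggregate mean gap $\mu(\bs{\Gamma})-\mu^\sta$ is of order $\sqrt{I}$ times that excess and thus $r_I\to\infty$; otherwise $\sigma^\sta/\sigma(\bs{\Gamma})\to 1$ and the non-negative first summand of $r_I$ already gives $\liminf r_I\ge c$. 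Packaging these two sub-cases uniformly via a compactness/continuity argument, and then taking $\limsup$ of $\pr\{(T-\mu^\sta)/\sigma^\sta\ge r_I\}$ through the CLT, completes the bound $\limsup_I \pr\{T\ge \mu(\bs{\Gamma})+c\sigma(\bs{\Gamma})\}\le 1-\Phi(c)$; the exact equality in the theorem is immediate from $\tilde T(\bs{\Gamma})\sim \mathcal{N}(\mu(\bs{\Gamma}),\sigma^2(\bs{\Gamma}))$.
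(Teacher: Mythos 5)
Your overall blueprint matches the paper's: a CLT for $T$ normalized by its true mean and standard deviation, followed by showing that the threshold $\mu(\bs{\Gamma})+c\sigma(\bs{\Gamma})$ eventually dominates $\mu^\sta+c\sigma^\sta$. Your handling of the mean-saturating case ($\mu^\sta=\mu(\bs{\Gamma})$, hence $\mathcal{A}_{\bs{\Gamma}}=\emptyset$ and $\sigma^\sta\le\sigma(\bs{\Gamma})$) is correct, and the same reasoning disposes of any case with $\sigma^\sta\le\sigma(\bs{\Gamma})$ since the first summand of $r_I$ is nonnegative.

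The gap is in the remaining case $\sigma^\sta>\sigma(\bs{\Gamma})$. You claim that a ``uniform modulus-of-continuity bound'' on each compact cube $[0,1]^{n_i}$ makes the aggregate mean gap $\mu(\bs{\Gamma})-\mu^\sta$ be of order $\sqrt{I}$ times the aggregate variance excess $(\sigma^\sta)^2-\sigma^2(\bs{\Gamma})$. This does not follow from continuity and compactness, and in fact is generally false. Near a configuration that maximizes $\E(T_i)$, a perturbation of the within-set probabilities of size $\epsilon$ moves the mean by $O(\epsilon R_i)$ and the variance by the same order $O(\epsilon R_i^2)$; there is no intrinsic $\sqrt{I}$ separation. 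Compactness only gives existence of a modulus, not a quantitative rate, and the direction you need---small mean gap forces small variance excess---would require a sharpness or curvature property of the mean functional that need not hold. What actually closes this case in the paper is an explicit \emph{regularity condition} on the data-generating configuration: Condition~\ref{cond:sen_Gamma_vec_asymp} assumes precisely that ${\Delta}_{\mu}(\bs{\Gamma})/{\Delta}_{v^2}(\bs{\Gamma})\cdot\sqrt{\sum_i R_i^2/(n_i\Gamma_i^\sta)^3}\to\infty$, which, through the algebra in Lemma~\ref{lemma:set_vector_gamma}, delivers $\mu(\bs{\Gamma})+c\sigma(\bs{\Gamma})\ge\mu+c\sigma$ for large $I$. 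Your argument presents as a theorem a statement that is in fact an assumption (and a stronger one than needed, since the paper requires only that the ratio decay slower than $1/\sqrt{I}$, not that it grow like $\sqrt{I}$). To repair the proof, replace the modulus-of-continuity reasoning by explicitly invoking such a condition and tracking the resulting lower bound on $\mu(\bs{\Gamma})+c\sigma(\bs{\Gamma})-(\mu^\sta+c\sigma^\sta)$ as in Lemma~\ref{lemma:set_vector_gamma}; the CLT step you describe is otherwise sound.
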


Theorem \ref{thm:set_vector_gamma} generalizes \citet{Rosenbaum2000sep}, who considered the special case where $\bs{\Gamma} = \Gamma_0 \bs{1}$ for some $\Gamma_0\ge 1$, 
{\add and imposes weaker regularity conditions; see the supplementary material for details. Moreover, 
analogous to Remark \ref{rmk:pair_gamma_infinite}, 
Theorem \ref{thm:set_vector_gamma} allows some elements of $\bs{\Gamma}$ to be infinite, which is important  
for the later study on  
quantiles of hidden biases.}

From Theorem \ref{thm:set_vector_gamma}, we can then obtain an asymptotic valid $p$-value 
for testing $H_0$ under the constraint that $\bs{\Gamma}^\sta\vecle \bs{\Gamma}$; see the supplementary material for details.

\subsection{Sensitivity analysis for quantiles of hidden biases}\label{sec:sen_quan_set}

We now consider sensitivity analysis assuming that the hidden bias at rank $k$ is bounded by a certain $\Gamma_0 \ge 1$, i.e., $\Gamma^\sta_{(k)} \le \Gamma_0$. 
Below we 
intuitively explain
how to find an asymptotic upper bound for the tail probability of $T$ under a sensitivity model with hidden bias at rank $k$ bounded by $\Gamma_0$. 
First, by the same logic as in Section \ref{sec:sen_quantile_pair}, 
it suffices to consider sensitivity models with hidden biases at most 
$\Gamma_0 \cdot \bs{1}_{\mathcal{I}} + \infty \cdot \bs{1}_{\mathcal{I}}^\complement$, for all possible subsets  $\mathcal{I}\subset \{1,2,\ldots, I\}$ with cardinality $k$.  
In particular, we can use Theorem \ref{thm:set_vector_gamma} to construct an asymptotic upper bound for $T$ at any given $\mathcal{I}$, 
and then search over all possible $\mathcal{I}$; the latter can be challenging. 

Second, as in Section \ref{sec:sen_gamma_vec}, we use Gaussian approximation to find the optimal $\mathcal{I}$. 
For each $1\le i\le I$, we sort the $q_{ij}$'s for matched set $i$ in an increasing way: $q_{i(1)} \le q_{i(2)}\le \ldots \le q_{i(n_i)}$.
Define
\begin{align}\label{eq:mu_gamma0_I}
\mu(\Gamma_0; \mathcal{I}) & = \mu\left( \Gamma_0 \cdot \bs{1}_{\mathcal{I}} + \infty \cdot \bs{1}_{\mathcal{I}}^\complement \right)
= \sum_{i \in \mathcal{I}} \mu_i({\Gamma}_0) + 
\sum_{i \notin \mathcal{I}} \mu_i(\infty)
= \sum_{i \in \mathcal{I}} \mu_i({\Gamma}_0) + \sum_{i \notin \mathcal{I}} q_{i(n_i)}, 
\\
\label{eq:sigma_gamma0_I}
    \sigma^2(\Gamma_0; \mathcal{I}) & = 
    \sigma^2\left( \Gamma_0 \cdot \bs{1}_{\mathcal{I}} + \infty \cdot \bs{1}_{\mathcal{I}}^\complement \right)
    = \sum_{i \in \mathcal{I}} v_i^2({\Gamma}_0) + 
    \sum_{i \notin \mathcal{I}} v_i^2(\infty) 
    = \sum_{i \in \mathcal{I}} v_i^2({\Gamma}_0), 
\end{align}
where $\mu(\cdot)$, $\sigma^2(\cdot)$, $\mu_i(\cdot)$ and $v^2_i(\cdot)$ are defined as in Section \ref{sec:sen_gamma_vec}. 
Note that when the hidden bias for a matched set $i$ is allowed to be infinity, the maximum mean of $T_i$ is achieved when $T_i$ is constant  $q_{i(n_i)}$. 
This implies that $\mu_i(\infty) = q_{i(n_i)}$ and $v_i^2(\infty) = 0$, and 
illustrates the last equalities in \eqref{eq:mu_gamma0_I} and \eqref{eq:sigma_gamma0_I}. 
{\add 
We then consider the following maximization 
over the set $\mathcal{I}$ with cardinality $k$:}
\begin{align}
\label{eq:mu_sigma_quantile_k_simp}
    \mu(\Gamma_0; k) & = 
    \max_{\mathcal{I}: |\mathcal{I}| = k}
    \mu(\Gamma_0; \mathcal{I}), 
    \qquad 
    \sigma^2(\Gamma_0; k)
    = 
    \max_{\mathcal{I}: |\mathcal{I}| = k, \mu(\Gamma_0; \mathcal{I}) = \mu(\Gamma_0; k) } 
    \sigma^2(\Gamma_0; \mathcal{I}). 
\end{align}
Note that \eqref{eq:mu_sigma_quantile_k_simp} essentially gives the maximum mean and variance (given that mean is maximized) of $T$ over all possible assignment mechanisms satisfying $\Gamma^\sta_{(k)} \le \Gamma_0$. 
Fortunately, the optimal $\mathcal{I}$ for \eqref{eq:mu_sigma_quantile_k_simp} has a closed-form solution. 
Specifically, we rank the $I$ matched sets based on the values of $\mu_i(\infty) - \mu_i(\Gamma_0)$ for all $i$.  
Moreover, 
we rank 
ties 
based on their values of $v_i^2(\Gamma_0)$, 
where 
set with smaller $v_i^2(\Gamma_0)$ has a larger rank.
Let $\mathcal{I}_{\Gamma_0; k}$ denote the set of indices of matched sets with the $k$ smallest ranks.  
We can verify that $\mathcal{I}_{\Gamma_0; k}$ is the maximizer for \eqref{eq:mu_sigma_quantile_k_simp}.

The theorem below shows the asymptotic validity of the above Gaussian upper bound. 

\begin{theorem}\label{thm:set_quantile_gamma}
Assume Fisher's null $H_0$ holds. 
If 
the true hidden bias at rank $k$
is bounded by $\Gamma_0 \in [1, \infty)$, i.e., $\Gamma_{(k)}^\sta \le \Gamma_0$, 
and
certain regularity conditions 
hold, 
then for any $c \ge 0$, 
\begin{align*}
\limsup_{I\rightarrow \infty} 
\pr\left\{
T \ge \mu(\Gamma_0; k) + c \sigma(\Gamma_0; k)
\right\}
& \le 
\pr\left\{
\tilde{T}(\Gamma_0; k) \ge 
\mu(\Gamma_0; k) + c \sigma(\Gamma_0; k)
\right\}
= 
1 - \Phi(c),
\end{align*}
where $\tilde{T}(\Gamma_0; k) \sim \mathcal{N}( \mu(\Gamma_0; k), \sigma^2(\Gamma_0; k) )$ with mean and variance 
defined as in \eqref{eq:mu_sigma_quantile_k_simp}.  
\end{theorem}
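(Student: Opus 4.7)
The strategy is to reduce Theorem \ref{thm:set_quantile_gamma} to Theorem \ref{thm:set_vector_gamma} by (i) identifying the worst-case vector-valued bound on $\bs{\Gamma}^\sta$ allowed under $\Gamma_{(k)}^\sta \le \Gamma_0$, (ii) verifying that the ranking rule defining $\mathcal{I}_{\Gamma_0;k}$ actually solves the nested maximization in \eqref{eq:mu_sigma_quantile_k_simp}, and (iii) carrying out a Gaussian comparison to swap the (unknown) set $\mathcal{I}^\sta$ for $\mathcal{I}_{\Gamma_0;k}$.

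Since $\Gamma_{(k)}^\sta \le \Gamma_0$, there exists a (possibly unknown) subset $\mathcal{I}^\sta \subseteq \{1,\ldots,I\}$ of size $k$ with $\Gamma_i^\sta \le \Gamma_0$ for every $i \in \mathcal{I}^\sta$, equivalently $\bs{\Gamma}^\sta \vecle \Gamma_0 \cdot \bs{1}_{\mathcal{I}^\sta} + \infty \cdot \bs{1}_{\mathcal{I}^\sta}^\complement$. Applying Theorem \ref{thm:set_vector_gamma} with this bounding vector and using the defining identities $\mu(\Gamma_0 \cdot \bs{1}_{\mathcal{I}^\sta} + \infty \cdot \bs{1}_{\mathcal{I}^\sta}^\complement) = \mu(\Gamma_0; \mathcal{I}^\sta)$ and $\sigma^2(\Gamma_0 \cdot \bs{1}_{\mathcal{I}^\sta} + \infty \cdot \bs{1}_{\mathcal{I}^\sta}^\complement) = \sigma^2(\Gamma_0; \mathcal{I}^\sta)$ yields, for every $c' \ge 0$,
$\limsup_I \pr\{T \ge \mu(\Gamma_0; \mathcal{I}^\sta) + c'\sigma(\Gamma_0; \mathcal{I}^\sta)\} \le 1 - \Phi(c')$. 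To justify the closed form in \eqref{eq:mu_sigma_quantile_k_simp}, I would rewrite $\mu(\Gamma_0; \mathcal{I}) = \sum_{i=1}^I \mu_i(\infty) - \sum_{i \in \mathcal{I}}[\mu_i(\infty) - \mu_i(\Gamma_0)]$, so that maximizing over $|\mathcal{I}| = k$ reduces to picking the $k$ indices with the smallest values of $\mu_i(\infty) - \mu_i(\Gamma_0)$; since $v_i^2(\infty) = 0$ and $\sigma^2(\Gamma_0; \mathcal{I}) = \sum_{i \in \mathcal{I}} v_i^2(\Gamma_0)$, variance tie-breaking among mean-maximizers amounts to preferring indices with larger $v_i^2(\Gamma_0)$, which is precisely the ranking that defines $\mathcal{I}_{\Gamma_0;k}$.

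To replace $(\mu(\Gamma_0; \mathcal{I}^\sta), \sigma(\Gamma_0; \mathcal{I}^\sta))$ by $(\mu(\Gamma_0; k), \sigma(\Gamma_0; k))$ in the asymptotic bound, I would set
$c' = [\mu(\Gamma_0; k) - \mu(\Gamma_0; \mathcal{I}^\sta) + c\,\sigma(\Gamma_0; k)]/\sigma(\Gamma_0; \mathcal{I}^\sta)$
so that $\mu(\Gamma_0; k) + c\sigma(\Gamma_0; k) = \mu(\Gamma_0; \mathcal{I}^\sta) + c'\sigma(\Gamma_0; \mathcal{I}^\sta)$. The target inequality then reduces to showing $c' \ge c$, i.e.,
$\mu(\Gamma_0; k) - \mu(\Gamma_0; \mathcal{I}^\sta) \ge c\,[\sigma(\Gamma_0; \mathcal{I}^\sta) - \sigma(\Gamma_0; k)]$,
since $1-\Phi$ is decreasing. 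When $\mu(\Gamma_0; \mathcal{I}^\sta) = \mu(\Gamma_0; k)$, the variance tie-breaking in the definition of $\mathcal{I}_{\Gamma_0;k}$ forces $\sigma(\Gamma_0; \mathcal{I}^\sta) \le \sigma(\Gamma_0; k)$, so the right-hand side is nonpositive and the inequality is immediate.

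The main obstacle is the remaining case $\mu(\Gamma_0; \mathcal{I}^\sta) < \mu(\Gamma_0; k)$: one must show that the mean loss from a suboptimal $\mathcal{I}^\sta$ asymptotically dominates $c$ times any possible gain in standard deviation. The idea is that, if $\mathcal{I}^\sta$ differs from $\mathcal{I}_{\Gamma_0;k}$ by $m$ swapped indices, then the mean gap is a sum of $m$ strictly positive rank differences $[\mu_j(\infty) - \mu_j(\Gamma_0)] - [\mu_i(\infty) - \mu_i(\Gamma_0)]$, while the change in $\sigma^2(\Gamma_0; \mathcal{I})$ is also $O(m)$; since $\sigma^2(\Gamma_0; \mathcal{I}) \asymp I$ under the regularity conditions that uniformly bound $\mu_i(\cdot)$ and $v_i^2(\cdot)$, the identity $\sqrt{a+b}-\sqrt{a} \approx b/(2\sqrt{a})$ implies the $\sigma$-difference is of order $m/\sqrt{I}$, which is asymptotically dominated by the linear-in-$m$ mean gap. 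Formalizing this uniform-in-$\mathcal{I}^\sta$ domination is the delicate step; it likely proceeds by reducing to a finite number of swap types via the bounded range of per-set contributions and then applying the asymptotic regularity conditions relegated to the supplement. Combining this with the first step gives the theorem.
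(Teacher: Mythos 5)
Your proposal takes essentially the same approach as the paper's proof. The paper also establishes the theorem by (i) noting $\Gamma_{(k)}^\star \le \Gamma_0$ is equivalent to $\bs{\Gamma}^\star \vecle \Gamma_0\cdot\bs{1}_{\mathcal{I}_k^\star} + \infty\cdot\bs{1}_{\mathcal{I}_k^\star}^\complement$ for the unknown set $\mathcal{I}_k^\star$ of indices with the $k$ smallest true hidden biases, (ii) invoking the proof of Theorem \ref{thm:set_vector_gamma} (Lemma \ref{lemma:set_vector_gamma} plus the CLT Lemma \ref{lemma:match_clt}) to control $\pr\{T \ge \mu(\Gamma_0;\mathcal{I}_k^\star)+c\sigma(\Gamma_0;\mathcal{I}_k^\star)\}$, and (iii) showing in Lemma \ref{lemma:set_quantile_gamma} that the computed cutoff $\mu(\Gamma_0;k)+c\sigma(\Gamma_0;k)$ eventually dominates the cutoff at $\mathcal{I}_k^\star$, which is exactly the inequality you reduce to after your change of variable. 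Your framing via $c'=\{\mu(\Gamma_0;k)+c\sigma(\Gamma_0;k)-\mu(\Gamma_0;\mathcal{I}^\star)\}/\sigma(\Gamma_0;\mathcal{I}^\star)$ is cosmetically different; note $c'$ depends on $I$, so rather than applying Theorem \ref{thm:set_vector_gamma} directly at $c'$, the clean route is to use $c'\ge c$ together with monotonicity of the tail probability and then apply the theorem at the \emph{fixed} level $c$, which is what the paper does.

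On the "delicate step" you flag: you are right that this is the crux, and your heuristic (mean gap grows like the number of swaps $m$; $\sigma$-gap like $m/\sqrt{I}$) captures the correct intuition. Two refinements are needed to make it rigorous and match the paper's Lemma \ref{lemma:set_quantile_gamma}. First, the rank differences $[\mu_j(\infty)-\mu_j(\Gamma_0)]-[\mu_i(\infty)-\mu_i(\Gamma_0)]$ over swapped pairs need not all be strictly positive: ties can occur, and for tied pairs the ranking's variance tie-break guarantees $v_i^2(\Gamma_0)\ge v_j^2(\Gamma_0)$ so the variance cannot increase, but the mean contribution is exactly zero. The paper therefore restricts attention to the subset $\mathcal{B}_k(\Gamma_0)$ of swap pairs with a strict variance reversal, where the mean gap is strictly positive, and bounds both sides via averages over $\mathcal{B}_k(\Gamma_0)$. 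Second, the paper does not assume $\sigma^2(\Gamma_0;\mathcal{I})\asymp I$ or uniformly bounded $\mu_i(\cdot),v_i^2(\cdot)$; instead Condition \ref{cond:sen_Gamma_quantile_asymp} imposes directly that $\Delta_\mu(\Gamma_0;k)/\Delta_{v^2}(\Gamma_0;k)\cdot\sqrt{\sum_{i\in\mathcal{I}_k^\star}R_i^2/(n_i\Gamma_0)^3}\to\infty$, which is precisely the scale-free form of "the mean loss dominates $c$ times the standard-deviation gain" that your inequality needs, with the factor $\sqrt{\sum R_i^2/(n_i\Gamma_0)^3}$ serving as the lower bound on $\sigma(\Gamma_0;\mathcal{I}_k^\star)$ rather than an assumed $\sqrt{I}$ rate.
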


From Theorem \ref{thm:set_quantile_gamma}, we can immediately obtain an asymptotic valid $p$-value for testing $H_0$ under the 
the constraint that 
$\Gamma_{(k)}^\sta \le \Gamma_0$; see the supplementary material for details.

\subsection{Simultaneous sensitivity analysis for all $\Gamma_{(k)}^\sta$'s and $I^\sta(\Gamma_0)$'s}\label{sec:simulta_set}

From Section \ref{sec:equiv_sen} and analogous to Section \ref{sec:simultaneous_pair}, we can equivalently understand sensitivity analyses as inferring true hidden biases assuming Fisher's null, and we can utilize the asymptotic valid $p$-value from 
Section \ref{sec:sen_quan_set}
to construct large-sample confidence sets for the quantiles of hidden biases $\Gamma_{(k)}$'s as well as
$I^\sta(\Gamma_0)$'s. 
Importantly, these confidence sets will be simultaneously valid, and the intuition is similar to that in Section \ref{sec:simultaneous_pair}. 
For conciseness, 
we relegate the details to the supplementary material.

\section{Effect of Smoking on Blood Cadmium and Lead Levels}\label{sec:application}

We apply and illustrate our method by evaluating the effect of smoking on the amount of cadmium 
and lead 
in the blood, using the 2005–2006 National Health and Nutrition Examination Survey. 
We use the optimal matching 
in \citet{YuRosenbaum2019}, 
taking into account 
gender, age, race, education level, household income level and body mass index, 
and construct 
a matched comparison with 512 matched sets, each of which contains one daily smoker and two nonsmokers. 
For details on matching, 
see the supplementary material.

We now apply our method to conduct sensitivity analysis for quantiles of hidden biases. 
We use the difference-in-means as the test statistic; as shown in the supplementary material, such a sensitivity analysis is not only valid for Fisher's null of no effect, but also valid for the bounded null that smoking would not increase the level of cadmium or lead.  
In the following, we will 
focus on interpretation based on the bounded null hypothesis. 

Figure \ref{fig:sen_lead}(a) and (b) show the $95\%$ lower confidence limits for all quantiles of hidden biases with cadmium and lead as the outcome, respectively, 
assuming that 
smoking would not increase the blood cadmium or lead level. 
From Figure \ref{fig:sen_lead}(a), the causal effect of smoking on cadmium is quite insensitive to unmeasured confounding. 
Similar to Section \ref{sec:motivate}, 
assuming 
that smoking would not increase the blood cadmium level, then we are $95\%$ confident that the 1st, 52nd, 154th, 256th and 359th largest hidden biases must be greater than {\DW $82.44, 72.52, 46.90, 26.88$, and
$11.66$}, respectively. 
Equivalently, there is at least one, $10\%$, $30\%$, $50\%$ and $70\%$ of matched sets having hidden biases greater than {\DW $82.44, 72.52, 46.90, 26.88$, and
$11.66$}, respectively. 
From Remark \ref{rmk:average_bias}, 
the $95\%$ lower confidence limits for the average hidden biases $\bar{\Gamma}_g^\sta$ with $g(x)$ equal to $x, \log(x)$ and $x/(1+x)$ are, respectively, 32.18, 17.73 and 8.33. 
If we doubt any of these confidence statements, then 
the bounded null
is not likely to hold, suggesting that  
smoking 
may increase blood cadmium levels
for some units. 
Note that these confidence statements require a huge amount of hidden biases 
in a 
substantial
proportion of matched sets, which are not likely 
after matching for the observed confounding. 
We believe that daily smoking has a significant non-zero causal effect on the blood cadmium amount, 
which 
is insensitive to unmeasured confounding. 
\begin{figure}[htb]
    \centering
    \begin{subfigure}[htbp]{0.5\textwidth}
    \centering
        \includegraphics[width=.65\textwidth]{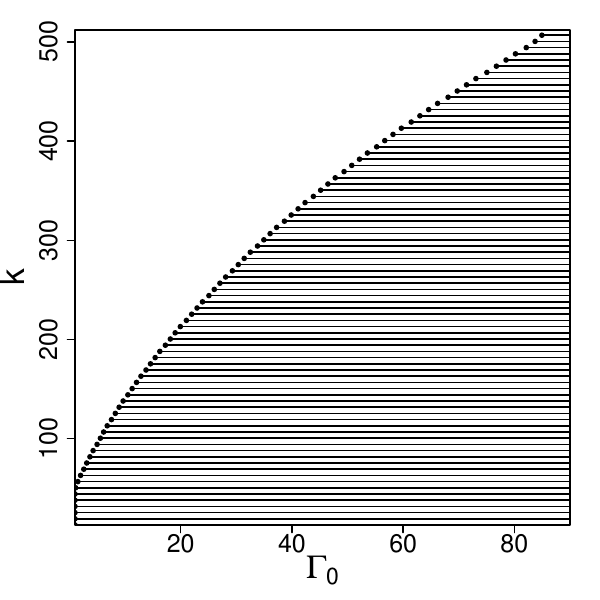}
        \caption{\DW Cadmium}
    \end{subfigure}%
    \begin{subfigure}[htbp]{0.5\textwidth}
    \centering
        \includegraphics[width=.65\textwidth]{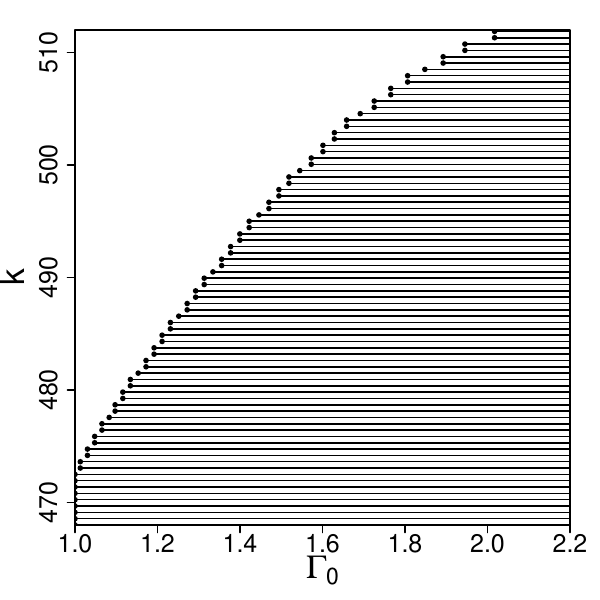}
        \caption{\DW Lead}
    \end{subfigure}
    \caption{$95\%$ lower confidence limits for the hidden biases from the 512 matched sets, assuming smoking has no effect on cadmium (for (a)) or lead (for (b)). }
    \label{fig:sen_lead}
\end{figure}

From Figure \ref{fig:sen_lead}(b), the effect of smoking on lead is moderately insensitive to unmeasured confounding, although much less robust than that for cadmium. 
Specifically, 
assuming that smoking would not increase the lead level, 
the 1st and 26th
largest hidden biases must be greater than {\DW$2.01$ and $1.25$%
}, respectively,
implying that at least one and $5\%$ 
of matched sets have hidden biases greater than {\DW$2.01$ and $1.25$}. 
If we think unmeasured confounding of such strength could not happen, 
then the causal effect of smoking on lead should be significant, 
with the conclusion being 
insensitive to moderate amount of unmeasured confounding.

\section{Conclusion and Discussion}\label{sec:discussion}

We generalize conventional sensitivity analysis for the maximum hidden bias to that for quantiles of hidden biases. 
We demonstrate that the proposed sensitivity analysis for all quantiles of hidden biases is simultaneously valid and is thus a free lunch added to the conventional analysis. 
Furthermore, 
in the supplementary material, 
we show that the proposed sensitivity analysis (including the conventional one) also works for the more general bounded null hypotheses on treatment effects, as long as the test statistic enjoys certain properties \citep{li2020quantile}.

The proposed approach extends straightforwardly to arbitrary sharp and bounded null hypotheses; see the supplementary material for details. 
By inverting tests for a sequence of null hypotheses with various constant effects, we can obtain interval estimates for constant treatment effects under sensitivity models with bounds on quantiles of hidden biases. 
Such confidence intervals, compared to conventional ones with bounds on the maximum hidden bias, can provide 
more robust causal conclusions. 
Analogously, by inverting tests for a sequence of bounded null hypotheses, we can obtain interval estimates for the maximum (or minimum) individual effects under sensitivity models with bounds on quantiles of hidden biases.
It will be interesting to further extend this framework to conduct robust sensitivity analysis for more general weak null hypotheses about, e.g., average effects or effect quantiles \citep{Fogarty:2019aa, li2020quantile, SL22quantile, Chen24, chen2024enhanced}. We leave them for future study.

Determining plausible amounts of hidden biases in a study is an important while challenging problem for most sensitivity analysis approaches. This task typically requires domain knowledge and careful consideration of adjusted pretreatment covariates. \citet{Amplification09} has proposed a two-parameter model for interpreting hidden biases, where a magnitude of $1.5$ corresponds to an unobserved confounding that doubles the odds of treatment and increases the odds of a positive pair difference in responses by a factor of $4$; see \citet[][Chapter 9]{rosenbaum2017observation} for more details. 
In addition, empirical studies can also provide some reference for assessing the sensitivity of a study's conclusion. 
For example, 
as discussed in \citet{rosenbaum2005sensitivity}, 
the study on
the effects of diethylstilbestrol \citep{HUP71} becomes sensitive when $\Gamma_{(I)}^\star$ exceeds $7$, while the study on the effects of coffee \citep{Coffee1973} becomes sensitive when $\Gamma_{(I)}^\star$ exceeds $1.3$. 
Further research is needed to investigate our proposed sensitivity analyses, particularly with respect to specific empirical studies and their general interpretation.

{\lxr 

In the paper we focused on Rosenbaum's sensitivity analysis for matched observational studies. 
There has been growing interest on the marginal sensitivity analysis model for inverse probability weighting estimation of treatment effects \citep{zhaosenIPW2019, Dorn2023, Tan2024}. 
The marginal sensitivity analysis model also imposes a uniform bound on the strength of unmeasured confounding across all units, and thus faces the same critique discussed in Section \ref{sec:match_Rosen_sen}. 
To address this limitation, \citet{zhang2022} recently proposed a sensitivity analysis based on  the average strength of unmeasured confounding. 
It will be interesting to extend our sensitivity analysis for quantiles of unmeasured confounding strength to this context, and we leave it for future study. 
}

\section*{Supplementary Material}
The supplementary material contains additional details for the asymptotic sensitivity analysis, 
the proofs for all theorems, corollaries and propositions, 
and a broader justification of the proposed sensitivity analysis (including the conventional one) for testing bounded null hypotheses.
It also contains discussion on computational complexity of the proposed algorithms and simulation studies for the validity and power of the proposed approach. 
An R package for implementing the proposed sensitivity analysis is available 
at \url{https://github.com/dongxiao-wu/sensitivityq}.

\section*{Acknowledgement}
We thank the Editor, the Associate Editor, and the reviewers for insightful and constructive
comments.

\section*{Funding}

X. L. is partly supported by the National Science Foundation under grant DMS-2400961.

\bibliographystyle{plainnat}
\bibliography{sen}

\newpage
\setcounter{equation}{0}
\setcounter{section}{0}
\setcounter{figure}{0}
\setcounter{example}{0}
\setcounter{proposition}{0}
\setcounter{corollary}{0}
\setcounter{theorem}{0}
\setcounter{lemma}{0}
\setcounter{table}{0}
\setcounter{condition}{0}

\renewcommand {\theproposition} {A\arabic{proposition}}
\renewcommand {\theexample} {A\arabic{example}}
\renewcommand {\thefigure} {A\arabic{figure}}
\renewcommand {\thetable} {A\arabic{table}}
\renewcommand {\theequation} {A\arabic{equation}}
\renewcommand {\thelemma} {A\arabic{lemma}}
\renewcommand {\thesection} {A\arabic{section}}
\renewcommand {\thetheorem} {A\arabic{theorem}}
\renewcommand {\thecorollary} {A\arabic{corollary}}
\renewcommand {\thecondition} {A\arabic{condition}}
\renewcommand {\thermk} {A\arabic{rmk}}

\renewcommand {\thepage} {A\arabic{page}}
\setcounter{page}{1}

\begin{center}
	\bf \LARGE 
	Supplementary Material 
\end{center}

Appendix \ref{sec:add_sen_pair} gives further details for the simultaneous sensitivity analysis in Section \ref{sec:simultaneous_pair}.

Appendix \ref{sec:add_sen_set} gives additional details for the proposed sensitivity analysis 
in Section \ref{sec:set}.

Appendix \ref{sec:bound_null} 
extends the proposed sensitivity analysis to 
bounded null
hypotheses. 

Appendix \ref{sec:complexity} discusses the computational complexity of the proposed sensitivity analysis. 

Appendix \ref{app:pair_vector} contains the proofs of Theorem \ref{thm:pair_vector_gamma} and Corollary \ref{thm:pair_vector_gamma}. 

Appendix \ref{app:pair_quantile} contains the proofs of Theorem \ref{thm:pair_quantile_gamma} and Corollary \ref{cor:pval_pair_quantile}. 

Appendix \ref{app:pair_ci} contains the proofs of Corollary \ref{cor:pair_ci_gamma3} and Theorem \ref{thm:pair_joint_ci}. 

Appendix \ref{app:set_vector} contains the proofs of Theorem \ref{thm:set_vector_gamma} and Corollary \ref{cor:pval_set_vector_gamma}. 

Appendix \ref{app:set_quantile} contains the proofs of Theorem \ref{thm:set_quantile_gamma} and Corollary \ref{cor:sen_pval_set}. 

Appendix \ref{app:set_ci} contains the proofs of Corollary \ref{cor:set_ci_gamma} and Theorem \ref{thm:set_joint_ci} in Appendix \ref{sec:simul_set_app}. 

{\add 
Appendix \ref{sec:bound_null_pair_proof} contains the proofs of Theorems \ref{thm:bounded_null_pair} and \ref{thm:bound_joint_ci} in Appendix \ref{sec:bound_pair}. 

Appendix \ref{sec:bound_null_set_proof} contains the proofs of Theorems \ref{thm:bounded_null_sets_effect_diff_increasing} and \ref{thm:bound_joint_ci_set} in Appendix \ref{sec:bound_set}. 

Appendix \ref{sec:tech_match} contains the technical details for Remark \ref{rmk:match}. 
}

Appendix \ref{sec:simulation} contains simulation studies.

Appendix \ref{app:match} contains the details of matching for the smoking dataset.

\section{Additional details for sensitivity analysis of matched pairs}\label{sec:add_sen_pair}

{\lxr
We give more details for the simultaneous sensitivity analysis of matched pairs in Section \ref{sec:simultaneous_pair}. 
Specifically, we show that the confidence sets for $\Gamma^\sta_{(k)}$'s and $I^\sta(\Gamma_0)$'s can be equivalently viewed as $I$-dimensional confidence sets for the true hidden bias vector $\bs{\Gamma}^\sta$, 
and their intersections over all $k$ and $\Gamma_0$, respectively, are the same. 
These supplement Theorem \ref{thm:pair_joint_ci} in the main paper.

Define 
$
\mathcal{S}_{\Gamma_0;k} = \{\bs{\Gamma}\in [1, \infty]^I: \Gamma_{(k)} \le \Gamma_0 \} = \{\bs{\Gamma}\in \mathbb{R}^I: \I(\Gamma_i > \Gamma_0 ) \le I-k \}$
as the set of possible values of $\bs{\Gamma}^\sta$ under the constraint $\Gamma^\sta_{(k)} \le \Gamma_0 \Longleftrightarrow I^\sta(\Gamma_0) \le I - k$, 
where $\Gamma_{(k)}$ denotes the coordinate of $\bs{\Gamma}$ at rank $k$. 
For $1\le k \le I$ and $\alpha\in (0,1)$,  we can verify that the $1-\alpha$ confidence set for $\Gamma_{(k)}^\sta$ has the following equivalent form as a confidence set for $\bs{\Gamma}^\sta$:
\begin{align*}
\Gamma_{(k)}^\sta \in 
\{\Gamma_0\ge 1: \overline{p}_{\Gamma_0;k} > \alpha\}
\Longleftrightarrow 
\bs{\Gamma}^\sta \in 
\bigcap_{\Gamma_0: \overline{p}_{\Gamma_0;k} \le \alpha} \mathcal{S}_{\Gamma_0;k}^\complement, 
\end{align*}
where $\mathcal{S}_{\Gamma_0;k}^\complement = [1, \infty]^I \setminus \mathcal{S}_{\Gamma_0;k}$.
Analogously, for any $\Gamma_0 \ge 1$ and $\alpha\in (0,1)$, 
the $1-\alpha$ confidence set for $I^\sta(\Gamma_0)$ has the following equivalent form as a confidence set for $\bs{\Gamma}^\sta$: 
\begin{align*}
I^\sta(\Gamma_0) \in \{I-k: \overline{p}_{\Gamma_0;k} > \alpha, 0\le k \le I\} 
\Longleftrightarrow 
\bs{\Gamma}^\sta \in 
\bigcap_{k: \overline{p}_{\Gamma_0;k} \le \alpha} \mathcal{S}_{\Gamma_0;k}^\complement
\end{align*}
It is then immediate that, for any $\alpha\in (0,1)$, the intersection of all confidence sets for $\Gamma_{(k)}^\sta$'s over $1\le k \le K$ is the same as that of all confidence sets for $I^\sta(\Gamma_0)$'s over all $\Gamma_0 \ge 1$, 
and it has the following equivalent forms: 
\begin{align}\label{eq:pair_joint_set}
\bigcap_{k=1}^n 
\bigcap_{\Gamma_0: \overline{p}_{\Gamma_0;k} \le \alpha} \mathcal{S}_{\Gamma_0;k}^\complement 
= 
\bigcap_{\Gamma_0 \ge 1}
\bigcap_{k: \overline{p}_{\Gamma_0;k} \le \alpha} \mathcal{S}_{\Gamma_0;k}^\complement
= 
\bigcap_{\Gamma_0, k: \overline{p}_{\Gamma_0;k} \le \alpha} \mathcal{S}_{\Gamma_0;k}^\complement. 
\end{align}
Importantly, 
as shown in Theorem \ref{thm:pair_joint_ci}, 
the set in \eqref{eq:pair_joint_set} is indeed a $1-\alpha$ confidence set for $\bs{\Gamma}^\sta$, indicating the simultaneous validity of the confidence sets in Corollary \ref{cor:pair_ci_gamma3}.

}

\section{Additional details for sensitivity analysis of matched sets}\label{sec:add_sen_set}

As discussed in Section \ref{sec:set}, finite-sample exact sensitivity analysis becomes challenging for matched sets with $n_i \ge 2$, and we instead invoke finite population asymptotic analysis \citep{lidingclt2016, ShiLi2024}. Specifically, we will embed the $I$ matched sets into a sequence of matched studies with increasing number of sets, and consider asymptotically valid inference as the number of sets $I$ goes to infinity. 
Throughout this section, we give additional details for Section \ref{sec:set}, including regularity conditions guaranteeing the validity of the asymptotic sensitivity analyses.

\subsection{Sensitivity analysis with set-specific bounds on hidden biases}

\subsubsection{Choice of the test statistic}\label{sec:choice_test_stat_appendix}

\citet{Rosenbaum2007m} suggested to use m-statistic with 
$q_{ij} = n_i^{-1} \sum_{l=1}^{n_i} \psi((Y_{ij}(0)-Y_{il}(0))/s)$, 
where 
$\psi(y) = \text{sign}(y) \cdot \kappa \cdot \min[ 1, \max\{0, 
(|y|-\iota)/(\kappa - \iota) 
\} ]$ 
is an odd function 
with truncation at $\kappa$ and inner trimming at $\iota$, 
$s$ is 
the median 
of all the absolute pairwise differences $|Y_{ij}(0) - Y_{ik}(0)|$'s for all $i$ and $j \ne k$. 
When there is no truncation and inner trimming, i.e., $\kappa= \infty$ and $\iota=0$, the scaling term $s$ no longer matters, 
and the m-statistic reduces to a certain weighted difference-in-means statistic; see \eqref{eq:diff_in_means}. 

\subsubsection{Additional details for computation}
We discuss how to compute $\mu_i(\Gamma_i)$ and $v_i^2(\Gamma_i)$ for each matched set $i$ given that the hidden bias for this set is bounded by $\Gamma_i$, using the algorithm proposed by \citet{Rosenbaum2000sep}. 
We sort the $q_{ij}$'s for matched set $i$ in an increasing way: $q_{i(1)} \le q_{i(2)}\le \ldots \le q_{i(n_i)}$, 
and, for $1\le a \le n_i-1$,  define 
\begin{align*}
    \mu_{ia}(\Gamma_i) = 
    \frac{\sum_{j=1}^a q_{i(j)} + \Gamma_i \sum_{j=a+1}^{n_i} q_{i(j)}}{a + \Gamma_i (n_i - a)}, 
    \quad 
    v_{ia}^2(\Gamma_i) = 
    \frac{\sum_{j=1}^a q^2_{i(j)} + \Gamma_i \sum_{j=a+1}^{n_i} q^2_{i(j)}}{a + \Gamma_i(n_i - a)} - \{\mu_{ia}(\Gamma_i)\}^2. 
\end{align*}
Then under the constraint that $\Gamma_i^\sta \le \Gamma_i$, 
the maximum mean of $T_i$ and its maximum variance given its mean is maximized have the following forms: 
\begin{align}\label{eq:max_mean_var_set_i}
    \mu_{i}(\Gamma_i) = 
    \max_{a\in \{1, 2, \ldots, n_i-1\}} \mu_{ia}(\Gamma_i), 
    \quad 
    v_{i}^2(\Gamma_i) = 
    \max_{a: u_{ia}(\Gamma_i) = u_{i}(\Gamma_i)}v_{ia}^2(\Gamma_i), 
    \qquad (1 \le i \le I). 
\end{align}

\subsubsection{Details for regularity conditions}\label{sec:detail_set_specific_bound_regularity}

We introduce the regularity conditions for justifying the asymptotic upper bound of $T$ using the Gaussian distribution with mean $\mu(\bs{\Gamma})$ and variance $\sigma^2(\bs{\Gamma})$, 
under the sensitivity model with hidden biases at most $\bs{\Gamma}$, 
as studied in Section \ref{sec:sen_gamma_vec} of the main paper . 
We consider first the asymptotic Gaussianity for the true distribution of $T$. 
Recall that each set $i$ has $n_i$ units, and the corresponding test statistic $T_i$ takes value in $\{q_{i1}, \ldots, q_{in_i}\}$. 
Let $R_i = \max_{j}q_{ij} - \min_{j}q_{ij}$ be the range of $T_i$. 
The following regularity condition, motivated by the Lindeberg's condition, 
ensures the asymptotic Gaussianity of the true distribution of 
$T$. 
\thmadj
\begin{condition}\label{cond:clt}
As $I\rightarrow \infty$, 
$(\max_{1\le i\le I}R_i^2) /\{ \sum_{i=1}^I R_i^2/(n_i\Gamma_i^\sta)^3 \} \rightarrow 0$.
\end{condition}
\thmadj
We consider then the asymptotic upper bound for the tail probability of $T$. 
For each matched set $i$, 
let $\mu_i$ and $v_i^2$ denote the true mean and variance of $T_i$. 
Recall that $\mu_i(\Gamma_i)$ and $v_i^2(\Gamma_i)$ in \eqref{eq:max_mean_var_set_i} are the maximum mean and variance (given mean is maximized) of $T_i$ when the hidden bias for set $i$ is bounded by $\Gamma_i$.   
For any $\bs{\Gamma} = (\Gamma_1, \Gamma_2, \ldots, \Gamma_I)\in [0,1]^I$, 
we introduce $\mathcal{A}_{\bs{\Gamma}} = \{i:v_i^2>v_i^2(\Gamma_i), 1\le i \le I\}$
to be the set of indices for matched sets whose true variances are greater than the maximized variances, 
and 
${\Delta}_{\mu}(\bs{\Gamma}) = |\mathcal{A}_{\bs{\Gamma}}|^{-1} \sum_{i \in \mathcal{A}_{\bs{\Gamma}}}\{\mu_i(\Gamma_i)-\mu_i\}$ 
and 
${\Delta}_{v^2}(\bs{\Gamma}) = |\mathcal{A}_{\bs{\Gamma}}|^{-1} \sum_{i \in \mathcal{A}_{\bs{\Gamma}}}\{v_i^2 - v_i^2(\Gamma_i)\}$ 
to denote the average differences between true and maximized means and variances of the $T_i$'s for matched sets in $\mathcal{A}_{\bs{\Gamma}}$. 
By definition, if the true hidden biases are indeed bounded by $\bs{\Gamma}$, then $\mu_i$ must be strictly less than $\mu_i(\Gamma_i)$ for all $i\in \mathcal{A}_{\bs{\Gamma}}$; 
otherwise, $\mu_i = \mu_i(\Gamma_i)$ and $v_i^2>v_i^2(\Gamma_i)$ for some $i$, violating the construction in \eqref{eq:max_mean_var_set_i}. 
Consequently, when $\bs{\Gamma}^\sta \vecle \bs{\Gamma}$ and $\mathcal{A}_{\bs{\Gamma}}$ is nonempty, both ${\Delta}_{\mu}(\bs{\Gamma})$ and ${\Delta}_{v^2}(\bs{\Gamma})$ are positive. 
The following condition assumes that ${\Delta}_{\mu}(\bs{\Gamma})$ is not too small compared to ${\Delta}_{v^2}(\bs{\Gamma})$ as $I\rightarrow \infty$, which essentially rigorizes our intuition that maximizing mean is more important than maximizing variance as discussed 
in Section \ref{sec:sen_gamma_vec}. 
For descriptive simplicity, we define the ratio ${\Delta}_{\mu}(\bs{\Gamma})/{\Delta}_{v^2}(\bs{\Gamma})$ to be infinity
when $\mathcal{A}_{\bs{\Gamma}}$
is an empty set. 
\thmadj
\begin{condition}\label{cond:sen_Gamma_vec_asymp}
As $I \rightarrow \infty$, 
$
{\Delta}_{\mu}(\bs{\Gamma})/{\Delta}_{v^2}(\bs{\Gamma}) \cdot \sqrt{\sum_{i=1}^I R_i^2/(n_i\Gamma_i^\sta)^3} \rightarrow \infty. 
$
\end{condition}
\thmadj

\subsubsection{Formal statements of the results}

The regularity conditions in Theorem \ref{thm:set_vector_gamma} are 
Conditions \ref{cond:clt} and \ref{cond:sen_Gamma_vec_asymp}. 
The following corollary gives a formal statement for the asymptotic $p$-value for testing Fisher's null under the sensitivity model with set-specific bounds on hidden biases. 

\thmadj
\begin{corollary}\label{cor:pval_set_vector_gamma}
If Fisher's null $H_0$ in \eqref{eq:H0} holds, the vector $\bs{\Gamma}^\sta$ of true hidden biases for all sets is bounded by $\bs{\Gamma} \in [1, \infty]^I$, i.e., $\bs{\Gamma}^\sta \vecle \bs{\Gamma}$, 
and  
Conditions \ref{cond:clt} and \ref{cond:sen_Gamma_vec_asymp} 
hold, 
then 
$$
\tilde{p}_{\bs{\Gamma}} \equiv \tilde{G}_{\bs{\Gamma}}(T) \cdot \I\{ T\ge \mu(\bs{\Gamma}) \} + \I\{T < \mu(\bs{\Gamma})\}
= 
1 - \Phi\left( \frac{T - \mu(\bs{\Gamma})}{\sigma(\bs{\Gamma})} \right) \cdot \I\{ T\ge \mu(\bs{\Gamma}) \} 
$$ 
is an asymptotic valid $p$-value, i.e.,  
$\limsup_{I\rightarrow \infty}\pr( \tilde{p}_{\bs{\Gamma}} \le \alpha)\le\alpha$ for $\alpha\in (0,1)$, 
where 
$\tilde{G}_{\bs{\Gamma}}(c) = \pr\{\tilde{T}(\bs{\Gamma}) \ge c\}$ with $\tilde{T}(\bs{\Gamma})$ defined as in Theorem \ref{thm:set_vector_gamma},
and 
$\mu(\bs{\Gamma})$ and $\sigma^2(\bs{\Gamma})$ are defined 
in \eqref{eq:max_mean_var_vector}.  
\end{corollary}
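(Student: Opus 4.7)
The plan is to reduce the validity of $\tilde p_{\bs{\Gamma}}$ directly to the tail bound established in Theorem \ref{thm:set_vector_gamma}, since the $p$-value is constructed by inverting exactly that Gaussian tail bound. The key step is to translate the event $\{\tilde p_{\bs{\Gamma}} \le \alpha\}$ into an event of the form $\{T \ge \mu(\bs{\Gamma}) + c\sigma(\bs{\Gamma})\}$ for an appropriate nonnegative $c$ depending on $\alpha$, after which Theorem \ref{thm:set_vector_gamma} does the rest.

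First I would unpack the definition of $\tilde p_{\bs{\Gamma}}$. Because the factor $\I\{T\ge \mu(\bs{\Gamma})\}$ forces the $p$-value to equal $1$ whenever $T<\mu(\bs{\Gamma})$, the event $\{\tilde p_{\bs{\Gamma}}\le \alpha\}$ is contained in $\{T\ge \mu(\bs{\Gamma})\}$ for every $\alpha\in(0,1)$. On this event, $\tilde p_{\bs{\Gamma}} = 1-\Phi\bigl((T-\mu(\bs{\Gamma}))/\sigma(\bs{\Gamma})\bigr)$, so $\tilde p_{\bs{\Gamma}}\le \alpha$ is equivalent to $T \ge \mu(\bs{\Gamma}) + c_\alpha \sigma(\bs{\Gamma})$ with $c_\alpha := \Phi^{-1}(1-\alpha)$, whenever $c_\alpha \ge 0$.

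Next I would split into two cases depending on whether $\alpha\le 1/2$. For $\alpha\in(0,1/2]$ we have $c_\alpha\ge 0$, and the reduction above yields
\[
\pr(\tilde p_{\bs{\Gamma}}\le \alpha) = \pr\bigl(T \ge \mu(\bs{\Gamma}) + c_\alpha\sigma(\bs{\Gamma})\bigr).
\]
Applying Theorem \ref{thm:set_vector_gamma} with $c = c_\alpha$ (which is valid because Conditions \ref{cond:clt} and \ref{cond:sen_Gamma_vec_asymp} and $\bs{\Gamma}^\sta\vecle\bs{\Gamma}$ are assumed), the right-hand side is at most $1-\Phi(c_\alpha) + o(1) = \alpha + o(1)$, so $\limsup_{I\to\infty}\pr(\tilde p_{\bs{\Gamma}}\le \alpha)\le \alpha$. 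For $\alpha\in(1/2,1)$, the event $\{\tilde p_{\bs{\Gamma}}\le \alpha\}$ is contained in $\{T\ge \mu(\bs{\Gamma})\}$, and Theorem \ref{thm:set_vector_gamma} with $c=0$ gives $\limsup_{I\to\infty}\pr(T\ge \mu(\bs{\Gamma}))\le 1-\Phi(0) = 1/2 \le \alpha$, completing this case.

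There is no real obstacle here: the corollary is essentially a reformulation of Theorem \ref{thm:set_vector_gamma} in the familiar ``invert the tail bound'' language. The only subtlety worth flagging in the write-up is the indicator $\I\{T\ge \mu(\bs{\Gamma})\}$, which is included precisely so that the reduction to $\{T \ge \mu(\bs{\Gamma}) + c_\alpha\sigma(\bs{\Gamma})\}$ remains one-to-one even in the uninteresting regime $\alpha>1/2$; without the indicator, one would have $\tilde p_{\bs{\Gamma}} = 1-\Phi((T-\mu(\bs{\Gamma}))/\sigma(\bs{\Gamma})) \in (1/2,1)$ when $T<\mu(\bs{\Gamma})$, which could artificially be below $\alpha$ on a set of large probability and spoil validity.
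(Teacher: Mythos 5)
Your proposal is correct and takes essentially the same route as the paper. The paper handles both regimes uniformly by setting $c = \max\{0, \Phi^{-1}(1-\alpha)\}\ge 0$ and invoking Theorem \ref{thm:set_vector_gamma} once, whereas you split explicitly into $\alpha\le 1/2$ and $\alpha>1/2$; this is a purely cosmetic difference, and the core reduction (translate $\{\tilde p_{\bs\Gamma}\le\alpha\}$ into an event of the form $\{T\ge\mu(\bs\Gamma)+c\sigma(\bs\Gamma)\}$ with $c\ge 0$ and apply the Gaussian tail bound) is identical.
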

\thmadj

\subsubsection{Additional remarks}\label{sec:rmk_set_gamma_vec}

First, Theorem \ref{thm:set_vector_gamma} generalizes \citet{Rosenbaum2000sep} and imposes weaker regularity conditions. 
In \citet[][Proposition 1]{Rosenbaum2000sep}, the invoked assumptions imply that ${\Delta}_{\mu}(\bs{\Gamma}) \ge \delta$ and ${\Delta}_{v^2}(\bs{\Gamma}) \le \overline{v}^2$ for some finite positive constant $\delta$ and $\overline{v}^2$ that do not change with $I$; in this case, Condition \ref{cond:sen_Gamma_vec_asymp} reduces to 
$\sum_{i=1}^I R_i^2/(n_i\Gamma_i^\sta)^3 \rightarrow \infty$ as $I\rightarrow \infty$, which is almost guaranteed by Condition \ref{cond:clt}.

Second,
recall that in our finite population asymptotics we have a sequence of matched studies with increasing sizes, under which the true hidden bias $\bs{\Gamma}^\sta$ must vary with $I$ as well.  
In both Condition \ref{cond:sen_Gamma_vec_asymp} and Theorem \ref{thm:set_vector_gamma}, we also allow $\bs{\Gamma}$ to vary with $I$ in our asymptotics as $I \rightarrow \infty$. 
For simplicity, 
we make such dependence implicit.

\subsection{Sensitivity analysis for quantiles of hidden biases}

\subsubsection{Additional details for computation}

Below we give a practical way to implement the ranking of matched sets described in Section \ref{sec:sen_quan_set}, which is used to find the set $\mathcal{I}_{\Gamma_0; k}$ and thus solve the maximization in \eqref{eq:mu_sigma_quantile_k_simp}. 
Specifically, we can first re-order all sets such that $v_i^2(\Gamma_0)$ decreases in $i$, 
and then rank them based on the value of $\mu_i(\infty) - \mu_i(\Gamma_0)$ using the ``first'' method; see, e.g., the R function \textsf{rank} for details of the ``first'' method \citep{Rteam}.

\subsubsection{Details for regularity conditions}

We introduce the regularity conditions for ensuring the asymptotic validity of the derived Gaussian upper bound for the tail probability of $T$ in Section \ref{sec:sen_quan_set}. 

We first invoke Condition \ref{cond:clt} for 
the asymptotic Gaussian approximation for the true distribution of the test statistic $T$. 
We then consider the asymptotic upper bound for the tail probability of $T$ under the sensitivity model $\Gamma_{(k)}^\sta \le \Gamma_0$, assuming that $\mathcal{I}_k^\sta$, the set of indices of the $k$ matched sets with the smallest hidden biases, is known. 
Note that $\Gamma_{(k)}^\sta \le \Gamma_0$ is equivalent to that $\bs{\Gamma}^\sta \vecle \Gamma_0 \cdot \bs{1}_{\mathcal{I}_k^\sta} + \infty \cdot \bs{1}_{\mathcal{I}_k^\sta}^\complement$. 
We invoke the following condition, which essentially assumes that Condition \ref{cond:sen_Gamma_vec_asymp} holds for $\bs{\Gamma} = \Gamma_0 \cdot \bs{1}_{\mathcal{I}_k^\sta} + \infty \cdot \bs{1}_{\mathcal{I}_k^\sta}^\complement$. 
\thmadj
\begin{condition}\label{cond:quantile_set_true_Ik}
As $I \rightarrow \infty$, 
$
{\Delta}_{\mu}(\Gamma_0 \cdot \bs{1}_{\mathcal{I}_k^\sta} + \infty \cdot \bs{1}_{\mathcal{I}_k^\sta}^\complement)/{\Delta}_{v^2}(\Gamma_0 \cdot \bs{1}_{\mathcal{I}_k^\sta} + \infty \cdot \bs{1}_{\mathcal{I}_k^\sta}^\complement) \cdot \sqrt{\sum_{i=1}^I R_i^2/(n_i\Gamma_i^\sta)^3}  \rightarrow \infty, 
$
where 
${\Delta}_{\mu}(\cdot)$ and ${\Delta}_{v^2}(\cdot)$ are defined the same as in Appendix \ref{sec:detail_set_specific_bound_regularity}. 
\end{condition}
\thmadj
Finally, we consider the asymptotic validity of our set $\mathcal{I}_{\Gamma_0;k}$ obtained from the optimization in \eqref{eq:mu_sigma_quantile_k_simp}. 
In particular, we want to make sure that the asymptotic upper bound for the tail probability of $T$ pretending that $\mathcal{I}_{\Gamma_0;k}$ contains the matched sets with the $k$ smallest hidden biases is no less than that given the truth $\mathcal{I}_k^\sta$. 
For any $\Gamma_0\in [1, \infty)$, 
by our construction, for any $i \in \mathcal{I}_{\Gamma_0;k} \setminus \mathcal{I}_k^\sta$ and $j \in \mathcal{I}_k^\sta \setminus \mathcal{I}_{\Gamma_0;k}$, we must have 
$\mu_i(\infty) - \mu_i(\Gamma_0) \le \mu_j(\infty) - \mu_j(\Gamma_0)$, and moreover, 
$v_i(\Gamma_0) \ge v_j(\Gamma_0) $ if $\mu_i(\infty) - \mu_i(\Gamma_0) = \mu_j(\infty) - \mu_j(\Gamma_0)$. 
Define 
$
\mathcal{B}_k(\Gamma_0) = \big\{(i,j):  v_i^2(\Gamma_0) < v_j^2(\Gamma_0),  i \in \mathcal{I}_{\Gamma_0;k} \setminus \mathcal{I}_k^\sta, j \in \mathcal{I}_k^\sta \setminus \mathcal{I}_{\Gamma_0;k} \big\}. 
$
Then for any $(i,j) \in \mathcal{B}_k(\Gamma_0)$, 
$\mu_i(\infty) - \mu_i(\Gamma_0)$ must be strictly less than $\mu_j(\infty) - \mu_j(\Gamma_0)$. 
We further introduce the following to denote the average differences over all pairs in $\mathcal{B}_k(\Gamma_0)$: 
\begin{align*}
    {\Delta}_{\mu}(\Gamma_0; k) & = 
    \frac{1}{|\mathcal{B}_k(\Gamma_0)|} \sum_{(i,j)\in \mathcal{B}_k(\Gamma)}
    \left[ \{ \mu_j(\infty) - \mu_j(\Gamma_0) \} -  \{ \mu_i(\infty) - \mu_i(\Gamma_0) \} \right], \\
    {\Delta}_{v^2}(\Gamma_0; k) & = 
    \frac{1}{|\mathcal{B}_k(\Gamma_0)|} \sum_{(i,j)\in \mathcal{B}_k(\Gamma_0)}
    \big\{ v^2_j(\Gamma_0) - v^2_i(\Gamma_0) \big\}. 
\end{align*}
By the discussion before,
${\Delta}_{\mu}(\Gamma_0; k)$ and ${\Delta}_{v^2}(\Gamma_0; k)$ are positive if $\mathcal{B}_k(\Gamma_0)$ is non-empty; 
otherwise we define 
their ratio as $\infty$. 
The 
condition below 
rigorizes the intuition 
that maximizing mean is more important than maximizing variance, similar 
in spirit 
to 
Conditions \ref{cond:sen_Gamma_vec_asymp} and \ref{cond:quantile_set_true_Ik}.%
\thmadj
\begin{condition}\label{cond:sen_Gamma_quantile_asymp}
As $I \rightarrow \infty$, 
$
\Delta_{\mu}(\Gamma_0; k) / \Delta_{v^2}(\Gamma_0; k) \cdot \sqrt{\sum_{i\in \mathcal{I}^\sta_k} R_i^2/(n_i\Gamma_0)^3} \rightarrow \infty. 
$
\end{condition}
\thmadj

\subsubsection{Formal statements of the results}

The regularity conditions in Theorem \ref{thm:set_quantile_gamma} are Conditions 
\ref{cond:clt}, \ref{cond:quantile_set_true_Ik} and \ref{cond:sen_Gamma_quantile_asymp}. 
The following corollary gives a formal statement for the asymptotic $p$-value for testing Fisher's null under the sensitivity model with bounds on quantiles of hidden biases. 

\thmadj
\begin{corollary}\label{cor:sen_pval_set}
If Fisher's null $H_0$ in \eqref{eq:H0} holds, the true hidden bias at rank $k$ is bounded by $\Gamma_0 \in [1, \infty)$, i.e., $\Gamma_{(k)}^\sta \le \Gamma_0$, 
and Conditions  
\ref{cond:clt}, \ref{cond:quantile_set_true_Ik} and \ref{cond:sen_Gamma_quantile_asymp}
hold, then 
\begin{align*}
\tilde{p}_{\Gamma_0; k}
& = 
\tilde{G}_{\Gamma_0; k} (T) \cdot \I\{T \ge \mu(\Gamma_0; k) \} +  \I\{T < \mu(\Gamma_0; k) \}
= 
1 - \Phi\left(
\frac{T - \mu(\Gamma_0; k)}{\sigma(\Gamma_0; k)}
\right) \cdot 
\I\{T \ge \mu(\Gamma_0; k)\}
\end{align*}
is an asymptotic valid $p$-value, i.e., 
$\limsup_{I \rightarrow \infty}\pr(\tilde{p}_{\Gamma_0; k} \le \alpha) \le \alpha$ for $\alpha \in (0,1)$, where $\tilde{G}_{\Gamma_0; k} (c) = \pr\{\tilde{T}(\Gamma_0; k) \ge c\}$ with $\tilde{T}(\Gamma_0; k)$ defined 
as 
in Theorem \ref{thm:set_quantile_gamma}, 
and 
$\mu(\Gamma_0; k)$ and $\sigma^2(\Gamma_0; k)$ are defined 
in \eqref{eq:mu_sigma_quantile_k_simp}. 
\end{corollary}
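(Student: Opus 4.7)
The plan is to derive Corollary \ref{cor:sen_pval_set} as a direct consequence of Theorem \ref{thm:set_quantile_gamma}, by translating the statement about the standardized tail bound into a statement about the induced $p$-value. Since all regularity conditions of Theorem \ref{thm:set_quantile_gamma} are assumed, I would just invoke that theorem and then argue about the rejection event of $\tilde p_{\Gamma_0;k}$.

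First, I would rewrite $\{\tilde p_{\Gamma_0; k} \le \alpha\}$ in an explicit form in terms of $T$. By the two equivalent expressions in the statement, $\tilde p_{\Gamma_0;k}=1$ on $\{T<\mu(\Gamma_0;k)\}$ and $\tilde p_{\Gamma_0;k}=1-\Phi\bigl((T-\mu(\Gamma_0;k))/\sigma(\Gamma_0;k)\bigr)$ on $\{T\ge\mu(\Gamma_0;k)\}$. Setting $c_\alpha=\Phi^{-1}(1-\alpha)$,
\[
\{\tilde p_{\Gamma_0;k}\le \alpha\}=\bigl\{T\ge \mu(\Gamma_0;k)\bigr\}\cap\bigl\{T\ge \mu(\Gamma_0;k)+c_\alpha\sigma(\Gamma_0;k)\bigr\},
\]
so this event equals $\{T\ge \mu(\Gamma_0;k)+c_\alpha\sigma(\Gamma_0;k)\}$ when $c_\alpha\ge 0$ (i.e., $\alpha\le 1/2$), and equals $\{T\ge\mu(\Gamma_0;k)\}$ when $c_\alpha<0$ (i.e., $\alpha>1/2$).

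Next, I would apply Theorem \ref{thm:set_quantile_gamma}, which under Conditions \ref{cond:clt}, \ref{cond:quantile_set_true_Ik}, \ref{cond:sen_Gamma_quantile_asymp} guarantees
\[
\limsup_{I\to\infty}\Pr\bigl\{T\ge \mu(\Gamma_0;k)+c\,\sigma(\Gamma_0;k)\bigr\}\le 1-\Phi(c)
\quad\text{for every }c\ge 0.
\]
For $\alpha\in(0,1/2]$, I take $c=c_\alpha\ge 0$ and obtain $\limsup\Pr(\tilde p_{\Gamma_0;k}\le\alpha)\le 1-\Phi(c_\alpha)=\alpha$, which is the desired bound. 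For $\alpha\in(1/2,1)$, I take $c=0$ in the theorem to obtain $\limsup\Pr(T\ge\mu(\Gamma_0;k))\le 1-\Phi(0)=1/2\le\alpha$, which also gives the desired bound since the rejection event in this regime is $\{T\ge\mu(\Gamma_0;k)\}$.

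I do not anticipate a real technical obstacle here: the content of the corollary is entirely contained in Theorem \ref{thm:set_quantile_gamma}; the only subtlety is bookkeeping of the two regimes $\alpha\le 1/2$ and $\alpha>1/2$ caused by the indicator $\I\{T\ge\mu(\Gamma_0;k)\}$ in the definition of $\tilde p_{\Gamma_0;k}$ (which effectively censors the $p$-value at $1$ whenever the observed $T$ falls below the maximized null mean). The same two-line argument simultaneously justifies Corollary \ref{cor:pval_set_vector_gamma}, since it only invokes the analogous tail bound of Theorem \ref{thm:set_vector_gamma}.
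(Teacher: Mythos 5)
Your proof is correct and is essentially the same argument as the paper's. The paper unifies your two regimes in one step by rewriting the rejection event as $\{T\ge \mu(\Gamma_0;k)+c\,\sigma(\Gamma_0;k)\}$ with $c=\max\{0,\Phi^{-1}(1-\alpha)\}$, then bounding $1-\Phi(c)\le 1-\Phi(\Phi^{-1}(1-\alpha))=\alpha$; your explicit split into $\alpha\le 1/2$ and $\alpha>1/2$ is a minor stylistic difference, not a different route.
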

\thmadj

\subsubsection{Additional remarks}\label{sec:rmk_gamma_quantile}

Similar to the discussion in 
{\lxr Appendix}
\ref{sec:rmk_set_gamma_vec}, 
in Conditions \ref{cond:quantile_set_true_Ik} and \ref{cond:sen_Gamma_quantile_asymp} and Theorem \ref{thm:set_quantile_gamma}, both $\Gamma_0$ and $k$ are allowed to vary with $I$, and the asymptotic conditions or results there concern the limiting behavior when $I$ goes to infinite. 
Below we intuitively argue the rationality of the three regularity conditions in Theorem \ref{thm:set_quantile_gamma} under various specifications for $\Gamma_0$ and $k$. 
Condition \ref{cond:clt} depends only on the true hidden biases and is unaffected by $\Gamma_0$ or $k$, and imposes a mild regularity condition. 
When $\Gamma_{(k)}^\sta \le \Gamma_0$, 
as illustrated before, 
both 
$\Delta_{\mu}(\Gamma_0 \cdot \bs{1}_{\mathcal{I}_k^\sta} + \infty \cdot \bs{1}_{\mathcal{I}_k^\sta}^\complement)/\Delta_{v^2}(\Gamma_0 \cdot \bs{1}_{\mathcal{I}_k^\sta} + \infty \cdot \bs{1}_{\mathcal{I}_k^\sta}^\complement)$ 
and 
$\Delta_{\mu}(\Gamma_0; k) / \Delta_{v^2}(\Gamma_0; k)$ are the ratios between two positive averages (or $0/0$ defined to be infinity). 
Thus, it is likely that they can be bounded below by a certain positive constant. 
Assuming this is true, then Condition \ref{cond:quantile_set_true_Ik} and \ref{cond:sen_Gamma_quantile_asymp} reduces to $\sum_{i\in \mathcal{I}_k^\sta} R_i^2/(n_i\Gamma_0)^3 \rightarrow \infty$, 
which is likely to holds when $k\rightarrow \infty$ as $I \rightarrow \infty$. In practice, we are often interested in $k \approx \beta I$ for some $\beta\in (0,1)$, which corresponds to the $\beta$th quantile of the empirical distribution of the hidden biases across all 
matched 
sets.

\subsection{Simultaneous sensitivity analysis for all quantiles of hidden biases 
}\label{sec:simul_set_app}

Below we give the details for Section \ref{sec:simulta_set}. Specifically, 
assuming Fisher's null $H_0$ holds, 
the $p$-value in Corollary \ref{cor:sen_pval_set} is equivalently an asymptotic valid $p$-value for testing the null hypothesis $\Gamma_{(k)}^\sta \le \Gamma_0 \Leftrightarrow I^\sta(\Gamma_0) \le I-k$
on the true hidden biases, 
and inverting the test for various values of $\Gamma_0$ and $k$ can then provide us confidence sets for $\Gamma_{(k)}^\sta$ and $I^\sta(\Gamma_0)$, respectively. 
Moreover, as demonstrated shortly, these confidence sets are simultaneously valid, without the need of any adjustments due to multiple analyses.

We first invoke the following condition, which essentially assumes that Conditions \ref{cond:clt}, \ref{cond:quantile_set_true_Ik} and \ref{cond:sen_Gamma_quantile_asymp} invoked in Theorem \ref{thm:set_quantile_gamma} and Corollary \ref{cor:sen_pval_set} hold simultaneously for all $k$ and $\Gamma_0$ within certain ranges. 

\begin{condition}\label{cond:set_simultaneous} 
As $I \rightarrow \infty$, 
for the sequences of $\{\underline{k}_I: I=1,2, \ldots\}$ and $\{\overline{\Gamma}_I: I = 1, 2, \ldots \}$,
\begin{enumerate}[label=(\roman*), topsep=1ex,itemsep=-0.3ex,partopsep=1ex,parsep=1ex]
\item $(\max_{1\le i\le I}R_i^2) /\{ \sum_{i=1}^I R_i^2/(n_i\Gamma_i^\sta)^3 \} \rightarrow 0$;
\item
$\inf_{\underline{k}_I \le k\le I, \Gamma_0 \in [\Gamma_{(k)}^\sta, \overline{\Gamma}_I]}
\Delta_{\mu}(\Gamma_0 \cdot \bs{1}_{\mathcal{I}_k^\sta} + \infty \cdot \bs{1}_{\mathcal{I}_k^\sta}^\complement)/\Delta_{v^2}(\Gamma_0 \cdot \bs{1}_{\mathcal{I}_k^\sta} + \infty \cdot \bs{1}_{\mathcal{I}_k^\sta}^\complement) \cdot \sqrt{\sum_{i=1}^I R_i^2/(n_i\Gamma_i^\sta)^3}  \rightarrow \infty; 
$
\item
$
\inf_{\underline{k}_I \le k \le I, \Gamma_0 \in [\Gamma_{(k)}^\sta, \overline{\Gamma}_I]}
\Delta_{\mu}(\Gamma_0; k) / \Delta_{v^2}(\Gamma_0; k) \cdot \sqrt{\sum_{i\in \mathcal{I}_k^\sta} R_i^2/(n_i\Gamma_0)^3} \rightarrow \infty. 
$
\end{enumerate}
\end{condition}

\begin{rmk}\label{rmk:reg_cond_quantile}
Below we give some intuition for Condition \ref{cond:set_simultaneous}. 
As discussed in 
{\lxr Appendix}
\ref{sec:rmk_gamma_quantile}, 
Condition \ref{cond:set_simultaneous} (i) does not depend on the value of $\Gamma_0$ or $k$, and generally holds under mild conditions on the true hidden biases and the $q_{ij}$'s. Condition \ref{cond:set_simultaneous}(ii) and (iii) depend on $\Gamma_0$ and $k$. 
Moreover, we expect both ratios $\Delta_{\mu}(\Gamma_0 \cdot \bs{1}_{\mathcal{I}_k^\sta} + \infty \cdot \bs{1}_{\mathcal{I}_k^\sta}^\complement)/\Delta_{v^2}(\Gamma_0 \cdot \bs{1}_{\mathcal{I}_k^\sta} + \infty \cdot \bs{1}_{\mathcal{I}_k^\sta}^\complement)$ and $\Delta_{\mu}(\Gamma_0; k) / \Delta_{v^2}(\Gamma_0; k)$ to be of a positive constant order and relatively robust to a wide range of $\Gamma_0$ and $k$. 
Then, intuitively, Condition \ref{cond:set_simultaneous}(ii) essentially requires that $\sum_{i=1}^I R_i^2/(n_i\Gamma_i^\sta)^3 \rightarrow \infty$ as $I\rightarrow \infty$, which is almost implied by Condition \ref{cond:set_simultaneous}(i). 
Moreover, Condition \ref{cond:set_simultaneous}(iii) essentially requires that $\sum_{i\in \mathcal{I}_{\underline{k}_I}^\sta} R_i^2/(n_i\overline{\Gamma}_I)^3 \rightarrow \infty$ as $I\rightarrow \infty$, which is likely to hold when $\lim_{I\rightarrow \infty} \underline{k}_I = \infty$ and $\overline{\Gamma}_I$ is bounded. 
For example, we can choose $\underline{k}_I \approx \beta I$ and $\overline{\Gamma}_I$ to be a certain positive constant. 
In practice, the choice of $\underline{k}_I$ and $\overline{\Gamma}_I$ generally does not affect our inferential results. 
This is because, when $k$ is small or $\Gamma_0$ is large, we allow a large amount of hidden biases in the treatment assignment, under which we 
do not expect the $p$-values to be significant. 
This issue 
will be further discussed throughout the remaining of this subsection. 
\end{rmk}

To ensure the validity of the $p$-value for all values of $(\Gamma_0, k) \in [1, \infty] \times \{1,2,\ldots, I\}$, 
Condition \ref{cond:set_simultaneous} motivates us to consider the following modified $p$-value: 
\begin{align}\label{eq:modify_pval}
\vardbtilde{p}_{\Gamma_0, k} = 
\tilde{p}_{\Gamma_0;k} \cdot \I(\Gamma_0 \le \overline{\Gamma}_I \text{ and } k \ge \underline{k}_I) 
+ 
1 \cdot \I(\Gamma_0 > \overline{\Gamma}_I \text{ or } k < \underline{k}_I), 
\quad 
(\forall \Gamma_0, k)
\end{align}
with $\underline{k}_I$ and $\overline{\Gamma}_I$ being the sequences in Condition \ref{cond:set_simultaneous}. Specifically, we set the $p$-value to be the conservative 1 
whenever $\Gamma_0$ is greater than 
$\overline{\Gamma}_I$ or $k$ is less than $\overline{k}_I$. 
Similar to the discussion at the end of Remark \ref{rmk:reg_cond_quantile}, 
the set of possible values of $\bs{\Gamma}^\sta$ under 
the null hypothesis $\Gamma_{(k)}^\sta \le \Gamma_0$ increases as $\Gamma_0$ increases and $k$ decreases. Thus, the larger the $\Gamma_0$ and the smaller the $k$, the more hidden biases are allowed in the matched study, under which we expect the $p$-value $\tilde{p}_{\Gamma_0;k}$ to be insignificant (i.e., greater than the significant level of interest). Thus, setting the $p$-value to be 1 for small $k$ and large $\Gamma_0$ as in 
\eqref{eq:modify_pval} generally does not affect our inference results. 

The following corollary then constructs  
asymptotic valid 
confidence sets for quantiles of hidden biases 
as well as 
numbers of sets with hidden biases 
greater than 
any threshold. 

\begin{corollary}\label{cor:set_ci_gamma}
Assume that Fisher's null $H_0$ and Condition \ref{cond:set_simultaneous} hold. 
(i)
For any $1 \le k \le I$ and $\alpha\in (0,1)$, 
an asymptotic $1-\alpha$ confidence set for $\Gamma_{(k)}^\sta$ is $\{\Gamma_0\ge 1: \vardbtilde{p}_{\Gamma_0;k} > \alpha\} \subset [\tilde{\Gamma}_{(k)}, \infty]$, where $\tilde{\Gamma}_{(k)} \equiv  \inf \{\Gamma_0\ge 1: \vardbtilde{p}_{\Gamma_0;k} > \alpha\}$.
(ii)
For any $\Gamma_0\in \mathbb{R}$ and $\alpha\in (0,1)$, 
an asymptotic $1-\alpha$ confidence set for $I^\sta(\Gamma_0)$ is 
$\{I-k: \vardbtilde{p}_{\Gamma_0;k} > \alpha, 0 \le k \le I\} \subset \{
\tilde{I}(\Gamma_0), 
\ldots, I
\}$, 
where 
$\tilde{I}(\Gamma_0) = I - \sup \{k: \vardbtilde{p}_{\Gamma_0;k} > \alpha, 0\le k \le I\}$. 
\end{corollary}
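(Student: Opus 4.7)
The plan is to obtain both coverage statements by standard test inversion applied to the asymptotically valid $p$-value $\tilde{p}_{\Gamma_0;k}$ from Corollary \ref{cor:sen_pval_set}, combined with a short case analysis that handles the truncation in the definition \eqref{eq:modify_pval} of $\vardbtilde{p}_{\Gamma_0;k}$. The key observation is that Condition \ref{cond:set_simultaneous}(i)--(iii) specializes, at any pair $(k,\Gamma_0)$ with $k \ge \underline{k}_I$ and $\Gamma_0 \in [\Gamma_{(k)}^\sta,\overline{\Gamma}_I]$, to exactly Conditions \ref{cond:clt}, \ref{cond:quantile_set_true_Ik}, and \ref{cond:sen_Gamma_quantile_asymp}, since an infimum over a range dominates the value at any point of that range.

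For part (i), fix $k$ and $\alpha$. It suffices to show $\limsup_{I\to\infty} \pr(\vardbtilde{p}_{\Gamma_{(k)}^\sta;k} \le \alpha) \le \alpha$, after which the inclusion $\{\Gamma_0\ge 1: \vardbtilde{p}_{\Gamma_0;k} > \alpha\} \subset [\tilde{\Gamma}_{(k)},\infty]$ is immediate from the definition $\tilde{\Gamma}_{(k)} = \inf\{\Gamma_0\ge 1: \vardbtilde{p}_{\Gamma_0;k} > \alpha\}$. I would split the sequence of studies according to whether both $k \ge \underline{k}_I$ and $\Gamma_{(k)}^\sta \le \overline{\Gamma}_I$ hold: on the complementary event, \eqref{eq:modify_pval} forces $\vardbtilde{p}_{\Gamma_{(k)}^\sta;k} = 1 > \alpha$ trivially, while on this event $\vardbtilde{p}_{\Gamma_{(k)}^\sta;k} = \tilde{p}_{\Gamma_{(k)}^\sta;k}$ and Corollary \ref{cor:sen_pval_set} applied at this pair (noting $\Gamma_{(k)}^\sta \vecle \Gamma_{(k)}^\sta$ trivially) yields an $\alpha + o(1)$ bound on the error probability. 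Combining both subsequence cases and taking $\limsup$ gives the desired $\le \alpha$.

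For part (ii), fix $\Gamma_0$ and $\alpha$, and set $K^\sta = I - I^\sta(\Gamma_0) = \max\{k : 0 \le k \le I,\ \Gamma_{(k)}^\sta \le \Gamma_0\}$ with the convention $\Gamma_{(0)}^\sta = 1$ already introduced in the paper. Membership of $I^\sta(\Gamma_0)$ in the candidate set $\{I-k: \vardbtilde{p}_{\Gamma_0;k} > \alpha,\ 0 \le k \le I\}$ is equivalent to $\vardbtilde{p}_{\Gamma_0;K^\sta} > \alpha$. Since $\Gamma_{(K^\sta)}^\sta \le \Gamma_0$ by the very definition of $K^\sta$, the sensitivity model is correctly specified at $(K^\sta,\Gamma_0)$, and exactly the same two-case split as in part (i) applied to the index $K^\sta$ gives $\limsup \pr(\vardbtilde{p}_{\Gamma_0;K^\sta} \le \alpha) \le \alpha$. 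The inclusion in $\{\tilde{I}(\Gamma_0),\ldots,I\}$ follows directly from the definition of $\tilde{I}(\Gamma_0)$ as $I$ minus the supremum of qualifying indices.

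The only genuine bookkeeping is to verify that Condition \ref{cond:set_simultaneous}(ii)--(iii) indeed implies the pointwise Conditions \ref{cond:quantile_set_true_Ik} and \ref{cond:sen_Gamma_quantile_asymp} at the particular (possibly data-dependent) $(K^\sta,\Gamma_0)$ arising in part (ii); but this reduces to the trivial remark that an infimum over the range $\{k \ge \underline{k}_I,\ \Gamma_0 \in [\Gamma_{(k)}^\sta,\overline{\Gamma}_I]\}$ bounds the corresponding quantity at any specific pair within that range. No substantive obstacle is anticipated: the truncation rule \eqref{eq:modify_pval} and the monotonicity of $\tilde{p}_{\Gamma_0;k}$ in $\Gamma_0$ together reduce the argument to a single invocation of Corollary \ref{cor:sen_pval_set}.
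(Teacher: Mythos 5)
Your proof is correct, but it takes a genuinely different route from the paper. The paper proves Corollary~\ref{cor:set_ci_gamma} in one line by observing that it is a direct consequence of Theorem~\ref{thm:set_joint_ci}: since the intersection in \eqref{eq:set_joint_set} over all $(k,\Gamma_0)$ is a valid $1-\alpha$ confidence set for $\bs{\Gamma}^\sta$, and each individual confidence set (viewed as a constraint on $\bs{\Gamma}^\sta$) is a superset of that intersection, each individual set inherits asymptotic coverage $\ge 1-\alpha$ for free. Your argument instead proves each pointwise coverage statement directly by test inversion: you evaluate $\vardbtilde{p}_{\Gamma_0;k}$ at the true parameter value ($\Gamma_0 = \Gamma_{(k)}^\sta$ in part (i), $k = K^\sta$ in part (ii)), split on whether the truncation in \eqref{eq:modify_pval} fires (in which case the modified $p$-value is $1$ and the error probability is zero), and on the surviving subsequence invoke Corollary~\ref{cor:sen_pval_set} after checking that Condition~\ref{cond:set_simultaneous}(i)--(iii) dominates the pointwise Conditions~\ref{cond:clt}, \ref{cond:quantile_set_true_Ik}, \ref{cond:sen_Gamma_quantile_asymp} at the relevant pair. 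That specialization argument is exactly right and essential. What the paper's route buys is economy: one pass through the hard uniform estimate (Lemma~\ref{lemma:set_joint_ci}) yields both the simultaneous result and all the marginal results. What your route buys is locality: you never need to assemble the $I$-dimensional intersection, and each statement is established by a self-contained subsequence argument. One small caution: you invoke ``monotonicity of $\tilde{p}_{\Gamma_0;k}$ in $\Gamma_0$'' at the end, but the asymptotic $p$-value is not known to be monotone in $\Gamma_0$ for matched sets (the paper explicitly flags this in the computational-complexity discussion), and in fact the corollary states only $\subset[\tilde{\Gamma}_{(k)},\infty]$ rather than equality precisely to avoid relying on it; fortunately your actual coverage argument never uses monotonicity, so no gap results.
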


In Corollary \ref{cor:set_ci_gamma}, for any $1 \le k \le I$, 
to construct confidence set for $\Gamma_{(k)}^\sta$, 
we usually report the lower confidence limit $\tilde{\Gamma}_{(k)}$, the infimum of $\Gamma_0$ such that $\vardbtilde{p}_{\Gamma_0;k} > \alpha$, instead of searching over all possible $\Gamma_0$. 
Analogously, for any $\Gamma_0 \in [1, \infty]$, we usually report the lower confidence limit for $I^\sta(\Gamma_0)$. Moreover, the reported lower confidence limits using either the modified $p$-value or the original $p$-value are very similar, since these detected thresholds $\tilde{\Gamma}_{(k)}$ and $\tilde{I}(\Gamma_0)$ are likely to fall in $[1, \overline{\Gamma}_I]$ and $[0, I-\underline{k}_I]$. 

The asymptotic confidence sets in Corollary \ref{cor:set_ci_gamma} are indeed simultaneously valid. 
Similar to 
Section \ref{sec:simultaneous_pair}, the confidence sets for $\Gamma_{(k)}$'s and $I^\sta(\Gamma_0)$'s 
can be equivalently viewed as confidence sets for the true hidden bias $\bs{\Gamma}^\sta$: 
\begin{align*}
\Gamma_{(k)}^\sta \in 
\{\Gamma_0\ge 1: \vardbtilde{p}_{\Gamma_0;k} > \alpha\}
& \Longleftrightarrow 
\bs{\Gamma}^\sta \in 
\bigcap_{\Gamma_0: \vardbtilde{p}_{\Gamma_0;k} \le \alpha} \mathcal{S}_{\Gamma_0;k}^\complement, & \qquad (1\le k \le I)
\\
I^\sta(\Gamma_0) \in \{I-k: \vardbtilde{p}_{\Gamma_0;k} > \alpha, 0 \le k \le I\}
& \Longleftrightarrow 
\bs{\Gamma}^\sta \in 
\bigcap_{k \ge \underline{k}_I: \vardbtilde{p}_{\Gamma_0;k} \le \alpha} \mathcal{S}_{\Gamma_0;k}^\complement, & \qquad \Gamma_0 \in [1, \infty],
\end{align*}
where 
$
\mathcal{S}_{\Gamma_0;k}$
is the set of all possible $\bs{\Gamma}^\sta$ under the constraint $\Gamma^\sta_{(k)} \le \Gamma_0 \Leftrightarrow I^\sta(\Gamma_0) \le I - k$, 
defined
as in 
Appendix \ref{sec:add_sen_pair}. 
We can verify that the intersection of all confidence sets for $\Gamma_{(k)}^\sta$'s over all $k$ is the same as that for  $I^\sta(\Gamma_0)$'s over all $\Gamma_0$, and it has the following equivalent forms:
\begin{align}\label{eq:set_joint_set}
\bigcap_{k=1}^I 
\bigcap_{\Gamma_0: \vardbtilde{p}_{\Gamma_0;k} \le \alpha} \mathcal{S}_{\Gamma_0;k}^\complement 
= 
\bigcap_{\Gamma_0 \ge 1}
\bigcap_{k: \vardbtilde{p}_{\Gamma_0;k} \le \alpha} \mathcal{S}_{\Gamma_0;k}^\complement
= 
\bigcap_{\Gamma_0, k: \vardbtilde{p}_{\Gamma_0;k} \le \alpha} \mathcal{S}_{\Gamma_0;k}^\complement. 
\end{align}

\begin{theorem}\label{thm:set_joint_ci}
Assume that Fisher's null $H_0$ and Condition \ref{cond:set_simultaneous} hold. 
For any $\alpha \in (0,1)$, the intersection of all $1-\alpha$ confidence sets for the quantiles of hidden biases $\Gamma_{(k)}^\sta$'s, viewed as a confidence set for $\bs{\Gamma}^\sta$, is the same as that of all confidence sets for the numbers $I^\sta(\Gamma_0)$'s of matched sets with hidden biases greater than any threshold, 
and it has equivalent forms as in \eqref{eq:set_joint_set}. 
Moreover, it is indeed an asymptotic $1-\alpha$ confidence set with probability at least $1-\alpha$ to cover the true hidden bias vector $\bs{\Gamma}^\sta$, i.e., 
$
\liminf_{I \rightarrow \infty} \pr( \bs{\Gamma}^\sta \in \bigcap_{\Gamma_0, k: \vardbtilde{p}_{\Gamma_0;k} \le \alpha} \mathcal{S}_{\Gamma_0;k}^\complement ) \ge 1-\alpha. 
$
\end{theorem}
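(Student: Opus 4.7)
The plan is to mirror the proof strategy of Theorem \ref{thm:pair_joint_ci}, adapted to the asymptotic framework needed for matched sets, leveraging the uniform regularity from Condition \ref{cond:set_simultaneous}. The equivalence of the three forms of the intersection in \eqref{eq:set_joint_set} is a set-theoretic statement already spelled out in the text preceding the theorem, so the real work is proving the simultaneous coverage statement. I would argue by complementation: the event $\{\bs{\Gamma}^\sta \notin \bigcap_{\Gamma_0,k:\vardbtilde{p}_{\Gamma_0;k} \le \alpha} \mathcal{S}^\complement_{\Gamma_0;k}\}$ holds exactly when there exist $\Gamma_0$ and $k$ with $\vardbtilde{p}_{\Gamma_0;k}\le \alpha$ and $\Gamma_{(k)}^\sta \le \Gamma_0$.

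The first key step is monotonicity: I would verify that $\tilde{p}_{\Gamma_0;k}$, viewed as a function of $\Gamma_0$ for fixed $k$, is nondecreasing, since enlarging $\Gamma_0$ enlarges the class of admissible assignment mechanisms and thus enlarges the maximum mean $\mu(\Gamma_0;k)$ (with the variance maximization taken subject to the mean being maximized). Because $\vardbtilde{p}_{\Gamma_0;k}$ equals $\tilde{p}_{\Gamma_0;k}$ on the box $[1,\overline{\Gamma}_I]\times \{\underline{k}_I,\ldots,I\}$ and is set to $1$ elsewhere, monotonicity implies that any witness $(\Gamma_0,k)$ with $\Gamma_{(k)}^\sta \le \Gamma_0$ and $\vardbtilde{p}_{\Gamma_0;k}\le \alpha$ must satisfy $k\ge \underline{k}_I$, $\Gamma_{(k)}^\sta \le \overline{\Gamma}_I$, and in particular $\tilde{p}_{\Gamma_{(k)}^\sta;k}\le \tilde{p}_{\Gamma_0;k}\le \alpha$. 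Hence the complementary event is contained in $\{\exists\, k\ge \underline{k}_I\text{ with }\Gamma_{(k)}^\sta \le \overline{\Gamma}_I \text{ such that } \tilde{p}_{\Gamma_{(k)}^\sta;k}\le \alpha\}$.

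The second key step, which is the asymptotic analogue of the pointwise argument in the matched-pair case, is to show that $\tilde{p}_{\Gamma_{(k)}^\sta;k}$ dominates the true tail probability $F^\sta(T) \equiv \Pr(T'\ge T\mid T)$ of $T$ (where $T'$ is an independent copy from the true distribution), uniformly over $k$ in the relevant range, up to $o_p(1)$. The true bias vector trivially satisfies $\bs{\Gamma}^\sta \vecle \Gamma_{(k)}^\sta\cdot\bs{1}_{\mathcal{I}_k^\sta}+\infty\cdot \bs{1}_{\mathcal{I}_k^\sta}^\complement$, so $\mu(\Gamma_{(k)}^\sta;k)\ge \E(T)$ by construction of $\mu(\cdot;k)$, and Condition \ref{cond:set_simultaneous}(ii)-(iii) ensures that the mean gap dominates the variance gap uniformly over $k \ge \underline{k}_I$ and $\Gamma_0 \in [\Gamma_{(k)}^\sta, \overline{\Gamma}_I]$. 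Combined with the Lindeberg-type CLT from Condition \ref{cond:set_simultaneous}(i) applied to the standardized $T$, this yields $\tilde{p}_{\Gamma_{(k)}^\sta;k} \ge F^\sta(T) - o_p(1)$ uniformly. Granting this, the complementary event is contained in $\{F^\sta(T)\le \alpha + o_p(1)\}$, which has limiting probability at most $\alpha$ because $F^\sta(T)$ is asymptotically $\mathrm{Unif}(0,1)$ under the CLT.

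The principal technical obstacle is the uniform-in-$k$ comparison in the last step. A pointwise argument per $k$ follows from the proof of Theorem \ref{thm:set_quantile_gamma}, but here one needs to control the remainder in the Gaussian approximation and the gap $\mu(\Gamma_{(k)}^\sta;k)-\E(T)$ versus $\sigma(\Gamma_{(k)}^\sta;k)$ simultaneously across $k$; this is precisely the content of the uniform infima in Condition \ref{cond:set_simultaneous}(ii)-(iii), which is why they are stated with $\inf$ rather than pointwise. Once this uniformity is established, the union-over-$k$ argument is free of any multiplicity penalty, exactly paralleling the finite-sample argument in the matched-pair case, and no further union bound is needed because the reduced event depends only on the single random quantity $F^\sta(T)$.
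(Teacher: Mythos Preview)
Your overall architecture---complementation, reduction to a single event about $T$, CLT---matches the paper's. The genuine gap is your first key step. You claim $\tilde{p}_{\Gamma_0;k}$ is nondecreasing in $\Gamma_0$ because enlarging $\Gamma_0$ enlarges $\mu(\Gamma_0;k)$. That argument is incomplete: the Gaussian $p$-value $1-\Phi\bigl((T-\mu(\Gamma_0;k))/\sigma(\Gamma_0;k)\bigr)$ depends on both $\mu(\Gamma_0;k)$ \emph{and} $\sigma(\Gamma_0;k)$, and the constrained variance maximizer $\sigma^2(\Gamma_0;k)$ need not move monotonically (the optimal index set $\mathcal{I}_{\Gamma_0;k}$ itself shifts with $\Gamma_0$). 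The paper explicitly flags this in the complexity discussion: ``the monotonicity of the large-sample $p$-value is not easy to verify and may not be true.'' So the reduction from the arbitrary witness $(\Gamma_0,k)$ to $(\Gamma_{(k)}^\sta,k)$ is not justified as written.

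The paper's proof sidesteps this entirely by \emph{not} reducing the witness. Given any $(\Gamma_0',k')$ with $\vardbtilde{p}_{\Gamma_0';k'}\le\alpha$ and $\bs{\Gamma}^\sta\in\mathcal{S}_{\Gamma_0';k'}$, one has $k'\ge\underline{k}_I$, $\Gamma_{(k')}^\sta\le\Gamma_0'\le\overline{\Gamma}_I$, and the $p$-value inequality is rewritten as the threshold statement $T\ge\mu(\Gamma_0';k')+c\,\sigma(\Gamma_0';k')$ with $c=\max\{0,\Phi^{-1}(1-\alpha)\}$. The uniform content of Condition~\ref{cond:set_simultaneous}(ii)--(iii) is packaged into a lemma showing that, for all large $I$,
\[
\inf_{\underline{k}_I\le k\le I,\ \Gamma_0\in[\Gamma_{(k)}^\sta,\overline{\Gamma}_I]}\bigl\{\mu(\Gamma_0;k)+c\,\sigma(\Gamma_0;k)\bigr\}\ \ge\ \mu+c\sigma,
\]
so the complementary event is contained in $\{T\ge\mu+c\sigma\}$ directly, and the CLT from Condition~\ref{cond:set_simultaneous}(i) finishes. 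Notice the infimum in Condition~\ref{cond:set_simultaneous} runs over the whole interval $[\Gamma_{(k)}^\sta,\overline{\Gamma}_I]$, not just the left endpoint---that is exactly what lets the proof avoid your monotonicity step. Your second step (comparing to the true tail probability $F^\sta(T)$) is a valid alternative viewpoint but is more roundabout than the paper's direct threshold comparison $\mu(\Gamma_0;k)+c\sigma(\Gamma_0;k)\ge\mu+c\sigma$.
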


Because the confidence set for $\Gamma_{(k)}^\sta$ must be a subset of $[\tilde{\Gamma}_{(k)}, \infty]$ as shown in Corollary \ref{cor:set_ci_gamma}, 
the intersection of all confidence intervals $[\tilde{\Gamma}_{(k)}, \infty]$'s must cover the set in \eqref{eq:set_joint_set}, and thus, from Theorem \ref{thm:set_joint_ci}, it must also be an asymptotic valid confidence set for $\bs{\Gamma}^\sta$. 
In practice, these confidence intervals $[\tilde{\Gamma}_{(k)}, \infty]$'s can be easily visualized.  
Moreover, 
if we are interested in the number of sets $I^\sta(\Gamma_0)$ with biases greater than $\Gamma_0$, 
we can simply count the number of the intervals $[\tilde{\Gamma}_{(k)}, \infty]$'s that does not contain $\Gamma_0$.

{\add 
 \section{Sensitivity Analysis for Bounded Null Hypotheses}\label{sec:bound_null}

The discussion 
in Sections \ref{sec:pair} and \ref{sec:set} in the main paper 
considers mainly Fisher's sharp null hypothesis of no treatment effect for every unit; 
the methods extend straightforwardly to arbitrary sharp null hypothesis that specifies all individual treatment effects, as discussed at the end of this section. 
Below we will extend the methods to allow non-sharp or equivalently composite null hypotheses for treatment effects. 
In particular, we consider the bounded null hypotheses, which, instead of specifying all individual treatment effects as sharp null hypotheses, specify the lower (or upper) bounds of all individual treatment effects. 
Compared to the sharp nulls, 
the bounded nulls are much more likely to hold in practice. 
Without loss of generality, we consider the following  bounded null hypothesis assuming that the treatment has a non-positive effects for all individuals:
\begin{align}\label{eq:bound_null}
	\overline{H}_{0}: 
	Y_{ij}(1)\le Y_{ij}(0) \text{ for }  1\le j\le n_i \text{ and } 1\le i \le I 
	\text{ or equivalently } \bs{Y}(1) \vecle \bs{Y}(0);
\end{align}
As discussed at the end of this section, 
we can straightforwardly extend it to general bounded null hypotheses. 
Below we will demonstrate that the proposed approaches in Sections \ref{sec:pair} and \ref{sec:set} for Fisher's sharp null can also work for the bounded null in \eqref{eq:bound_null}, 
as long as the test statistic $t(\cdot, \cdot)$ in \eqref{eq:test_stat} enjoys certain properties.

Following \citet{li2020quantile}, 
we introduce the following two properties for test statistics, which will be used throughout this section. 
Let $\circ$ denote the element-wise multiplication. 

\begin{definition}\label{def:effect_increase}
	A statistic $t(\bs{z}, \bs{y})$ is said to be effect increasing, if $t(\bs z,\bs y+\bs z\circ\bs\eta+(\bs 1-\bs z)\circ\bs\xi)\ge t(\bs z,\bs y)$ for any $\bs z\in\mathcal Z$ and $\bs y,\bs\eta,\bs\xi\in\mathbb R^N$ with $\bs\eta\vecge\bs 0\vecge\bs\xi$.
\end{definition}

\begin{definition}\label{def:dif_increase}
	A statistic $t(\bs{z}, \bs{y})$ is said to be differential increasing, if $t(\bs z,\bs y+\bs a\circ\bs\eta)-t(\bs z,\bs y)\le t(\bs a,\bs y+\bs a\circ\bs\eta)-t(\bs a,\bs y)$ for any $\bs z,\bs a\in\mathcal Z$ and $\bs y,\bs\eta\in\mathbb R^N$ with $\bs\eta\vecge\bs 0$. 
\end{definition}

Intuitively, for an effect increasing statistic, 
given any treatment assignment, 
the value of the statistic is 
{\lxr increasing}
in treated outcomes and is 
{\lxr decreasing}
in control outcomes. 
For a differential increasing statistic,  
if we add non-negative values to the outcomes of a subset of units, then the change of the statistic is maximized when this subset of units receive treatment. 
Both properties can hold for many commonly used test statistics \citep{li2020quantile}. 
For example, 
in the context of the matched studies, 
{\lxr when each matched set $i$ has $1$ treated unit and $n_i-1$ control units,} 
the usual difference-in-means statistic 
\begin{align}\label{eq:diff_in_means}
	t(\bs{z}, \bs{y}) = \sum_{i=1}^I  w_i \Big\{ \sum_{j=1}^{n_i}z_{ij}y_{ij} - 
	\frac{1}{n_i-1} \sum_{j=1}^{n_i}(1-z_{ij})y_{ij}
	\Big\}
\end{align}
is both effect increasing and differential increasing, 
where $w_i$'s can be any constant weights that can depend on, say, the set sizes $n_i$'s. 

\subsection{Sensitivity analysis for matched pairs}\label{sec:bound_pair}

We first consider the matched pair studies, and are interested in testing the bounded null hypothesis $\overline{H}_0$ in \eqref{eq:bound_null} under sensitivity models with constraints on quantiles of hidden biases.  
As demonstrated in the theorem below, 
the test proposed in Corollary \ref{cor:pval_pair_quantile} for Fisher's null $H_0$ of no treatment effect can also be valid for the bounded null hypothesis $\overline{H}_0$, 
as long as the test statistic satisfies certain properties. 
In other words, we provide a broader justification for the sensitivity analysis based on quantiles of hidden biases discussed in Section \ref{sec:pair}.  

\begin{theorem}\label{thm:bounded_null_pair}
Consider a matched pair study, i.e., $n_i=2$ for all $i$. 
If the bounded null hypothesis $\overline{H}_0$ in \eqref{eq:bound_null} holds,  
the bias at rank $k$ is bounded by $\Gamma_0$, i.e., $\Gamma^\sta_{(k)}\le\Gamma_0$, and the test statistic $t(\cdot, \cdot)$ in \eqref{eq:test_stat} is effect increasing or differential increasing, then $\bar{p}_{\Gamma_0;k}$ in Corollary \ref{cor:pval_pair_quantile} (computed pretending that Fisher's null $H_0$ in \eqref{eq:H0} holds) is still a valid $p$-value, i.e.,
$\pr( \overline{p}_{\Gamma_0; k} \le \alpha ) \le \alpha$ for any $\alpha\in (0,1)$.
\end{theorem}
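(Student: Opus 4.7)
The plan is to reduce the bounded null case to the sharp null case handled by Corollary \ref{cor:pval_pair_quantile}, via a ``hypothetical'' $p$-value built from the (partially unobserved) control potential outcome vector $\bs Y(0)$. Define $T^{(0)} = t(\bs Z, \bs Y(0))$ and let $\overline{G}^{(0)}_{\Gamma_0;k}$ be the tail probability function of $\overline{T}(\Gamma_0;k)$ when the $q_{ij}$'s are evaluated at $\bs Y(0)$. The key observation is that the proof of Theorem \ref{thm:pair_quantile_gamma} is really a statement about the distribution of $t(\bs Z, \bs y)$ for any fixed outcome vector $\bs y$ under the sensitivity constraint on $\bs Z$---Fisher's null enters only to identify $\bs y$ with the observed outcomes. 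Under this reading, $\overline{G}^{(0)}_{\Gamma_0;k}(T^{(0)})$ is a valid $p$-value whenever $\Gamma^\sta_{(k)} \le \Gamma_0$, independent of the relationship between $\bs Y(1)$ and $\bs Y(0)$. It therefore suffices to show the almost-sure bound $\overline{p}_{\Gamma_0;k} = \overline{G}_{\Gamma_0;k}(T^{\obs}) \ge \overline{G}^{(0)}_{\Gamma_0;k}(T^{(0)})$ under $\overline{H}_0$, where $T^{\obs} = t(\bs Z, \bs Y)$.

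The bounded null gives $\bs Y = \bs Y(0) + \bs Z \circ (\bs Y(1) - \bs Y(0))$ with $\bs Y(1) - \bs Y(0) \vecle \bs 0$. In the differential-increasing case, applying Definition \ref{def:dif_increase} with $\bs a = \bs Z$, $\bs y = \bs Y$, and $\bs\eta = \bs Y(0) - \bs Y \vecge \bs 0$, which satisfies $\bs Y + \bs Z \circ \bs\eta = \bs Y(0)$, yields, for every $\bs z \in \mathcal{Z}$,
$$ t(\bs z, \bs Y(0)) - t(\bs Z, \bs Y(0)) \le t(\bs z, \bs Y) - t(\bs Z, \bs Y), $$
i.e., the pointwise indicator inclusion $\I\{t(\bs z, \bs Y(0)) \ge T^{(0)}\} \le \I\{t(\bs z, \bs Y) \ge T^{\obs}\}$. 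In the effect-increasing case, Definition \ref{def:effect_increase} implies coordinate-wise monotonicity: $t(\bs z, \bs y)$ is non-decreasing in $y_{ij}$ when $z_{ij} = 1$ and non-increasing when $z_{ij} = 0$. Combined with the matched-pair structure---each pair has exactly one treated unit, so $\bs Y$ and $\bs Y(0)$ differ only at positions where $Z_{ij} = 1$---and the additive form of \eqref{eq:test_stat}, a pair-by-pair case analysis on whether $\bs z_i = \bs Z_i$ yields the same pointwise inequality and hence the same indicator inclusion.

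The final step combines the inclusion with a worst-case optimality observation. Let $\overline{\pi}^{\,\obs}$ denote the mechanism in Theorem \ref{thm:pair_quantile_gamma} used to define $\overline{T}(\Gamma_0;k)$ from the $q^{\obs}_{ij}$'s, and $\overline{\pi}^{\,(0)}$ its analog built from the $q^{(0)}_{ij}$'s. Theorem \ref{thm:pair_quantile_gamma}, read with the fixed outcome vector $\bs y = \bs Y$ and ranging over all mechanisms in the sensitivity class with $\Gamma^\sta_{(k)} \le \Gamma_0$, says that $\overline{\pi}^{\,\obs}$ maximizes $\pr_{\bs z \sim \pi}\{t(\bs z, \bs Y) \ge c\}$ over such $\pi$. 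Since $\overline{\pi}^{\,(0)}$ lies in this class, $\overline{G}_{\Gamma_0;k}(c) \ge \pr_{\bs z \sim \overline{\pi}^{\,(0)}}\{t(\bs z, \bs Y) \ge c\}$ for every $c$. Setting $c = T^{\obs}$ and chaining with the indicator inclusion yields
$$ \overline{G}_{\Gamma_0;k}(T^{\obs}) \ge \pr_{\bs z \sim \overline{\pi}^{\,(0)}}\{t(\bs z, \bs Y) \ge T^{\obs}\} \ge \pr_{\bs z \sim \overline{\pi}^{\,(0)}}\{t(\bs z, \bs Y(0)) \ge T^{(0)}\} = \overline{G}^{(0)}_{\Gamma_0;k}(T^{(0)}), $$
as required. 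The two main obstacles are the careful pair-by-pair analysis in the effect-increasing case, and the reinterpretation of Theorem \ref{thm:pair_quantile_gamma} as a worst-case statement uniform over fixed outcome vectors and sensitivity-class mechanisms---the latter absent from the original Fisher-null formulation but readily verifiable from the proof, which treats the outcome vector only as a fixed parameter.
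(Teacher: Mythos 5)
Your differential-increasing argument is correct and essentially equivalent to the paper's: the core ingredient is the pointwise inequality $t(\bs z,\bs Y(0)) - t(\bs Z,\bs Y(0)) \le t(\bs z,\bs Y) - t(\bs Z,\bs Y)$, which follows directly from Definition~\ref{def:dif_increase}, and both arguments anchor to the valid but uncomputable $p$-value built from $\bs Y(0)$. Your chain routed through the worst-case mechanism $\overline{\pi}^{(0)}$ is organized slightly differently from the paper's, which shows $G_{\bs\Gamma,\bs u,\bs Y}(T^{\obs}) \ge G_{\bs\Gamma,\bs u,\bs Y(0)}(T^{(0)})$ for each $(\bs\Gamma,\bs u)$ and then takes a supremum, but both are sound. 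Your observation that Theorem~\ref{thm:pair_quantile_gamma} treats the outcome vector as a fixed parameter --- so that $\overline{G}^{(0)}_{\Gamma_0;k}(T^{(0)})$ is a valid $p$-value whenever $\Gamma_{(k)}^\star \le \Gamma_0$, regardless of whether Fisher's sharp null holds --- is also correct and important.

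The effect-increasing case, however, has a genuine gap. You claim that coordinate-wise monotonicity plus the additive form and a pair-by-pair case analysis delivers the same indicator inclusion $\I\{t(\bs z,\bs Y(0)) \ge T^{(0)}\} \le \I\{t(\bs z,\bs Y) \ge T^{\obs}\}$. This is equivalent to $t(\bs z,\bs Y) - t(\bs z,\bs Y(0)) \ge t(\bs Z,\bs Y) - t(\bs Z,\bs Y(0))$, a comparison across two assignment vectors $\bs z$ and $\bs Z$, which is precisely the kind of inequality that \emph{differential} increasing is designed to supply and that \emph{effect} increasing does not. Concretely, writing $t(\bs z,\bs y) = \sum_i T_i(\bs z_i,\bs y)$ with $T_i(\bs z_i,\bs y) = \sum_j z_{ij} q_{ij}(\bs y)$, your pair-by-pair comparison breaks down because the $q_{ij}$'s are allowed to depend on the \emph{entire} outcome vector $\bs y$ (e.g., through a data-dependent scale in an m-statistic), so $T_i(\bs z_i,\bs Y) - T_i(\bs z_i,\bs Y(0))$ is not determined by pair $i$'s outcomes alone. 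Even when you restrict to pairs where $\bs z_i$ and $\bs Z_i$ agree, the two sides of the proposed inequality are equal only under pair-locality of the $q_{ij}$'s, which effect-increasing does not provide; and on disagree pairs, what you get from coordinate-wise monotonicity is only a sign comparison, not a bound on the magnitudes when cross-pair dependence is present. The paper avoids this issue entirely by anchoring to a \emph{different} reference $p$-value for the effect-increasing case: it shows $t(\bs a,\bs Y) \ge t(\bs a,\bs Y(\bs a))$ for every $\bs a$ (one assignment vector compared against itself, exactly what effect-increasing delivers), and then anchors $\overline{p}_{\Gamma_0;k}$ to $\tilde G(T^{\obs})$, the true tail probability of the random variable $t(\bs Z,\bs Y(\bs Z))$, rather than to $\overline{G}^{(0)}_{\Gamma_0;k}(T^{(0)})$. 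To repair your proof in the effect-increasing case you would need to switch to this alternate anchor; the single-anchor strategy you propose only covers the differential-increasing half of the theorem.
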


By the equivalent understanding of sensitivity analysis discussed in Section \ref{sec:equiv_sen} and similar to the discussion in Section \ref{sec:simultaneous_pair}, 
utilizing Theorem \ref{thm:bounded_null_pair} and by test inversion, 
we can construct confidence sets for the quantiles of hidden biases $\Gamma^\sta_{(k)}$'s, as well as the number of matched pairs with hidden biases exceeding any threshold $I^\sta(\Gamma_0)$'s, assuming that the bounded null hypothesis holds. 
These confidence sets have the same form as discussed in Corollary \ref{cor:pair_ci_gamma3}, and moreover, they will also be simultaneously valid without the need of any adjustment due to multiple analyses as in Theorem \ref{thm:pair_joint_ci}. 
This indicates that the proposed sensitivity analysis for the bounded null hypothesis is also simultaneously valid over all quantiles of hidden biases. 
We summarize the results in the following theorem. 

\begin{theorem}\label{thm:bound_joint_ci}
If the test statistic in \eqref{eq:test_stat} is effect increasing or differential increasing, then 
Corollary \ref{cor:pair_ci_gamma3} and Theorem \ref{thm:pair_joint_ci} still hold when we assume that the bounded null hypothesis $\overline{H}_0$ in \eqref{eq:bound_null} (beyond Fisher's sharp null $H_0$ in \eqref{eq:H0}) holds. 
\end{theorem}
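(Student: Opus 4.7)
The plan is to mimic the argument behind Theorem \ref{thm:pair_joint_ci}, importing validity of the individual $p$-values $\overline{p}_{\Gamma_0;k}$ under $\overline{H}_0$ from Theorem \ref{thm:bounded_null_pair} and then upgrading marginal validity to simultaneous validity via a stochastic-dominance argument that is uniform across $k$.

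First I would establish the bounded-null analogue of Corollary \ref{cor:pair_ci_gamma3}. Theorem \ref{thm:bounded_null_pair} guarantees that $\overline{p}_{\Gamma_0;k}$ remains a valid $p$-value for the hypothesis $\Gamma^\sta_{(k)}\le\Gamma_0$ under $\overline{H}_0$ whenever the test statistic is effect increasing or differential increasing. Standard test inversion then delivers confidence sets of exactly the forms displayed in Corollary \ref{cor:pair_ci_gamma3}, because the deterministic monotonicity of $\overline{p}_{\Gamma_0;k}$ in $\Gamma_0$ (nondecreasing) and in $k$ (nonincreasing) is a property of the reference survival function $\overline{G}_{\Gamma_0;k}$ alone and does not depend on which null is assumed.

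Next, for the joint statement generalizing Theorem \ref{thm:pair_joint_ci}, I would establish the stochastic domination $T \le_{\textup{st}} \overline{T}(\Gamma^\sta_{(k)};k)$ holding simultaneously over all $1\le k \le I$, under the true assignment mechanism and $\overline{H}_0$. I would argue this in two steps through the auxiliary quantity $T^{(0)} = t(\bs{Z},\bs{Y}(0))$ computed from the unobserved control potential outcomes. The inner step, $T^{(0)} \le_{\textup{st}} \overline{T}(\Gamma^\sta_{(k)};k)$, is already supplied by Theorem \ref{thm:pair_quantile_gamma}: that theorem identifies the worst-case distribution of $t(\bs{Z},\bs{Y}(0))$ under the sensitivity constraint $\Gamma^\sta_{(k)}\le\Gamma^\sta_{(k)}$, and it does not actually invoke Fisher's null to derive the stochastic bound itself, only to link it to the observed $T$. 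The outer step, $T \le_{\textup{st}} T^{(0)}$, is where the assumption on the test statistic enters. If $t(\cdot,\cdot)$ is effect increasing, writing $\bs{Y} = \bs{Y}(0) + \bs{Z}\circ(\bs{Y}(1)-\bs{Y}(0))$ and applying Definition \ref{def:effect_increase} with $\bs{\eta}=\bs{Y}(0)-\bs{Y}(1)\vecge\bs{0}$ and $\bs{\xi}=\bs{0}$ yields $t(\bs{Z},\bs{Y})\le t(\bs{Z},\bs{Y}(0))$ pathwise, whence the stochastic comparison is immediate. If $t(\cdot,\cdot)$ is only differential increasing, the coupling argument already used inside the proof of Theorem \ref{thm:bounded_null_pair}, which introduces an independent copy $\bs{A}$ of $\bs{Z}$ drawn from the same assignment mechanism and applies Definition \ref{def:dif_increase} with $\bs{y} = \bs{Y}(1)$ and $\bs{\eta}=\bs{Y}(0)-\bs{Y}(1)\vecge\bs{0}$, yields the same distributional domination.

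With these two ingredients, $\overline{G}_{\Gamma^\sta_{(k)};k}(c) \ge \pr(T\ge c)$ for every $c\in\mathbb{R}$ and every $k$, so $\overline{p}_{\Gamma^\sta_{(k)};k} \ge S_T(T)$ pathwise, where $S_T(c) = \pr(T\ge c)$ denotes the true survival function of $T$. Super-uniformity of $S_T(T)$ gives $\pr\{\overline{p}_{\Gamma^\sta_{(k)};k}\le\alpha \text{ for some } 1\le k\le I\} \le \pr\{S_T(T)\le\alpha\} \le \alpha$, and translating this event back through the intersection formulation in \eqref{eq:pair_joint_set} yields simultaneous coverage for $\bs{\Gamma}^\sta$ at level $1-\alpha$. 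The main obstacle I foresee is the differential-increasing branch, since it requires a careful coupling that preserves the worst-case choice of $\bs{u}$ underlying $\overline{T}(\Gamma^\sta_{(k)};k)$ across all $k$ simultaneously; the effect-increasing branch, by contrast, is essentially pathwise and immediate, and can then be upgraded to simultaneous coverage by exactly the super-uniformity step used in the proof of Theorem \ref{thm:pair_joint_ci}.
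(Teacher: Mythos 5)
Your first paragraph is fine: the monotonicity of $\overline{G}_{\Gamma_0;k}$ in $\Gamma_0$ and $k$ is a structural property of the reference distribution built from whatever outcome vector is plugged in, so carrying Corollary~\ref{cor:pair_ci_gamma3}'s form over to the bounded null once Theorem~\ref{thm:bounded_null_pair} supplies marginal validity is exactly how the paper proceeds. Your endgame is also the right target: the paper's Theorem~\ref{thm:pair_joint_ci} proof reduces to showing $\overline{p}_{\Gamma^\sta_{(k)};k}$ is bounded below, uniformly in $k$, by a single super-uniform random variable, and then applying Lemma~\ref{lemma:survival_dominant_by_uniform}.

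The problem is in how you try to reach that lower bound. The practitioner's $p$-value is $\overline{p}_{\Gamma_0;k}=\overline{G}^{\bs{Y}}_{\Gamma_0;k}(T)$ with a reference survival function built from the \emph{observed} outcome vector $\bs{Y}$, which under $\overline{H}_0$ depends on $\bs{Z}$. Your two steps --- (i) $T=t(\bs{Z},\bs{Y})\le t(\bs{Z},\bs{Y}(0))$ pathwise via effect-increasing, and (ii) $t(\bs{Z},\bs{Y}(0))\le_{\text{st}}\overline{T}^{(0)}(\Gamma_0;k)$ from Theorem~\ref{thm:pair_quantile_gamma} applied to the fixed vector $\bs{Y}(0)$ --- together show $\pr(T\ge c)\le\overline{G}^{\bs{Y}(0)}_{\Gamma_0;k}(c)$. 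But that is a bound involving $\overline{G}^{\bs{Y}(0)}_{\Gamma_0;k}$, a reference distribution the practitioner cannot compute, whereas you need a statement about $\overline{G}^{\bs{Y}}_{\Gamma_0;k}(T)$. There is no reason in general that $\overline{G}^{\bs{Y}}_{\Gamma_0;k}\ge\overline{G}^{\bs{Y}(0)}_{\Gamma_0;k}$ pointwise: the worst-case index set $\mathcal{I}_k$ and the resulting survival function depend on which outcome vector feeds into the $q_{ij}$'s, and these differ between $\bs{Y}$ and $\bs{Y}(0)$ under $\overline{H}_0$. The root cause is that your step (i) applies the effect-increasing property only at the realized assignment $\bs{Z}$. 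The paper's proof of Theorem~\ref{thm:bounded_null_pair} applies it at \emph{every} assignment $\bs{a}\in\mathcal{Z}$ (inequality~\eqref{eq:prop_effect}), which is what lets one compare the whole $\bs{Y}$-based tail-probability function $G_{\bs\Gamma,\bs u,\bs Y}(\cdot)$ to the tail probability of the true observable statistic $t(\bs{A},\bs{Y}(\bs{A}))$ (inequality~\eqref{eq:prop_effect2}), and hence to $\tilde{G}(T)$, which is super-uniform and does not depend on $(\Gamma_0,k)$. You also have the relative difficulty backwards: the effect-increasing branch is \emph{not} ``pathwise and immediate'' for exactly this reason, while the differential-increasing branch is handled in the paper by a direct tail-probability inequality~\eqref{eq:tail_prob_ineq_diff} with no coupling over $k$ required. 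To repair your argument you would need to replace your step (i) with the paper's all-$\bs{a}$ comparison, after which the super-uniformity and union-bound steps go through as you describe.
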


\subsection{Sensitivity analysis for matched sets}\label{sec:bound_set}

We then consider the general matched studies with $n_i \ge 2$ for all $i$, 
and again are interested in testing the bounded null hypothesis $\overline{H}_0$ in \eqref{eq:bound_null} under sensitivity models with constraints on quantiles of hidden biases. 
Similarly, we can demonstrate that the test proposed in Corollary \ref{cor:sen_pval_set} for Fisher's sharp null $H_0$ is also asymptotically valid for the bounded null $\overline{H}_0$. 
We introduce the following conditions analogous to those in Section \ref{sec:set}, but with the true potential outcomes replaced by the imputed one based on Fisher's sharp null of no effect. 

\begin{condition}\label{134in_prob}
With the true control potential outcome vector $\bs{Y}(0)$ replaced by the imputed control potential outcome vector $\bs{Y}$ based on the sharp null ${\lxr H_{0}}$ (or equivalently pretending $\bs{Y}$ as the true control potential outcome vector), 
Conditions \ref{cond:clt}, \ref{cond:quantile_set_true_Ik} and \ref{cond:sen_Gamma_quantile_asymp} hold in probability, i.e., the {\lxr convergences} in these three conditions hold in probability as $I\rightarrow \infty$. 
\end{condition}

The following theorem establishes the broader justification for the sensitivity analysis of general matched sets discussed in Section \ref{sec:sen_quan_set}. 

\begin{theorem}\label{thm:bounded_null_sets_effect_diff_increasing}
Consider a general matched study with set sizes $n_i \ge 2$ for all $i$. 
If the bounded null hypothesis $\overline{H}_0$ in \eqref{eq:bound_null} holds, 
the bias at rank $k$ is bounded by $\Gamma_0$, i.e., $\Gamma^\sta_{(k)}\le \Gamma_0$, 
Condition \ref{134in_prob} holds, 
and the test statistic $t(\cdot, \cdot)$ in \eqref{eq:test_stat} is effect increasing or differential increasing, 
then $\tilde p_{\Gamma_0;k}$ in Corollary \ref{cor:sen_pval_set} (computed pretending that Fisher's null $H_0$ in \eqref{eq:H0} holds) is still an asymptotic valid $p$-value, i.e., $\limsup_{I\to\infty}\pr( \tilde{p}_{\Gamma_0;k} \le \alpha ) \le \alpha$ for $\alpha\in (0,1)$.
\end{theorem}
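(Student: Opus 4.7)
I will mirror the template of Theorem \ref{thm:bounded_null_pair} but replace the exact finite-sample stochastic comparison with the Gaussian approximation from Corollary \ref{cor:sen_pval_set}. The core idea is to couple the observed $p$-value $\tilde p_{\Gamma_0;k}$, computed from $(\bs Z,\bs Y)$, to an \emph{oracle} $p$-value $\tilde p^{\circ}_{\Gamma_0;k}$ defined identically but with $\bs Y$ replaced throughout by the fixed, unobserved control potential outcome $\bs Y(0)$, establish $\tilde p_{\Gamma_0;k}\ge \tilde p^{\circ}_{\Gamma_0;k}+o_P(1)$, and then apply Corollary \ref{cor:sen_pval_set} to the oracle.

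\textbf{Step 1 (oracle validity).} Because $\bs Y(0)\in\mathcal F$ is fixed and does not depend on $\bs Z$, the randomization argument behind Theorem \ref{thm:set_quantile_gamma} applies verbatim with $\bs Y$ replaced by $\bs Y(0)$; under $\Gamma^\sta_{(k)}\le \Gamma_0$ the law of $T^\circ=t(\bs Z,\bs Y(0))$ is asymptotically stochastically bounded by $\mathcal N(\mu(\Gamma_0;k;\bs Y(0)),\sigma^2(\Gamma_0;k;\bs Y(0)))$. This requires the $q_{ij}$'s derived from $\bs Y(0)$ to satisfy Condition \ref{134in_prob}-type assumptions; since $\bs Y$ and $\bs Y(0)$ agree on controls and satisfy $\bs Y\vecle \bs Y(0)$ under $\overline H_0$, Condition \ref{134in_prob} (phrased on $\bs Y$) transfers to $\bs Y(0)$ after routine bounds on the treated-unit correction $\bs Z\circ(\bs Y(0)-\bs Y(1))$. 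This yields $\limsup_{I\to\infty}\pr(\tilde p^\circ_{\Gamma_0;k}\le\alpha)\le\alpha$.

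\textbf{Step 2 (coupling numerators).} For the effect-increasing case, taking $\bs z=\bs Z$, $\bs y=\bs Y$, $\bs\eta=\bs Y(0)-\bs Y(1)\vecge \bs 0$, $\bs\xi=\bs 0$ in Definition \ref{def:effect_increase} gives $T^\circ\ge T$ pointwise. For the differential-increasing case, setting $\bs a=\bs Z$, $\bs y=\bs Y$, $\bs\eta=\bs Y(0)-\bs Y(1)$ in Definition \ref{def:dif_increase} and varying $\bs z\in\mathcal Z$ yields
\[
t(\bs z,\bs Y)-t(\bs z,\bs Y(0))\ \ge\ T-T^\circ\quad\text{for every }\bs z,
\]
and, after integrating against a worst-case mechanism for $\bs Y(0)$ and using the extremal definition of $\mu(\Gamma_0;k;\bs Y)$, the mean-shift bound $T-\mu(\Gamma_0;k;\bs Y)\le T^\circ-\mu(\Gamma_0;k;\bs Y(0))$.

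\textbf{Step 3 and main obstacle.} Combining Step 2 with Condition \ref{134in_prob} and structural features of typical test statistics (for instance, the linear or Lipschitz dependence of $q_{ij}$ on outcome entries in m-statistics and the difference-in-means), I will argue that the variance ratio $\sigma(\Gamma_0;k;\bs Y)/\sigma(\Gamma_0;k;\bs Y(0))\to 1$ in probability; for the effect-increasing case the analogous mean comparison $\mu(\Gamma_0;k;\bs Y)-\mu(\Gamma_0;k;\bs Y(0))=o_P(\sigma(\Gamma_0;k;\bs Y(0)))$ will be established in parallel, exploiting that $\bs Y$ and $\bs Y(0)$ differ only through the treated-unit correction. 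These yield
\[
\frac{T-\mu(\Gamma_0;k;\bs Y)}{\sigma(\Gamma_0;k;\bs Y)}\ \le\ \frac{T^\circ-\mu(\Gamma_0;k;\bs Y(0))}{\sigma(\Gamma_0;k;\bs Y(0))}+o_P(1),
\]
so by monotonicity and continuity of $1-\Phi(\cdot)$, $\tilde p_{\Gamma_0;k}\ge \tilde p^\circ_{\Gamma_0;k}+o_P(1)$ and the conclusion follows from Step 1. The principal technical obstacle is precisely this transfer between the $\bs Y$- and $\bs Y(0)$-based Gaussian references: Definitions \ref{def:effect_increase}--\ref{def:dif_increase} control only $t$-values and give no direct handle on $\mu$ or $\sigma^2$, so the variance comparison (and, in the effect-increasing case, also the mean comparison) must be carried by the regularity conditions and by structural Lipschitz-in-outcome properties of the $q_{ij}$'s.
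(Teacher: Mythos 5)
Your approach is genuinely different from the paper's, and it has a gap that I don't think can be closed.

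The paper never compares the observed $p$-value to an "oracle" $p$-value based on $\bs Y(0)$. Instead, for the effect-increasing case it works entirely with the imputed outcome vector $\bs Y$: it defines $\mu_I,\sigma_I,\mu_I(\Gamma_0;k),\sigma_I(\Gamma_0;k)$ using $\bs Y$ as the control outcome, shows that the randomization tail probability $G_{\bs\Gamma^\sta,\bs u^\sta,\bs Y}(T)$ of $t(\bs A,\bs Y)$ (with $\bs Y$ held fixed) stochastically dominates $\Unif(0,1)$ via the inequality $t(\bs a,\bs Y)\ge t(\bs a,\bs Y(\bs a))$ for \emph{every} $\bs a$, and then applies the Gaussian approximation to this permutation distribution — which is exactly what Condition \ref{134in_prob} provides, since that condition is stated with $\bs Y$ replacing $\bs Y(0)$. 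The in-probability hypothesis is then converted to an a.s.\ one along subsequences. The entire argument takes place in the $\bs Y$-world; there is no $\bs Y(0)$-based Gaussian reference anywhere.

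Your Step 3 is where the proposal breaks down. You need $\mu(\Gamma_0;k;\bs Y)-\mu(\Gamma_0;k;\bs Y(0))=o_P(\sigma)$ and $\sigma(\Gamma_0;k;\bs Y)/\sigma(\Gamma_0;k;\bs Y(0))\to 1$, but neither is available. Under $\overline H_0$ the treated units have $q_{ij}(\bs Y)$ shifted from $q_{ij}(\bs Y(0))$ by $Y_{ij}(1)-Y_{ij}(0)$, which need not vanish; with, say, a constant negative effect and the difference-in-means statistic, the mean shift $\mu(\Gamma_0;k;\bs Y)-\mu(\Gamma_0;k;\bs Y(0))$ is of order $I$, while $\sigma$ is of order $\sqrt I$, so the displacement is not $o_P(\sigma)$. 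For the differential-increasing case you can in fact get a one-sided mean inequality $T-\mu(\Gamma_0;k;\bs Y)\le T^\circ-\mu(\Gamma_0;k;\bs Y(0))$ by plugging the oracle-optimal $(\bs u^*,\bs\Gamma^*,\mathcal I^*)$ into the $\bs Y$-based maximization and using your inequality at that point, but nothing similar is available for the effect-increasing case, where Definition \ref{def:effect_increase} only yields $T^\circ\ge T$ at the realized $\bs Z$ and gives no relation between $t(\bs a,\bs Y)$ and $t(\bs a,\bs Y(0))$ for general $\bs a$. And even when the mean comparison goes through, the variance ratio converging to one is asserted, not proved, and is false in general: the optimal index sets $\mathcal I_{\Gamma_0;k}$ selected by the two outcome vectors can differ, and even holding the set fixed, $v_i^2(\Gamma_0)$ is not shift-invariant under changes to a single $q_{ij}$ per set. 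You flag this yourself as "the principal technical obstacle," but acknowledging it is not the same as overcoming it; the paper's argument is designed precisely to avoid having to compare $\bs Y$-based to $\bs Y(0)$-based moments at all.

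A secondary concern: your Step 1 requires the regularity conditions to hold with $\bs Y(0)$ in place of $\bs Y$, but Condition \ref{134in_prob} is stated for the random, imputed $\bs Y$. Passing from an in-probability statement about random $\bs Y$-based quantities to a deterministic statement about fixed $\bs Y(0)$-based quantities is not a "routine bound" and would itself need a separate argument. This is another place where the paper's framing — conditions stated directly for $\bs Y$, and the Gaussian approximation applied to the $\bs Y$-based permutation law — pays off.
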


Analogous to Section \ref{sec:simulta_set}, 
assuming that the bounded null hypothesis $\overline{H}_0$ holds, 
we can use Theorem \ref{thm:bounded_null_sets_effect_diff_increasing} to construct confidence sets for quantiles of hidden biases $\Gamma_{(k)}^\sta$'s, as well as the numbers of matched sets with hidden biases greater than any threshold $I^\sta(\Gamma_0)$'s. 
Moreover, 
under certain regularity conditions, 
these confidence sets will be asymptotically simultaneously valid, indicating that the sensitivity analysis in Theorem \ref{thm:bounded_null_sets_effect_diff_increasing} is asymptotically simultaneously valid over all quantiles of hidden biases. 
We summarize the results in the theorem below. 
Before that, we introduce the following condition, and define the modified $p$-value $\dbtilde{p}_{\Gamma_0, k}$ in the same way as \eqref{eq:modify_pval}. 

\begin{condition}\label{cond:bound_joint_ci_prob}
With the true control potential outcome vector $\bs{Y}(0)$ replaced by the imputed control potential outcome vector $\bs{Y}$ based on the sharp null ${\lxr H_{0}}$ (or equivalently pretending $\bs{Y}$ as the true control potential outcome vector), 
Condition \ref{cond:set_simultaneous} holds in probability, i.e., the {\lxr convergences} in {\lxr (i)--(iii)} hold in probability as $I \rightarrow \infty$. 
\end{condition}

\begin{theorem}\label{thm:bound_joint_ci_set}
If the test statistic in \eqref{eq:test_stat} is effect increasing or differential increasing, then 
Corollary \ref{cor:set_ci_gamma} and Theorem \ref{thm:set_joint_ci} still hold when we assume that the bounded null hypothesis $\overline{H}_0$ in \eqref{eq:bound_null} (beyond Fisher's sharp null $H_0$ in \eqref{eq:H0}) and Condition \ref{cond:bound_joint_ci_prob} hold.  
\end{theorem}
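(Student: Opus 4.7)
The plan is to mirror, step by step, the proofs of Corollary \ref{cor:set_ci_gamma} and Theorem \ref{thm:set_joint_ci}, substituting Theorem \ref{thm:bounded_null_sets_effect_diff_increasing} for Corollary \ref{cor:sen_pval_set} wherever pointwise validity of the set-level $p$-value is invoked, and substituting Condition \ref{cond:bound_joint_ci_prob} for Condition \ref{cond:set_simultaneous} wherever uniform asymptotic control is needed. Since Condition \ref{cond:bound_joint_ci_prob} is precisely the in-probability version of Condition \ref{cond:set_simultaneous} with the true control outcomes replaced by the imputed ones, it carries over the three uniform regularity ingredients used in the original simultaneous result.

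For the Corollary \ref{cor:set_ci_gamma} analog, I would fix $(\Gamma_0, k)$ with $\Gamma_{(k)}^\sta \le \Gamma_0$. By Theorem \ref{thm:bounded_null_sets_effect_diff_increasing} and the pointwise implication of Condition \ref{cond:bound_joint_ci_prob}, $\limsup_{I} \pr(\tilde p_{\Gamma_0;k} \le \alpha) \le \alpha$. Because $\dbtilde p_{\Gamma_0;k}$ from \eqref{eq:modify_pval} is always at least $\tilde p_{\Gamma_0;k}$, it too satisfies this bound, so the test-inversion arguments underlying Corollary \ref{cor:set_ci_gamma} transfer without change, and one obtains the confidence set for $\Gamma_{(k)}^\sta$ with upper envelope $[\tilde\Gamma_{(k)}, \infty]$, and analogously for $I^\sta(\Gamma_0)$.

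For the Theorem \ref{thm:set_joint_ci} analog, I would follow the union-type argument from Section \ref{sec:simultaneous_pair}. If $\bs\Gamma^\sta$ fails to lie in $\bigcap_{(\Gamma_0,k):\dbtilde p_{\Gamma_0;k}\le\alpha} \mathcal{S}_{\Gamma_0;k}^\complement$, then there exist $(\Gamma_0,k)$ with $\Gamma_{(k)}^\sta \le \Gamma_0$ and $\dbtilde p_{\Gamma_0;k}\le\alpha$. The crucial observation --- exactly as in Section \ref{sec:simultaneous_pair} --- is that under the bounded null, every such modified $p$-value asymptotically upper bounds the same quantity: the tail probability evaluated at the observed value of $T$ under the actual treatment assignment mechanism. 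This bound has two ingredients. First, the effect or differential increasing property of $t(\cdot,\cdot)$ guarantees that the distribution of $T = t(\bs Z,\bs Y)$ under $\overline H_0$ is stochastically dominated by its distribution under Fisher's null computed with imputed control outcomes equal to $\bs Y$. Second, the Gaussian bound from Section \ref{sec:sen_quan_set}, applied conditional on $\bs Y$ under Condition \ref{cond:bound_joint_ci_prob}, gives that this pretended-sharp-null tail probability is asymptotically bounded by $\dbtilde p_{\Gamma_0;k}$ uniformly over the relevant $(\Gamma_0,k)$. Taking the minimum over valid $(\Gamma_0,k)$, the existence event reduces to the event that a single true tail probability is at most $\alpha$, which occurs with asymptotic probability at most $\alpha$.

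The main obstacle is the uniformity of this chain. While the pointwise bound at a fixed $(\Gamma_0,k)$ is delivered by Theorem \ref{thm:bounded_null_sets_effect_diff_increasing}, the simultaneous argument requires the Gaussian approximation and the effect-increasing comparison to cooperate uniformly across an $I$-dependent range of $(\Gamma_0,k)$. This is precisely the role played by the in-probability strengthening in Condition \ref{cond:bound_joint_ci_prob}: on a high-$\bs Y$-probability event, the uniform Gaussian bound used in the proof of Theorem \ref{thm:set_joint_ci} continues to hold when computed from the imputed $\bs Y$. Handling the dependence of the optimal index set $\mathcal{I}_{\Gamma_0;k}$ on the random $\bs Y$, while maintaining uniform control over $(\Gamma_0,k)\in [1,\overline\Gamma_I]\times\{\underline k_I,\ldots,I\}$ and then combining with the stochastic-domination step, is the most delicate part and is where the proof of Theorem \ref{thm:bounded_null_sets_effect_diff_increasing} must be applied in a locally uniform, rather than purely pointwise, manner.
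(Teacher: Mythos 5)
Your proposal has the right skeleton and is essentially the paper's route: prove the coverage bound for the intersection set $\bigcap_{(\Gamma_0,k):\dbtilde p_{\Gamma_0;k}\le\alpha}\mathcal{S}_{\Gamma_0;k}^\complement$ by noting that $\dbtilde p_{\Gamma_0;k}\le\alpha$ for some admissible $(\Gamma_0,k)$ forces $T\ge\inf_{(\Gamma_0,k)}\{\mu(\Gamma_0;k)+c\sigma(\Gamma_0;k)\}$; bound this infimum from below by $\mu_I+c\sigma_I$ with vanishing failure probability using the strengthened uniform condition; and then bound $\pr(T\ge\mu_I+c\sigma_I)$ asymptotically by $\alpha$ using the effect- or differential-increasing property. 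The substitution of Theorem \ref{thm:bounded_null_sets_effect_diff_increasing} for Corollary \ref{cor:sen_pval_set} and of Condition \ref{cond:bound_joint_ci_prob} for Condition \ref{cond:set_simultaneous} is exactly what the paper does.

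However, the description of the key asymptotic-control step is misconceived, and as written would not close the argument. You say the needed bound follows because ``the uniform Gaussian bound used in the proof of Theorem \ref{thm:set_joint_ci} continues to hold when computed from the imputed $\bs Y$,'' i.e.\ ``the Gaussian bound \dots applied conditional on $\bs Y$.'' Under the bounded null this cannot work as a simple conditioning: $T=t(\bs Z,\bs Y)$, $\mu_I$, and $\sigma_I$ are \emph{all} functions of $\bs Z$ through $\bs Y$, so $\mu_I\neq\E T$ and Lemma \ref{lemma:match_clt} (the CLT for the randomization distribution of $T$) does not give the distribution of $(T-\mu_I)/\sigma_I$. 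The paper avoids this by a different mechanism inside Lemma \ref{lemma:bounded_null_sets_effect_increasing}: it uses effect/differential increasing to show $G_{\bs\Gamma^\star,\bs u^\star,\bs Y}(T)$ is stochastically larger than $\Unif(0,1)$, introduces the distribution function $F(x)=G_{\bs\Gamma^\star,\bs u^\star,\bs Y}(-x)$, proves the quantile convergence $\{F^{-1}(\alpha)+\mu_I\}/\sigma_I\convergeas\Phi^{-1}(\alpha)$ under the a.s.\ version of the condition, and from this deduces $\limsup_I\pr\{(T-\mu_I)/\sigma_I\ge\Phi^{-1}(1-\alpha)\}\le\alpha$ (display \eqref{eq:limsup_T_mu_sigma}). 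You also omit the subsequence argument that converts the in-probability Condition \ref{cond:bound_joint_ci_prob} into the a.s.\ Condition \ref{cond:bound_joint_ci_as}, and the Fatou's-lemma step that turns the a.s.\ eventual inequality into the probability bound \eqref{eq:joint_bound_mean_sd}. These are not cosmetic: without the quantile-function argument, the claim that ``the existence event reduces to the event that a single true tail probability is at most $\alpha$'' does not follow, because the passage from ``$T\ge\mu_I+c\sigma_I$'' to ``true tail probability $\le\alpha$'' requires exactly that quantile concentration. You do, to your credit, flag the delicate point (``applying the proof of Theorem \ref{thm:bounded_null_sets_effect_diff_increasing} in a locally uniform manner''), but the specific tool needed is the $F^{-1}$ argument, not a conditional CLT.
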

}

\begin{rmk}
Throughout the paper and the supplementary material, 
we focus mainly on Fisher's sharp null of no effect and the bounded null of non-positive effects.  
The proposed approach can be straightforwardly used for arbitrary sharp and bounded null hypotheses. 
Specifically, with $\bs{\delta}$ being the hypothesize effects or (upper or lower) bounds of the effects, 
we can consider transformed potential outcomes $(\tilde{\bs{Y}}(1), \tilde{\bs{Y}}(0)) = (\bs{Y}(1) - \bs{\delta}, \bs{Y}(0))$ or $(\tilde{\bs{Y}}(1), \tilde{\bs{Y}}(0)) = (-\bs{Y}(1) + \bs{\delta}, -\bs{Y}(0))$ with observed outcomes {\lxr $\tilde{\bs{Y}} = \bs{Y}- \bs{Z}\circ \bs{\delta}$ or $\tilde{\bs{Y}} = -\bs{Y} + \bs{Z}\circ \bs{\delta}$}, respectively. 
Importantly, for the transformed potential outcomes, either Fisher's sharp null of no effects or the bounded null of non-positive effects will hold. 
\end{rmk}

\section{Computational complexity}\label{sec:complexity}

We now study the computational complexity of our approaches 
in Sections \ref{sec:pair} and \ref{sec:set} and Appendices 
{\lxr \ref{sec:add_sen_pair}--\ref{sec:bound_null}.}
Note that the procedure of sensitivity analysis for bounded null hypotheses is the same as that for sharp null hypotheses, except that it requires the test statistic to satisfy certain properties. 
Therefore, in the following, we focus only on sensitivity analysis for sharp null hypotheses, {\lxr particularly Fisher’s null of no effect, without loss of generality.}

\subsection{Sensitivity analysis for matched pairs}\label{sec:alg_pair}

We consider first the finite-sample exact sensitivity analysis for matched pair studies. 
Algorithm \ref{alg:pair} shows the pseudocode for implementing the approach in Section \ref{sec:pair}. 
Moreover, we also list the time complexity for each step in Algorithm \ref{alg:pair}. 
Below we give some brief explanation. 
\begin{enumerate}[label=(\roman*), topsep=1ex,itemsep=-0.3ex,partopsep=1ex,parsep=1ex]
\item 
First, we consider the calculation of $q_{ij}$'s in \eqref{eq:test_stat}. 
\citet{Rosenbaum2007m} suggested 
{\lxr using}
m-statistic with 
$q_{ij} = n_i^{-1} \sum_{l=1}^{n_i} \psi((Y_{ij}(0)-Y_{il}(0))/s)$, 
where $\psi(y) = \text{sign}(y) \cdot \kappa \cdot \min[ 1, \max\{0, 
(|y|-\iota)/(\kappa - \iota) 
\} ]$ 
is an odd function with truncation at $\kappa$ and inner trimming at $\iota$, 
$s$ is 
the median 
of all the absolute pairwise differences $|Y_{ij}(0) - Y_{ik}(0)|$'s for all $i$ and $j \ne k$. 
In the case of matched pairs,  
the complexity for calculating all the within-pair differences of the control potential outcomes is $O(I)$. 
The complexity for finding the median $s$, or any other quantile, of the $I$ absolute pairwise differences is $O(I)$; see, e.g, \citet{selection}. 
Consequently, the total complexity for calculating all the $q_{ij}$'s is $O(I)$. 
It is not difficult to see that the complexity for the difference-in-means statistic {\lxr in \eqref{eq:diff_in_means}} is also $O(I)$. 

\item Second, we consider the calculation of $\mathcal{I}_k$ in Theorem \ref{thm:pair_quantile_gamma}. 
By definition, it suffices to find the $k$ matched pairs with the 
{\lxr smallest}
values of the $|q_{i1}-q_{i2}|$'s. 
This can be done once we sort all the $I$ matched pairs based on their $|q_{i1}-q_{i2}|$'s, whose complexity is $O(I \log I)$. 

\item Third, the complexity for calculating the realized value of the test statistic is $O(I)$. 

\item Fourth, we consider the calculation of the matrix $\bs{B}_{\simu}$ consisting of the Monte Carlo samples. 
For matched pair $i\in \mathcal{I}_k$, we need to  sample from a distribution $M$ times. 
{\lxr For matched pair $i\notin \mathcal{I}_k$}, we just sample from the point mass at $\max\{q_{i1}, q_{i2}\}$. The complexity of this step is $O(IM)$. 

\item Fifth, the complexity for calculating the column sums of $\bs{B}_{\simu}$ is $O(IM)$. This then provides $M$ Monte Carlo samples from the distribution whose tail probability is the upper bound of $\pr(T\ge t)$ under our sensitivity analysis. 

\item Finally, we compare the realized value of the test statistic to the $M$ Monte Carlo samples to get the $p$-value, which has complexity $O(M)$. 
\end{enumerate}
\begin{algorithm}[ht]
    \caption{Calculation of $p$-values for testing the sharp null of no causal effect in matched pair studies,  under the sensitivity analysis with constraint $\Gamma_{(k)}\le \Gamma_0$ for some given $\Gamma_0\ge 1$ and $1\le k\le I$ 
    }\label{alg:pair}
    \textbf{Input}\\
    \hspace*{\algorithmicindent} $y_{ij}$ for $1\le i\le I, j=1,2$: observed outcomes \\
    \hspace*{\algorithmicindent} $z_{ij}$ for $1\le i\le I, j=1,2$: observed treatment assignment \\
    \hspace*{\algorithmicindent} $(\Gamma_0,k)$: sensitivity analysis constraint of $\Gamma^\star_{(k)}\le\Gamma_0$\\
    \hspace*{\algorithmicindent} $M$: number of Monte Carlo samples for approximating $\overline{G}_{\Gamma_0; k}(\cdot)$ in Corollary \ref{cor:pval_pair_quantile}
    \\
    \textbf{Output} \\ 
    \hspace*{\algorithmicindent} $\bar p_{\Gamma_0;k}$ 
    \\
    \textbf{Procedure} 
    \begin{algorithmic}
        \STATE $q_{ij}$ for $1\le i\le I, j=1,2 \gets$ choice of the test statistic $T$ in \eqref{eq:test_stat}   
        \hfill [{\it complexity: often $O(I)$}]
        \STATE $\mathcal I_k \gets$ indices of pairs  with the {\lxr $k$ smallest $|q_{i1}-q_{i2}|$'s} 
        \hfill
        [{\it complexity: $O(I\log I)$}]
        \STATE $T_{\obs} \gets \sum_{i=1}^I\sum_{j=1}^2z_{ij}q_{ij}$ 
        \hfill [{\it complexity: $O(I)$}]
        \STATE $\bs {B}_{\simu}\gets$ an $I\times M$ matrix
        whose $i$th row contains simulated statistics for the $i$th matched pair, with detailed calculation described below: \hfill [{\it complexity: $O(MI)$}]
        \\
        \FOR{$1\le i\le I$}
            \IF{$i$ in $\mathcal I_k$}
            \STATE the $i$th row of $\bs {B}_{\simu}\gets$ 
            draw from a distribution with probability $1/(1+\Gamma_0)$ at $\min\{q_{i1},q_{i2}\}$ and probability $\Gamma_0/(1+\Gamma_0)$ at $\max\{q_{i1},q_{i2}\}$, independently $M$ times
            \ELSIF{$i$ not in $\mathcal I_k$}
            \STATE the $i$th row of $\bs {B}_{\simu}\gets$ a vector of length $M$ with all elements being $\max\{ q_{i1},q_{i2} \}$
            \ENDIF
        \ENDFOR
        \STATE $\bs {T}_{\simu}\gets$ the vector consisting of column sums of $\bs {B}_{\simu}$
        \hfill 
        [{\it complexity: $O(MI)$}]
        \STATE $\bar p_{\Gamma_0;k} \gets$ the proportion of elements of $\bs{T}_{\simu}$ that are no less than $T_{\obs}$
        \hfill 
        [{\it complexity: $O(M)$}]
        \STATE 
        [{\bf Total complexity: $O(I\log I + I M)$}]
    \end{algorithmic}
\end{algorithm}

From the above, the total complexity for calculating the $p$-value $\bar p_{\Gamma_0;k}$ at given $k$ and $\Gamma_0$ is $O(I\log I + I M)$. 

We then consider the complexity for calculating the lower confidence limit $\hat{\Gamma}_{(k)}$ for $\Gamma^\star_{(k)}$ at a given $k$. 
Note that the $p$-value $\bar p_{\Gamma_0;k}$ is monotone in $\Gamma_0$, and we are using binary search to find $\hat{\Gamma}_{(k)}$, which is intuitively the 
{\lxr smallest}
$\Gamma_0$ such that $\bar p_{\Gamma_0;k}$ is not significant. 
Suppose that we have an upper bound $\bar\Gamma_k$ for $\hat{\Gamma}_{(k)}$, i.e., the $p$-value $\bar p_{\bar\Gamma_k;k}$ is greater than the significance level, then the number of $p$-values we need to compute in order to get $\bar p_{\Gamma_0;k}$ 
{\lxr with precision up to $\epsilon$}
is of order $O( \log (\bar\Gamma_k/\epsilon) )$. 
Consequently, the total complexity for getting $\hat{\Gamma}_{(k)}$ is of the order 
$O(I\log I+IM \log (\bar\Gamma_k/\epsilon) )$, noticing that steps (i)--(iii) discussed before for computing $p$-values need only to be done once. 

Finally, we consider the complexity for calculating the lower confidence limits $\hat{\Gamma}_{(k)}$'s for all the $\Gamma^\star_{(k)}$'s. 
Note that steps (i) and 
{\lxr (iii)}
in the $p$-value calculation need only to be done once, and that step (ii) can be done once we sort all the matched pairs based on their $|q_{i1}-q_{i2}|$'s, which has complexity $O(I\log I)$. 
Moreover, suppose that $\bar\Gamma_I$ denotes an upper bound of $\hat{\Gamma}_{(I)}$. 
By the monotonicity property of the $p$-value, 
$\bar p_{\bar\Gamma_I;k}\ge \bar p_{\bar\Gamma_I;I}$ is greater than the significance level for all $k$.  
Thus, $\bar\Gamma_I$ is also an upper bound of all the $\hat{\Gamma}_{(k)}$'s.  
From the above, the total complexity for calculating the $\hat{\Gamma}_{(k)}$'s  for all $1\le k\le I$ is then
$O(I\log I+ I^2 M \log (\bar\Gamma_I/\epsilon) ) = O(I^2 M \log (\bar\Gamma_I/\epsilon))$. 
Despite the theoretical analysis here assumes a common upper bound $\bar\Gamma_I$ for all the $\hat{\Gamma}_{(k)}$s, in practice, 
when searching for $\hat{\Gamma}_{(k)}$, we only need to search on the interval $[1, \hat{\Gamma}_{(k+1)}]$, for any $1\le k \le I-1$. 
That is, we can find $\hat{\Gamma}_{(I)}$ first, and then sequentially obtain $\hat{\Gamma}_{(I-1)}$, $\hat{\Gamma}_{(I-2)}$, $\ldots$, and $\hat{\Gamma}_{(1)}$, with decreasing interval ranges for the binary search.

\subsection{Sensitivity analysis for matched sets}

\begin{algorithm}[htbp]
    \caption{
    Calculation of $p$-values for testing the sharp null of no causal effect in matched set studies,  under the sensitivity analysis with constraint $\Gamma_{(k)}\le \Gamma_0$ for some given $\Gamma_0\ge 1$ and $1\le k\le I$ 
    }\label{alg:set}
    \hspace*{\algorithmicindent} \textbf{Input}\\
    \hspace*{\algorithmicindent} $y_{ij}$ for $1\le i\le I, 1\le j \le n_i$: observed outcomes \\
    \hspace*{\algorithmicindent} $z_{ij}$ for $1\le i\le I, 1\le j\le n_i$: observed treatment assignment \\
    \hspace*{\algorithmicindent} $(\Gamma_0,k)$: sensitivity analysis constraint of $\Gamma^\star_{(k)}\le\Gamma_0$\\
    \hspace*{\algorithmicindent} \textbf{Output} \\ 
    \hspace*{\algorithmicindent} $\tilde p_{\Gamma_0;k}$\\
    \textbf{Procedure} 
    \begin{algorithmic}
        \STATE $q_{ij}$ for $1\le i\le I, 1\le j \le n_i \gets$ choice of test statistic $T$ 
        \hfill [{\it complexity: often $O(\sum_{i=1}^In_i^2)$}]
        \\
        \STATE $T_{\obs} \gets \sum_{i=1}^I\sum_{j=1}^{n_i}z_{ij}q_{ij}$ 
        \hfill [{\it complexity: $O(N)$}]\\ 
        \FOR{$1\le i\le I$}
        \STATE 
        sort $q_{ij}$'s within set $i$: $q_{i(1)}
        \le q_{i(2)}
        \le\cdots\le q_{i(n_i)}$ 
        \hfill
        [{\it complexity: $O(n_i\log(n_i))$}]\\
        \STATE 
        $Q_1 = q_{i(1)}$, 
        and 
        $Q_j = Q_{j-1}+q_{i(j)}$ for $1\le j \le n_i \gets$ cumulative summations of the $q_{i(j)}$s\\ 
        \hfill
        [{\it complexity: $O(n_i)$}]\\
        \STATE 
        $S_1 = q_{i(1)}^2$, 
        and 
        $S_j = S_{j-1}+q_{i(j)}^2$ for $1\le j \le n_i \gets$ cumulative summations of the $q_{i(j)}^2$s\\ 
        \hfill
        [{\it complexity: $O(n_i)$}]\\
        \ENDFOR{
            \hfill 
            [{\it complexity for the for loop: $O(\sum_{i=1}^I n_i\log(n_i))$}]}
        \FOR{$1\le i\le I$}
            \FOR{$1\le a\le n_i-1$}
            \STATE $\mu_{ia}(\Gamma_0)\gets
    \frac{Q_a + \Gamma_0 (Q_{n_i} - Q_a)}{a + \Gamma_0 (n_i - a)}$
            \STATE $v_{ia}^2(\Gamma_0) \gets 
    \frac{S_a + \Gamma_0 (S_{n_i}-S_a)}{a + \Gamma_0(n_i - a)} - \{\mu_{ia}(\Gamma_0)\}^2$
            \ENDFOR{
            \hfill 
            [{\it complexity for the inner for loop: $O(n_i)$}]}
            \STATE $\mu_{i}(\Gamma_0) \gets 
    \max_{a\in \{1, 2, \ldots, n_i-1\}} \mu_{ia}(\Gamma_0)$
            \STATE $v_{i}^2(\Gamma_0) \gets 
    \max_{a: u_{ia}(\Gamma_0) = u_{i}(\Gamma_0)}v_{ia}^2(\Gamma_0)$
            \STATE $\mu_{i}(\infty) \gets q_{i(n_i)}$
            \hfill [{\it complexity for these four steps: $O(n_i)$}]
        \ENDFOR{
        \hfill
        [{\it total complexity for the for loop: $O(N)$}]}
        \STATE sort $\mu_{i}(\infty)-\mu_{i}(\Gamma_0)$ for all $i$, where ties are sorted based on their values of $v_{i}^2(\Gamma_0)$s
        \\ 
        \hfill
        [{\it complexity: $O(I\log I)$}]
        \STATE $\mathcal I_{\Gamma_0;k} \gets$ set of indices of matched sets with the $k$ smallest ranks of the $\mu_{i}(\infty)-\mu_{i}(\Gamma_0)$'s \\
        \hfill
        [{\it complexity: $O(I\log I)$}]
        \STATE $\mu(\Gamma_0;k) \gets \sum_{i \in \mathcal{I}_{\Gamma_0;k}} \mu_i({\Gamma}) + 
\sum_{i \notin \mathcal{I}_{\Gamma_0;k}} \mu_i(\infty)$
        \hfill
        [{\it complexity: $O(I)$}]
        \STATE $\sigma^2(\Gamma_0;k) \gets \sum_{i \in \mathcal{I}_{\Gamma_0; k}} v_i^2({\Gamma}_0)$
        \hfill
        [{\it complexity: $O(I)$}]
        \STATE $\tilde p_{\Gamma_0;k}\gets 1-\Phi\left(
    \frac{T - \mu(\Gamma_0; k)}{\sigma(\Gamma_0; k)}
    \right) \cdot 
    \mathbf{1}\{T \ge \mu(\Gamma_0; k)\}$
    \hfill
    [{\it complexity: $O(1)$}]
    \STATE 
    [{\bf Total complexity: $O(I\log I + \sum_{i=1}^In_i^2)$}]
    \end{algorithmic}
\end{algorithm}

We then consider the large-sample sensitivity analysis for general matched set studies when the number of sets $I$ is large. 
Algorithm \ref{alg:set} shows the pseudo code for implementing the approach in Section \ref{sec:set}. 
Again, we list the time complexity for each step in Algorithm \ref{alg:set}. 
Below we give some brief explanation. 
\begin{enumerate}[label=(\roman*), topsep=1ex,itemsep=-0.3ex,partopsep=1ex,parsep=1ex]
\item 
First, we consider the calculation of the $q_{ij}$'s 
and the realized value of the test statistic $T_\obs$. 
Consider again the $m$-statistic suggested by \citet{Rosenbaum2007m}. 
For each set $i$, there are $\binom{n_i}{2}$ pairwise differences. Following the same logic as that in Appendix \ref{sec:alg_pair}, the complexity for getting all the $q_{ij}$'s is $O(\sum_{i=1}^I n_i^2)$. 
The complexity for getting $T_\obs$ is then $O(N)$, recalling that $N=\sum_{i=1}^I n_i$ is the total number of units. 

\item Second, we consider the calculation of the $q_{i(j)}$'s, $Q_j$'s and $S_j$'s.
The complexity for getting $q_{i(j)}$'s is $O(\sum_{i=1}^I n_i \log(n_i))$, since we need to sort $n_i$ units within each matched set $i$.
The complexity for getting $Q_j$'s and $S_j$'s is $O(\sum_{i=1}^I n_i) = O(N)$, which follows immediately from their expressions. 

\item 
Third, we consider the calculation of $\mu_i(\Gamma_0)$, 
$v_i^2(\Gamma_0)$, 
$\mu_i(\infty)$
and 
$v_i^2(\infty)$, for all matched set $1\le i \le I$. 
Note that, for any set $i$, 
to find $\mu_i(\Gamma_0)$ and 
$v_i^2(\Gamma_0)$,
we only need to search over $n_i-1$ possible configurations of hidden confounding, 
and the mean and variance calculation for each configuration requires only $O(1)$ complexity. 
This is because we have already calculated the cumulative summations of the $q_{i(j)}$'s and $q_{i(j)}^2$'s beforehand. 
Besides, $\mu_i(\infty)$
and 
$v_i^2(\infty)$ are simply $q_{i(n_i)}$ and $0$, whose complexity is $O(1)$ for each set $i$. 
Therefore, the total complexity is $O(\sum_{i=1}^I n_i) = O(N)$.

\item Finally, we consider the calculation of $\mathcal{I}_{\Gamma_0; k}$, $\mu(\Gamma_0; k)$, $\sigma^2(\Gamma_0; k)$ and the $p$-value $\tilde{p}_{\Gamma_0; k}$. 
To get $\mathcal{I}_{\Gamma_0; k}$, it suffices to rank all the $I$ matched sets based on the $\mu_{i}(\infty)-\mu_{i}(\Gamma_0)$s, where we break ties based on the $v_{i}^2(\Gamma_0)$'s. Thus, the complexity for getting $\mathcal{I}_{\Gamma_0; k}$ is $O(I \log I)$.  
The complexities for getting $\mu(\Gamma_0; k)$, $\sigma^2(\Gamma_0; k)$ and the $p$-value $\tilde{p}_{\Gamma_0; k}$ are, respectively, $O(I), O(I)$ and $O(1)$, following immediately from their expressions. 
\end{enumerate}

From the above, the total complexity for calculating the $p$-value $\tilde p_{\Gamma_0;k}$ at given $k$ and $\Gamma_0$ is $O(I\log I + \sum_{i=1}^I n_i^2)$. 

We then consider the complexity for calculating the lower confidence limit $\tilde{\Gamma}_{(k)}$ for $\Gamma^\star_{(k)}$ at a given $k$. 
A subtlety here is that the monotonicity of the large-sample $p$-value is not easy to verify and may not be true, 
but we expect it to hold in most of the 
{\lxr cases},
especially when the asymptotic approximation works well. 
In practice, we will still use the binary search to find $\tilde{\Gamma}_{(k)}$. 
Furthermore, as long as the searched $\tilde{\Gamma}_{(k)}$ satisfies that the $p$-value $\tilde{p}_{\tilde{\Gamma}_{(k)}; k}$ is less than or equal to the significance level of interest, the resulting lower confidence limit will be valid. 
This 
{\lxr can be seen from}
the form of the confidence set in \eqref{eq:set_joint_set}. 
Thus, we will still use binary search to find $\tilde{\Gamma}_{(k)}$.
The number of $p$-values we need to calculate is then of order $O( \log (\bar\Gamma_k/\epsilon) )$, assuming that we have an upper bound $\bar\Gamma_k$ for $\tilde{\Gamma}_{(k)}$ {\lxr and the target precision is $\epsilon$}. 
Consequently, the total complexity for getting $\tilde{\Gamma}_{(k)}$ is of the order 
$O(\sum_{i=1}^I n_i^2 + (I\log I+N)\log (\bar\Gamma_k/\epsilon) )$, noticing that steps (i) and (ii) discussed before for computing $p$-values need only to be done once. 

Finally, we consider the complexity for calculating the lower confidence limits $\tilde{\Gamma}_{(k)}$'s for all the $\Gamma^\star_{(k)}$'s. 
Note that steps (i) and (ii) in the $p$-value calculation need only to be done once. 
Suppose that $\bar\Gamma_I$ denotes an upper bound of $\tilde{\Gamma}_{(I)}$, 
which will also be used as an upper bound for $\tilde{\Gamma}_{(k)}$'s for $k<I$. 
From the above, the total complexity for calculating the $\tilde{\Gamma}_{(k)}$'s  for all $1\le k\le I$ is then
$O(\sum_{i=1}^I n_i^2 + (I\log I+N)I\log (\bar\Gamma_k/\epsilon) )$. 
Similar to the discussion for matched pair case in Appendix \ref{sec:alg_pair}, 
in practice, 
when searching for $\tilde{\Gamma}_{(k)}$, we can search only on the interval $[1, \tilde{\Gamma}_{(k+1)}]$, for any $1\le k \le I-1$. 

\begin{rmk}\label{rmk:asym_complex}
  We consider a special but realistic case where all matched sets have bounded sizes. 
    In this case, the total complexity for getting lower confidence limits for all the the $\Gamma^\star_{(k)}$'s reduces to 
    $O(I^2 \log I\log (\bar\Gamma_k/\epsilon) )$.   
\end{rmk}

\subsection{Numerical experiments}

We conduct some numerical experiments to see the actual run time of our proposed method. 
We consider matched studies with $I=100,500,1000$ and $5000$ matched sets of equal sizes $n_1 = n_2 = \ldots = n_I \equiv n$ for $n=2, 3, 4, 5$ and $6$, respectively, 
and will conduct the asymptotic sensitivity analysis in Section \ref{sec:set}. 
We generate the control potential outcomes independently from the Gaussian distribution with mean zero and variance $0.5$, and assume that the true treatment effects are constant 2. 
We generate treatment assignments as in a matched completely randomized experiment. This is the ``favorable'' situation considered by \citet{rosenbaum2010book} for studying the power of sensitivity analyses; see Appendix \ref{sec:simulation}. 
For each combination of $(I, n)$ with $I\le 1000$, we simulate $10$ datasets, calculate the corresponding $95\%$ lower confidence limits $\tilde\Gamma_{(k)}$ for all $1\le k \le I$, and report the average running time in minutes in Table \ref{tab:runtime_1000}. 
To find these lower confidence limits, we use the \textsf{uniroot} function in R with convergence tolerance set at $10^{-4}$. 
The average running time takes less than half a minute when $I=100$, about 6 minutes when $I=500$, and 
about 25 minutes when $I=1000$. 
We do see the significant increase of running time as $I$ increases, which is roughly consistent with the $O(I^2 \log I )$ complexity discussed in Remark \ref{rmk:asym_complex}, assuming that the number of times needed to search for each confidence limit is of constant order.

\begin{table}[htb]
    \centering
    \small
    \caption{Average running time in minutes for calculating the lower confidence limits $\tilde\Gamma_{(k)}$ for all $1\le k\le I$. 
    }
    \label{tab:runtime_1000}
    \begin{tabular}{cccccccc}
    \toprule
     & $n=2$ & $n=3$ & $n=4$ & $n=5$ & $n=6$\\
    \midrule
    $I=100$
    & 0.2523364& 0.2574613 &0.2705687 &0.2776757& 0.2894428\\[5pt]
    $I=500$
    & 5.786684& 6.102555 &6.102555& 6.102555& 6.102555  \\[5pt]
    $I=1000$
    & 22.75612 &24.28257& 24.89604& 26.48302& 26.80329  \\
    \bottomrule
\end{tabular}
\end{table}

Moreover, for $I=5000$ and $2 \le n \le 6$, we generate $\tilde\Gamma_{(k)}$'s for $k=0.05I \times j$, with $j=1,2,\ldots, 20$. 
That is, instead of all $5000$ quantiles, we now construct only $5\%, 10\%, 
\ldots, 100\%$ quantiles of hidden biases. 
{\lxr For each set size $n$, we simulate $10$ datasets.}
The average running time  is shown in Table \ref{tab:runtime_5000}, which is less than 3 minutes in all cases. 
Note that these $20$ quantiles of hidden biases are in some sense representative for all quantiles, and can be computed rather quickly. In practice, we can also try a finer grid of quantiles, such as $1\%, 2\%, \ldots, 100\%$ quantiles. 
We can compute the confidence limits for all $5000$ quantiles of hidden biases, although it will take a much longer time; in practice, this can be sped up by parallelizing the computation for these quantiles. 
Nevertheless, 
when we visualize the lower confidence limits as in Figure \ref{fig:sen_lead}, the plot using a fine grid of quantiles, say 100 quantiles ranging from $1\%$ to $100\%$,  will be similar to that using all $5000$ quantiles of hidden biases. 
Importantly, in this way, we now only need to find lower confidence limits for much fewer, say 100, quantiles of hidden biases, which will not increase with $I$. 

\begin{table}[htb]
    \centering
    \small
    \caption{Average runtime for generating $\tilde\Gamma_{(k)}$'s for $k=0.05I \times j$, with $j=1,2,\ldots, 20$.
    }
    \label{tab:runtime_5000}
    \begin{tabular}{cccccccc}
    \toprule
     & $n=2$ & $n=3$ & $n=4$ & $n=5$ & $n=6$\\
    \midrule
    $I=5000$
    & 1.922614 &2.191779& 2.353492& 2.492236& 2.657214\\
    \bottomrule
\end{tabular}
\end{table}

\section{Proofs of Theorem \ref{thm:pair_vector_gamma} and Corollary \ref{thm:pair_vector_gamma}}\label{app:pair_vector}

To prove Theorem \ref{thm:pair_vector_gamma}, we need the following three lemmas. 
\begin{lemma}\label{lemma:Zq_sum_max}
Let $q_1$ and $q_2$ be two constants, 
and 
$Z_1 \equiv 1 - Z_2$ 
and 
$\overline{Z}_1 \equiv 1-\overline{Z}_2$
be two binary random variables with success probabilities 
\begin{align*}
    \pr(Z_1 = 1) = \frac{\exp(\overline{\gamma} u_1)}{\exp(\overline{\gamma} u_1) + \exp(\overline{\gamma} u_2)}, 
    \quad 
    \pr(\overline{Z}_1 = 1) = 
    \frac{\I(q_1\ge q_2) \exp(\overline{\gamma}) + \I(q_1 < q_2)}{1+\exp(\overline{\gamma})}, 
\end{align*}
where $u_1$ and $u_2$ are two constants in $[0,1]$. 
Then 
$
    \pr(Z_1q_1 + Z_2 q_2 \ge c) 
    \le 
    \pr(\overline{Z}_1q_1 + \overline{Z}_2 q_2 \ge c)
$
for any $c\in \mathbb{R}$. 
\end{lemma}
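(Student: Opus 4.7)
\textbf{Proof plan for Lemma \ref{lemma:Zq_sum_max}.}
My approach is to reduce the stochastic dominance claim to a single scalar inequality, exploiting the fact that both random variables are supported on the two-point set $\{q_1, q_2\}$. First I would observe that for any two random variables $X, Y$ supported on the common two-point set $\{\min(q_1,q_2), \max(q_1,q_2)\}$, the statement $\pr(X \ge c) \le \pr(Y \ge c)$ for all $c \in \mathbb R$ is equivalent to the single inequality $\pr(X = \max(q_1,q_2)) \le \pr(Y = \max(q_1,q_2))$; this handles all the thresholds $c$ above, between, and below the two atoms in one shot.

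Next I would compute both of these probabilities in closed form. Without loss of generality I would assume $q_1 \ge q_2$ (the case $q_1 < q_2$ is symmetric by relabeling). Under this convention,
\[
\pr(Z_1 q_1 + Z_2 q_2 = \max(q_1,q_2)) \;=\; \pr(Z_1 = 1) \;=\; \frac{\exp(\overline{\gamma} u_1)}{\exp(\overline{\gamma} u_1) + \exp(\overline{\gamma} u_2)},
\]
while the definition of $\overline{Z}_1$ with $q_1 \ge q_2$ collapses to
\[
\pr(\overline{Z}_1 q_1 + \overline{Z}_2 q_2 = \max(q_1,q_2)) \;=\; \pr(\overline{Z}_1 = 1) \;=\; \frac{\exp(\overline{\gamma})}{1+\exp(\overline{\gamma})}.
\]
Thus the lemma reduces to showing $\exp(\overline{\gamma} u_1)/\{\exp(\overline{\gamma} u_1) + \exp(\overline{\gamma} u_2)\} \le \exp(\overline{\gamma})/(1+\exp(\overline{\gamma}))$.

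Finally I would verify this inequality by cross-multiplying and cancelling the common term $\exp(\overline{\gamma}(1+u_1))$, reducing the claim to $\exp(\overline{\gamma} u_1) \le \exp(\overline{\gamma}(1+u_2))$, i.e.\ $\overline{\gamma} u_1 \le \overline{\gamma}(1+u_2)$. Since $\overline{\gamma} = \log \Gamma \ge 0$ in the Rosenbaum setup and $u_1 \le 1 \le 1+u_2$ because $u_1, u_2 \in [0,1]$, the inequality holds. The only mild subtlety is ensuring the two-point reduction is stated cleanly (the case $q_1 = q_2$ is trivial since both random variables equal the common value almost surely), but no step should require serious work; there is no real obstacle once one notes the reduction to a two-point comparison.
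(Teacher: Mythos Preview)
Your proposal is correct and follows essentially the same approach as the paper's proof: both reduce the stochastic dominance to the single scalar inequality $\exp(\overline{\gamma} u_1)/\{\exp(\overline{\gamma} u_1)+\exp(\overline{\gamma} u_2)\}\le \exp(\overline{\gamma})/(1+\exp(\overline{\gamma}))$ after dispensing with the trivial case $q_1=q_2$ and fixing $q_1\ge q_2$ by symmetry. The only cosmetic differences are that the paper checks the three threshold ranges $c\le q_2$, $q_2<c\le q_1$, $c>q_1$ explicitly rather than invoking your two-point reduction, and it justifies the key inequality via monotonicity in $u_1,u_2$ rather than by your cross-multiplication; both routes tacitly use $\overline{\gamma}\ge 0$.
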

\begin{proof}[Proof of Lemma \ref{lemma:Zq_sum_max}]
If $q_1 = q_2$, then both $Z_1q_1 + Z_2 q_2$ and $\overline{Z}_1q_1 + \overline{Z}_2 q_2$ are constant $q_1$, and Lemma \ref{lemma:Zq_sum_max} holds obviously. 
Below we consider only the case in which $q_1 \ne q_2$. 
By symmetry and without loss of generality, we assume $q_1 > q_2$.
Then by definition, 
\begin{align*}
    Z_1q_1+Z_2q_2=
    \begin{cases}
    q_1 & \text{with probability }\frac{\exp(\overline{\gamma} u_1)}{\exp(\overline{\gamma} u_1) + \exp(\overline{\gamma} u_2)},\\
    q_2 & \text{with probability }\frac{\exp(\overline{\gamma} u_2)}{\exp(\overline{\gamma} u_1) + \exp(\overline{\gamma} u_2)}.
    \end{cases}
\end{align*}
and 
\begin{align*}
    \overline Z_1q_1+\overline Z_2q_2=
    \begin{cases}
    q_1 & \text{with probability }\frac{\exp(\overline{\gamma})}{1+\exp(\overline{\gamma})}, \\
    q_2 & \text{with probability }\frac{1}{1+\exp(\overline{\gamma})}.
    \end{cases}
\end{align*}
For $c\le q_2$, 
$
\pr(Z_1q_1+Z_2q_2\ge c)=\pr(\overline Z_1q_1+\overline Z_2q_2\ge c)=1.
$
For $q_2< c\le q_1$, 
\begin{align*}
\pr(Z_1q_1+Z_2q_2\ge c)&=\pr(Z_1q_1+Z_2q_2=q_1)=\frac{\exp(\overline{\gamma} u_1)}{\exp(\overline{\gamma} u_1) + \exp(\overline{\gamma} u_2)}
\le 
\frac{\exp(\overline{\gamma} )}{\exp(\overline{\gamma}) + 1}
\\
& =\pr(\overline Z_1q_1+\overline Z_2q_2=q_1)
=\pr(\overline Z_1q_1+\overline Z_2q_2\ge c), 
\end{align*}
where the last inequality holds because 
$\exp(\overline{\gamma} u_1)/\{\exp(\overline{\gamma} u_1) + \exp(\overline{\gamma} u_2)\}$ is increasing in $u_1$ and decreasing in $u_2$. 
For $c> q_1$, 
$
\pr(Z_1q_1+Z_2q_2\ge c)=\pr(\overline Z_1q_1+\overline Z_2q_2\ge c)=0.
$
From the above, we have
$
    \pr(Z_1q_1 + Z_2 q_2 \ge c) 
    \le 
    \pr(\overline{Z}_1q_1 + \overline{Z}_2 q_2 \ge c)
$
for any $c\in \mathbb{R}$. 
Therefore, Lemma \ref{lemma:Zq_sum_max} holds. 
\end{proof}

\begin{lemma}\label{lemma:inverse cdf}
Let $X$ and $\overline{X}$ be two random variables satisfying $\pr(X\ge c) \le \pr(\overline{X} \ge c)$ for any $c\in \mathbb{R}$, 
and $F$ and $\overline{F}$ be the distribution functions of $X$ and $\overline{X}$. 
Then 
the quantile functions of $X$ and $\overline{X}$ satisfy 
$F^{-1}(p) \le \overline{F}^{-1}(p)$ for all $p \in (0, 1)$. 
\end{lemma}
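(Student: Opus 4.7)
\textbf{Proof proposal for Lemma \ref{lemma:inverse cdf}.} The plan is to convert the stated tail inequality into a pointwise comparison of the distribution functions $F$ and $\overline{F}$, and then invoke the standard order-reversing relationship between distribution functions and their generalized inverses.

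First, I would rewrite the hypothesis $\pr(X\ge c)\le \pr(\overline{X}\ge c)$ as $\pr(X<c)\ge \pr(\overline{X}<c)$, i.e., $F(c-)\ge \overline{F}(c-)$ for every $c\in\mathbb{R}$, where $F(c-)=\lim_{c'\uparrow c}F(c')$. This holds for all $c$, so applying it at $c_n = x + 1/n$ for an arbitrary real $x$ gives $\pr(X< x+1/n)\ge \pr(\overline{X}< x+1/n)$. Letting $n\to\infty$ and using continuity of probability along the decreasing sequence of events $\{X<x+1/n\}\downarrow \{X\le x\}$ (and likewise for $\overline{X}$), I obtain $F(x)\ge \overline{F}(x)$ for every $x\in\mathbb{R}$.

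Next, I would recall the standard fact about the generalized inverse $F^{-1}(p)=\inf\{x: F(x)\ge p\}$: if $F(x)\ge \overline{F}(x)$ for all $x$, then $\{x: \overline{F}(x)\ge p\}\subseteq \{x: F(x)\ge p\}$, since any $x$ with $\overline{F}(x)\ge p$ also satisfies $F(x)\ge \overline{F}(x)\ge p$. Taking infima on both sides reverses the containment, yielding $F^{-1}(p)\le \overline{F}^{-1}(p)$ for every $p\in(0,1)$, which is the conclusion.

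The only subtlety in the argument is keeping track of the distinction between $F(c)$ and $F(c-)$ when translating between tail probabilities and distribution functions, and confirming that right-continuity of distribution functions justifies the limiting step from $F(c-)\ge \overline{F}(c-)$ to $F(x)\ge \overline{F}(x)$. There is no genuine obstacle; once that bookkeeping is done, the rest is one line of infimum manipulation.
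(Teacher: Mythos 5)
Your proof is correct and follows essentially the same route as the paper's: both arguments first convert the tail-probability inequality into the pointwise inequality $F(x)\ge\overline{F}(x)$ via a right-hand limit (the paper writes $F(c)=1-\lim_{d\to c+}\pr(X\ge d)$, you take $\pr(X<x+1/n)\to F(x)$), and then both finish by observing that $F\ge\overline{F}$ pointwise gives the set containment $\{x:\overline{F}(x)\ge p\}\subseteq\{x:F(x)\ge p\}$ and hence $F^{-1}(p)\le\overline{F}^{-1}(p)$. The only cosmetic difference is in the bookkeeping of the limiting step; the substance is identical.
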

\begin{proof}[Proof of Lemma \ref{lemma:inverse cdf}]
From the condition in Lemma \ref{lemma:inverse cdf}, for any $c\in \mathbb{R}$, 
$$
F(c) = 1 - \pr(X>c) = 1 - \lim_{d\rightarrow c+}\pr(X\ge d) \ge 1 - \lim_{d\rightarrow c+}\pr(\overline{X}\ge d) = 1 - \pr(\overline{X}>c) = \overline{F}(c).
$$
This implies that $\{x\in \mathbb{R}: F(x)\ge p\} \supset \{x\in \mathbb{R}: \overline{F}(x)\ge p\}$ for any $p\in (0,1)$. 
By the definition of quantile functions, 
we have $F^{-1}(p)=\inf\{x: F(x)\ge p\} \le \inf\{x: \overline{F}(x)\ge p\} = \overline{F}^{-1}(p)$. 
Therefore, Lemma \ref{lemma:inverse cdf} holds. 
\end{proof}

\begin{lemma}\label{lemma:sum_order}
Let $(X_1, \ldots, X_I)$ and $(\overline{X}_1, \ldots, \overline{X}_I)$ be $2I$ random variables satisfying 
(i) $(X_1, \ldots, X_I)$ are mutually independent, 
(ii) $(\overline{X}_1, \ldots, \overline{X}_I)$ are mutually independent,
and 
(iii) $\pr(X_i \ge c) \le \pr(\overline{X}_i\ge c)$ for all $1\le i\le I$ and  $c \in \mathbb{R}$. 
Then 
$
\pr(\sum_{i=1}^I X_i \ge c ) \le \pr( \sum_{i=1}^I \overline{X}_i \ge c)
$
for all $c\in\mathbb R$. 
\end{lemma}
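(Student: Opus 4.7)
The plan is to reduce the distributional comparison to an almost-sure pointwise comparison via a quantile coupling. First, by Lemma \ref{lemma:inverse cdf}, the tail inequality $\pr(X_i\ge c)\le \pr(\overline{X}_i\ge c)$ for all $c\in \mathbb{R}$ implies that the quantile functions satisfy $F_i^{-1}(p) \le \overline{F}_i^{-1}(p)$ for every $p\in (0,1)$, where $F_i$ and $\overline{F}_i$ denote the distribution functions of $X_i$ and $\overline{X}_i$ respectively.

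Next, on an auxiliary probability space, I would introduce $I$ independent Uniform$(0,1)$ random variables $U_1, \ldots, U_I$, and define $X_i' = F_i^{-1}(U_i)$ and $\overline{X}_i' = \overline{F}_i^{-1}(U_i)$ for each $1\le i\le I$. By the standard inverse-CDF construction, $X_i' \sim F_i$ and $\overline{X}_i' \sim \overline{F}_i$; moreover, independence of the $U_i$'s yields mutual independence within $(X_1',\ldots, X_I')$ and within $(\overline{X}_1',\ldots,\overline{X}_I')$. Hence $(X_1',\ldots,X_I')$ has the same joint distribution as $(X_1,\ldots,X_I)$, and likewise for the barred collections, so $\sum_{i=1}^I X_i' \stackrel{d}{=} \sum_{i=1}^I X_i$ and $\sum_{i=1}^I \overline{X}_i' \stackrel{d}{=} \sum_{i=1}^I \overline{X}_i$.

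Crucially, from the pointwise inequality of quantile functions, $X_i' = F_i^{-1}(U_i) \le \overline{F}_i^{-1}(U_i) = \overline{X}_i'$ holds almost surely for each $i$. Summing these inequalities gives $\sum_{i=1}^I X_i' \le \sum_{i=1}^I \overline{X}_i'$ almost surely, and therefore
\[
\pr\Bigl(\sum_{i=1}^I X_i \ge c\Bigr) = \pr\Bigl(\sum_{i=1}^I X_i' \ge c\Bigr) \le \pr\Bigl(\sum_{i=1}^I \overline{X}_i' \ge c\Bigr) = \pr\Bigl(\sum_{i=1}^I \overline{X}_i \ge c\Bigr)
\]
for every $c\in\mathbb{R}$, which is the desired conclusion.

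There is no serious obstacle: the argument is entirely standard once Lemma \ref{lemma:inverse cdf} is in hand. The only minor point worth mentioning is the invocation of the inverse-CDF existence result and the fact that the quantile-coupled vectors preserve the independence structure, both of which follow from elementary measure-theoretic facts. An alternative route via induction on $I$ (conditioning on $X_I$ and applying the inductive hypothesis conditional on its value) also works but is slightly heavier notationally; the coupling argument above is the cleaner presentation.
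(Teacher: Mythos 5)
Your proof is correct and follows essentially the same quantile-coupling argument as the paper: both constructions define $X_i' = F_i^{-1}(U_i)$ and $\overline{X}_i' = \overline{F}_i^{-1}(U_i)$ from i.i.d.~uniforms and invoke Lemma~\ref{lemma:inverse cdf} to obtain an almost-sure ordering of the sums. There is no substantive difference.
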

\begin{proof}[Proof of Lemma \ref{lemma:sum_order}]
Let $F_i$ and $\overline{F}_i$ be the distribution functions of $X_i$ and $\overline{X}_i$, respectively, for $1\le i\le I$, 
and let $(U_1, U_2, \ldots, U_I)$ be $I$ independent and identically distributed (i.i.d.) random variables following the same distribution as $\Unif(0,1)$. 
Then by the property of quantile functions, 
$
(X_1, \ldots, X_I) \sim (F_1^{-1}(U_1), \ldots, F_I^{-1}(U_I))
$
and 
$
(\overline{X}_1, \ldots, \overline{X}_I) \sim (\overline{F}_1^{-1}(U_1), \ldots, \overline{F}_I^{-1}(U_I)). 
$
From Lemma \ref{lemma:inverse cdf}, we have 
$
    \sum_{i=1}^I X_i
    \sim 
    \sum_{i=1}^I F_i^{-1}(U_i) 
    \le 
    \sum_{i=1}^I \overline{F}_i^{-1}(U_i)
    $ $\sim \sum_{i=1}^I \overline{X}_i, 
$
which immediately implies that for any $c\in \mathbb{R}$, 
$$
\pr\left(\sum_{i=1}^I X_i \ge c \right) 
= 
\pr\left( \sum_{i=1}^I F_i^{-1}(U_i)  \ge c\right)
\le 
\pr\left( \sum_{i=1}^I \overline{F}_i^{-1}(U_i)  \ge c\right)
= \pr\left( \sum_{i=1}^I \overline{X}_i \ge c\right). 
$$
Therefore, Lemma \ref{lemma:sum_order} holds. 
\end{proof}

\begin{proof}[\bf Proof of Theorem \ref{thm:pair_vector_gamma}]
As discussed in Section \ref{sec:test_rand_sen}, 
under Fisher's null $H_0$, $q_{ij}$'s are constant for all $i$ and $j$. 
For $1\le i\le I$, we define $\overline{Z}_{i1} \equiv 1 - \overline{Z}_{i2}$ as a binary random variable with the following success probability:
\begin{align*}
    \pr(\overline{Z}_{i1} = 1) = 
    \frac{\I(q_{i1}\ge q_{i2}) \exp(\overline{\gamma}_i) + \I(q_{i1} < q_{i2})}{1+\exp(\overline{\gamma}_i)},
\end{align*}
{\lxr where $(\overline{Z}_{i1}, \overline{Z}_{i2})$'s are mutually independent across all $i$.}
Let $X_i=Z_{i1}q_{i1}+Z_{i2}q_{i2}$ and $\overline X_i=\overline Z_{i1}q_{i1}+\overline Z_{i2}q_{i2}$ for $1\le i \le I$. 
Then by definition, 
$T = \sum_{i=1}^I X_i$ and 
$\overline{T}({\bs{\Gamma}}) \sim \sum_{i=1}^I \overline{X}_i$. 
If the bias $\bs{\Gamma}^\sta$ in all matched pairs are bounded by $\bs{\Gamma}$, i.e., $\bs{\Gamma}^\sta\vecle {\bs{\Gamma}}$, 
then by Lemma \ref{lemma:Zq_sum_max}, $\pr (X_i\ge c)\le\pr(\overline X_i\ge c)$ for all $c\in\mathbb R.$
From Lemma \ref{lemma:sum_order} and by the mutual independence of $(Z_{i1}, Z_{i2})$'s and $(\overline{Z}_{i1}, \overline{Z}_{i2})$'s across all matched pairs, 
we then have for all $c\in \mathbb{R}$, 
$$\pr (T\ge c)=\pr\left(\sum_{i=1}^IX_i\ge c\right)\le\pr\left(\sum_{i=1}^I\overline X_i\ge c\right)=\pr(\overline{T}({\bs{\Gamma}}) \ge c).$$
Therefore, Theorem \ref{thm:pair_vector_gamma} holds. 
\end{proof}

To prove Corollary \ref{cor:pair_vector_gamma}, we need the following lemma. 

\begin{lemma}\label{lemma:survival_dominant_by_uniform}
	For any random variable $W$, define  
	$
	G(w) = \pr(W\geq w). 
	$
	Then $\pr( G(W) \leq \alpha )\leq \alpha$ for all $\alpha\in (0,1)$.
\end{lemma}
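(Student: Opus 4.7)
The plan is to show that $G(W)$ is stochastically dominated by the uniform distribution on $[0,1]$, by exploiting the fact that $G$ is non-increasing. The key observation is that the event $\{G(W)\le\alpha\}$ translates back into an event about $W$ lying in a one-sided region, and the probability of that region can be read off from $G$ itself, giving exactly the desired bound $\alpha$.

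More concretely, first I would note that $G$ is non-increasing on $\mathbb{R}$ (since $\{W\ge w\}$ shrinks as $w$ grows) and left-continuous (since $\{W\ge w_n\}\downarrow\{W\ge w\}$ when $w_n\uparrow w$). Second, I would define $w_\alpha=\inf\{w\in\mathbb{R}:G(w)\le\alpha\}$, using the convention $w_\alpha=+\infty$ if this set is empty (in which case the conclusion is trivial since $\pr(G(W)\le\alpha)=0$). Third, I would split into two cases according to whether the infimum is attained. If $G(w_\alpha)\le\alpha$, then by monotonicity $\{w:G(w)\le\alpha\}=[w_\alpha,\infty)$, so
\[
\pr(G(W)\le\alpha)=\pr(W\ge w_\alpha)=G(w_\alpha)\le\alpha.
\]
If instead $G(w_\alpha)>\alpha$, then monotonicity forces $G(w)\le\alpha$ for every $w>w_\alpha$, so $\{w:G(w)\le\alpha\}=(w_\alpha,\infty)$, and by continuity from below of $\pr$,
\[
\pr(G(W)\le\alpha)=\pr(W>w_\alpha)=\lim_{n\to\infty}\pr(W\ge w_\alpha+1/n)=\lim_{n\to\infty}G(w_\alpha+1/n)\le\alpha.
\]
Combining the two cases yields the claim.

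The main obstacle is the case $G(w_\alpha)>\alpha$, where the infimum defining $w_\alpha$ is not attained. This is the familiar issue that survival functions can have jumps and $G$ is only left-continuous rather than right-continuous, so one cannot directly evaluate $G$ at the boundary point. The resolution is to approximate $\{W>w_\alpha\}$ from inside by the increasing union $\bigcup_n\{W\ge w_\alpha+1/n\}$ and pass to the limit, using monotonicity of $G$ to bound each term by $\alpha$. Note that no assumption of continuity or even independence is needed—only the definition of $G$ and elementary measure-theoretic properties—so the lemma applies to the distribution of the $p$-value $\overline{G}_{\bs{\Gamma}}(\overline{T}(\bs{\Gamma}))$ in Corollary \ref{cor:pair_vector_gamma} to yield validity after combining with Theorem \ref{thm:pair_vector_gamma}.
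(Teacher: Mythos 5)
Your proof is correct and follows essentially the same route as the paper's: define $w_\alpha=\inf\{w:G(w)\le\alpha\}$ (the paper calls it $G^{-1}(\alpha)$), split on whether $G(w_\alpha)\le\alpha$, and in the non-attained case approximate $\{W>w_\alpha\}$ from inside via $\lim_m G(w_\alpha+1/m)\le\alpha$. The only cosmetic differences are that you make the set identities $\{w:G(w)\le\alpha\}=[w_\alpha,\infty)$ or $(w_\alpha,\infty)$ explicit (the paper just uses the inequality $\pr(G(W)\le\alpha)\le\pr(W\ge w_\alpha)$, which is all that is needed) and that you note the trivial empty-set case, neither of which changes the argument.
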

\begin{proof}[Proof of Lemma \ref{lemma:survival_dominant_by_uniform}]
Note that $G(w)$ is 
{\lxr decreasing}
in $w$. 
For any $p\in (0,1)$, we 
define
$
G^{-1}(p) = \inf \{w: G(w)\leq p\}. 
$
For any given $\alpha\in (0,1)$, we consider the following two cases. 
If $G(G^{-1}(\alpha))\leq \alpha$, then 
\begin{align*}
\pr(G(W)\leq \alpha) & \leq \pr( W \geq G^{-1}(\alpha))=G(G^{-1}(\alpha))\leq \alpha.
\end{align*}
Otherwise, $G(G^{-1}(\alpha))>\alpha$, and 
\begin{align*}
	\pr(G(W)\leq \alpha) \leq& \pr(W> G^{-1}(\alpha))=\lim_{m\rightarrow \infty} \pr\left(W\geq G^{-1}(\alpha)+\frac{1}{m}\right)
	= \lim_{m\rightarrow \infty} G\left(G^{-1}(\alpha)+\frac{1}{m}\right)\leq \alpha.
\end{align*}
Therefore, Lemma \ref{lemma:survival_dominant_by_uniform} holds. 
\end{proof}

\begin{proof}[\bf Proof of Corollary \ref{cor:pair_vector_gamma}]
Define $G(c) = \pr(T\ge c)$. 
From Theorem \ref{thm:pair_vector_gamma}, we can know that $G(c) \le \overline{G}_{\bs{\Gamma}}(c)$ for all $c\in \mathbb{R}$. 
This then implies that, 
for $\alpha\in (0,1),$
\begin{align*}
    \pr(\overline p_{\bs \Gamma}\le\alpha)&=
    \pr\left\{ \overline G_{\bs{\Gamma}}(T)\le\alpha \right\}
    \le \pr(G(T)\le \alpha) \le \alpha,
\end{align*}
where the last inequality follows from Lemma \ref{lemma:survival_dominant_by_uniform}. 
Therefore, Corollary \ref{cor:pair_vector_gamma} holds. 
\end{proof}

\section{Proofs of Theorem \ref{thm:pair_quantile_gamma} and Corollary \ref{cor:pval_pair_quantile}}\label{app:pair_quantile}

To prove Theorem \ref{thm:pair_quantile_gamma}, we need the following lemma.
\begin{lemma}\label{lemma:two_pair_two_prob}
Let $q_1 \le q_2$ and $r_1 \le r_2$ be four constants satisfying $q_2 - q_1 \le r_2-r_1$. 
Let 
$Z_2 \equiv 1-Z_1$,  $W_2 \equiv 1 - W_1$, 
$\overline{Z}_2 \equiv 1-\overline{Z}_1$,  $\overline{W}_2 \equiv 1 - \overline{W}_1$ 
be four binary random variables with success probabilities
$
\pr(Z_2 = 1) = \overline{p}, \pr(W_2= 1) = \underline{p}, 
$
$
   \pr(\overline{Z}_2 = 1) = \underline{p}
$
and 
$\pr(\overline{W}_2 = 1) = \overline{p}, 
$
for some $0\le \underline{p} \le\overline{p} \le 1$. 
Then we must have, for any $c\in \mathbb{R}$, 
\begin{align*}
    \pr(Z_1 q_1 + Z_2 q_2 + W_1 r_1 + W_2 r_2\ge c)
    \le 
    \pr(\overline{Z}_1 q_1 + \overline{Z}_2 q_2 + \overline{W}_1 r_1 + \overline{W}_2 r_2\ge c). 
\end{align*}
\end{lemma}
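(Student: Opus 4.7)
The plan is to compute the joint distributions of the two sums explicitly, observe that they differ only by a single mass transfer between two atoms, and conclude by a short case analysis on the threshold $c$. I will treat $(Z_1,W_1)$ and $(\overline Z_1,\overline W_1)$ as pairs of mutually independent coordinates, as is implicit from the matched-pair setup (each pair describes the assignment in a distinct matched set, so the marginals given in the lemma determine the joint laws).

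First I would set $a=q_1$, $b=q_2$, $c_*=r_1$, $d=r_2$, so the hypothesis $q_2-q_1\le r_2-r_1$ becomes the single inequality $b+c_*\le a+d$, and the four ordered atom values are $a+c_*\le b+c_*\le a+d\le b+d$. Let $S=Z_1q_1+Z_2q_2+W_1r_1+W_2r_2$ and $S'=\overline Z_1q_1+\overline Z_2q_2+\overline W_1r_1+\overline W_2r_2$. By independence, the joint pmf factors, so $S$ takes the four values $a+c_*,a+d,b+c_*,b+d$ with probabilities $(1-\overline p)(1-\underline p),\ (1-\overline p)\underline p,\ \overline p(1-\underline p),\ \overline p\underline p$, while $S'$ takes the same four values with probabilities obtained by swapping the roles of $\overline p$ and $\underline p$.

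The key observation I would exploit is that the two distributions agree on the two extreme atoms $a+c_*$ and $b+d$, and differ only on the two middle atoms by exactly the same amount: a direct subtraction gives
\begin{align*}
\pr(S'=a+d)-\pr(S=a+d)&=\overline p-\underline p\ \ge\ 0,\\
\pr(S=b+c_*)-\pr(S'=b+c_*)&=\overline p-\underline p\ \ge\ 0.
\end{align*}
Thus $S'$ is obtained from $S$ by transferring mass $\overline p-\underline p$ from the atom $b+c_*$ to the atom $a+d$, and by the hypothesis $b+c_*\le a+d$ this transfer is upward, i.e., stochastic dominance.

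The final step is routine: I would verify $\pr(S\ge c)\le\pr(S'\ge c)$ by dividing $\mathbb R$ into the five regions determined by the four atoms. On the four outer regions ($c\le a+c_*$, $a+c_*<c\le b+c_*$, $a+d<c\le b+d$, $c>b+d$) the two tail probabilities are equal, while on the middle region $b+c_*<c\le a+d$ both tail probabilities are single masses and the inequality reduces to $\underline p\le\overline p$. I do not expect a real obstacle; the only point requiring a bit of care is the boundary case $b+c_*=a+d$, where the middle region collapses and the two sums are actually equal in law, which is consistent with the claim. The argument is essentially a two-variable rearrangement inequality dressed up as a coupling between four-atom distributions.
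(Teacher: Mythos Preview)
Your proof is correct and follows essentially the same approach as the paper: compute the four-atom distributions of the two sums explicitly (using the implicit independence of the two pairs), observe that the ordering $q_1+r_1\le q_2+r_1\le q_1+r_2\le q_2+r_2$ follows from the hypothesis, and verify the tail-probability inequality by direct case analysis on $c$. Your mass-transfer framing is a pleasant conceptual gloss, but the underlying computation and case split match the paper's proof line by line.
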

\begin{proof}[Proof of Lemma \ref{lemma:two_pair_two_prob}]
Let 
$X = Z_1 q_1 + Z_2 q_2 + W_1 r_1 + W_2 r_2$
and 
$\overline{X} = \overline{Z}_1 q_1 + \overline{Z}_2 q_2 + \overline{W}_1 r_1 + \overline{W}_2 r_2$. 
From the conditions in Lemma \ref{lemma:two_pair_two_prob}, 
$q_1+r_1\le q_2+r_1\le q_1+r_2\le q_2+r_2$. 
Below we first consider the case in which these four quantities are all different. 
By definition, 
\begin{align*}
    X = 
    \begin{cases}
    q_1 + r_1, & \text{with probability } (1-\overline{p}) (1-\underline{p})
    \\
    q_2 + r_1, & \text{with probability } \overline{p} (1-\underline{p})
    \\
    q_1 + r_2, & \text{with probability } (1-\overline{p}) \underline{p}
    \\
    q_2 + r_2, & \text{with probability } \overline{p} \cdot \underline{p}, 
    \end{cases}
\end{align*}
and 
\begin{align*}
    \overline{X} = 
    \begin{cases}
    q_1 + r_1, & \text{with probability } (1-\underline{p}) (1-\overline{p})
    \\
    q_2 + r_1, & \text{with probability } \underline{p} (1-\overline{p})
    \\
    q_1 + r_2, & \text{with probability } (1-\underline{p}) \overline{p}
    \\
    q_2 + r_2, & \text{with probability } \underline{p} \cdot \overline{p}
    \end{cases}
\end{align*}
This implies that 
\begin{align*}
    \pr(X\ge q_1 + r_1) & = 1 = \pr(\overline{X} \ge q_1 + r_1), 
    \\
    \pr(X\ge q_2 + r_1) & = 1 - (1-\overline{p}) (1-\underline{p}) = \pr(\overline{X} \ge q_2 + r_1), 
    \\
    \pr(X\ge q_1 + r_2) & = (1-\overline{p}) \underline{p} + \overline{p} \cdot \underline{p} = \underline{p}  \le \overline{p} =  (1-\underline{p}) \overline{p} + \underline{p} \cdot \overline{p}
    =
    \pr(\overline{X} \ge q_1 + r_2), 
    \\
    \pr(X\ge q_2 + r_2) & = \overline{p} \cdot \underline{p}  = \pr(\overline{X} \ge q_2 + r_2). 
\end{align*}
Consequently, we must have 
$
\pr(X\ge c) \le \pr(\overline{X} \ge c)
$
for all $c$.
By the same logic, we can verify that this also holds when some elements of $\{q_1+r_1, q_2+r_1, q_1+r_2, q_2+r_2\}$ are equal. 
Therefore, Lemma \ref{lemma:two_pair_two_prob} holds. 
\end{proof}
\begin{proof}[\bf Proof of Theorem \ref{thm:pair_quantile_gamma}]
Recall that $\mathcal{I}^\sta_k$ is the set of indices of the matched sets with the $k$ smallest bias $\Gamma_i^\sta$'s. 
Let $\bs{\Gamma} = \Gamma_0 \cdot \bs{1}_{\mathcal{I}^\sta_k} + \infty \cdot \bs{1}_{\mathcal{I}^\sta_k}^\complement$. 
Then $\Gamma_{(k)}^\sta\le \Gamma_0$ is equivalent to $\bs{\Gamma}^\sta \vecle \bs{\Gamma}$. 
From Theorem \ref{thm:pair_vector_gamma}, 
$\pr(T\ge c) \le \pr(\overline{T}({\bs{\Gamma}}) \ge c)$ for all $c\in \mathbb{R}$, 
where $\overline{T}({\bs{\Gamma}})$ is defined the same as in Theorem \ref{thm:pair_vector_gamma}. 
Define $\mathcal{I}_k$ the same as in Theorem \ref{thm:pair_quantile_gamma}, 
and 
$\tilde{\bs{\Gamma}} = \Gamma_0 \cdot \bs{1}_{\mathcal{I}_k} + \infty \cdot \bs{1}_{\mathcal{I}_k}^\complement$. 
We define $\overline{T}({\tilde{\bs{\Gamma}})}$ analogously as in Theorem \ref{thm:pair_vector_gamma}. 

Below we compare the distributions of $\overline{T}({\bs{\Gamma}})$ and $\overline{T}({\tilde{\bs{\Gamma}}})$. 
Let $q_{i1}' = \min\{q_{i1}, q_{i2}\}$ and $q_{i2}' = \max\{q_{i1}, q_{i2}\}$, for $1\le i \le I$. 
By definition, $\overline{T}({\bs{\Gamma}}) \sim \sum_{i=1}^I (\overline{Z}_{i1} q_{i1}' + \overline{Z}_{i2} q_{i2}' )$, 
where $\overline{Z}_{i2} \equiv 1-\overline{Z}_{i1}$ are mutually independent across all $i$ and 
\begin{align*}
    \pr( \overline{Z}_{i2} = 1 \mid \mathcal{Z}) = 
    \begin{cases}
    \Gamma_0/(1+\Gamma_0), & \text{if } i \in \mathcal{I}^\sta_k, \\
    1, & \text{otherwise}. 
    \end{cases}
\end{align*}
Similarly, 
$\overline{T}({\tilde{\bs{\Gamma}}}) \sim \sum_{i=1}^I (\tilde{Z}_{i1} q_{i1}' + \tilde{Z}_{i2} q_{i2}' )$, 
where $\tilde{Z}_{i2} \equiv 1-\tilde{Z}_{i1}$ are mutually independent across all $i$ and 
\begin{align*}
    \pr( \tilde{Z}_{i2} = 1 \mid \mathcal{Z}) = 
    \begin{cases}
    \Gamma_0/(1+\Gamma_0), & \text{if } i \in \mathcal{I}_k, \\
    1, & \text{otherwise}. 
    \end{cases}
\end{align*}
Define $\mathcal{C} = ( \mathcal{I}_k^\complement \cap {\mathcal I_k^\sta}^\complement )\cup ( \mathcal{I}_k \cap {\mathcal I_k^\sta} )$. 
By definition, $\overline{Z}_{i2}\sim  \tilde{Z}_{i2}$ for all $i\in \mathcal{C}$. 
Consider any $i \in \mathcal{I}_k \setminus {\mathcal I_k^\sta}$ and any $j \in {\mathcal I_k^\sta} \setminus \mathcal{I}_k$. 
Then, by the construction of $\mathcal{I}_k$, we must have $q_{i2}'-q_{i1}' \le q_{j2}'-q_{j1}'$. 
From Lemma \ref{lemma:two_pair_two_prob} and the fact that 
$\pr(\overline{Z}_{i2} = 1\mid \mathcal{Z}) = \pr(\tilde{Z}_{j2}=1 \mid \mathcal{Z}) = 1$ and  $\pr(\overline{Z}_{j2} = 1\mid \mathcal{Z}) = \pr(\tilde{Z}_{i2} = 1 \mid \mathcal{Z}) = \Gamma_0/(1+\Gamma_0) \le 1$, we have, for all $c\in \mathbb{R}$, 
\begin{align}\label{eq:order_I_k_J_k}
\pr\left( \overline{Z}_{i1} q_{i1}' + \overline{Z}_{i2} q_{i2}' + \overline{Z}_{j1} q_{j1}' + \overline{Z}_{j2} q_{j2}'  \ge c \mid \mathcal{Z} \right)
\le 
\pr\left( 
\tilde{Z}_{i1} q_{i1}' + \tilde{Z}_{i2} q_{i2}' + \tilde{Z}_{j1} q_{j1}' + \tilde{Z}_{j2} q_{j2}' \ge c \right). 
\end{align}
From Lemma \ref{lemma:sum_order} and the mutual independence of $\overline{Z}_{i2}$'s and $\tilde{Z}_{i2}$'s over all $i$, we then have, for all $c\in \mathbb{R}$, 
\begin{align*}
    & \quad \ \pr\left( \overline{T}({\bs{\Gamma}}) \ge c\right) \\
    & = 
    \pr\left\{ \sum_{i\in \mathcal{C}} (\overline{Z}_{i1} q_{i1}' + \overline{Z}_{i2} q_{i2}' ) 
    + 
    \sum_{i\in \mathcal{I}_k \setminus {\mathcal I_k^\sta}} (\overline{Z}_{i1} q_{i1}' + \overline{Z}_{i2} q_{i2}' ) 
    + 
    \sum_{j \in {\mathcal I_k^\sta} \setminus \mathcal{I}_k} (\overline{Z}_{j1} q_{j1}' + \overline{Z}_{j2} q_{j2}' ) 
    \ge c \mid \mathcal{Z}\right\}
    \\
    & \le 
    \pr\left\{
    \sum_{i\in \mathcal{C}} (\tilde{Z}_{i1} q_{i1}' + \tilde{Z}_{i2} q_{i2}' ) 
    + 
    \sum_{i\in \mathcal{I}_k \setminus {\mathcal I_k^\sta}} (\tilde{Z}_{i1} q_{i1}' + \tilde{Z}_{i2} q_{i2}' ) 
    + 
    \sum_{j \in {\mathcal I_k^\sta} \setminus \mathcal{I}_k} (\tilde{Z}_{j1} q_{j1}' + \tilde{Z}_{j2} q_{j2}' )
    \ge c \mid \mathcal{Z} \right\}
    \\
    & = 
    \pr\left( \overline{T}({\tilde{\bs{\Gamma}}}) \ge c \right),  
\end{align*}
where the last inequality holds because $\mathcal{I}_k \setminus {\mathcal I_k^\sta}$ and ${\mathcal I_k^\sta} \setminus \mathcal{I}_k$ must have the same size and we can pair them in an arbitrary way and use the inequality \eqref{eq:order_I_k_J_k}. 

From the above and Theorem \ref{thm:pair_vector_gamma}, we have, for all $c\in \mathbb{R}$
$$
\pr\left( T\ge c\right) 
\le 
\pr\left( \overline{T}({\bs{\Gamma}}) \ge c \right) 
\le 
\pr\left( \overline{T}(\tilde{\bf\Gamma}) \ge c \right) = \pr\left( \overline{T}({\Gamma_0};k)\ge c\right),$$
where the last 
{\lxr equality}
follows from the definition of $\overline{T}({\Gamma_0};k)$. 
Therefore, Theorem \ref{thm:pair_quantile_gamma} holds. 
\end{proof}

\begin{proof}[\bf Proof of Corollary \ref{cor:pval_pair_quantile}]
Define $G(c) = \pr(T\ge c)$. 
From Theorem \ref{thm:pair_quantile_gamma} and Lemma \ref{lemma:survival_dominant_by_uniform}, we can know that $G(c) \le \overline{G}_{\Gamma;k}(c)$ for all $c\in \mathbb{R}$. 
This then implies that, 
for $\alpha\in (0,1),$
\begin{align*}
    \pr(\overline p_{\Gamma;k}\le\alpha)&=
    \pr\left\{ \overline G_{\Gamma;k}(T)\le\alpha \right\}
    \le \pr(G(T)\le \alpha) \le \alpha.
\end{align*}
Therefore, Corollary \ref{cor:pval_pair_quantile} holds.
\end{proof}

\section{Proofs of Corollary \ref{cor:pair_ci_gamma3} and Theorem \ref{thm:pair_joint_ci}}\label{app:pair_ci}

To prove Corollary \ref{cor:pair_ci_gamma3}, we need the following lemma. 

\begin{lemma}\label{lemma:property_p_gamma_k}
$\overline{p}_{\Gamma_0;k}$ is increasing in $\Gamma_0$ and decreasing in $k$. 
\end{lemma}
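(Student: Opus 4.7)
The plan is to reduce monotonicity of $\overline{p}_{\Gamma_0;k} = \overline{G}_{\Gamma_0;k}(T)$ to stochastic monotonicity of the bounding random variable $\overline{T}(\Gamma_0;k)$ itself, since the realized test statistic $T$ does not depend on $(\Gamma_0,k)$. Specifically, I would show that $\overline{T}(\Gamma_0;k)$ is stochastically increasing in $\Gamma_0$ and stochastically decreasing in $k$. Then $\overline{G}_{\Gamma_0;k}(c) = \pr(\overline{T}(\Gamma_0;k) \ge c)$ inherits the same monotonicity in each argument at every fixed $c$, which immediately yields the claimed monotonicity of $\overline{p}_{\Gamma_0;k}$ after substituting $c = T$.

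For monotonicity in $\Gamma_0$ at fixed $k$: recall $\overline{T}(\Gamma_0;k) \sim \sum_{i=1}^I \overline{T}_i$, where for $i \notin \mathcal{I}_k$ the summand $\overline{T}_i$ is the constant $\max\{q_{i1},q_{i2}\}$, and for $i \in \mathcal{I}_k$ the summand $\overline{T}_i$ takes $\max\{q_{i1},q_{i2}\}$ with probability $\Gamma_0/(1+\Gamma_0)$ and $\min\{q_{i1},q_{i2}\}$ with probability $1/(1+\Gamma_0)$. Since $\Gamma_0/(1+\Gamma_0)$ is increasing in $\Gamma_0$, each such $\overline{T}_i$ is stochastically increasing in $\Gamma_0$; the summands for $i \notin \mathcal{I}_k$ are unaffected. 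By mutual independence of the pairs, Lemma \ref{lemma:sum_order} lifts the pairwise stochastic ordering to the sum, giving the desired monotonicity of $\overline{T}(\Gamma_0;k)$ in $\Gamma_0$.

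For monotonicity in $k$ at fixed $\Gamma_0$: if $\mathcal{I}_k$ is constructed from a consistent tie-breaking rule when ranking the values $|q_{i1}-q_{i2}|$, then $\mathcal{I}_k \subset \mathcal{I}_{k+1}$ and $\mathcal{I}_{k+1}\setminus \mathcal{I}_k = \{i^*\}$ for a single index $i^*$. Moving from $k$ to $k+1$ changes only the summand indexed by $i^*$: from the constant $\max\{q_{i^*1},q_{i^*2}\}$ to the random variable taking $\max\{q_{i^*1},q_{i^*2}\}$ with probability $\Gamma_0/(1+\Gamma_0) \le 1$ and $\min\{q_{i^*1},q_{i^*2}\}$ otherwise. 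The latter is stochastically no larger than the former, and by the independence of pairs together with Lemma \ref{lemma:sum_order}, $\overline{T}(\Gamma_0;k+1)$ is stochastically no larger than $\overline{T}(\Gamma_0;k)$. Iterating this one-step comparison gives decreasing monotonicity in $k$.

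No single step looks genuinely hard here; the two halves are symmetric applications of Lemma \ref{lemma:sum_order}. The only mild subtlety is handling possible ties in the ordering of $|q_{i1}-q_{i2}|$ when defining $\mathcal{I}_k$ across different values of $k$: one must fix a single tie-breaking rule so that the inclusion $\mathcal{I}_k \subset \mathcal{I}_{k+1}$ genuinely holds and only one index is added at each step. With that convention in place, the argument above applies verbatim, completing the proof.
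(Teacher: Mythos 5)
Your proof is correct and takes essentially the same approach as the paper: both reduce to showing that $\overline{T}(\Gamma_0;k)$ is stochastically ordered as $\Gamma_0$ increases or $k$ decreases, via component-wise stochastic dominance and Lemma \ref{lemma:sum_order}. The paper handles both monotonicities in one step by noting that $\Gamma_0 \le \Gamma_0'$ and $k \ge k'$ give the vector inequality $\Gamma_0\bs{1}_{\mathcal{I}_k}+\infty\bs{1}_{\mathcal{I}_k}^{\complement} \vecle \Gamma_0'\bs{1}_{\mathcal{I}_{k'}}+\infty\bs{1}_{\mathcal{I}_{k'}}^{\complement}$ (using the same nesting $\mathcal{I}_{k'}\subset\mathcal{I}_k$ that you correctly flag must be enforced by a fixed tie-breaking rule) and then invoking the stochastic-ordering argument of Theorem \ref{thm:pair_vector_gamma}, whereas you split it into two one-parameter arguments; the content is identical.
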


\begin{proof}[Proof of Lemma \ref{lemma:property_p_gamma_k}]
Recall that  
$\overline{p}_{\Gamma_0;k} = \overline{G}_{\Gamma_0;k}(T)$, 
where $\overline{G}_{\Gamma_0;k}$ is the tail probability of $\overline{T}(\Gamma_0;k) \sim \overline{T}(\Gamma_0 \cdot \bs{1}_{\mathcal{I}_k} + \infty \cdot \bs{1}_{\mathcal{I}_k}^\complement)$. 
For any $\Gamma_0 \le \Gamma_0'$ and $k \ge k'$,  
by definition, we must have 
$
\Gamma_0 \cdot \bs{1}_{\mathcal{I}_k} + \infty \cdot \bs{1}_{\mathcal{I}_k}^\complement 
\vecle 
\Gamma_0' \cdot \bs{1}_{\mathcal{I}_{k'}} + \infty \cdot \bs{1}_{\mathcal{I}_{k'}}^\complement. 
$
By the same logic as Theorem \ref{thm:pair_vector_gamma}, we can know that 
$\overline{T}(\Gamma_0 \cdot \bs{1}_{\mathcal{I}_k} + \infty \cdot \bs{1}_{\mathcal{I}_k}^\complement)$ is stochastically smaller than or equal to $\overline{T}(\Gamma_0' \cdot \bs{1}_{\mathcal{I}_{k'}} + \infty \cdot \bs{1}_{\mathcal{I}_{k'}}^\complement)$. 
This immediately implies that $\overline{p}_{\Gamma_0;k} \le \overline{p}_{\Gamma_0';k'}$. 
Therefore, $\overline{p}_{\Gamma_0;k}$ is increasing in $\Gamma_0$ and decreasing in $k$. 
i.e., Lemma \ref{lemma:property_p_gamma_k} holds.
\end{proof}

\begin{proof}[\bf Proof of Corollary \ref{cor:pair_ci_gamma3}]
The equivalent form of the confidence sets in Corollary \ref{cor:pair_ci_gamma3} follows immediately from Lemma \ref{lemma:property_p_gamma_k}. 
The validity of the confidence sets in Corollary \ref{cor:pair_ci_gamma3} is a direct corollary of Theorem \ref{thm:pair_joint_ci}, which will be proved immediately. 
\end{proof}

\begin{proof}[\bf Proof of Theorem \ref{thm:pair_joint_ci}]
First, we prove that 
$
\Gamma_{(k)}^\sta \in 
\{\Gamma_0\ge 1: \overline{p}_{\Gamma_0;k} > \alpha\}
\Longrightarrow 
\bs{\Gamma}^\sta \in 
\bigcap_{\Gamma_0: \overline{p}_{\Gamma_0;k} \le \alpha} \mathcal{S}_{\Gamma_0;k}^\complement. 
$
Suppose that $\Gamma_{(k)}^\sta \in 
\{\Gamma_0\ge 1: \overline{p}_{\Gamma_0;k} > \alpha\}$. 
Then $\overline p_{\Gamma_{(k)}^\sta;k} > \alpha$. 
For any $\Gamma_0$ such that $\overline{p}_{\Gamma_0;k} \le \alpha$, from Lemma \ref{lemma:property_p_gamma_k}, we must have $\Gamma_0 < \Gamma_{(k)}^\sta$, which implies that $\bs{\Gamma}^\sta \notin \mathcal{S}_{\Gamma_0;k}$. 
Therefore, we must have $\bs{\Gamma}^\sta \in 
\bigcap_{\Gamma_0: \overline{p}_{\Gamma_0;k} \le \alpha} \mathcal{S}_{\Gamma_0;k}^\complement$. 

Second, we prove that 
$
\bs{\Gamma}^\sta \in 
\bigcap_{\Gamma_0: \overline{p}_{\Gamma_0;k} \le \alpha} \mathcal{S}_{\Gamma_0;k}^\complement
\Longrightarrow 
\Gamma_{(k)}^\sta \in 
\{\Gamma_0\ge 1: \overline{p}_{\Gamma_0;k} > \alpha\}. 
$
We prove this by contradiction. 
Suppose that $\bs{\Gamma}^\sta \in 
\bigcap_{\Gamma_0: \overline{p}_{\Gamma_0;k} \le \alpha} \mathcal{S}_{\Gamma_0;k}^\complement$ and $\Gamma_{(k)}^\sta \notin
\{\Gamma_0\ge 1: \overline{p}_{\Gamma_0;k} > \alpha\}.$ 
The latter implies that $\overline p_{\Gamma_{(k)}^\sta;k}  \le \alpha$, 
and the former then implies that $\bs{\Gamma}^\sta \in \mathcal{S}_{\Gamma_{(k)}^\sta;k}^\complement$. 
However, $\bs{\Gamma}^\sta \in \mathcal{S}_{\Gamma_{(k)}^\sta;k}$ by definition, resulting in a contradiction.

Third, we prove that 
$
I^\sta(\Gamma_0) \in \{I-k: \overline{p}_{\Gamma_0;k} > \alpha, 0\le k \le I\}
\Longrightarrow
\bs{\Gamma}^\sta \in 
\bigcap_{k: \overline{p}_{\Gamma_0;k} \le \alpha} \mathcal{S}_{\Gamma_0;k}^\complement. 
$
Suppose that $I^\sta(\Gamma_0) \in \{I-k: \overline{p}_{\Gamma_0;k} > \alpha, 1\le k \le I\}$. 
Then $\overline{p}_{\Gamma_0;I-I^\sta(\Gamma_0)} > \alpha$. 
For any $k$ such that $\overline{p}_{\Gamma_0;k} \le \alpha$, from Lemma \ref{lemma:property_p_gamma_k}, we must have $k > I-I^\sta(\Gamma_0)$ or equivalently $I^\sta(\Gamma_0) > I-k$, 
which by definition implies that $\bs{\Gamma}^\sta \notin \mathcal{S}_{\Gamma_0;k}$. 
Therefore, we must have $\bs{\Gamma}^\sta \in 
\bigcap_{k: \overline{p}_{\Gamma_0;k} \le \alpha} \mathcal{S}_{\Gamma_0;k}^\complement.$

Fourth, we prove that 
$
\bs{\Gamma}^\sta \in 
\bigcap_{k: \overline{p}_{\Gamma_0;k} \le \alpha} \mathcal{S}_{\Gamma_0;k}^\complement
\Longrightarrow
I^\sta(\Gamma_0) \in \{I-k: \overline{p}_{\Gamma_0;k} > \alpha, 0\le k \le I\}.
$
We prove this by contradiction. 
Suppose that $\bs{\Gamma}^\sta \in 
\bigcap_{k: \overline{p}_{\Gamma_0;k} \le \alpha} \mathcal{S}_{\Gamma_0;k}^\complement$ and $I^\sta(\Gamma_0) \notin \{I-k: \overline{p}_{\Gamma_0;k} > \alpha, 0\le k \le I\}$.
The latter implies that 
$\overline p_{\Gamma_0;I - I^\sta(\Gamma_0)} \le \alpha$, 
and the former then implies that $\bs{\Gamma}^\sta \in  \mathcal{S}_{\Gamma_0;I - I^\sta(\Gamma_0)}^\complement$. 
However, $\bs{\Gamma}^\sta \in  \mathcal{S}_{\Gamma_0;I - I^\sta(\Gamma_0)}$ by definition, resulting in a contradiction. 

Fifth, we prove the validity of the confidence set in \eqref{eq:pair_joint_set}. Note that 
\begin{align*}
\pr\Big( \bs{\Gamma}^\sta \in \bigcap_{\Gamma_0, k: \overline{p}_{\Gamma_0;k} \le \alpha} \mathcal{S}_{\Gamma_0;k}^\complement  \Big) 
& = 
1 - \pr\Big( \bs{\Gamma}^\sta \in \bigcup_{\Gamma_0, k: \overline{p}_{\Gamma_0;k} \le \alpha} \mathcal{S}_{\Gamma_0;k} \Big). 
\end{align*}
If $\bs{\Gamma}^\sta \in \bigcup_{\Gamma_0, k: \overline{p}_{\Gamma_0;k} \le \alpha} \mathcal{S}_{\Gamma_0;k}$, then there must exist $(\Gamma_0, k)$ such that $\overline{p}_{\Gamma_0;k} \le \alpha$ and $\bs{\Gamma}^\sta \in \mathcal{S}_{\Gamma_0;k}$, 
where the latter further implies that $\Gamma_{(k)}^\sta \le \Gamma_0$. 
Define $G(c) = \pr(T\ge c)$ for all $c\in \mathbb{R}$. 
From Theorem \ref{thm:pair_quantile_gamma}, we must have $G(T) \le \overline{p}_{\Gamma_0;k} \le \alpha$. Thus, 
\begin{align*}
\pr\Big( \bs{\Gamma}^\sta \in \bigcap_{\Gamma_0, k: \overline{p}_{\Gamma_0;k} \le \alpha} \mathcal{S}_{\Gamma_0;k}^\complement  \Big) 
& = 
1 - \pr\Big( \bs{\Gamma}^\sta \in \bigcup_{\Gamma_0, k: \overline{p}_{\Gamma_0;k} \le \alpha} \mathcal{S}_{\Gamma_0;k} \Big)
\ge 1 - \pr(G(T)\le \alpha) \ge 1 -\alpha,
\end{align*}
where the last inequality holds due to Lemma \ref{lemma:survival_dominant_by_uniform}. 

From the above, Theorem \ref{thm:pair_joint_ci} holds. 
\end{proof}

\section{Proofs of Theorem \ref{thm:set_vector_gamma} and Corollary \ref{cor:pval_set_vector_gamma}}\label{app:set_vector}
To prove Theorem \ref{thm:set_vector_gamma}, we need the following three lemmas.
\begin{lemma}\label{lemma:lindfellerclt} 
For each $n$, let $X_{n,m},1\le m\le n$, be independent random variables with $\E X_{n,m}=0.$ Suppose
\begin{itemize}
    \item[(i)] $\sum_{m=1}^n \E X_{n,m}^2\to\sigma^2>0$;
    \item[(ii)] for all $\epsilon>0,\lim_{n\to\infty}\sum_{m=1}^n\E(|X_{n,m}|^2;|X_{n,m}|>\epsilon)=0$.
\end{itemize}
Then $S_n=X_{n,1}+\cdots+X_{n,n} \converged \mathcal{N}(0,1)$ as $n\to\infty$.
\end{lemma}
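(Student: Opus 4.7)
The plan is to prove this via the characteristic function method, which is the classical route to the Lindeberg–Feller CLT. Let $\varphi_{n,m}(t) = \E\, e^{itX_{n,m}}$ and $\sigma_{n,m}^2 = \E X_{n,m}^2$. By the independence assumption, $\varphi_{S_n}(t) = \prod_{m=1}^n \varphi_{n,m}(t)$, and the target is to show that this product converges to $e^{-\sigma^2 t^2/2}$ for each fixed $t\in\mathbb{R}$ (interpreting the conclusion in the usual $\mathcal{N}(0,\sigma^2)$ sense), after which Levy's continuity theorem closes the argument.

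The main technical step is a careful Taylor expansion: for each $m$, write
\begin{align*}
\varphi_{n,m}(t) = 1 - \tfrac{t^2}{2}\sigma_{n,m}^2 + r_{n,m}(t),
\end{align*}
where the remainder $r_{n,m}(t)$ is controlled by splitting the expectation at the level $\epsilon$ appearing in the Lindeberg condition (ii). On $\{|X_{n,m}|\le \epsilon\}$ I would use the bound $|e^{ix}-1-ix+\tfrac{x^2}{2}| \le |x|^3/6$, which contributes at most $(|t|^3\epsilon/6)\sum_m \sigma_{n,m}^2$; on $\{|X_{n,m}|>\epsilon\}$ I would use $|e^{ix}-1-ix+\tfrac{x^2}{2}| \le x^2$, which contributes something negligible by (ii). Sending $n\to\infty$ then $\epsilon\to 0$ gives $\sum_m |r_{n,m}(t)| \to 0$.

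Next I would convert the additive approximation into a multiplicative one. A preliminary observation from Lindeberg's condition is the uniform asymptotic negligibility $\max_m \sigma_{n,m}^2 \to 0$ (split the variance at $\epsilon$), so that for $n$ large each $|1 - \tfrac{t^2}{2}\sigma_{n,m}^2|\le 1$ and the standard complex-analysis lemma
\begin{align*}
\Bigl|\prod_{m=1}^n z_m - \prod_{m=1}^n w_m\Bigr| \le \sum_{m=1}^n |z_m - w_m|, \qquad |z_m|,|w_m|\le 1,
\end{align*}
applies to replace $\prod_m \varphi_{n,m}(t)$ by $\prod_m (1-\tfrac{t^2}{2}\sigma_{n,m}^2)$ with an error bounded by $\sum_m|r_{n,m}(t)|\to 0$. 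Finally, taking logarithms term-by-term and using $\log(1-x) = -x + O(x^2)$ together with the negligibility $\max_m \sigma_{n,m}^2 \to 0$ yields
\begin{align*}
\sum_{m=1}^n \log\!\Bigl(1 - \tfrac{t^2}{2}\sigma_{n,m}^2\Bigr) = -\tfrac{t^2}{2}\sum_{m=1}^n \sigma_{n,m}^2 + o(1) \longrightarrow -\tfrac{\sigma^2 t^2}{2}
\end{align*}
by assumption (i), which identifies the limit as the Gaussian characteristic function.

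The main obstacle I anticipate is the bookkeeping around the remainder: one must obtain a bound that simultaneously uses (i) to control the truncated piece uniformly in $n$ and uses (ii) to make the un-truncated piece vanish, and then transfer this additive control to a multiplicative control on the product of characteristic functions. The uniform negligibility $\max_m \sigma_{n,m}^2 \to 0$ is the small but crucial lemma that lets the logarithm expansion and the product-comparison lemma both apply, and I would isolate it first before proceeding with the Taylor estimate.
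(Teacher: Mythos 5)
Your sketch is correct and is the classical characteristic-function proof of the Lindeberg--Feller CLT. The paper itself does not prove this lemma; it simply cites \citet{durrett2010probability}, treating it as a textbook result. So there is nothing to compare against at the level of proof strategy, but for the record your outline is sound: the split of the third-order Taylor remainder at the Lindeberg threshold $\epsilon$, the uniform asymptotic negligibility $\max_m \sigma_{n,m}^2\to 0$ extracted from condition (ii), the product-comparison inequality $|\prod z_m - \prod w_m|\le \sum |z_m - w_m|$ for moduli at most $1$, and the final logarithm expansion are exactly the standard ingredients (e.g., Durrett's Theorem 3.4.10 or Billingsley's Theorem 27.2). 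You also correctly flagged the minor imprecision in the statement: as written, condition (i) has $\sum_m \E X_{n,m}^2\to\sigma^2$, so the limit should be $\mathcal{N}(0,\sigma^2)$ rather than $\mathcal{N}(0,1)$ unless $\sigma^2=1$; in the paper's application (Lemma A2) the summands are pre-normalized so that $\sigma^2=1$, which is why the discrepancy is harmless there.
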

\begin{proof}[Proof of Lemma \ref{lemma:lindfellerclt}]
Lemma \ref{lemma:lindfellerclt} is the Lindeberg-Feller theorem; see, e.g., \citet{durrett2010probability}. 
\end{proof}

\begin{lemma}\label{lemma:match_clt}
If 
Condition \ref{cond:clt} holds, then 
\begin{align*}
    ( T - \E T )/\Var^{1/2}(T) \converged \mathcal{N}(0,1). 
\end{align*}
\end{lemma}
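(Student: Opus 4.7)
The plan is to apply the Lindeberg-Feller central limit theorem (Lemma \ref{lemma:lindfellerclt}) to the independent summands $X_{I,i} = (T_i - \E T_i)/\sigma_I$ for $1 \le i \le I$, where $\sigma_I^2 = \Var(T)$. Since the treatment assignments are mutually independent across matched sets, so are the $T_i$'s, and hence the $X_{I,i}$'s are independent, mean-zero, with $\sum_{i=1}^I \E X_{I,i}^2 = \Var(T)/\Var(T) = 1$; this disposes of condition (i) of Lemma \ref{lemma:lindfellerclt} trivially.

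The substantive step is verifying the Lindeberg condition (ii). First I would note the deterministic bound $|T_i - \E T_i| \le R_i$, which gives $|X_{I,i}| \le R_i/\sigma_I$. Using $\mathbf{1}\{|X_{I,i}| > \epsilon\} \le X_{I,i}^2/\epsilon^2$ together with this bound, I obtain
\begin{align*}
\sum_{i=1}^I \E\bigl[X_{I,i}^2;\, |X_{I,i}| > \epsilon \bigr]
\le \sum_{i=1}^I \E[X_{I,i}^2] \cdot \frac{R_i^2}{\epsilon^2 \sigma_I^2}
\le \frac{\max_{1\le i \le I} R_i^2}{\epsilon^2 \sigma_I^2}.
\end{align*}
Thus it suffices to show $\max_i R_i^2 /\Var(T) \to 0$, so the whole exercise reduces to producing a good lower bound on $\Var(T) = \sum_i \Var(T_i)$.

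The main (and essentially only) obstacle is obtaining a sharp enough per-set lower bound $\Var(T_i) \gtrsim R_i^2/(n_i \Gamma_i^\sta)^3$ matching the normalization in Condition \ref{cond:clt}. To do this, let $j^*$ and $k^*$ index the units within set $i$ attaining $\max_j q_{ij}$ and $\min_j q_{ij}$, so $q_{ij^*} - q_{ik^*} = R_i$. Under the sensitivity model \eqref{eq:true_treat_assign_mech} with hidden bias $\Gamma_i^\sta$, every treatment probability within set $i$ satisfies $p_{ij}^\sta \ge 1/\{1 + (n_i-1)\Gamma_i^\sta\} \ge 1/(n_i \Gamma_i^\sta)$, since $\Gamma_i^\sta \ge 1$. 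Using the identity $\Var(T_i) = \tfrac12 \sum_{j,k} p_{ij}^\sta p_{ik}^\sta (q_{ij}-q_{ik})^2$ and retaining only the $(j^*,k^*)$ and $(k^*,j^*)$ terms yields
\begin{align*}
\Var(T_i) \ge p_{ij^*}^\sta p_{ik^*}^\sta R_i^2 \ge \frac{R_i^2}{(n_i \Gamma_i^\sta)^2} \ge \frac{R_i^2}{(n_i \Gamma_i^\sta)^3},
\end{align*}
where the final inequality uses $n_i \Gamma_i^\sta \ge 1$.

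Summing over $i$ gives $\Var(T) \ge \sum_{i=1}^I R_i^2/(n_i \Gamma_i^\sta)^3$, so Condition \ref{cond:clt} forces $(\max_i R_i^2)/\Var(T) \to 0$, confirming the Lindeberg condition and completing the application of Lemma \ref{lemma:lindfellerclt} to conclude $(T - \E T)/\Var^{1/2}(T) \converged \mathcal{N}(0,1)$.
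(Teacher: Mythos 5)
Your proof is correct and follows essentially the same route as the paper's: apply the Lindeberg--Feller CLT (Lemma \ref{lemma:lindfellerclt}) and reduce the Lindeberg condition to showing $(\max_i R_i^2)/\Var(T)\to 0$, which in turn rests on a per-set variance lower bound of order $R_i^2/(n_i\Gamma_i^\sta)^3$. The only difference is cosmetic: the paper obtains $v_i^2\ge e_{i(1)}e_{i(n_i)}\{e_{i(1)}+e_{i(n_i)}\}R_i^2$ after sandwiching $\mu_i$ between the extreme order statistics, while you get the same order more quickly from the identity $\Var(T_i)=\tfrac12\sum_{j,k}p_{ij}^\sta p_{ik}^\sta(q_{ij}-q_{ik})^2$ restricted to the extreme pair, and likewise your Chebyshev-style Lindeberg check substitutes for the paper's direct observation that $\max_i X_i^2\to 0$ --- both substitutions are clean and valid.
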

\begin{proof}[Proof of Lemma \ref{lemma:match_clt}]
Recall that $T=\sum_{i=1}^I T_i = \sum_{i=1}^I \sum_{j=1}^{n_i} Z_{ij}q_{ij}$,  $T_i$'s are mutually independent across all matched sets, 
and $\mu_i = \E(T_i)$ and $v^2_i = \Var(T_i)$ are the mean and variance of $T_i$ for $1\le i \le I$. 
By the mutual independence of $T_i$'s, $T$ has mean $\mu = \sum_{i=1}^I \mu_i$ and variance $\sigma^2 = \sum_{i=1}^I v_i^2$. 
Below we prove the asymptotic Gaussianity of $T$ using Lemma \ref{lemma:lindfellerclt}. 

We first bound $|\sum_{j=1}^{n_i} Z_{ij}q_{ij}-\mu_i|$ and $v^2_i$. 
By definition, 
$q_{i(1)} \le \sum_{j=1}^{n_i} Z_{ij}q_{ij} \le q_{i(n_i)}$. 
This implies that 
$q_{i(1)} \le \mu_i \le q_{i(n_i)}$ and thus $|\sum_{j=1}^{n_i} Z_{ij}q_{ij}-\mu_i| \le q_{i(n_i)} - q_{i(1)} = R_i$. 
With slight abuse of notation, let $Z_{i(j)}$ denote the corresponding treatment assignment indicator for $q_{i(j)}$, i.e., 
$T_s = \sum_{j=1}^{n_i} Z_{ij}q_{ij} = \sum_{j=1}^{n_i} Z_{i(j)}q_{i(j)}$, 
and define $e_{i(j)} = \pr(Z_{i(j)}=1\mid \mathcal{Z})$. 

By definition, we can then bound $v^2_i$ as follows:
\begin{align*}
   v^2_i & = \sum_{j=1}^{n_i} e_{i(j)} \big(q_{i(j)} - \mu_i\big)^2  
   \ge 
   e_{i(1)} \big(q_{i(1)} - \mu_i\big)^2 + e_{i(n_i)} \big(q_{i(n_i)} - \mu_i\big)^2.  
\end{align*}
Note that $\mu_i = \sum_{j=1}^{n_i} e_{i(j)} q_{i(j)}$ must satisfy
\begin{align*}
    \big( 1-e_{i(n_i)} \big) q_{i(1)} + e_{i(n_i)} q_{i(n_i)} \le 
    \mu_i \le e_{i(1)} q_{i(1)} + \big\{ 1- e_{i(1)}\big\} q_{i(n_i)}, 
\end{align*}
which implies that 
\begin{align*}
    \mu_i - q_{i(1)} & \ge \big\{ 1-e_{i(n_i)} \big\} q_{i(1)} + e_{i(n_i)} q_{i(n_i)} - q_{i(1)}
    = e_{i(n_i)} R_i, \\
    q_{i(n_i)} - \mu_i & \ge q_{i(n_i)} - e_{i(1)} q_{i(1)} - \big\{ 1- e_{i(1)}\big\} q_{i(n_i)}
    = e_{i(1)} R_i. 
\end{align*}
Thus, we can further bound $v^2_i$ by 
\begin{align*}
     v^2_i &  
   \ge 
   e_{i(1)} \big(q_{i(1)} - \mu_i\big)^2 + e_{i(n_i)} \big(q_{i(n_i)} - \mu_i\big)^2
   \ge e_{i(1)} e_{i(n_i)}^2 R_i^2 + e_{i(n_i)} e_{i(1)}^2 R_i^2 \\
   & = e_{i(1)} e_{i(n_i)} \big\{ e_{i(1)} + e_{i(n_i)} \big\} R_i^2. 
\end{align*}
Because the bias in matched set $i$ is $\Gamma_i^\sta$, the $e_{i(j)}$'s for units in matched set $i$ must be bounded below by $1/\{1+(n_i-1)\Gamma_i^\sta\} \ge 1/(n_i\Gamma_i^\sta)$. 
Consequently, 
\begin{align*}
    v^2_i 
    & \ge e_{i(1)} e_{i(n_i)} \big\{ e_{i(1)} + e_{i(n_i)} \big\} R_i^2
    \ge \frac{2}{(n_i\Gamma^\sta_i)^3} R_i^2. 
\end{align*}

We then prove the asymptotic Gaussianity of $T$. 
Let 
{\lxr $X_i = (T_i - \mu_i)/\sigma$,}
$1\le i \le I$. 
Obviously, $X_i$'s are mutually independent across all $i$, and $\sum_{i=1}^I \E(X_i^2) = 1$. To prove the asymptotic Gaussianity of $T$, it suffices to verify condition (ii) in Lemma \ref{lemma:lindfellerclt}. 
From the proof in the first part, we have 
\begin{align*}
    X_i^2 & = \frac{(T_i - \mu_i)^2}{\lxr \sigma^2}
    = 
    \frac{(T_i - \mu_i)^2}{\sum_{i=1}^I v_i^2}
    \le 
    \frac{R_i^2}{\sum_{i=1}^I 2R_i^2/\{(n_i\Gamma^\sta_i)^3\}}
    = 
    \frac{1}{2}\frac{R_i^2}{\sum_{i=1}^I R_i^2/(n_i\Gamma_i^\sta)^3}. 
\end{align*}
Under Condition \ref{cond:clt}, $\max_{1\le i \le I} X_i^2 \rightarrow 0$, and thus we must have 
$\lim_{I\to\infty}\sum_{i=1}^I\E(|X_i|^2;|X_i|>\epsilon)=0$ for any $\epsilon>0$. 
From Lemma \ref{lemma:lindfellerclt}, we then have 
$$
\frac{T-\E T}{\Var^{1/2}(T)} = \frac{T-\mu}{\sigma} = \sum_{i=1}^I X_i \converged \mathcal{N}(0,1). 
$$
Therefore, Lemma \ref{lemma:match_clt} holds. 
\end{proof}

\begin{lemma}\label{lemma:set_vector_gamma}
Assume Fisher's null $H_0$ in \eqref{eq:H0} holds. 
If $\bs{\Gamma}^\sta\vecle \bs{\Gamma} \in [1, \infty]^I$ 
and Condition \ref{cond:sen_Gamma_vec_asymp} holds, 
then for any $c \ge 0$, 
$\mu(\bs{\Gamma}) + c \sigma(\bs{\Gamma})$ is greater than or equal to $\mu+c\sigma$ when $I$ is sufficiently large.
\end{lemma}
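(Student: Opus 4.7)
My plan is to compare $\mu(\bs{\Gamma}) + c\sigma(\bs{\Gamma})$ with $\mu + c\sigma$ by controlling the mean gap from below and the standard deviation gap from above, then invoking Condition \ref{cond:sen_Gamma_vec_asymp} to show the former dominates. Throughout, I will use that $\bs{\Gamma}^\star \vecle \bs{\Gamma}$ implies $\mu_i \le \mu_i(\Gamma_i)$ for every $i$ (since $\mu_i(\Gamma_i)$ is the maximum mean of $T_i$ over all configurations allowed by bias bound $\Gamma_i$), so in particular $\mu \le \mu(\bs{\Gamma})$. Moreover, by the construction of $v_i^2(\Gamma_i)$ as the max variance \emph{given} the mean is maximized, whenever $i\in\mathcal{A}_{\bs{\Gamma}}$ (so $v_i^2 > v_i^2(\Gamma_i)$) we must have $\mu_i < \mu_i(\Gamma_i)$ strictly, otherwise the true configuration would beat $v_i^2(\Gamma_i)$ in the variance-maximization step.

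The first step is to handle the easy case $\mathcal{A}_{\bs{\Gamma}} = \emptyset$: then $v_i^2 \le v_i^2(\Gamma_i)$ for all $i$, giving $\sigma \le \sigma(\bs{\Gamma})$ and, combined with $\mu\le\mu(\bs{\Gamma})$, the desired inequality for any $c\ge 0$. For the remainder, assume $\mathcal{A}_{\bs{\Gamma}}\ne \emptyset$ and derive the two key bounds
\[
\mu(\bs{\Gamma}) - \mu \;\ge\; \sum_{i\in\mathcal{A}_{\bs{\Gamma}}}\{\mu_i(\Gamma_i)-\mu_i\} \;=\; |\mathcal{A}_{\bs{\Gamma}}|\,\Delta_{\mu}(\bs{\Gamma}),
\qquad
\sigma^2 - \sigma^2(\bs{\Gamma}) \;\le\; |\mathcal{A}_{\bs{\Gamma}}|\,\Delta_{v^2}(\bs{\Gamma}),
\]
where the first uses $\mu_i(\Gamma_i)\ge \mu_i$ for every $i$ and the second uses $v_i^2(\Gamma_i)\ge v_i^2$ for $i\notin\mathcal{A}_{\bs{\Gamma}}$.

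Next I would split on whether $\sigma \le \sigma(\bs{\Gamma})$ (in which case the inequality is immediate) or $\sigma > \sigma(\bs{\Gamma})$, and treat the latter case by writing $\sigma - \sigma(\bs{\Gamma}) = (\sigma^2-\sigma^2(\bs{\Gamma}))/(\sigma+\sigma(\bs{\Gamma})) \le (\sigma^2-\sigma^2(\bs{\Gamma}))/\sigma$. Using the variance lower bound established inside the proof of Lemma \ref{lemma:match_clt}, $v_i^2 \ge 2R_i^2/(n_i\Gamma_i^\star)^3$, I get $\sigma \ge \sqrt{2}\bigl(\sum_i R_i^2/(n_i\Gamma_i^\star)^3\bigr)^{1/2}$. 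Combining everything yields
\[
\mu(\bs{\Gamma}) + c\sigma(\bs{\Gamma}) - (\mu + c\sigma) \;\ge\; \frac{|\mathcal{A}_{\bs{\Gamma}}|\,\Delta_{v^2}(\bs{\Gamma})}{\sqrt{\sum_i R_i^2/(n_i\Gamma_i^\star)^3}}\left\{\frac{\Delta_{\mu}(\bs{\Gamma})}{\Delta_{v^2}(\bs{\Gamma})}\sqrt{\sum_i R_i^2/(n_i\Gamma_i^\star)^3} - \frac{c}{\sqrt{2}}\right\}.
\]
By Condition \ref{cond:sen_Gamma_vec_asymp}, the bracketed quantity tends to $+\infty$, so for any fixed $c\ge 0$ the right-hand side is nonnegative once $I$ is large enough.

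The main obstacle I foresee is careful bookkeeping in the case $\mathcal{A}_{\bs{\Gamma}}\ne\emptyset$ when some coordinates of $\bs{\Gamma}$ equal $\infty$ (so $v_i^2(\Gamma_i)=0$ for those $i$), together with making sure the $|\mathcal{A}_{\bs{\Gamma}}|$ factors on both sides cancel cleanly so that the final bound is driven purely by the ratio $\Delta_{\mu}(\bs{\Gamma})/\Delta_{v^2}(\bs{\Gamma})$ and Condition \ref{cond:sen_Gamma_vec_asymp}. Once these bookkeeping issues are handled, the argument is a direct application of the two bounds and the lower bound on $\sigma$.
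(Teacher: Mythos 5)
Your proof is correct and takes essentially the same route as the paper's: lower-bound $\mu(\bs{\Gamma})-\mu$ by $|\mathcal{A}_{\bs{\Gamma}}|\Delta_\mu(\bs{\Gamma})$, upper-bound $\sigma - \sigma(\bs{\Gamma})$ via $(\sigma^2-\sigma^2(\bs{\Gamma}))/\sigma \le |\mathcal{A}_{\bs{\Gamma}}|\Delta_{v^2}(\bs{\Gamma})/\sigma$ combined with the variance lower bound $\sigma^2 \ge 2\sum_i R_i^2/(n_i\Gamma_i^\sta)^3$ from the proof of Lemma \ref{lemma:match_clt}, and then invoke Condition \ref{cond:sen_Gamma_vec_asymp} to make the bracketed term nonnegative for large $I$. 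The only cosmetic differences are that you handle the trivial cases ($\mathcal{A}_{\bs{\Gamma}}=\emptyset$, $\sigma\le\sigma(\bs{\Gamma})$) explicitly and use a slightly different factorization in the final display, neither of which changes the substance.
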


\begin{proof}[Proof of Lemma \ref{lemma:set_vector_gamma}]
By the definition of $\mathcal{A_{\bf\Gamma}}$ and the construction of $\mu_i(\Gamma_i)$'s and $v_i^2(\Gamma_i)$'s,  
$\mu_i<\mu_i(\overline\Gamma_i)$ for all $i\in \mathcal{A}_{\bs\Gamma}$, and 
\begin{align*}
    \mu(\bs{\Gamma}) - \mu 
    = 
    \sum_{i=1}^I\big\{\mu_i( \Gamma_i)-\mu_i \big\}
    \ge
    \sum_{i\in \mathcal{A}_{\bs\Gamma}}\big\{\mu_i( \Gamma_i)-\mu_i\big\}=|\mathcal{A}_{\bs\Gamma}| \cdot \Delta_{\mu}(\bs{\Gamma}).
\end{align*}
Moreover, 
\begin{align*}
\sigma - \sigma(\bs{\Gamma})
& = 
\sqrt{\sum_{i=1}^Iv_i^2}-\sqrt{\sum_{i=1}^Iv_i^2(\Gamma_i)}
= 
\frac{\sum_{i=1}^Iv_i^2 - \sum_{i=1}^Iv_i^2(\Gamma_i)}{\sqrt{\sum_{i=1}^Iv_i^2} + \sqrt{\sum_{i=1}^Iv_i^2(\Gamma_i)}}
\le 
\frac{ \sum_{i\in \mathcal{A}_{\bs\Gamma}} (v_i^2 - v_i^2(\Gamma_i)) }{\sqrt{\sum_{i=1}^Iv_i^2} + \sqrt{\sum_{i=1}^Iv_i^2(\Gamma_i)}}
\\
& 
= 
\frac{|\mathcal{A}_{\bs\Gamma}| \Delta_{v^2}(\bs{\Gamma})}{\sqrt{\sum_{i=1}^Iv_i^2} + \sqrt{\sum_{i=1}^Iv_i^2(\Gamma_i)}}
\le 
\frac{|\mathcal{A}_{\bs\Gamma}| \Delta_{v^2}(\bs{\Gamma})}{\sqrt{\sum_{i=1}^Iv_i^2}}. 
\end{align*}
From the proof of Lemma \ref{lemma:match_clt}, we have 
$
    \sum_{i=1}^Iv_i^2 \ge 2\sum_{i=1}^I {R_i^2}/{(n_i\Gamma^\sta_i)^3}. 
$
These then imply that 
\begin{align}\label{eq:diff_mu_sigma_gamma_vec}
\mu(\bs{\Gamma}) + c \sigma(\bs{\Gamma}) - (\mu + c \sigma)
& = 
\mu(\bs{\Gamma}) - \mu  - c\big\{ \sigma - \sigma(\bs{\Gamma}) \big\}
\ge 
|\mathcal{A}_{\bs\Gamma}| \cdot \Delta_{\mu}(\bs{\Gamma}) - c \frac{|\mathcal{A}_{\bs\Gamma}| \Delta_{v^2}(\bs{\Gamma})}{\sqrt{\sum_{i=1}^Iv_i^2}}
\nonumber
\\
& = 
    \frac{|\mathcal{A}_{\bs\Gamma}| \Delta_{v^2}(\bs{\Gamma})}{\sqrt{\sum_{i=1}^Iv_i^2}}
    \left\{ 
    \frac{\Delta_{\mu}(\bs{\Gamma}) {\sqrt{\sum_{i=1}^Iv_i^2}}}{\Delta_{v^2}(\bs{\Gamma})}
    - c
    \right\} 
    \nonumber
    \\
    & \ge 
    \frac{|\mathcal{A}_{\bs\Gamma}| \Delta_{v^2}(\bs{\Gamma})}{\sqrt{\sum_{i=1}^Iv_i^2}}
    \left\{ 
    \sqrt2\frac{\Delta_{\mu}(\bs{\Gamma}) }{\Delta_{v^2}(\bs{\Gamma})}\sqrt{\sum_{i=1}^I\frac{R_i^2}{(n_i\Gamma^\sta_i)^3}}
    - c
    \right\} . 
\end{align}
Under Condition \ref{cond:sen_Gamma_vec_asymp}, there exists $\underline{I}$ such that for all $I\ge \underline{I}$, we have
$$
\sqrt2\frac{\Delta_{\mu}(\bs{\Gamma}) }{\Delta_{v^2}(\bs{\Gamma})}\sqrt{\sum_{i=1}^I\frac{R_i^2}{(n_i\Gamma^\sta_i)^3}}
    > c .
$$
Thus, 
$\mu(\bs{\Gamma}) + c \sigma(\bs{\Gamma}) - (\mu + c \sigma) \ge 0$ for $I\ge \underline{I}$. 
Therefore, 
$\mu(\bs{\Gamma}) + c \sigma(\bs{\Gamma})$ is greater than or equal to $\mu+c\sigma$ when $I$ is sufficiently large. 
\end{proof}

\begin{proof}[\bf Proof of Theorem \ref{thm:set_vector_gamma}]
We compare the tail probability of $\{ T - \mu(\bs{\Gamma}) \}/{\sigma}(\bs{\Gamma})$ to that of a standard Gaussian random variable. 
For any $c\ge 0$, from Lemma \ref{lemma:set_vector_gamma}, 
$\mu(\bs{\Gamma}) + c \sigma(\bs{\Gamma}) \ge \mu + c \sigma$ when $I$ is sufficiently large. 
This implies that
\begin{align*}
    \limsup_{I\rightarrow \infty}\pr\left( \frac{T - \mu(\bs{\Gamma})}{\sigma(\bs{\Gamma})} \ge c \right)
    & = 
    \limsup_{I\rightarrow \infty} \pr\left( 
    T \ge \mu(\bs{\Gamma}) + c \sigma(\bs{\Gamma})
    \right)
    \le 
    \limsup_{I\rightarrow \infty} \pr\left( 
    T \ge \mu + c \sigma
    \right)
    \\
    & = 
    \limsup_{I\rightarrow \infty} \pr\left( 
    \frac{T-\mu}{\sigma} \ge c 
    \right). 
\end{align*}
From Lemma \ref{lemma:match_clt}, under Condition \ref{cond:clt}, $(T-\mu)/\sigma \converged \mathcal{N}(0,1)$. Thus,  
$
\limsup_{I\rightarrow \infty} \pr\{ (T-\mu)/\sigma \ge c \} = 1 - \Phi(c), 
$
where $\Phi(\cdot)$ is the distribution function of standard Gaussian distribution. 
Therefore, 
$
\limsup_{I \rightarrow \infty}
\pr( \{ T - \mu(\bs{\Gamma}) \}/{\sigma}(\bs{\Gamma}) \ge c ) \le 1 - \Phi(c)
$
for any $c \ge 0$. 

From the above, Theorem \ref{thm:set_vector_gamma} holds. 
\end{proof}

\begin{proof}[\bf Proof of Corollary \ref{cor:pval_set_vector_gamma}]
By definition, 
\begin{align*}
 \pr (\tilde{p}_{\bs{\Gamma}} \le \alpha) 
 & = 
 \pr\left\{
 \Phi\left( \frac{T - \mu(\bs{\Gamma})}{\sigma(\bs{\Gamma})} \right) \cdot \I\{ T\ge \mu(\bs{\Gamma}) \}  \ge 1 - \alpha
 \right\}
 \\
 & = 
 \pr\left\{
 \Phi\left( \frac{T - \mu(\bs{\Gamma})}{\sigma(\bs{\Gamma})} \right) \ge 1 - \alpha, T \ge  \mu(\bs{\Gamma})
 \right\}\\
 & = 
 \pr\left\{
 T \ge \Phi^{-1}( 1 - \alpha ) \sigma(\bs{\Gamma}) + \mu(\bs{\Gamma}), T \ge  \mu(\bs{\Gamma})
 \right\}
 = 
 \pr
 \left\{
 T \ge c \sigma(\bs{\Gamma}) + \mu(\bs{\Gamma})
 \right\}
\end{align*}
where $c = \max\{0, \Phi^{-1}( 1 - \alpha )\} \ge 0$. 
From Theorem \ref{thm:set_vector_gamma}, we then have 
\begin{align*}
\limsup_{I\rightarrow \infty}\pr( \tilde{p}_{\bs{\Gamma}} \le \alpha)
= 
\limsup_{I \rightarrow \infty}
\pr\left\{ T \ge \mu(\bs{\Gamma}) + c \sigma(\bs{\Gamma}) \right\} \le 
1 - \Phi(c) 
\le 1 - \Phi(\Phi^{-1}( 1 - \alpha )) = \alpha. 
\end{align*}
Therefore, Corollary \ref{cor:pval_set_vector_gamma} holds. 
\end{proof}

\section{Proofs of Theorem \ref{thm:set_quantile_gamma} and Corollary \ref{cor:sen_pval_set}}\label{app:set_quantile}

To prove Theorem \ref{thm:set_quantile_gamma}, we need the following lemma. 

\begin{lemma}\label{lemma:set_quantile_gamma}
Assume Fisher's null $H_0$ holds. 
If 
the bias at rank $k$
is bounded by $\Gamma_0 \in [1, \infty)$, i.e., $\Gamma_{(k)}^\sta \le \Gamma_0$, 
and Condition \ref{cond:sen_Gamma_quantile_asymp}
holds, 
then for any $c \ge 0$, 
$\mu(\Gamma_0; \mathcal{I}_{\Gamma_0; k}) + c\sigma(\Gamma_0; \mathcal{I}_{\Gamma_0; k})$ is greater than or equal to $\mu(\Gamma_0; \mathcal{I}_k^\sta) + c \sigma(\Gamma_0; \mathcal{I}_k^\sta)$ when $I$ is sufficiently large, 
where $\mu(\Gamma_0; \mathcal{I}_{\Gamma_0; k})$ and $\mu(\Gamma_0; \mathcal{I}_k^\sta)$ are defined as in \eqref{eq:mu_gamma0_I}, 
and $\sigma(\Gamma_0; \mathcal{I}_{\Gamma_0; k})$ and $\sigma(\Gamma_0; \mathcal{I}_k^\sta)$ are defined as in \eqref{eq:sigma_gamma0_I}. 
\end{lemma}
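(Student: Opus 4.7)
The plan is to mimic the structure of the proof of Lemma \ref{lemma:set_vector_gamma}, but with the key combinatorial comparison happening between two subsets $\mathcal{I}_{\Gamma_0;k}$ and $\mathcal{I}_k^\sta$, rather than between a bounded and an unbounded configuration on the same pairs. First I would write $\mu(\Gamma_0;\mathcal{I})$ and $\sigma^2(\Gamma_0;\mathcal{I})$ in their closed forms from \eqref{eq:mu_gamma0_I}--\eqref{eq:sigma_gamma0_I} and reduce the differences to sums over $\mathcal{A} = \mathcal{I}_{\Gamma_0;k}\setminus\mathcal{I}_k^\sta$ and $\mathcal{B} = \mathcal{I}_k^\sta\setminus\mathcal{I}_{\Gamma_0;k}$, which have equal cardinality. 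Concretely,
\begin{align*}
\mu(\Gamma_0;\mathcal{I}_{\Gamma_0;k})-\mu(\Gamma_0;\mathcal{I}_k^\sta)
&=\sum_{j\in\mathcal{B}}\{\mu_j(\infty)-\mu_j(\Gamma_0)\}-\sum_{i\in\mathcal{A}}\{\mu_i(\infty)-\mu_i(\Gamma_0)\},\\
\sigma^2(\Gamma_0;\mathcal{I}_k^\sta)-\sigma^2(\Gamma_0;\mathcal{I}_{\Gamma_0;k})
&=\sum_{j\in\mathcal{B}}v_j^2(\Gamma_0)-\sum_{i\in\mathcal{A}}v_i^2(\Gamma_0).
\end{align*}

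Next I would rewrite each of these as a double average over $\mathcal{A}\times\mathcal{B}$, multiplying and dividing by $|\mathcal{A}|=|\mathcal{B}|$. The defining ranking rule of $\mathcal{I}_{\Gamma_0;k}$ guarantees that for every $(i,j)\in\mathcal{A}\times\mathcal{B}$, either (a) $\mu_i(\infty)-\mu_i(\Gamma_0)<\mu_j(\infty)-\mu_j(\Gamma_0)$, or (b) the two are equal and $v_i^2(\Gamma_0)\ge v_j^2(\Gamma_0)$. Case (a) makes the corresponding $\mu$-term strictly positive, while case (b) makes the corresponding variance-term non-positive. Since the set $\mathcal{B}_k(\Gamma_0)$ collects exactly those pairs where $v_i^2(\Gamma_0)<v_j^2(\Gamma_0)$ (which must therefore fall in case (a)), I will obtain
\begin{align*}
\mu(\Gamma_0;\mathcal{I}_{\Gamma_0;k})-\mu(\Gamma_0;\mathcal{I}_k^\sta)\ge \frac{|\mathcal{B}_k(\Gamma_0)|}{|\mathcal{A}|}\Delta_{\mu}(\Gamma_0;k), \quad
\sigma^2(\Gamma_0;\mathcal{I}_k^\sta)-\sigma^2(\Gamma_0;\mathcal{I}_{\Gamma_0;k})\le \frac{|\mathcal{B}_k(\Gamma_0)|}{|\mathcal{A}|}\Delta_{v^2}(\Gamma_0;k),
\end{align*}
which are the analogues of the bounds used in \eqref{eq:diff_mu_sigma_gamma_vec}.

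Then, exactly as in the proof of Lemma \ref{lemma:set_vector_gamma}, I would convert the variance bound to a standard-deviation bound via $\sigma(\Gamma_0;\mathcal{I}_k^\sta)-\sigma(\Gamma_0;\mathcal{I}_{\Gamma_0;k})\le\{\sigma^2(\Gamma_0;\mathcal{I}_k^\sta)-\sigma^2(\Gamma_0;\mathcal{I}_{\Gamma_0;k})\}/\sigma(\Gamma_0;\mathcal{I}_k^\sta)$. To lower-bound the denominator I would reuse the probability-lower-bound argument from the proof of Lemma \ref{lemma:match_clt}: for each $i\in\mathcal{I}_k^\sta$ (so $\Gamma_i^\sta\le\Gamma_0$), any configuration attaining $v_i^2(\Gamma_0)$ puts mass at least of order $1/(n_i\Gamma_0)$ on both $q_{i(1)}$ and $q_{i(n_i)}$, giving $v_i^2(\Gamma_0)\gtrsim R_i^2/(n_i\Gamma_0)^3$ and hence $\sigma^2(\Gamma_0;\mathcal{I}_k^\sta)\gtrsim\sum_{i\in\mathcal{I}_k^\sta}R_i^2/(n_i\Gamma_0)^3$.

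Combining these pieces yields an inequality of the same shape as \eqref{eq:diff_mu_sigma_gamma_vec}, namely
\[
\mu(\Gamma_0;\mathcal{I}_{\Gamma_0;k})+c\sigma(\Gamma_0;\mathcal{I}_{\Gamma_0;k})-\{\mu(\Gamma_0;\mathcal{I}_k^\sta)+c\sigma(\Gamma_0;\mathcal{I}_k^\sta)\}\ge \frac{|\mathcal{B}_k(\Gamma_0)|\Delta_{v^2}(\Gamma_0;k)/|\mathcal{A}|}{\sigma(\Gamma_0;\mathcal{I}_k^\sta)}\left\{\frac{\Delta_\mu(\Gamma_0;k)}{\Delta_{v^2}(\Gamma_0;k)}\sqrt{\sum_{i\in\mathcal{I}_k^\sta}\frac{R_i^2}{(n_i\Gamma_0)^3}}\,\cdot\text{const}-c\right\},
\]
and Condition \ref{cond:sen_Gamma_quantile_asymp} makes the bracket positive for all sufficiently large $I$. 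The main obstacle I anticipate is the bookkeeping in step two: one has to be careful that the tie-breaking rule in the definition of $\mathcal{I}_{\Gamma_0;k}$ really enforces the case analysis above, so that only pairs in $\mathcal{B}_k(\Gamma_0)$ can create a variance deficit, and each such pair contributes a compensating mean surplus — this is what isolates $\Delta_\mu(\Gamma_0;k)$ and $\Delta_{v^2}(\Gamma_0;k)$ on exactly the same index set and allows Condition \ref{cond:sen_Gamma_quantile_asymp} to be invoked directly.
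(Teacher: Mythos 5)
Your plan matches the paper's proof step for step: the same decomposition of $\mu$ and $\sigma^2$ differences over $\mathcal{I}_{\Gamma_0;k}\setminus\mathcal{I}_k^\sta$ and $\mathcal{I}_k^\sta\setminus\mathcal{I}_{\Gamma_0;k}$, the same rewrite as a double average, the same use of the tie-breaking rule to isolate $\mathcal{B}_k(\Gamma_0)$, the same lower bound on $\sum_{i\in\mathcal{I}_k^\sta}v_i^2(\Gamma_0)$ imported from the proof of the CLT lemma, and the same invocation of Condition \ref{cond:sen_Gamma_quantile_asymp} at the end. One small slip worth fixing: the intermediate claim $\sigma(\Gamma_0;\mathcal{I}_k^\sta)-\sigma(\Gamma_0;\mathcal{I}_{\Gamma_0;k})\le\{\sigma^2(\Gamma_0;\mathcal{I}_k^\sta)-\sigma^2(\Gamma_0;\mathcal{I}_{\Gamma_0;k})\}/\sigma(\Gamma_0;\mathcal{I}_k^\sta)$ is equivalent to $\sigma(\Gamma_0;\mathcal{I}_{\Gamma_0;k})\le\sigma(\Gamma_0;\mathcal{I}_k^\sta)$, which need not hold since $\mathcal{I}_k^\sta$ does not maximize the mean; the paper avoids this by first writing the exact identity with denominator $\sigma(\Gamma_0;\mathcal{I}_k^\sta)+\sigma(\Gamma_0;\mathcal{I}_{\Gamma_0;k})$, replacing the numerator by the nonnegative bound $|\mathcal{B}_k(\Gamma_0)|\,\Delta_{v^2}(\Gamma_0;k)/|\mathcal{I}_{\Gamma_0;k}\setminus\mathcal{I}_k^\sta|$, and only then dropping $\sigma(\Gamma_0;\mathcal{I}_{\Gamma_0;k})$ from the denominator. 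With that reordering your chain is correct and coincides with the published argument.
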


\begin{proof}[Proof of Lemma \ref{lemma:set_quantile_gamma}]
Recall the definition of $\mathcal{I}_{\Gamma_0; k}$ and $\mathcal{I}_k^\sta$. 
For any $i \in \mathcal{I}_{\Gamma_0; k} \setminus \mathcal{I}_k^\sta$ and $j \in \mathcal{I}_k^\sta \setminus \mathcal{I}_{\Gamma_0; k}$, 
we must have $\mu_i(\infty) - \mu_i(\Gamma_0) \le \mu_j(\infty) - \mu_j(\Gamma_0)$. 
Moreover, if $\mu_i(\infty) - \mu_i(\Gamma_0) = \mu_j(\infty) - \mu_j(\Gamma_0)$, then 
$v_i^2(\Gamma_0) \ge  v_j^2(\Gamma_0)$. 
By the definition of $\mathcal{B}_k(\Gamma_0)$, 
for any $(i,j)\in \mathcal{B}_k(\Gamma_0)$, we must have $\mu_i(\infty) - \mu_i(\Gamma_0) < \mu_j(\infty) - \mu_j(\Gamma_0)$.

By definition, %
when $\mathcal{I}_{\Gamma_0; k} \ne \mathcal{I}_k^\sta$, 
\begin{align*}
    & \quad \ \mu(\Gamma_0; \mathcal{I}_{\Gamma_0; k}) - \mu(\Gamma_0; \mathcal{I}_k^\sta) 
    \\
    & = 
    \sum_{j \in \mathcal{I}_k^\sta \setminus \mathcal{I}_{\Gamma_0; k}}
    \{ \mu_j(\infty) - \mu_j(\Gamma_0) \} - 
    \sum_{i \in \mathcal{I}_{\Gamma_0; k} \setminus \mathcal{I}_k^\sta}
    \{ \mu_i(\infty) - \mu_i(\Gamma_0) \}\\
    & = 
    \frac{1}{|\mathcal{I}_{\Gamma_0; k} \setminus \mathcal{I}_k^\sta|}
    \sum_{(i,j)\in (\mathcal{I}_{\Gamma_0; k} \setminus \mathcal{I}_k^\sta) \times (\mathcal{I}_k^\sta \setminus \mathcal{I}_{\Gamma_0; k})}
    \left[ \{ \mu_j(\infty) - \mu_j(\Gamma_0) \} -  \{ \mu_i(\infty) - \mu_i(\Gamma_0) \} \right]\\
    & \ge 
    \frac{1}{|\mathcal{I}_{\Gamma_0; k} \setminus \mathcal{I}_k^\sta|}
    \sum_{(i,j)\in \mathcal{B}_k(\Gamma_0)}
    \left[ \{ \mu_j(\infty) - \mu_j(\Gamma_0) \} -  \{ \mu_i(\infty) - \mu_i(\Gamma_0) \} \right]\\
    & = 
    \frac{|\mathcal{B}_k(\Gamma_0)|}{|\mathcal{I}_{\Gamma_0; k} \setminus \mathcal{I}_k^\sta|}
    \Delta_{\mu}(\Gamma_0; k),
\end{align*}
and 
\begin{align*}
    \sigma^2(\Gamma_0; \mathcal{I}_k^\sta) - \sigma^2(\Gamma_0; \mathcal{I}_{\Gamma_0; k}) 
    & = 
    \sum_{j \in \mathcal{I}_k^\sta \setminus \mathcal{I}_{\Gamma_0; k}}v^2_j(\Gamma_0) - 
    \sum_{i \in \mathcal{I}_{\Gamma_0; k} \setminus \mathcal{I}_k^\sta}v^2_i(\Gamma_0)\\
    & = 
    \frac{1}{|\mathcal{I}_{\Gamma_0; k} \setminus \mathcal{I}_k^\sta|}
    \sum_{(i,j)\in (\mathcal{I}_{\Gamma_0; k} \setminus \mathcal{I}_k^\sta) \times (\mathcal{I}_k^\sta \setminus \mathcal{I}_{\Gamma_0; k})}
    \left[v^2_j(\Gamma_0)- v^2_i(\Gamma_0) \right]\\
    & \le \frac{1}{|\mathcal{I}_{\Gamma_0; k} \setminus \mathcal{I}_k^\sta|}
    \sum_{(i,j)\in \mathcal{B}_k(\Gamma_0)}
    \left[v^2_j(\Gamma_0)- v^2_i(\Gamma_0) \right]\\
    & = 
    \frac{|\mathcal{B}_k(\Gamma_0)|}{|\mathcal{I}_{\Gamma_0; k} \setminus \mathcal{I}_k^\sta|}
    \Delta_{v^2}(\Gamma_0; k). 
\end{align*}
We can then bound the difference between $\sigma(\Gamma_0, \mathcal{I}_k^\sta)$ and $\sigma(\Gamma_0, \mathcal{I}_{\Gamma_0;k})$ by
\begin{align*}
    \sigma(\Gamma_0; \mathcal{I}_k^\sta) - \sigma(\Gamma_0; \mathcal{I}_{\Gamma_0;k}) 
    &
    = 
    \frac{1}{\sqrt{\sum_{i\in \mathcal{I}^\sta_{k}}v_i^2(\Gamma_0)} + \sqrt{\sum_{i\in \mathcal{I}_{\Gamma_0; k}}v_i^2(\Gamma_0)}} \left\{\sigma^2(\Gamma_0; \mathcal{I}_k^\sta) - \sigma^2(\Gamma_0; \mathcal{I}_{\Gamma_0,k})  \right\}\\
    &\le 
    \frac{1}{\sqrt{\sum_{i\in \mathcal{I}_{\Gamma_0; k}}v_i^2(\Gamma_0)}+\sqrt{\sum_{i\in \mathcal{I}^\sta_{k}}v_i^2(\Gamma_0)}} 
    \frac{|\mathcal{B}_k(\Gamma_0)|}{|\mathcal{I}_{\Gamma_0; k} \setminus \mathcal{I}_k^\sta|}
    \Delta_{v^2}(\Gamma_0; k)\\
    & \le 
    \frac{1}{\sqrt{\sum_{i\in \mathcal{I}^\sta_{k}}v_i^2(\Gamma_0)}} 
    \frac{|\mathcal{B}_k(\Gamma_0)|}{|\mathcal{I}_{\Gamma_0; k} \setminus \mathcal{I}_k^\sta|}
    \Delta_{v^2}(\Gamma_0; k). 
\end{align*}
Consequently, when $\mathcal{I}_{\Gamma_0; k} \ne \mathcal{I}_k^\sta$, for any $c\ge 0$, we have 
\begin{align*}
    & \quad \ \mu(\Gamma_0; \mathcal{I}_{\Gamma_0; k}) + c\sigma(\Gamma_0; \mathcal{I}_{\Gamma_0; k})   - \mu(\Gamma_0; \mathcal{I}_k^\sta) - c \sigma(\Gamma_0; \mathcal{I}_k^\sta)
    \nonumber
    \\
    & = \left\{ \mu(\Gamma_0; \mathcal{I}_{\Gamma_0; k}) - \mu(\Gamma_0; \mathcal{I}_k^\sta) \right\} 
    - c \left\{ \sigma(\Gamma_0; \mathcal{I}_k^\sta) - \sigma(\Gamma_0; \mathcal{I}_{\Gamma_0; k}) \right\}
    \nonumber
    \\
    & \ge 
    \frac{|\mathcal{B}_k(\Gamma_0)|}{|\mathcal{I}_{\Gamma_0; k} \setminus \mathcal{I}_k^\sta|}
    \Delta_{\mu}(\Gamma_0; k) - 
    c \frac{1}{\sqrt{\sum_{i\in \mathcal{I}^\sta_{k}}v_i^2(\Gamma_0)}} 
    \frac{\mathcal{B}_k(\Gamma_0)}{|\mathcal{I}_{\Gamma_0; k} \setminus \mathcal{I}_k^\sta|}
    \Delta_{v^2}(\Gamma_0; k)
    \nonumber
    \\
    & = 
    \frac{|\mathcal{B}_k(\Gamma_0)|}{|\mathcal{I}_{\Gamma_0; k} \setminus \mathcal{I}_k^\sta|}
    \frac{\Delta_{v^2}(\Gamma_0; k)}{\sqrt{\sum_{i\in \mathcal{I}^\sta_{k}}v_i^2(\Gamma_0)}}
    \left\{
    \frac{\Delta_{\mu}(\Gamma_0; k)}{\Delta_{v^2}(\Gamma_0; k)} 
    \sqrt{\sum_{i\in \mathcal{I}^\sta_{k}}v_i^2(\Gamma_0)}
    - 
    c  
    \right\}. 
\end{align*}
From the proof of Lemma \ref{lemma:match_clt}, we have 
\begin{align*}
    \sum_{i\in \mathcal{I}^\sta_{k}}v_i^2(\Gamma_0) \ge 
    2 \sum_{i\in \mathcal{I}^\sta_{k}} 
    \frac{R_i^2}{(n_i\Gamma_0)^3}.  
\end{align*}
Thus, for any $c\ge 0$, if $\mathcal{I}_{\Gamma_0; k} \ne \mathcal{I}_k^\sta$, then 
\begin{align}\label{eq:diff_mu_sigma_gamma_quantile}
    &\quad \mu(\Gamma_0; \mathcal{I}_{\Gamma_0; k}) + c\sigma(\Gamma_0; \mathcal{I}_{\Gamma_0; k})   - \mu(\Gamma_0; \mathcal{I}_k^\sta) - c \sigma(\Gamma_0; \mathcal{I}_k^\sta)
    \nonumber
    \\& 
    \ge
    \frac{|\mathcal{B}_k(\Gamma_0)|}{|\mathcal{I}_{\Gamma_0; k} \setminus \mathcal{I}_k^\sta|}
    \frac{\Delta_{v^2}(\Gamma_0; k)}{\sqrt{\sum_{i\in \mathcal{I}^\sta_{k}}v_i^2(\Gamma_0)}}
    \left\{\sqrt{2}\frac{\Delta_{\mu}(\Gamma_0; k)}{\Delta_{v^2}(\Gamma_0; k)}\sqrt{\sum_{i\in \mathcal{I}_k^\sta}\frac{R_i^2}{(n_i\Gamma_0)^3}} - c \right\}. 
\end{align}
From Condition \ref{cond:sen_Gamma_quantile_asymp}, 
$\mu(\Gamma_0;\mathcal{I}_{\Gamma_0;k}) + c\sigma(\Gamma_0; \mathcal{I}_{\Gamma_0;k})$ is greater than or equal to $\mu(\Gamma_0; \mathcal{I}_k^\sta) + c \sigma(\Gamma_0; \mathcal{I}_k^\sta)$ when $I$ is sufficiently large. 
Therefore, Lemma \ref{lemma:set_quantile_gamma} holds. 
\end{proof}

\begin{proof}[\bf Proof of Theorem \ref{thm:set_quantile_gamma}]
From Condition \ref{cond:sen_Gamma_quantile_asymp} and Lemma \ref{lemma:set_quantile_gamma}, $\mu(\Gamma_0; \mathcal{I}_{\Gamma_0; k}) + c\sigma(\Gamma_0; \mathcal{I}_{\Gamma_0; k})$ is greater than or equal to $\mu(\Gamma_0; \mathcal{I}_k^\sta) + c \sigma(\Gamma_0; \mathcal{I}_k^\sta)$ when $I$ is sufficiently large.
From Condition \ref{cond:quantile_set_true_Ik} and Lemma \ref{lemma:set_vector_gamma}, 
$\mu(\Gamma_0;\mathcal{I}_k^\sta) + c \sigma(\Gamma_0; \mathcal{I}_k^\sta)$ is greater than or equal to $\mu + c \sigma$ when $I$ is sufficiently large. 
These imply that $\mu(\Gamma_0; \mathcal{I}_{\Gamma_0; k}) + c\sigma(\Gamma_0; \mathcal{I}_{\Gamma_0; k})$ is greater than or equal to $\mu + c \sigma$ when $I$ is sufficiently large. 
Consequently, for any $c\ge 0$, when $I$ is sufficiently large,  
\begin{align*}
    \pr \left\{ T \ge \mu(\Gamma_0; k) + c \sigma(\Gamma_0; k) \right\}
    & \le 
    \pr \left( T \ge \mu + c \sigma \right)
    = \pr\left( \frac{T-\mu}{\sigma} \ge c \right).
\end{align*}
From Condition \ref{cond:clt} and Lemma \ref{lemma:match_clt}, $(T-\mu)/\sigma \converged \mathcal{N}(0,1)$ as $I\rightarrow \infty$. 
Therefore, 
\begin{align*}
    \limsup_{I \rightarrow \infty} \pr \left\{ T \ge \mu(\Gamma_0; k) + c \sigma(\Gamma_0; k) \right\}
    & \le 
    \limsup_{I \rightarrow \infty} \pr\left( \frac{T-\mu}{\sigma} \ge c \right) = 
    1 - \Phi(c), 
\end{align*}
i.e., 
Theorem \ref{thm:set_quantile_gamma} holds.
\end{proof}

\begin{proof}[\bf Proof of Corollary \ref{cor:sen_pval_set}]
By definition, 
\begin{align*}
\pr \left( \tilde{p}_{\Gamma_0; k} \le \alpha\right)
& = 
\pr\left\{ 
\Phi\left(
\frac{T - \mu(\Gamma_0; k)}{\sigma(\Gamma_0; k)}
\right) \cdot 
\I\{T \ge \mu(\Gamma_0; k)\} \ge 1-\alpha
\right\}\\
& = 
\pr\left\{
T \ge \mu(\Gamma_0; k) + \Phi^{-1}(1-\alpha)\sigma(\Gamma_0; k), 
T \ge \mu(\Gamma_0; k)
\right\}\\
& = \pr\left\{
T \ge \mu(\Gamma_0; k) + c\sigma(\Gamma_0; k)
\right\},
\end{align*}
where $c = \max\{0, \Phi^{-1}(1-\alpha)\} \ge 0$. 
From Theorem \ref{thm:set_quantile_gamma}, as $I\rightarrow \infty$, we have
\begin{align*}
 \limsup_{I\rightarrow \infty}\pr \left( \tilde{p}_{\Gamma_0; k} \le \alpha\right)
 & =  \limsup_{I\rightarrow \infty} \pr\left\{
T \ge \mu(\Gamma_0; k) + c\sigma(\Gamma_0; k)
\right\}
\le 1 - \Phi(c)
\le 1 - \Phi(\Phi^{-1}(1-\alpha)) 
\\
& = \alpha. 
\end{align*}
Therefore, Corollary \ref{cor:sen_pval_set} holds. 
\end{proof}

\section{Proofs of Corollary \ref{cor:set_ci_gamma} and Theorem \ref{thm:set_joint_ci}}\label{app:set_ci}

Because Corollary \ref{cor:set_ci_gamma} is a direct consequence of Theorem \ref{thm:set_joint_ci}, it suffices to prove Theorem \ref{thm:set_joint_ci}. 
To prove Theorem \ref{thm:set_joint_ci}, we need the following lemma. 

\begin{lemma}\label{lemma:set_joint_ci}
Assume Fisher's null $H_0$ holds. 
If Conditions \ref{cond:set_simultaneous}(ii) and (iii) holds, then for any $c\ge 0$, 
$$\inf_{\underline{k}_I \le k\le I, \Gamma_0 \in [\Gamma_{(k)}^\sta, \overline{\Gamma}_I]} \left\{ \mu(\Gamma_0;k) + c \sigma(\Gamma_0; k) \right\}
\ge \mu + c\sigma
$$
when $I$ is sufficiently large.
\end{lemma}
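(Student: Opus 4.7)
The plan is to prove Lemma \ref{lemma:set_joint_ci} by a two-step decomposition, exactly mirroring the arguments already used in Lemmas \ref{lemma:set_vector_gamma} and \ref{lemma:set_quantile_gamma}, but carried out uniformly over $(k,\Gamma_0)$ in the prescribed range. For any $\underline{k}_I\le k\le I$ and $\Gamma_0\in[\Gamma_{(k)}^\sta,\overline{\Gamma}_I]$, I would write
$$
\mu(\Gamma_0;k)+c\sigma(\Gamma_0;k)-(\mu+c\sigma)=A_1(\Gamma_0,k)+A_2(\Gamma_0,k),
$$
where $A_1=\mu(\Gamma_0;k)+c\sigma(\Gamma_0;k)-\mu(\Gamma_0;\mathcal{I}_k^\sta)-c\sigma(\Gamma_0;\mathcal{I}_k^\sta)$ compares the data-driven set $\mathcal{I}_{\Gamma_0;k}$ to the oracle set $\mathcal{I}_k^\sta$, and $A_2=\mu(\Gamma_0;\mathcal{I}_k^\sta)+c\sigma(\Gamma_0;\mathcal{I}_k^\sta)-(\mu+c\sigma)$ measures the gap between the oracle sensitivity-model maximum and the truth. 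It would then suffice to show $\inf_{k,\Gamma_0}A_1\ge 0$ and $\inf_{k,\Gamma_0}A_2\ge 0$ for all sufficiently large $I$.

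For $A_2$, I would note that $\Gamma_0\ge \Gamma_{(k)}^\sta$ implies $\bs{\Gamma}^\sta\vecle\Gamma_0\cdot\bs{1}_{\mathcal{I}_k^\sta}+\infty\cdot\bs{1}_{\mathcal{I}_k^\sta}^\complement$, so the explicit lower bound \eqref{eq:diff_mu_sigma_gamma_vec} from the proof of Lemma \ref{lemma:set_vector_gamma} applies and gives
$$
A_2(\Gamma_0,k)\ge \frac{|\mathcal{A}_{\bs{\Gamma}}|\,\Delta_{v^2}(\bs{\Gamma})}{\sqrt{\sum_{i=1}^I v_i^2}}\left\{\sqrt{2}\,\frac{\Delta_\mu(\bs{\Gamma})}{\Delta_{v^2}(\bs{\Gamma})}\sqrt{\sum_{i=1}^I\frac{R_i^2}{(n_i\Gamma_i^\sta)^3}}-c\right\}
$$
with $\bs{\Gamma}=\Gamma_0\cdot\bs{1}_{\mathcal{I}_k^\sta}+\infty\cdot\bs{1}_{\mathcal{I}_k^\sta}^\complement$; when $\mathcal{A}_{\bs{\Gamma}}$ is empty the bound is trivially zero. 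Condition \ref{cond:set_simultaneous}(ii) forces the infimum of the bracketed quantity over the prescribed range of $(k,\Gamma_0)$ to diverge, so eventually it exceeds $c$ uniformly, giving $\inf A_2\ge 0$. For $A_1$, I would reuse the lower bound \eqref{eq:diff_mu_sigma_gamma_quantile} established in the proof of Lemma \ref{lemma:set_quantile_gamma}, which shows that (whenever $\mathcal{I}_{\Gamma_0;k}\ne\mathcal{I}_k^\sta$)
$$
A_1(\Gamma_0,k)\ge\frac{|\mathcal{B}_k(\Gamma_0)|}{|\mathcal{I}_{\Gamma_0;k}\setminus\mathcal{I}_k^\sta|}\frac{\Delta_{v^2}(\Gamma_0;k)}{\sqrt{\sum_{i\in\mathcal{I}_k^\sta}v_i^2(\Gamma_0)}}\left\{\sqrt{2}\,\frac{\Delta_\mu(\Gamma_0;k)}{\Delta_{v^2}(\Gamma_0;k)}\sqrt{\sum_{i\in\mathcal{I}_k^\sta}\frac{R_i^2}{(n_i\Gamma_0)^3}}-c\right\},
$$
and is trivially zero otherwise; Condition \ref{cond:set_simultaneous}(iii) then makes the bracket eventually exceed $c$ uniformly on the relevant range, so $\inf A_1\ge 0$ when $I$ is sufficiently large.

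Combining the two steps yields $\mu(\Gamma_0;k)+c\sigma(\Gamma_0;k)\ge\mu+c\sigma$ pointwise for every $(k,\Gamma_0)$ in the prescribed range, and then taking the infimum over this range gives the desired conclusion. The main obstacle is really just bookkeeping: verifying that the per-$(k,\Gamma_0)$ lower bounds from the earlier proofs can be made uniform in $(k,\Gamma_0)$. This reduces to observing that Condition \ref{cond:set_simultaneous}(ii) and (iii) already impose the required uniform divergences (they are stated with an explicit infimum over exactly the range that appears in Lemma \ref{lemma:set_joint_ci}), and that the degenerate cases $\mathcal{A}_{\bs{\Gamma}}=\emptyset$ or $\mathcal{I}_{\Gamma_0;k}=\mathcal{I}_k^\sta$ pose no difficulty because the corresponding differences vanish.
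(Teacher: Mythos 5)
Your proposal is correct and takes essentially the same approach as the paper's proof: it uses the same two-term decomposition into the gap between the data-driven set $\mathcal{I}_{\Gamma_0;k}$ and the oracle set $\mathcal{I}_k^\sta$ (bounded via \eqref{eq:diff_mu_sigma_gamma_quantile} and Condition \ref{cond:set_simultaneous}(iii)) plus the gap between the oracle bound and the truth (bounded via \eqref{eq:diff_mu_sigma_gamma_vec} and Condition \ref{cond:set_simultaneous}(ii)), with the key observation that the infimums built into Conditions \ref{cond:set_simultaneous}(ii)--(iii) already furnish the required uniformity over $(k,\Gamma_0)$.
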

\begin{proof}[Proof of Lemma \ref{lemma:set_joint_ci}]
For any $c\ge 0$, 
from Conditions \ref{cond:set_simultaneous}(ii) and (iii), 
there exists $\underline{I}$ such that when $I\ge \underline{I}$, 
\begin{align*}
\inf_{\underline{k}_I \le k\le I, \Gamma_0 \in [\Gamma_{(k)}^\sta, \overline{\Gamma}_I]}
\frac{\Delta_{\mu}(\Gamma_0 \cdot \bs{1}_{\mathcal{I}_k^\sta} + \infty \cdot \bs{1}_{\mathcal{I}_k^\sta}^\complement)}{\Delta_{v^2}(\Gamma_0 \cdot \bs{1}_{\mathcal{I}_k^\sta} + \infty \cdot \bs{1}_{\mathcal{I}_k^\sta}^\complement)} \sqrt{\sum_{i=1}^I R_i^2/(n_i\Gamma_i^\sta)^3}  \ge c/\sqrt{2}, 
\\
\inf_{\underline{k}_I \le k \le I, \Gamma_0 \in [\Gamma_{(k)}^\sta, \overline{\Gamma}_I]}
\frac{\Delta_{\mu}(\Gamma_0; k)}{\Delta_{v^2}(\Gamma_0; k)}
\sqrt{\sum_{i\in \mathcal{I}_k^\sta} R_i^2/(n_i\Gamma_0)^3} \ge c/\sqrt{2}. 
\end{align*}
Note that, by definition, 
$
\mu(\Gamma_0; \mathcal{I}_k^\sta) = \mu(\Gamma_0 \cdot \bs{1}_{\mathcal{I}_k^\sta} + \infty \cdot \bs{1}_{\mathcal{I}_k^\sta}^\complement)
$
and 
$
\sigma^2(\Gamma_0; \mathcal{I}_k^\sta) = \sigma^2(\Gamma_0 \cdot \bs{1}_{\mathcal{I}_k^\sta} + \infty \cdot \bs{1}_{\mathcal{I}_k^\sta}^\complement). 
$
From \eqref{eq:diff_mu_sigma_gamma_vec} and \eqref{eq:diff_mu_sigma_gamma_quantile} in the proofs of Lemmas \ref{lemma:set_vector_gamma} and \ref{lemma:set_quantile_gamma},  
we can know that, for any $I\ge \underline{I}$, and any $\underline{k}_I \le k\le I$ and $\Gamma_0 \in [\Gamma_{(k)}^\sta, \overline{\Gamma}_I]$, 
\begin{align*}
& \quad \ \mu(\Gamma_0; \mathcal{I}_{\Gamma_0; k}) + c\sigma(\Gamma_0; \mathcal{I}_{\Gamma_0; k}) - \mu - c \sigma
\\
& \ge 
\left\{ \mu(\Gamma_0; \mathcal{I}_{\Gamma_0; k}) + c\sigma(\Gamma_0; \mathcal{I}_{\Gamma_0; k})   - \mu(\Gamma_0; \mathcal{I}_k^\sta) - c \sigma(\Gamma_0; \mathcal{I}_k^\sta) \right\}
+ 
\left\{ \mu(\Gamma_0; \mathcal{I}_k^\sta) + c \sigma(\Gamma_0; \mathcal{I}_k^\sta)  - \mu - c \sigma \right\}
\\
& \ge 0. 
\end{align*}
This then implies that, for any $I \ge \underline{I}$, 
\begin{align*}
    \inf_{\underline{k}_I \le k\le I, \Gamma_0 \in [\Gamma_{(k)}^\sta, \overline{\Gamma}_I]} \left\{ \mu(\Gamma_0;k) + c \sigma(\Gamma_0; k) \right\}
    \ge \mu + c\sigma.
\end{align*}
Therefore, Lemma \ref{lemma:set_joint_ci} holds.
\end{proof}

\begin{proof}[\bf Proof of Theorem \ref{thm:set_joint_ci}]
By the same logic as the proof of Theorem \ref{thm:pair_joint_ci}, we can show that $\Gamma_{(k)}^\sta \in 
\{\Gamma_0\ge 1: \vardbtilde{p}_{\Gamma_0;k} > \alpha\}
\Longleftrightarrow 
\bs{\Gamma}^\sta \in 
\bigcap_{\Gamma_0: \vardbtilde{p}_{\Gamma_0;k} \le \alpha} \mathcal{S}_{\Gamma_0;k}^\complement$
and 
$
I^\sta(\Gamma_0) \in \{I-k: \vardbtilde{p}_{\Gamma_0;k} > \alpha, 0 \le k \le I\}
\Longleftrightarrow 
\bs{\Gamma}^\sta \in 
\bigcap_{k \ge \underline{k}_I: \tilde{p}_{\Gamma_0;k} \le \alpha} \mathcal{S}_{\Gamma_0;k}^\complement. 
$
Below we prove the asymptotic validity of the confidence set in \eqref{eq:set_joint_set}. 

Note that 
\begin{align*}
\pr\Big( \bs{\Gamma}^\sta \in \bigcap_{\Gamma_0, k: \vardbtilde{p}_{\Gamma_0;k} \le \alpha} \mathcal{S}_{\Gamma_0;k}^\complement  \Big) = 
1 - 
\pr\Big( \bs{\Gamma}^\sta \in \bigcup_{\Gamma_0, k: \vardbtilde{p}_{\Gamma_0;k} \le \alpha} \mathcal{S}_{\Gamma_0;k} \Big). 
\end{align*}
If 
$\bs{\Gamma}^\sta \in \bigcup_{\Gamma_0, k: \vardbtilde{p}_{\Gamma_0;k} \le \alpha} \mathcal{S}_{\Gamma_0;k}$, 
then there must exists $(\Gamma_0', k')$ such that 
$\Gamma_0' \le \overline{\Gamma}_I$, 
$k' \ge \underline{k}_I$, 
$\vardbtilde{p}_{\Gamma_0';k'} = \tilde{p}_{\Gamma_0';k'} \le \alpha$, and 
$\bs{\Gamma}^\sta \in \mathcal{S}_{\Gamma_0';k'}$, 
where the latter implies that $\Gamma_{(k')}^\sta \le \Gamma_0'$. 
By definition, 
\begin{align*}
 \tilde{p}_{\Gamma_0'; k'} \le \alpha
& \Longleftrightarrow
\Phi\left(
\frac{T - \mu(\Gamma_0'; k')}{\sigma(\Gamma_0'; k')}
\right) \cdot 
\I\{T \ge \mu(\Gamma_0'; k')\} \ge 1-\alpha
\\
& \Longleftrightarrow
T \ge \mu(\Gamma_0'; k') + \Phi^{-1}(1-\alpha)\sigma(\Gamma_0'; k') \text{ and } 
T \ge \mu(\Gamma_0'; k')
\\
& \Longleftrightarrow 
T \ge \mu(\Gamma_0'; k') + c\sigma(\Gamma_0'; k'), 
\end{align*}
where $c = \max\{0, \Phi^{-1}(1-\alpha)\} \ge 0$. 
Note that $k'\ge \underline{k}_I$ and $\Gamma_{(k')}^\sta \le \Gamma_0' \le \overline{\Gamma}_I$. 
From Conditions \ref{cond:set_simultaneous}(ii) and (iii) and Lemma \ref{lemma:set_joint_ci}, 
there exists $\underline{I}$ such that when $I\ge \underline{I}$, 
\begin{align*}
\mu(\Gamma_0'; k') + c\sigma(\Gamma_0'; k')
\ge 
\inf_{\underline{k}_I \le k\le I, \Gamma_0 \in [\Gamma_{(k)}^\sta, \overline{\Gamma}_I]} \left\{ \mu(\Gamma_0;k) + c \sigma(\Gamma_0; k) \right\}
\ge \mu + c\sigma. 
\end{align*}
Therefore, 
when $I\ge \underline{I}$, 
if $\bs{\Gamma}^\sta \in \bigcup_{\Gamma_0, k: \vardbtilde{p}_{\Gamma_0;k} \le \alpha} \mathcal{S}_{\Gamma_0;k}$, 
then we must have $T \ge \mu + c\sigma$. 
This implies that, {\lxr for $I\ge \underline{I}$,} 
\begin{align*}
\pr\Big( \bs{\Gamma}^\sta \in \bigcap_{\Gamma_0, k \ge \underline{k}_I: \vardbtilde{p}_{\Gamma_0;k} \le \alpha} \mathcal{S}_{\Gamma_0;k}^\complement  \Big) = 
1 - 
\pr\Big( \bs{\Gamma}^\sta \in \bigcup_{\Gamma_0, k \ge \underline{k}_I: \vardbtilde{p}_{\Gamma_0;k} \le \alpha} \mathcal{S}_{\Gamma_0;k} \Big)
\ge 
1 -  \pr(T \ge \mu + c\sigma). 
\end{align*}
From Lemma \ref{lemma:match_clt}, we have
\begin{align*}
 \liminf_{I\rightarrow \infty}\pr\Big( \bs{\Gamma}^\sta \in \bigcap_{\Gamma_0, k \ge \underline{k}_I: \vardbtilde{p}_{\Gamma_0;k} \le \alpha} \mathcal{S}_{\Gamma_0;k}^\complement  \Big)
 & \ge 
1 -  \limsup_{I\rightarrow\infty}\pr\left( \frac{T-\mu}{\sigma} \ge c\right) 
= \Phi(c)
\ge  \Phi(\Phi^{-1}(1-\alpha)) \\
& \ge 1 - \alpha. 
\end{align*}
Therefore, Theorem \ref{thm:set_joint_ci} holds. 
\end{proof}

{\add 

\section{Proofs of Theorems \ref{thm:bounded_null_pair} and \ref{thm:bound_joint_ci}}\label{sec:bound_null_pair_proof}

We will prove Theorem \ref{thm:bounded_null_pair} in the case of effect increasing and differential increasing statistics separately as follows. 

\begin{proof}[\bf Proof of Theorem \ref{thm:bounded_null_pair}(a)]
We 
consider the case where 
the test statistic $t(\cdot, \cdot)$ in \eqref{eq:test_stat} is effect increasing.
For any treatment assignment $\bs a\in\mathcal Z$, 
let $\bs{Y}(\bs{a}) \equiv \bs{a} \circ \bs{Y}(1) + (\bs{1}-\bs{a}) \circ \bs{Y}(0)$ denote the corresponding observed outcome under assignment $\bs{a}$. 
Let $\tilde{G}(c) \equiv \sum_{\bs a\in\mathcal Z}\pr_0(\bs A=\bs a)\I\{t(\bs a,\bs Y(\bs a))\ge c\}$ denote the tail probability of the true randomization distribution of $t(\bs Z,\bs Y)$, where we use $\pr_0(\cdot)$ to denote the true probability measure for the observed treatment assignment $\bs{Z}$. 
Consequently, 
{\lxr by Lemma \ref{lemma:survival_dominant_by_uniform},}
$\tilde{G}(t(\bs Z,\bs Y))$ must be stochastically larger than or equal to $\Unif(0,1)$.

Below we further introduce several notations. 
For any vector $\bs{y}\in \mathbb{R}^{N}$, $\bs{\Gamma}\in [1, \infty]^I$ and $\bs{u}\in [0,1]^N$, 
let
$G_{\bs\Gamma,\bs u,\bs y}(c)=\sum_{\bs a\in\mathcal Z}\pr_{\bs\Gamma,\bs u}(\bs A=\bs a)\I\{t(\bs a,\bs y)\ge c\}$ denote the tail probability of the distribution of $t(\bs{A}, \bs{y})$ with $\bs{y}$ kept fixed and $\bs{A}$ following the sensitivity model $\mathcal{M}(\bs\Gamma,\bs u)$, 
where we use $\pr_{\bs\Gamma,\bs u}(\cdot)$ to denote the probability measure of the treatment assignment under the sensitivity model $\mathcal{M}(\bs\Gamma,\bs u)$. 
Let $\tilde{G}_{\bs\Gamma,\bs u, \bs{Y}(1), \bs{Y}(0)}(c) = \sum_{\bs a\in\mathcal Z}\pr_{\bs\Gamma,\bs u}(\bs A=\bs a)\I\{t(\bs a,\bs Y(\bs{a}))\ge c\}$ denote the tail probability of the distribution of $t(\bs{A}, \bs{Y}(\bs{A}))$ with $\bs{A}$ following the sensitivity model $\mathcal{M}(\bs\Gamma,\bs u)$.

From \citet[][\lxr Proof of Theorem 1(a)]{li2020quantile}, 
under the bounded null hypothesis $\overline{H}_{0}$ and with the effect increasing statistic $t(\cdot, \cdot)$, we must have 
\begin{align}\label{eq:prop_effect}
    t(\bs{a}, \bs{Y}) \ge t(\bs{a}, \bs{Y}(\bs{a})) \quad \text{ for any }\bs a\in\mathcal Z. 
\end{align}
For completeness, we give a proof of \eqref{eq:prop_effect} below. 
By some algebra, we can write $\bs{Y}$ as 
\begin{align*}
    \bs{Y} 
    & = \bs{Y}(\bs{a}) + \bs{a} \circ \{\bs{Y} - \bs{Y}(1)\} + (\bs{1}-\bs{a}) \circ \{\bs{Y} - \bs{Y}(0)\}
    \\
    & = \bs{Y}(\bs{a}) + \bs{a} \circ (\bs{1}-\bs{Z})\circ \{\bs{Y}(0) - \bs{Y}(1)\} + (\bs{1}-\bs{a}) \circ \bs{Z} \circ \{\bs{Y}(1) - \bs{Y}(0)\}. 
\end{align*}
Note that, under the bounded null $\overline{H}_0$, 
$\bs{Y}(1) - \bs{Y}(0) \vecle \bs{0}$. 
Therefore, the inequality \eqref{eq:prop_effect} follows immediately from the property of the effect increasing statistic in Definition \ref{def:effect_increase}.

From \eqref{eq:prop_effect}, we then have 
\begin{align}\label{eq:prop_effect2}
    G_{\bs\Gamma,\bs u,\bs Y}(c)&=\sum_{\bs a\in\mathcal Z}\pr_{\bs\Gamma,\bs u}(\bs A=\bs a)\I\{t(\bs a,\bs Y)\ge c\}\nonumber\\
    &\ge\sum_{\bs a\in\mathcal Z}\pr_{\bs\Gamma,\bs u}(\bs A=\bs a)\I\{t(\bs a,\bs Y(\bs a))\ge c\} = \tilde{G}_{\bs\Gamma,\bs u, \bs Y(1), \bs Y(0)}(c).
\end{align}
By definition, we then have 
\begin{align}\label{eq:valid_p_pair_effect}
    \bar{p}_{\Gamma_0;k}&=\sup_{(\bs\Gamma,\bs u)\in[1,\infty]^I\times[0,1]^N:\Gamma_{(k)}\le\Gamma_0}G_{\bs\Gamma,\bs u,\bs Y}(t(\bs Z,\bs Y))\nonumber\\
    &\ge\sup_{(\bs\Gamma,\bs u)\in[1,\infty]^I\times[0,1]^N:\Gamma_{(k)}\le\Gamma_0}\tilde{G}_{\bs\Gamma,\bs u, \bs Y(1), \bs Y(0)}(t(\bs Z,\bs Y))
    \ge 
    \tilde{G}(t(\bs Z,\bs Y)),
\end{align}
where the second last inequality follows from \eqref{eq:prop_effect2} and the last inequality follows from the fact that the true hidden bias at rank $k$ is bounded by $\Gamma_0$. 
Because $\tilde{G}(t(\bs Z,\bs Y))$ is stochastically larger than or equal to $\Unif(0,1)$, $\bar{p}_{\Gamma_0;k}$ must also be stochastically larger than or equal to $\Unif(0,1)$, i.e., it is a valid $p$-value. 

From the above, Theorem \ref{thm:bounded_null_pair} holds when the test statistic is effect increasing.  
\end{proof}

\begin{proof}[\bf Proof of Theorem \ref{thm:bounded_null_pair}(b)]
We consider the case where the test statistic $t(\cdot, \cdot)$ in \eqref{eq:test_stat} is differential increasing. 
Define $\pr_0(\cdot)$, $\pr_{\bs{\Gamma}, \bs{u}}(\cdot)$ and $G_{\bs\Gamma,\bs u,\bs y}(\cdot)$ the same as in the proof of Theorem \ref{thm:bounded_null_pair}(a). 
Let $G(c) \equiv \sum_{\bs a\in\mathcal Z}\pr_{0}(\bs A=\bs a)\I\{t(\bs a,\bs Y(0))\ge c\}$ denote the tail probability of the true randomization distribution of $t(\bs Z,\bs Y(0))$. 
Consequently, 
{\lxr by Lemma \ref{lemma:survival_dominant_by_uniform},}
$G(t(\bs Z,\bs Y(0))$ must be stochastically larger than or equal to $\Unif(0,1)$.

From \citet[][\lxr Proof of Theorem 1(a)]{li2020quantile}, 
under the bounded null hypothesis $\overline{H}_{0}$ and with the differential increasing statistic $t(\cdot, \cdot)$, we must have 
\begin{align}\label{eq:diff_increase}
    t(\bs a,\bs Y)-t(\bs Z,\bs Y)\ge t(\bs a,\bs Y(0))-t(\bs Z,\bs Y(0)) \quad \text{ for any }\bs a\in\mathcal Z.
\end{align}
For completeness, we give a proof of \eqref{eq:diff_increase} below. 
By some algebra,
$\bs{Y}(0) - \bs{Y} = \bs{Z}\circ \{\bs{Y}(0) - \bs{Y}(1)\}$. 
Note that $\bs{Y}(0) - \bs{Y}(1) \vecge \bs{0}$ under the bounded null $\overline{H}_0$. By the property of the differential increasing statistic in Definition \ref{def:dif_increase}, we have, for any $\bs{a}\in \mathcal{Z}$,  
\begin{align*}
    t(\bs{Z}, \bs{Y}(0)) - t(\bs{Z}, \bs{Y})
    & = t(\bs{Z}, \bs{Y} + \bs{Z}\circ \{\bs{Y}(0) - \bs{Y}(1)\}) - t(\bs{Z}, \bs{Y})\\
    & \ge t(\bs{a}, \bs{Y} + \bs{Z}\circ \{\bs{Y}(0) - \bs{Y}(1)\}) - t(\bs{a}, \bs{Y}) 
    = t(\bs{a}, \bs{Y}(0)) - t(\bs{a}, \bs{Y}).
\end{align*}
This immediately implies \eqref{eq:diff_increase}.

From \eqref{eq:diff_increase}, we then have 
\begin{align}\label{eq:tail_prob_ineq_diff}
    G_{\bs\Gamma,\bs u, \bs Y}(t(\bs Z,\bs Y))&=\sum_{\bs a\in\mathcal Z}\pr_{\bs\Gamma,\bs u}(\bs A=\bs a)\I\{t(\bs a,\bs Y)\ge t(\bs Z,\bs Y)\}\nonumber\\
    &\ge\sum_{\bs a\in\mathcal Z}\pr_{\bs\Gamma,\bs u}(\bs A=\bs a)\I\{t(\bs a,\bs Y(0))\ge t(\bs Z,\bs Y(0))\}
    =G_{\bs\Gamma,\bs u,\bs Y(0)}(t(\bs Z,\bs Y(0))).
\end{align}
By definition, we then have 
\begin{align}\label{eq:valide_p}
    \bar{p}_{\Gamma_0;k}&=\sup_{(\bs\Gamma,\bs u) \in [1,\infty]^I\times [0,1]^N:\Gamma_{(k)}\le\Gamma_0} G_{\bs\Gamma,\bs u,\bs Y}(t(\bs Z,\bs Y))\nonumber\\
    &\ge \sup_{(\bs\Gamma,\bs u) \in [1,\infty]^I\times [0,1]^N:\Gamma_{(k)}\le\Gamma_0} G_{\bs\Gamma,\bs u,\bs Y(0)}(t(\bs Z,\bs Y(0)))\ge G(t(\bs Z,\bs Y(0))), 
\end{align}
where the second last inequality follows from \eqref{eq:tail_prob_ineq_diff} and the last inequality follows from the fact that the true hidden bias at rank $k$ is bounded by $\Gamma_0$. 
Because $G(t(\bs Z,\bs Y(0)))$ is stochastically larger than or equal to $\Unif(0,1)$, 
$\bar{p}_{\Gamma_0;k}$ must also be stochastically larger than or equal to $\Unif(0,1)$, i.e., it is a valid $p$-value. 

From the above, Theorem \ref{thm:bounded_null_pair} holds when the test statistic is differential increasing. 
\end{proof}

\begin{proof}[\bf Proof of Theorem \ref{thm:bound_joint_ci}]
Assume that the bounded null hypothesis $\overline{H}_0$ in \eqref{eq:bound_null} holds. 
The validity of confidence sets as in Corollary \ref{cor:pair_ci_gamma3} for each quantile of hidden biases (as well as the number of matched pairs with hidden biases greater than any threshold) follows immediately from Theorem \ref{thm:bounded_null_pair}, 
and the equivalent forms of the confidence sets as in Corollary \ref{cor:pair_ci_gamma3} follow by the same logic as there. 
Below we prove the simultaneous validity of these confidence sets.
From the discussion in Section \ref{sec:simultaneous_pair}, it suffices to prove that, for any $\alpha\in (0,1)$, the confidence set in \eqref{eq:pair_joint_set} covers the true hidden bias vector $\bs{\Gamma}^\sta$ with probability at least $1-\alpha$.

Suppose that $\bs{\Gamma}^\sta \in \bigcup_{\Gamma_0, k: \overline{p}_{\Gamma_0;k} \le \alpha} \mathcal{S}_{\Gamma_0;k}$. Then there must exist $(\Gamma_0, k)$ such that $\overline{p}_{\Gamma_0;k} \le \alpha$ and $\bs{\Gamma}^\sta \in \mathcal{S}_{\Gamma_0;k}$, 
which further implies that $\Gamma_{(k)}^\sta \le \Gamma_0$. 
From the proof of Theorem \ref{thm:bounded_null_pair} (in particular \eqref{eq:valid_p_pair_effect} and \eqref{eq:valide_p}), we must have 
\begin{align*}
    \alpha \ge \overline{p}_{\Gamma_0;k} 
    \ge 
    \begin{cases}
    \tilde{G}(t(\bs{Z}, \bs{Y})), & \quad \text{if the test statistic } t(\cdot, \cdot) \text{ is effect increasing}, \\
    G(t(\bs{Z}, \bs{Y}(0))), & \quad \text{if the test statistic } t(\cdot, \cdot) \text{ is differential increasing},
    \end{cases}
\end{align*}
where {\lxr $\tilde{G}(\cdot)$} and $G(\cdot)$ denote the tail probabilities of $t(\bs{Z}, \bs{Y})$ and $t(\bs{Z}, \bs{Y}(0))$, respectively, defined as in the proof of Theorem \ref{thm:bounded_null_pair}.

From the discussion before, we then have 
\begin{align*}
    \pr\Big( \bs{\Gamma}^\sta \in \bigcap_{\Gamma_0, k: \overline{p}_{\Gamma_0;k} \le \alpha} \mathcal{S}_{\Gamma_0;k}^\complement  \Big) 
& = 
1 - \pr\Big( \bs{\Gamma}^\sta \in \bigcup_{\Gamma_0, k: \overline{p}_{\Gamma_0;k} \le \alpha} \mathcal{S}_{\Gamma_0;k} \Big)
\\
& \ge 
\begin{cases}
1 - \pr\{\tilde{G}(t(\bs{Z}, \bs{Y}))\le \alpha\}, & \quad \text{if } t(\cdot, \cdot) \text{ is effect increasing},
\\
1 - \pr\{G(t(\bs{Z}, \bs{Y}(0)))\le \alpha\}, & \quad \text{if } t(\cdot, \cdot) \text{ is differential increasing},
\end{cases}
\\
& \ge 1 -\alpha,
\end{align*}
where the last inequality follows immediately from Lemma \ref{lemma:survival_dominant_by_uniform}.

From the above, Theorem \ref{thm:bound_joint_ci} holds. 
\end{proof}

\section{Proofs of Theorems \ref{thm:bounded_null_sets_effect_diff_increasing} and \ref{thm:bound_joint_ci_set}}\label{sec:bound_null_set_proof}

To prove Theorem \ref{thm:bounded_null_sets_effect_diff_increasing}, we first introduce the following condition.  

\begin{condition}\label{cond:134as}
With true control potential outcome vector $\bs{Y}(0)$ replaced by the imputed control potential outcome vector $\bs{Y}$ based on the sharp null ${\lxr H_{0}}$ (or equivalently pretending $\bs{Y}$ as the true control potential outcome vector), 
Conditions \ref{cond:clt}, \ref{cond:quantile_set_true_Ik} and \ref{cond:sen_Gamma_quantile_asymp} hold almost surely, in the sense that the {\lxr convergences} in these three conditions hold almost surely as $I\rightarrow \infty$. 
\end{condition}

We then need the following two lemmas. 

\begin{lemma}\label{lemma:bounded_null_sets_effect_increasing}
Consider a general matched study with set sizes $n_i \ge 2$ for all $i$. 
If the bounded null hypothesis $\overline{H}_0$ in \eqref{eq:bound_null} holds, 
the bias at rank $k$ is bounded by $\Gamma_0$, i.e., $\Gamma^\sta_{(k)}\le \Gamma_0$, 
Condition \ref{cond:134as} holds, 
and the test statistic $t(\cdot, \cdot)$ in \eqref{eq:test_stat} is effect increasing, 
then $\tilde p_{\Gamma_0;k}$ in Corollary \ref{cor:sen_pval_set} is still an asymptotic valid $p$-value, in the sense that $\limsup_{I\to\infty}\pr( \tilde{p}_{\Gamma_0;k} \le \alpha ) \le \alpha$ for $\alpha\in (0,1)$.
\end{lemma}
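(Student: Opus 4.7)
My plan is to mirror the strategy of the proof of Theorem \ref{thm:bounded_null_pair}(a) for matched pairs, now pushed through the Gaussian asymptotics of Theorem \ref{thm:set_quantile_gamma}. The engine is the same pointwise inequality established there: under $\overline H_0$ with an effect-increasing test statistic,
\[
t(\bs a,\bs Y)\ge t(\bs a,\bs Y(\bs a))\qquad\text{for every }\bs a\in\mathcal Z,
\]
because $\bs Y-\bs Y(\bs a)$ has non-negative entries on units treated-in-$\bs Z$-but-control-in-$\bs a$ and non-positive entries on the reverse pattern, so Definition \ref{def:effect_increase} applies. I would first restate this inequality, which has already been verified in \eqref{eq:prop_effect}.

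Next I would introduce an independent copy $\bs A$ of $\bs Z$ drawn from the true assignment mechanism $\mathcal M(\bs\Gamma^\sta,\bs u^\sta)$ (which by assumption satisfies $\Gamma_{(k)}^\sta\le\Gamma_0$), and set $G_{\text{true},\bs Y}(c)=\pr(t(\bs A,\bs Y)\ge c\mid\bs Y)$ and $\tilde G(c)=\pr(T\ge c)$. Integrating the pointwise inequality over $\bs A$ and using both $\bs A\perp \bs Y$ and the distributional identity $t(\bs A,\bs Y(\bs A))\stackrel{d}{=}T$ yields $G_{\text{true},\bs Y}(c)\ge \tilde G(c)$ for every $c\in\mathbb R$, which is the asymptotic analogue of the finite-sample step used in the pair case. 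Meanwhile, under Condition \ref{cond:134as} the regularity hypotheses underlying Theorem \ref{thm:set_quantile_gamma} hold almost surely once $\bs Y$ plays the role of the fixed outcome vector; since conditional on $\bs Y$ the statistic $t(\bs A,\bs Y)=\sum_{i,j}A_{ij}q_{ij}(\bs Y)$ is a sum of bounded, set-specific, mutually-independent contributions driven by $\bs A$, applying Theorem \ref{thm:set_quantile_gamma} conditional on $\bs Y$ gives, a.s.\ in $\bs Y$ and for every fixed $c\ge 0$,
\[
\limsup_{I\to\infty}G_{\text{true},\bs Y}\bigl(\mu(\Gamma_0;k;\bs Y)+c\sigma(\Gamma_0;k;\bs Y)\bigr)\le 1-\Phi(c).
\]

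Finally I would combine these two ingredients at the fixed value $c=c_\alpha\equiv\max\{0,\Phi^{-1}(1-\alpha)\}$. On the event $\{\tilde p_{\Gamma_0;k}\le\alpha\}=\{T\ge\mu(\Gamma_0;k;\bs Y)+c_\alpha\sigma(\Gamma_0;k;\bs Y)\}$, the monotonicity of the tail $G_{\text{true},\bs Y}$ together with the two displayed inequalities gives
\[
\tilde G(T)\le G_{\text{true},\bs Y}(T)\le G_{\text{true},\bs Y}\bigl(\mu(\Gamma_0;k;\bs Y)+c_\alpha\sigma(\Gamma_0;k;\bs Y)\bigr),
\]
whose right-hand side is at most $\alpha+o_P(1)$ by the a.s.\ $\limsup$ bound above. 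Since $\tilde G(T)$ is stochastically dominated by $\Unif(0,1)$ by Lemma \ref{lemma:survival_dominant_by_uniform}, a standard $\epsilon$-argument yields $\limsup_{I\to\infty}\pr(\tilde p_{\Gamma_0;k}\le\alpha)\le\alpha$. The main obstacle I anticipate is the conversion of the conditional a.s.\ $\limsup$ produced by Theorem \ref{thm:set_quantile_gamma} into an $o_P(1)$ statement on an event that is itself defined by $\bs Y$; I would handle this by exploiting that $G_{\text{true},\bs Y}$ takes values in $[0,1]$, so that bounded convergence (or reverse Fatou) transfers the a.s.\ conditional bound into the unconditional in-probability bound required to close the argument.
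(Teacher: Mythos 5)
Your proof is correct, and it is organized differently from the paper's. You invoke Theorem \ref{thm:set_quantile_gamma} conditionally on $\bs Y$: treating $\bs Y$ as the fixed control outcome vector for the independent assignment $\bs A\sim\mathcal{M}(\bs\Gamma^\star,\bs u^\star)$, Fisher's null holds trivially for this auxiliary experiment, $\Gamma^\star_{(k)}\le\Gamma_0$ is given, and Condition \ref{cond:134as} supplies Conditions \ref{cond:clt}, \ref{cond:quantile_set_true_Ik} and \ref{cond:sen_Gamma_quantile_asymp} almost surely, so Theorem \ref{thm:set_quantile_gamma}'s conclusion---an a.s.\ limsup bound on $G_{\bs\Gamma^\star,\bs u^\star,\bs Y}$ evaluated at $\mu(\Gamma_0;k)+c_\alpha\sigma(\Gamma_0;k)$---is available, and reverse Fatou converts it to an $o_P(1)$ statement. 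Combined with the effect-increasing inequality \eqref{eq:prop_effect} and the $\Unif(0,1)$-dominance of $\tilde G(T)$ from Lemma \ref{lemma:survival_dominant_by_uniform}, a $\delta$-split finishes. The paper does not reuse Theorem \ref{thm:set_quantile_gamma} as a black box; it re-derives the bound by introducing the quantile function $F^{-1}$ of $G_{\bs\Gamma^\star,\bs u^\star,\bs Y}$, showing $\{F^{-1}(\alpha)+\mu_I\}/\sigma_I\convergeas\Phi^{-1}(\alpha)$ via the conditional CLT (Lemma \ref{lemma:match_clt}), converting the $\Unif(0,1)$-dominance of $G_{\bs\Gamma^\star,\bs u^\star,\bs Y}(T)$ into a tail bound for $(T-\mu_I)/\sigma_I$, and only then comparing $\mu_I+c\sigma_I$ with $\mu_I(\Gamma_0;k)+c\sigma_I(\Gamma_0;k)$ in a separate a.s.\ step. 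Both routes rest on the same ingredients; yours is more modular (a single appeal to Theorem \ref{thm:set_quantile_gamma} replaces the paper's steps 2--6) and stays in tail-probability language, while the paper detours through the quantile function. For completeness, spell out the conversion you flag at the end: for the fixed $c_\alpha=\max\{0,\Phi^{-1}(1-\alpha)\}$, write $\pr(\tilde p_{\Gamma_0;k}\le\alpha)\le\pr\{\tilde G(T)\le\alpha+\delta\}+\pr\{G_{\bs\Gamma^\star,\bs u^\star,\bs Y}(\mu(\Gamma_0;k)+c_\alpha\sigma(\Gamma_0;k))>\alpha+\delta\}$; the first term is at most $\alpha+\delta$ by Lemma \ref{lemma:survival_dominant_by_uniform}, the second vanishes by reverse Fatou applied to the a.s.\ limsup bound, and letting $\delta\downarrow 0$ closes the argument.
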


\begin{proof}[Proof of Lemma \ref{lemma:bounded_null_sets_effect_increasing}]
Define $\mu,\sigma,\mu(\Gamma_0;k),\sigma(\Gamma_0;k)$ analogously as in Section \ref{sec:sen_quan_set} but with the true control potential outcome vector $\bs{Y}(0)$ replaced by the imputed control potential outcome vector $\bs{Y}$ based on the sharp null ${\lxr H_{0}}$ (or equivalently by pretending $\bs{Y}$ as the true control potential outcome vector).
For notational clarity, we will write them as $\mu_I,\sigma_I,\mu_I(\Gamma_0;k),\sigma_I(\Gamma_0;k)$, using the subscript $I$ to emphasize that they are for the matched data with $I$ matched sets. 
Define $G(\cdot)$ and $G_{\bs{\Gamma}, \bs{u}, \bs{y}}$ and $\tilde G_{\bs\Gamma,\bs u,\bs Y(1),\bs Y(0)}$ the same as in the proof of Theorem \ref{thm:bounded_null_pair}(a), and let $\bs{u}^\star$ denote the true hidden confounding, so that the actual treatment assignment $\bs{Z}$ follows $\mathcal{M}(\bs{\Gamma}^\star, \bs{u}^\star)$.  
Let $F(x) = G_{\bs\Gamma^\star,\bs u^\star, \bs Y}(-x)$ be a distribution function, and $F^{-1}$ be the corresponding quantile function.  
{\lxr Recall}
that $T = t(\bs Z,\bs Y)$. 

First, we prove that $G_{\bs\Gamma^\sta,\bs u^\sta,\bs Y}(T)$ is stochastically larger than or equal to $\Unif(0,1)$. 
Because $t(\cdot, \cdot)$ is effect increasing, 
by the same logic as \eqref{eq:prop_effect2}, we can know that  
$G_{\bs\Gamma^\sta,\bs u^\sta,\bs Y}(T) \ge \tilde{G}_{\bs\Gamma^\sta,\bs u^\sta, \bs Y(1), \bs Y(0)}(T)$. 
Note that $\tilde{G}_{\bs\Gamma^\sta,\bs u^\sta, \bs Y(1), \bs Y(0)}(\cdot)$ is the true tail probability of  $T=t(\bs Z,\bs Y)$. 
From Lemma \ref{lemma:survival_dominant_by_uniform}, $\tilde{G}_{\bs\Gamma^\sta,\bs u^\sta, \bs Y(1), \bs Y(0)}(T)$, as well as  $G_{\bs\Gamma^\sta,\bs u^\sta,\bs Y}(T)$, must be stochastically larger than or equal to $\Unif(0,1)$. 

Second, we prove that, for any $\alpha\in (0,1)$ and $\eta>0$, 
\begin{align}\label{eq:bound_T_u_sigma}
    \pr\left\{
    \frac{T - \mu_I}{\sigma_I} > \Phi^{-1}(1-\alpha) + \eta
    \right\}
    \le 
    \alpha + \pr\left\{
    \left| 
    \frac{F^{-1}(\alpha) + \mu_I}{\sigma_I} - \Phi^{-1}(\alpha)
    \right| 
    > \eta
    \right\}. 
\end{align}
Because $G_{\bs\Gamma^\star,\bs u^\star, \bs Y}(T)$ must be stochastically larger than or equal to $\Unif(0,1)$, by the property of quantile functions, 
we can know that, 
for any $\alpha\in (0,1)$, 
\begin{align*}
    \alpha & \ge \pr\{ 
    G_{\bs\Gamma^\star,\bs u^\star, \bs Y}(T) \le \alpha
    \} 
    = 
    \pr\{ 
    F(-T) \le \alpha
    \} 
    \ge 
    \pr\{ 
    -T < F^{-1}(\alpha)
    \} 
    = 
    \pr\{ 
    T > -F^{-1}(\alpha)
    \}. 
\end{align*}
By some algebra, this further implies that 
\begin{align*}
    \alpha & \ge
    \pr\left\{
    \frac{T - \mu_I}{\sigma_I} > -\frac{F^{-1}(\alpha) + \mu_I}{\sigma_I}
    \right\}
    \\
    & \ge 
    \pr\left\{
    \frac{T - \mu_I}{\sigma_I} > - \frac{F^{-1}(\alpha) + \mu_I}{\sigma_I},
    \ 
    \left| \frac{F^{-1}(\alpha) + \mu_I}{\sigma_I} - \Phi^{-1}(\alpha) \right| \le \eta
    \right\}\\
    & \ge 
    \pr\left\{
    \frac{T - \mu_I}{\sigma_I} > -\Phi^{-1}(\alpha) + \eta,
    \ 
    \left| \frac{F^{-1}(\alpha) + \mu_I}{\sigma_I} - \Phi^{-1}(\alpha) \right| \le \eta
    \right\}
    \\
    & \ge 
    \pr\left\{
   \frac{T - \mu_I}{\sigma_I} > -\Phi^{-1}(\alpha) + \eta
    \right\}
    -
    \pr\left\{
    \left| \frac{F^{-1}(\alpha) + \mu_I}{\sigma_I} - \Phi^{-1}(\alpha) \right| > \eta
    \right\}.
\end{align*}
Note that $-\Phi^{-1}(\alpha) = \Phi^{-1}(1-\alpha)$. This immediately implies \eqref{eq:bound_T_u_sigma}.

Third, we prove that, for any  $\alpha\in (0,1)$ and $\eta>0$, 
as $I \rightarrow \infty$, 
\begin{align}\label{eq:conv_quant}
    \pr\left\{
    \left| \frac{F^{-1}(\alpha) + \mu_I}{\sigma_I} - \Phi^{-1}(\alpha) \right| > \eta
    \right\} \rightarrow 0. 
\end{align}
From Lemma A8, Condition \ref{cond:134as} and the property of quantile functions, 
for any $\alpha\in (0,1)$, 
with probability 1, 
$\{F^{-1}(\alpha) + \mu_I \}/\sigma_I$ converges to $\Phi^{-1}(\alpha)$. 
That is, for any $\alpha\in (0,1)$, 
$\{F^{-1}(\alpha) + \mu_I \}/\sigma_I \convergeas \Phi^{-1}(\alpha)$. This immediately implies \eqref{eq:conv_quant}.

Fourth, we prove that, for any $\alpha\in (0,1)$, 
\begin{align}\label{eq:limsup_T_mu_sigma}
    \limsup_{I\rightarrow \infty}
    \pr\left\{
    \frac{T - \mu_I}{\sigma_I} \ge \Phi^{-1}(1-\alpha)
    \right\} \le \alpha. 
\end{align}
Consider any given $\alpha\in (0,1)$. Let $\alpha'$ be any constant in $(\alpha,1)$, and $\eta = \{ \Phi^{-1}(1-\alpha)-\Phi^{-1}(1-\alpha') \}/2 > 0$. 
From \eqref{eq:bound_T_u_sigma}, we have 
\begin{align*}
    \pr\left\{
    \frac{T - \mu_I}{\sigma_I} \ge \Phi^{-1}(1-\alpha) 
    \right\}
    & = 
    \pr\left\{
    \frac{T - \mu_I}{\sigma_I} \ge \Phi^{-1}(1-\alpha') + 2 \eta
    \right\}
    \\
    & \le 
    \pr\left\{
    \frac{T - \mu_I}{\sigma_I} > \Phi^{-1}(1-\alpha') +  \eta
    \right\}
    \\
    & \le \alpha' + \pr\left\{
    \left| 
    \frac{F^{-1}(\alpha') + \mu_I}{\sigma_I} - \Phi^{-1}(\alpha')
    \right| 
    > \eta
    \right\}. 
\end{align*}
From \eqref{eq:conv_quant}, we then have
$\limsup_{I\rightarrow \infty}\pr\{
    (T - \mu_I)/\sigma_I \ge \Phi^{-1}(1-\alpha) 
\} \le \alpha'$. 
Because $\alpha'$ can take any value in $(\alpha, 1)$, 
we must have \eqref{eq:limsup_T_mu_sigma}.

Fifth, we prove that, for any $c\ge 0$,
\begin{align}\label{eq:asymp_real_larger}
    \pr\{\mathcal \mu_I+c\sigma_I > \mu_I(\Gamma_0;k)+c\sigma_I(\Gamma_0;k)\}\to 0 \quad \text{ as }I\to\infty.
\end{align}
From the proof of Theorem \ref{thm:set_quantile_gamma} and Condition \ref{cond:134as}, for any $c\ge 0$, 
we have, with probability 1, 
$\mu_I+c\sigma_I\le \mu_I(\Gamma_0;k)+c\sigma_I(\Gamma_0;k)$ for sufficiently large $I$. 
Thus, 
$
\pr( \bigcup_{m=1}^\infty\bigcap_{I=m}^\infty \{\mu_I+c\sigma_I\le \mu_I(\Gamma_0;k)+c\sigma_I(\Gamma_0;k)\} ) = 1. 
$
By Fatou's lemma, this further implies that 
\begin{align*}
    1 & = \pr\left( \bigcup_{m=1}^\infty\bigcap_{I=m}^\infty \{\mu_I+c\sigma_I\le \mu_I(\Gamma_0;k)+c\sigma_I(\Gamma_0;k)\} \right)
    \\
    & = \pr\left( \liminf_{I\rightarrow \infty} \{\mu_I+c\sigma_I\le \mu_I(\Gamma_0;k)+c\sigma_I(\Gamma_0;k)\} \right)\\
    & \le \liminf_{I\to\infty} \pr\left\{ \mu_I+c\sigma_I\le \mu_I(\Gamma_0;k)+c\sigma_I(\Gamma_0;k) \right\}. 
\end{align*}
Therefore, \eqref{eq:asymp_real_larger} must hold.

Sixth, we prove that, for any $c\ge 0$, 
\begin{align}\label{eq:limsup_T_mu_sigma_Gamma0_k}
    \limsup_{I\rightarrow \infty}
    \pr\left\{
    T \ge \mu_I(\Gamma_0;k)+ c \sigma_I(\Gamma_0;k)
    \right\} \le 1-\Phi(c). 
\end{align}
Consider any given  $c\ge 0$. 
Note that
\begin{align*}
    & \quad \ \pr\big\{
    T \ge \mu_I(\Gamma_0;k)+ c \sigma_I(\Gamma_0;k)
    \big\}\\
    & \le 
    \pr\big\{
    T \ge \mu_I(\Gamma_0;k)+ c \sigma_I(\Gamma_0;k), \ 
    \mu_I(\Gamma_0;k)+ c \sigma_I(\Gamma_0;k) \ge \mu_I + c\sigma_I
    \big\}
    \\
    & \quad \ + 
    \pr\big\{
    \mu_I(\Gamma_0;k)+ c \sigma_I(\Gamma_0;k) < \mu_I + c\sigma_I
    \big\}\\
    & \le 
    \pr(
    T \ge \mu_I + c\sigma_I
    ) + 
    \pr\{
    \mu_I(\Gamma_0;k)+ c \sigma_I(\Gamma_0;k) < \mu_I + c\sigma_I
    \}.
\end{align*}
From \eqref{eq:limsup_T_mu_sigma} and \eqref{eq:asymp_real_larger}, we can immediately derive \eqref{eq:limsup_T_mu_sigma_Gamma0_k}. 

Finally, we prove the asymptotic validity of the $p$-value $\tilde p_{\Gamma_0;k}$. 
From the proof of Corollary \ref{cor:sen_pval_set}, 
for any $\alpha\in (0,1)$, 
$
    \pr(\tilde p_{\Gamma_0;k} \le \alpha ) = 
    \pr\{
    T \ge \mu_I(\Gamma_0;k)+ c \sigma_I(\Gamma_0;k)
    \}, 
$
where $c = \max\{0, \Phi^{-1}(1-\alpha)\}$. 
From \eqref{eq:limsup_T_mu_sigma_Gamma0_k}, we then have 
\begin{align*}
    \limsup_{I\rightarrow \infty}\pr(\tilde p_{\Gamma_0;k} \le \alpha ) & = 
    \limsup_{I\rightarrow \infty} \pr\{
    T \ge \mu_I(\Gamma_0;k)+ c \sigma_I(\Gamma_0;k)
    \}
    \le 1 - \Phi(c) 
    \le 1 - \Phi(\Phi^{-1}(1-\alpha)) 
    \\
    & = \alpha. 
\end{align*}

From the above, Lemma \ref{lemma:bounded_null_sets_effect_increasing} holds.
\end{proof}

\begin{lemma}\label{lemma:bounded_null_sets_diff}
Consider a general matched study with set sizes $n_i \ge 2$ for all $i$. 
If the bounded null hypothesis $\overline{H}_0$ in \eqref{eq:bound_null} holds, 
the bias at rank $k$ is bounded by $\Gamma_0$, i.e., $\Gamma^\sta_{(k)}\le \Gamma_0$, 
Condition \ref{cond:134as} holds, 
and the test statistic $t(\cdot, \cdot)$ in \eqref{eq:test_stat} is differential increasing, 
then $\tilde p_{\Gamma_0;k}$ in Corollary \ref{cor:sen_pval_set} is still an asymptotic valid $p$-value, in the sense that $\limsup_{I\to\infty}\pr( \tilde{p}_{\Gamma_0;k} \le \alpha ) \le \alpha$ for $\alpha\in (0,1)$.
\end{lemma}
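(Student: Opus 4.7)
The plan is to mirror the proof of Lemma \ref{lemma:bounded_null_sets_effect_increasing} step by step, with the only substantive modification occurring in its opening step, where the effect increasing property is replaced by the differential increasing property. Specifically, let $G(\cdot)$, $G_{\bs\Gamma, \bs u, \bs y}(\cdot)$, $\tilde G_{\bs\Gamma, \bs u, \bs Y(1), \bs Y(0)}(\cdot)$, and $\bs u^\star$ be defined as in the proof of Theorem \ref{thm:bounded_null_pair}, and let $\mu_I$, $\sigma_I$, $\mu_I(\Gamma_0;k)$, $\sigma_I(\Gamma_0;k)$ be defined as in the proof of Lemma \ref{lemma:bounded_null_sets_effect_increasing} (computed with $\bs Y$ in place of $\bs Y(0)$). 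I would define the distribution function $F(x) = G_{\bs\Gamma^\star, \bs u^\star, \bs Y}(-x)$ and aim to show that $G_{\bs\Gamma^\star, \bs u^\star, \bs Y}(T)$ is stochastically larger than or equal to $\Unif(0,1)$, which is the crucial input that drives the rest of the argument.

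For this first step, I would invoke the inequality \eqref{eq:tail_prob_ineq_diff} from the proof of Theorem \ref{thm:bounded_null_pair}(b), applied with $\bs\Gamma=\bs\Gamma^\star$ and $\bs u = \bs u^\star$, to conclude
\[
G_{\bs\Gamma^\star, \bs u^\star, \bs Y}(T)\ = \ G_{\bs\Gamma^\star, \bs u^\star, \bs Y}\bigl(t(\bs Z,\bs Y)\bigr) \ \ge \ G_{\bs\Gamma^\star, \bs u^\star, \bs Y(0)}\bigl(t(\bs Z,\bs Y(0))\bigr).
\]
Since $\bs Y(0)$ is treated as fixed under the finite-population inference and $\bs Z \sim \mathcal M(\bs\Gamma^\star, \bs u^\star)$, the function $G_{\bs\Gamma^\star, \bs u^\star, \bs Y(0)}(\cdot)$ is precisely the tail probability function of the true randomization distribution of $t(\bs Z, \bs Y(0))$. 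Hence Lemma \ref{lemma:survival_dominant_by_uniform} gives that $G_{\bs\Gamma^\star, \bs u^\star, \bs Y(0)}(t(\bs Z, \bs Y(0)))$, and therefore $G_{\bs\Gamma^\star, \bs u^\star, \bs Y}(T)$ as well, is stochastically larger than or equal to $\Unif(0,1)$.

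Once this ingredient is in hand, the remaining steps of the proof of Lemma \ref{lemma:bounded_null_sets_effect_increasing} transfer verbatim, since they never again invoke the effect increasing property. In particular, the derivation of the quantile-type bound $\pr\{(T-\mu_I)/\sigma_I > \Phi^{-1}(1-\alpha) + \eta\} \le \alpha + \pr\{|(F^{-1}(\alpha) + \mu_I)/\sigma_I - \Phi^{-1}(\alpha)| > \eta\}$ uses only the definition of $F$ and the stochastic dominance just established; the almost-sure convergence $(F^{-1}(\alpha) + \mu_I)/\sigma_I \convergeas \Phi^{-1}(\alpha)$ follows from Condition \ref{cond:134as} combined with Lemma \ref{lemma:match_clt} and the property of quantile functions (exactly as in the effect increasing case, because Condition \ref{cond:134as} concerns the distribution of $t(\bs A, \bs Y)$, which is what enters $F$); and the comparison $\pr\{\mu_I + c\sigma_I > \mu_I(\Gamma_0;k) + c\sigma_I(\Gamma_0;k)\} \to 0$ is a purely deterministic consequence of Condition \ref{cond:134as} and the optimization bounds derived in the proof of Theorem \ref{thm:set_quantile_gamma}, via a Fatou-lemma argument. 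Combining these yields $\limsup_I \pr\{T \ge \mu_I(\Gamma_0;k) + c\sigma_I(\Gamma_0;k)\} \le 1-\Phi(c)$, from which $\limsup_I \pr(\tilde p_{\Gamma_0;k} \le \alpha) \le \alpha$ is immediate by setting $c = \max\{0, \Phi^{-1}(1-\alpha)\}$ as in the proof of Corollary \ref{cor:sen_pval_set}.

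There is no real obstacle beyond correctly formulating the first step: the entire difficulty of the differential increasing case has already been absorbed into Theorem \ref{thm:bounded_null_pair}(b), whose inequality \eqref{eq:tail_prob_ineq_diff} delivers the needed uniform-in-$(\bs\Gamma, \bs u)$ comparison between the tail probability computed on the imputed outcome vector $\bs Y$ and that on the true control vector $\bs Y(0)$. The mildest point of caution is to make sure that the stochastic dominance of $G_{\bs\Gamma^\star, \bs u^\star, \bs Y(0)}(t(\bs Z, \bs Y(0)))$ over $\Unif(0,1)$ is invoked under the true joint law of $\bs Z$ (and not under any sensitivity model pretending $\bs Y$ is fixed), so that Lemma \ref{lemma:survival_dominant_by_uniform} applies correctly.
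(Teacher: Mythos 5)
Your proposal is correct and follows essentially the same argument as the paper's proof: the paper likewise establishes the stochastic dominance $G_{\bs\Gamma^\star,\bs u^\star, \bs Y}(T) \ge G(t(\bs Z,\bs Y(0)))$ via the differential-increasing inequality \eqref{eq:tail_prob_ineq_diff} (rather than \eqref{eq:prop_effect2} as in the effect-increasing case), invokes Lemma \ref{lemma:survival_dominant_by_uniform} on $G(t(\bs Z,\bs Y(0)))$, and then notes the remainder transfers verbatim from the proof of Lemma \ref{lemma:bounded_null_sets_effect_increasing}. Your explicit remark about keeping the stochastic dominance under the true law of $\bs Z$ is a good point of care, and is consistent with how the paper sets things up.
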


\begin{proof}[\bf Proof of Lemma \ref{lemma:bounded_null_sets_diff}]
We adopt the same notation as in the proof of Lemma \ref{lemma:bounded_null_sets_effect_increasing}. 

We first prove that $G_{\bs\Gamma^\star,\bs u^\star, \bs Y}(T)$ is stochastically larger than or equal to $\Unif(0,1)$.
By the same logic as \eqref{eq:tail_prob_ineq_diff}, we have 
\begin{align*}
    G_{\bs\Gamma^\star,\bs u^\star, \bs Y}(T)
    \ge 
    G_{\bs\Gamma^\star,\bs u^\star,\bs Y(0)}(t(\bs Z,\bs Y(0))) 
    = 
    G(t(\bs Z,\bs Y(0))). 
\end{align*}
Because $G(t(\bs Z,\bs Y(0)))$ is stochastically larger than or equal to $\Unif(0,1)$, $G_{\bs\Gamma^\star,\bs u^\star, \bs Y}(T)$ must also be stochastically larger than or equal to $\Unif(0,1)$.

Following the same steps as in the proof of Lemma \ref{lemma:bounded_null_sets_effect_increasing}, 
we can then derive Lemma \ref{lemma:bounded_null_sets_diff}. 
\end{proof}

\begin{proof}[\bf Proof of Theorem \ref{thm:bounded_null_sets_effect_diff_increasing}]
From the property of convergence in probability \citep[see, e.g.,][Theorem 2(d)]{ferguson2017course}, 
if Condition \ref{134in_prob} holds, 
then
for any subsequence $\{I_m,m=1,2,\cdots\}$, there exists a further subsequence $\{I_{m_j},j=1,2,\cdots\}$ such that Condition \ref{cond:134as} holds. 
From Lemmas \ref{lemma:bounded_null_sets_effect_increasing} and \ref{lemma:bounded_null_sets_diff}, we then have 
$\limsup_{j\to\infty}\pr( \tilde{p}_{\Gamma_0;k;I_{m_j}} \le \alpha ) \le \alpha$ for $\alpha\in (0,1)$.

From the above, for any subsequence $\{I_m,m=1,2,\cdots\}$, there exists a further subsequence $\{I_{m_j},j=1,2,\cdots\}$ such that $\limsup_{j\to\infty}\pr( \tilde{p}_{\Gamma_0;k;I_{m_j}} \le \alpha ) \le \alpha$ for $\alpha\in (0,1)$. 
This must imply that $\limsup_{I\to\infty}\pr( \tilde{p}_{\Gamma_0;k} \le \alpha ) \le \alpha$ for $\alpha\in (0,1)$. 
Therefore, Theorem \ref{thm:bounded_null_sets_effect_diff_increasing} holds. 
\end{proof}

To prove Theorem \ref{thm:bound_joint_ci_set}, we introduce the following condition.

\begin{condition}\label{cond:bound_joint_ci_as}
With the true control potential outcome vector $\bs{Y}(0)$ replaced by the imputed control potential outcome vector $\bs{Y}$ based on the sharp null ${\lxr H_{0}}$ (or equivalently pretending $\bs{Y}$ as the true control potential outcome vector), 
Condition \ref{cond:set_simultaneous} holds 
{\lxr almost surely},
in the sense that the {\lxr convergences} in these three conditions hold almost surely as $I \rightarrow \infty$. 
\end{condition}

\begin{proof}[\bf Proof of Theorem \ref{thm:bound_joint_ci_set}] 
To prove Theorem \ref{thm:bound_joint_ci_set}, it suffices to prove that, for any $\alpha\in (0,1)$, the confidence set in \eqref{eq:set_joint_set} covers the true hidden bias vector $\bs{\Gamma}^\sta$ with probability at least $1-\alpha$ asymptotically, i.e., 
\begin{align}\label{eq:bound_null_joint_ci_set}
    \liminf_{I \rightarrow \infty} \pr\Big( \bs{\Gamma}^\sta \in \bigcap_{\Gamma_0, k: \vardbtilde{p}_{\Gamma_0;k} \le \alpha} \mathcal{S}_{\Gamma_0;k}^\complement \Big) \ge 1-\alpha.
\end{align}
By the same logic as the proof of Theorem \ref{thm:bounded_null_sets_effect_diff_increasing}, 
it suffices to prove \eqref{eq:bound_null_joint_ci_set} under Condition \ref{cond:bound_joint_ci_as}. 
Below we will assume that Condition \ref{cond:bound_joint_ci_as} holds. 

First, we prove that, for any $c\ge 0$,
\begin{align}\label{eq:joint_bound_mean_sd}
    \pr\Big(
    \inf_{\underline{k}_I \le k\le I, \Gamma_0 \in [\Gamma_{(k)}^\sta, \overline{\Gamma}_I]} \left\{ \mu(\Gamma_0;k) + c \sigma(\Gamma_0; k) \right\}
    < \mu + c\sigma
    \Big) \rightarrow 0 \quad \text{as } I \rightarrow \infty. 
\end{align}
From Condition \ref{cond:bound_joint_ci_as} and Lemma \ref{lemma:set_joint_ci}, with probability $1$, 
$\inf_{\underline{k}_I \le k\le I, \Gamma_0 \in [\Gamma_{(k)}^\sta, \overline{\Gamma}_I]} \left\{ \mu(\Gamma_0;k) + c \sigma(\Gamma_0; k) \right\}$ 
{\lxr is greater than or equal to}
$\mu + c\sigma$ when $I$ is sufficiently large. 
That is, 
\begin{align*}
    \pr\Big( \bigcup_{m=1}^\infty\bigcap_{I=m}^\infty 
    \Big\{ \inf_{\underline{k}_I \le k\le I, \Gamma_0 \in [\Gamma_{(k)}^\sta, \overline{\Gamma}_I]} \left\{ \mu(\Gamma_0;k) + c \sigma(\Gamma_0; k) \right\} \ge \mu + c\sigma \Big\} \Big) = 1. 
\end{align*}
From Fatou's lemma and by the same logic as the proof of \eqref{eq:asymp_real_larger},
we can then derive \eqref{eq:joint_bound_mean_sd}. 

Second, from the proof of Theorem \ref{thm:set_joint_ci}, 
for any $\alpha\in (0,1)$, 
if $\bs{\Gamma}^\sta \in \bigcup_{\Gamma_0, k: \vardbtilde{p}_{\Gamma_0;k} \le \alpha} \mathcal{S}_{\Gamma_0;k}$, then we must have 
$
    T \ge \inf_{\underline{k}_I \le k\le I, \Gamma_0 \in [\Gamma_{(k)}^\sta, \overline{\Gamma}_I]} \left\{ \mu(\Gamma_0;k) + c \sigma(\Gamma_0; k) \right\} 
$
with $c = \max\{0, \Phi^{-1}(1-\alpha)\}$. 
This then implies that 
\begin{align*}
    & \quad \ \pr\Big( \bs{\Gamma}^\sta \in \bigcup_{\Gamma_0, k: \vardbtilde{p}_{\Gamma_0;k} \le \alpha} \mathcal{S}_{\Gamma_0;k} \Big)
    \\
    & \le 
    \pr\Big( 
    T \ge \inf_{\underline{k}_I \le k\le I, \Gamma_0 \in [\Gamma_{(k)}^\sta, \overline{\Gamma}_I]} \left\{ \mu(\Gamma_0;k) + c \sigma(\Gamma_0; k) \right\}
    \Big)
    \\
    & 
    \le 
    \pr\Big( 
    T \ge \inf_{\underline{k}_I \le k\le I, \Gamma_0 \in [\Gamma_{(k)}^\sta, \overline{\Gamma}_I]} \left\{ \mu(\Gamma_0;k) + c \sigma(\Gamma_0; k) \right\}, 
    \inf_{\underline{k}_I \le k\le I, \Gamma_0 \in [\Gamma_{(k)}^\sta, \overline{\Gamma}_I]} \left\{ \mu(\Gamma_0;k) + c \sigma(\Gamma_0; k) \right\}
    \ge \mu + c\sigma
    \Big)\\
    & \quad \ + \pr\Big(
    \inf_{\underline{k}_I \le k\le I, \Gamma_0 \in [\Gamma_{(k)}^\sta, \overline{\Gamma}_I]} \left\{ \mu(\Gamma_0;k) + c \sigma(\Gamma_0; k) \right\}
    < \mu + c\sigma
    \Big)\\
    & \le 
    \pr( 
    T \ge \mu + c\sigma
    )
    + \pr\Big(
    \inf_{\underline{k}_I \le k\le I, \Gamma_0 \in [\Gamma_{(k)}^\sta, \overline{\Gamma}_I]} \left\{ \mu(\Gamma_0;k) + c \sigma(\Gamma_0; k) \right\}
    < \mu + c\sigma
    \Big).
\end{align*}
{\lxr By the same logic as} \eqref{eq:limsup_T_mu_sigma} in the proof of Lemma \ref{lemma:bounded_null_sets_effect_increasing} and {\lxr from} \eqref{eq:joint_bound_mean_sd}, we must have \begin{align*}
    \limsup_{I\rightarrow \infty}
    \pr\Big( \bs{\Gamma}^\sta \in \bigcup_{\Gamma_0, k: \vardbtilde{p}_{\Gamma_0;k} \le \alpha} \mathcal{S}_{\Gamma_0;k} \Big)
    \le 1- \Phi(c) \le 1- \Phi(\Phi^{-1}(1-\alpha)) = \alpha. 
\end{align*}
This immediately implies \eqref{eq:bound_null_joint_ci_set}. 

From the above, Theorem \ref{thm:bound_joint_ci_set} holds. 
\end{proof}

\section{Technical Details for Remark \ref{rmk:match}}\label{sec:tech_match}

Below we give the technical details for Remark \ref{rmk:match}. 
Specifically, we will show that the model in \eqref{eq:true_treat_assign_mech} can be justified when we have independently sampled units in the original data prior to matching and the matching is exact.

Let $\{(Z_l, \bs{X}_l, Y_l(1), Y_l(0)): l=1,2,\ldots, L\}$ be the samples in the original dataset prior to matching. 
Assume that, given $\mathcal{F} \equiv \{(\bs{X}_l, Y_l(1), Y_l(0)): l=1,2,\ldots, L\}$, the treatment assignments $Z_l$'s are mutually independent, 
and define 
$
\pi_l \equiv \Pr(Z_l = 1\mid \mathcal{F}) 
$
for all $l$.

For simplicity, we consider the pair matching; similar arguments will work for matching with multiple controls as well. 
We first divide the units into $G$ groups so that units within the same group have the same covariate value and units in different groups have distinct covariate values, i.e., 
$\bs{X}_l = \bs{X}_k$ for any units $l,k$ within the same group
and $\bs{X}_l \ne \bs{X}_k$ for any units $l,k$ in different groups. 
We then perform random pair matching within each group. 
Algorithmically, 
within each group,
we can randomly permute the order of treated and control units, and match them based on their order within their permutations. 
That is, after the permutation, the first treated unit is matched with the first control unit, 
the second treated unit is matched with the second control unit, and so on. 
For each group, some control (or treated) units will not be matched if the group has more control (or treated) units. 
Finally, we obtain the matched dataset, and use $\mathcal{M}$ to denote the whole matching structure, which contains the information for the indices of units within each matched pair. 
Let 
$\bs{Z}_{\mathcal{M}}$ denote the treatment assignment vector for all matched units, 
and $\bs{Z}_{\mathcal{M}^\complement}$ denote the treatment assignment vector for all unmatched units. Moreover, 
we say  $\bs{Z}_{\mathcal{M}}\in \mathcal{E}$ if there is exactly one treated unit within each matched pair under $\mathcal{M}$; equivalently, $\mathcal{E}$ is the set of all possible assignments for matched units such that there is exactly one treated unit within each matched pair. 
Below we will investigate the conditional distribution of the treatment assignments for matched units, given the matching structure $\mathcal{M}$,  the potential outcomes and covariates for all units $\mathcal{F}$, the treatment assignments for unmatched units, and that each matched pair has only one treated unit, i.e., 
\begin{align}\label{eq:cond_match}
    \bs{Z}_{\mathcal{M}} \mid \mathcal{M}, \mathcal{F}, \bs{Z}_{\mathcal{M}^\complement},  \bs{Z}_{\mathcal{M}}\in \mathcal{E}.
\end{align}

To investigate \eqref{eq:cond_match}, we first investigate the conditional distribution of a particular matching structure given the treatment assignments for all units, i.e.,
\begin{align}\label{eq:match_given_Z}
    \Pr(\mathcal{M} = \mathpzc{m} \mid \bs{Z}=\bs{z}, \mathcal{F}).
\end{align}
By our matching algorithm described above, the probability in \eqref{eq:match_given_Z} takes value zero if $\bs{z}_{\mathpzc{m}} \notin \mathcal{E}$ or $\mathpzc{m}^\complement$ contains treated and control units with the same covariate value (note that in our matching algorithm we try to get as many exactly matched pairs as possible). 
Let {\lxr $\mathcal{Z}(\mathpzc{m})$} be the set of treatment assignment vector $\bs{z}$ such that (i) $\bs{z}_{\mathpzc{m}}\in \mathcal{E}$ and (ii) $\mathpzc{m}^\complement$ does not contain treated and control units with the same covariate value. 
It is not difficult to see that
$
    \Pr(\mathcal{M} = \mathpzc{m} \mid \bs{Z}=\bs{z}, \mathcal{F}) = \Pr(\mathcal{M} = \mathpzc{m} \mid \bs{Z}=\bs{z}', \mathcal{F})
$
for any $\bs{z}, \bs{z}'\in \mathcal{Z}(\mathpzc{m})$ with $\bs{z}_{\mathpzc{m}^\complement} = \bs{z}_{\mathpzc{m}^\complement}'$, 
due to the random pairing in our matching algorithm described before. 

We then investigate the conditional distribution in \eqref{eq:cond_match}, i.e., the following conditional probability for any given $\bs{z}$ and $\mathpzc{m}$: 
\begin{align}\label{eq:cond_z_m}
    \Pr( \bs{Z}_{\mathcal{M}} = \bs{z}_{\mathpzc{m}} \mid \mathcal{M} = \mathpzc{m}, \mathcal{F}, \bs{Z}_{\mathcal{M}^\complement} = \bs{z}_{\mathpzc{m}^\complement},  \bs{Z}_{\mathcal{M}} \in \mathcal{E} ).
\end{align}
Obviously, \eqref{eq:cond_z_m} is well-defined and can take nonzero value only if $\bs{z}\in \mathcal{Z}(\mathpzc{m})$. 
Below we will assume this is true. 
Note that 
\begin{align*}
    & \quad \ \Pr( \bs{Z}_{\mathcal{M}} = \bs{z}_{\mathpzc{m}}, \mathcal{M} = \mathpzc{m},  \bs{Z}_{\mathcal{M}^\complement} = \bs{z}_{\mathpzc{m}^\complement},  \bs{Z}_{\mathcal{M}} \in \mathcal{E} \mid \mathcal{F} )
    \\
    & =  
    \Pr( \bs{Z} = \bs{z}, \mathcal{M} = \mathpzc{m} \mid \mathcal{F} )
    = \Pr( \mathcal{M} = \mathpzc{m} \mid \bs{Z} = \bs{z},  \mathcal{F} ) 
    \Pr( \bs{Z} = \bs{z} \mid \mathcal{F} ),
\end{align*}
where the first equality holds due to the fact that $\bs{z}\in \mathcal{Z}(\mathpzc{m})$, 
and 
\begin{align*}
    \Pr( \mathcal{M} = \mathpzc{m},  \bs{Z}_{\mathcal{M}^\complement} = \bs{z}_{\mathpzc{m}^\complement},  \bs{Z}_{\mathcal{M}} \in \mathcal{E} \mid \mathcal{F} )
    & =  
    \sum_{\bs{z}'\in \mathcal{Z}(\mathpzc{m}): \bs{z}_{\mathpzc{m}^\complement}' = \bs{z}_{\mathpzc{m}^\complement}}
    \Pr(\bs{Z} = \bs{z}', \mathcal{M} = \mathpzc{m} \mid \mathcal{F} )
    \\
    & = \sum_{\bs{z}'\in \mathcal{Z}(\mathpzc{m}): \bs{z}_{\mathpzc{m}^\complement}' = \bs{z}_{\mathpzc{m}^\complement}} 
    \Pr( \mathcal{M} = \mathpzc{m} \mid \bs{Z} = \bs{z}',  \mathcal{F} ) 
    \Pr( \bs{Z} = \bs{z}' \mid \mathcal{F} ).
\end{align*}
We then have the following equivalent forms for \eqref{eq:cond_z_m}: 
\begin{align*}
    & \quad \ 
    \Pr( \bs{Z}_{\mathcal{M}} = \bs{z}_{\mathpzc{m}} \mid \mathcal{M} = \mathpzc{m}, \mathcal{F}, \bs{Z}_{\mathcal{M}^\complement} = \bs{z}_{\mathpzc{m}^\complement},  \bs{Z}_{\mathcal{M}} \in \mathcal{E} )
    \\
    & = 
    \frac{\Pr( \bs{Z}_{\mathcal{M}} = \bs{z}_{\mathpzc{m}}, \mathcal{M} = \mathpzc{m},  \bs{Z}_{\mathcal{M}^\complement} = \bs{z}_{\mathpzc{m}^\complement},  \bs{Z}_{\mathcal{M}} \in \mathcal{E} \mid \mathcal{F} )}{\Pr( \mathcal{M} = \mathpzc{m},  \bs{Z}_{\mathcal{M}^\complement} = \bs{z}_{\mathpzc{m}^\complement},  \bs{Z}_{\mathcal{M}} \in \mathcal{E} \mid \mathcal{F} )}
    \\
    & = 
    \frac{
    \Pr( \mathcal{M} = \mathpzc{m} \mid \bs{Z} = \bs{z},  \mathcal{F} ) 
    \Pr( \bs{Z} = \bs{z} \mid \mathcal{F} )
    }{
    \sum_{\bs{z}'\in \mathcal{Z}(\mathpzc{m}): \bs{z}_{\mathpzc{m}^\complement}' = \bs{z}_{\mathpzc{m}^\complement}} 
    \Pr( \mathcal{M} = \mathpzc{m} \mid \bs{Z} = \bs{z}',  \mathcal{F} ) 
    \Pr( \bs{Z} = \bs{z}' \mid \mathcal{F} )
    }
    \\
    & = 
    \frac{
    \Pr( \bs{Z} = \bs{z} \mid \mathcal{F} )
    }{
    \sum_{\bs{z}'\in \mathcal{Z}(\mathpzc{m}): \bs{z}_{\mathpzc{m}^\complement}' = \bs{z}_{\mathpzc{m}^\complement}} 
    \Pr( \bs{Z} = \bs{z}' \mid \mathcal{F} )
    },
\end{align*}
where the last equality holds due to the discussion before {\lxr for \eqref{eq:match_given_Z}}. 
By our assumption on the treatment assignments for units in the original dataset before matching, 
we can further write \eqref{eq:cond_z_m} as
\begin{align*}
    & \quad \ \Pr( \bs{Z}_{\mathcal{M}} = \bs{z}_{\mathpzc{m}} \mid \mathcal{M} = \mathpzc{m}, \mathcal{F}, \bs{Z}_{\mathcal{M}^\complement} = \bs{z}_{\mathpzc{m}^\complement},  \bs{Z}_{\mathcal{M}} \in \mathcal{E} )
    \\
    & = 
    \frac{
    \Pr( \bs{Z} = \bs{z} \mid \mathcal{F} )
    }{
    \sum_{\bs{z}'\in \mathcal{Z}(\mathpzc{m}): \bs{z}_{\mathpzc{m}^\complement}' = \bs{z}_{\mathpzc{m}^\complement}} 
    \Pr( \bs{Z} = \bs{z}' \mid \mathcal{F} )
    }
    =
    \frac{
    \Pr( \bs{Z}_{\mathpzc{m}} = \bs{z}_{\mathpzc{m}} \mid \mathcal{F} )
    }{
    \sum_{\bs{z}'\in \mathcal{Z}(\mathpzc{m}): \bs{z}_{\mathpzc{m}^\complement}' = \bs{z}_{\mathpzc{m}^\complement}} 
    \Pr( \bs{Z}_{\mathpzc{m}} = \bs{z}'_{\mathpzc{m}} \mid \mathcal{F} )
    }\\
    & = 
    \frac{
    \prod_{i=1}^I \prod_{j=1}^2 \pi_{ij}^{z_{ij}} (1-\pi_{ij})^{1-z_{ij}}
    }{
    \prod_{i=1}^I \{\pi_{i1}(1-\pi_{i2})+\pi_{i2} (1-\pi_{i1})\}
    }
    =
    \prod_{i=1}^I\frac{
    \prod_{j=1}^2 \{\pi_{ij}/(1-\pi_{ij})\}^{z_{ij}}
    }{
    \pi_{i1}/(1-\pi_{i1})+\pi_{i2}/(1-\pi_{i2})
    }\\
    & = 
    \prod_{i=1}^I
    \prod_{j=1}^2
    \left\{
    \frac{
    \pi_{ij}/(1-\pi_{ij})
    }{
    \pi_{i1}/(1-\pi_{i1})+\pi_{i2}/(1-\pi_{i2})
    }
    \right\}^{z_{ij}},
\end{align*}
where we use $ij$ with $1\le i\le I$ and $j=1,2$ to index the matched units under $\mathpzc{m}$. 
Obviously, the treatment assignments for matched units satisfy model \eqref{eq:true_treat_assign_mech}, with 
$p_{ij}^\star = \pi_{ij}/(1-\pi_{ij})/\{\pi_{i1}/(1-\pi_{i1})+\pi_{i2}/(1-\pi_{i2})\}$. 
Consequently, the hidden bias for each matched pair $i$ has the following forms: 
\begin{align*}
    \Gamma_i^\star = \frac{\max_j p_{ij}^\star}{ \min_k p_{ik}^\star}
    = 
    \frac{\max_j \pi_{ij}/(1-\pi_{ij})}{ \min_k \pi_{ik}/(1-\pi_{ik})}, \quad (1\le i\le I).
\end{align*}

Below we give some additional remarks on the 
{\lxr treatment probabilities}
$\pi_l$'s. 
If we further assume that the units are independent samples, in the sense that $(Z_l, \bs{X}_l, Y_l(1), Y_l(0))$'s are mutually independent across all $s$, 
then the 
{\lxr treatment probabilities}
simply to 
\begin{align*}
    \pi_l = \Pr(Z_l = 1 \mid \mathcal{F}) = \Pr(Z_l = 1 \mid \bs{X}_l, Y_l(1), Y_l(0) ), 
    \quad (1\le l \le L). 
\end{align*}
If we further assume that $(Z_l, \bs{X}_l, Y_l(1), Y_l(0))$'s are 
{\lxr i.i.d.~}samples from some superpopulation, then there must exist a function $\tilde{\lambda}(\cdot)$ such that 
\begin{align*}
    \pi_l = \Pr(Z_l = 1 \mid \mathcal{F}) = \Pr(Z_l = 1 \mid \bs{X}_l, Y_l(1), Y_l(0) ) = \tilde{\lambda}(\bs{X}_l, Y_l(1), Y_l(0)), 
    \quad (1\le l \le L). 
\end{align*}
If the covariates contain all the confounding, in the sense that $Z_l \ind (Y_l(1), Y_l(0)) \mid \bs{X}_l$, then there must exist a function $\lambda(\cdot)$ such that 
\begin{align*}
    \pi_l = \Pr(Z_l = 1 \mid \mathcal{F}) = \Pr(Z_l = 1 \mid \bs{X}_l, Y_l(1), Y_l(0) ) = \tilde{\lambda}(\bs{X}_l, Y_l(1), Y_l(0)) = \lambda(\bs{X}_l), 
    \quad (1\le l \le L). 
\end{align*}
Consequently, when matching is exact, units within the same matched set will have the same 
{\lxr treatment probabilities}, under which the hidden biases are all $1$ and the matched observational study reduces to a matched randomized experiment.
}

\section{Simulation studies}\label{sec:simulation}

{\add
In this section, we conduct simulations to investigate the validity and power of the proposed sensitivity analysis based on quantiles of hidden biases. 
Specifically, 
we will first conduct simulations under the situation where Fisher's null of no effect holds, to investigate the finite-sample performance of the asymptotic $p$-value in Corollary \ref{cor:sen_pval_set}.}
Following \citet{rosenbaum2010book}, we {\add then} conduct simulations under the ``favorable'' situation where there is nonzero causal effect but no unmeasured confounding, under which we hope the inferred causal effects to be significant and insensitive to 
unmeasured confounding.
In particular, 
we will {\lxr investigate}
{\add the power of our sensitivity analysis}
when we use different test statistics or 
when
the 
causal effects have different patterns. 

{\add 
\subsection{Sensitivity analysis when Fisher's null of no effect holds}

We investigate the finite-sample performance of the asymptotic $p$-value in Corollary \ref{cor:sen_pval_set} when Fisher's null $H_0$ of no effect in \eqref{eq:H0} holds. 
We consider a matched study with $I$ matched sets of equal sizes $n_1 = n_2 = \ldots = n_I \equiv n$ for $n=2$ and $3$, respectively.  
We consider the following two models for generating the potential outcomes: 
\begin{align}\label{eq:gen_outcome}
     \text{(i) normal: } Y_{ij}(1) = Y_{ij}(0) \iidsim \mathcal{N}(0, 1), 
     \quad 
     \text{(ii) binary: } Y_{ij}(1) = Y_{ij}(0) \iidsim \text{Bernoulli}(0.5), 
\end{align}
and the following three models for generating the true hidden biases $\bs{\Gamma}^\sta$: 
\begin{align}\label{eq:gen_hidden_bias}
    \text{(i)} & \text{ constant: } \Gamma_i^\sta = 5 \text{ for all } i, 
    \quad 
    \text{(ii) lognormal: } \Gamma_i^\sta \iidsim \max\{1, \text{Lognormal}(1.5, 0.2^2)\}, 
    \nonumber
    \\
    \text{(iii)} & \text{ outlier: } \Gamma_i^\sta = 500 \text{ for $5\%$ of the $I$ matched sets}, \text{ and } \Gamma_i^\sta = 5 \text{ for remaining sets}. 
\end{align}
We generate the hidden confounding $\bs{u}$ in the following way: 
\begin{align}\label{eq:gen_hidden_confound}
    u_{ij}^\sta = \frac{Y_{ij}(0) - \min_k Y_{ik}(0)}{\max_k Y_{ik}(0) - \min_k Y_{ik}(0)}, 
    \quad (1\le i \le I; 1\le j \le n).
\end{align}
To ensure the quantities in \eqref{eq:gen_hidden_confound} is well-defined, 
when generating binary potential outcomes as in (ii) in \eqref{eq:gen_outcome}, 
we will randomly select two units within each matched set and let their potential outcomes take values $0$ and $1$, respectively. 
Note that the binary potential outcomes in \eqref{eq:gen_outcome} are carefully designed so that our worst-case consideration can be achieved by the actual treatment assignment mechanisms. 
To mimic the finite population inference, the potential outcomes and hidden biases, once generated, are kept fixed, and the treatment assignments are randomly drawn from $\mathcal{M}(\bs{\Gamma}^\sta, \bs{u}^\sta)$ as defined in Definition \ref{def:sen_model}. 

Figures \ref{fig:pair_I200} and  \ref{fig:triple_I200} 
show the empirical distribution functions of the asymptotic $p$-value in Corollary \ref{cor:sen_pval_set} under various models for generating the potential outcomes and hidden biases as described in \eqref{eq:gen_outcome} and \eqref{eq:gen_hidden_bias}, over 500 simulated treatment assignments from $\mathcal{M}(\bs{\Gamma}^\sta, \bs{u}^\sta)$, 
for matched pairs ($n=2$) and matched triples ($n=3$), 
respectively, with $I=200$. 
In both Figures, we conduct sensitivity analysis for three quantiles of hidden biases: $\Gamma_{(I)}^\sta$ (black lines), $\Gamma_{(0.95I)}^\sta$ (red lines) and $\Gamma_{(0.9I)}^\sta$ (green lines), respectively, 
where the bounds on these quantiles of hidden biases are the true ones. 
From Figures \ref{fig:pair_I200} and  \ref{fig:triple_I200}, in most cases, the empirical distribution of the $p$-value is stochastically larger than or equal to the uniform distribution on $(0,1)$. 
In Figures \ref{fig:pair_I200}(d), \ref{fig:pair_I200}(f) and \ref{fig:triple_I200}(d), there is slight type-I inflation, especially when we consider significance level close to $0.5$. 
We therefore further conduct simulations in these three settings with the number of matched sets increased to $I=2000$. 
As shown in Figure \ref{fig:I2000}, with larger sample size, the empirical distribution of the $p$-value becomes closer to (or is still stochastically larger than or equal to) the uniform distribution on $(0,1)$. 

From the above, the asymptotic $p$-value in Corollary \ref{cor:sen_pval_set}
{\lxr reasonably controls the type-I error in finite samples.}

\begin{figure}[htb]
	\begin{subfigure}{.33\textwidth}
		\centering
		\includegraphics[width=1\linewidth]{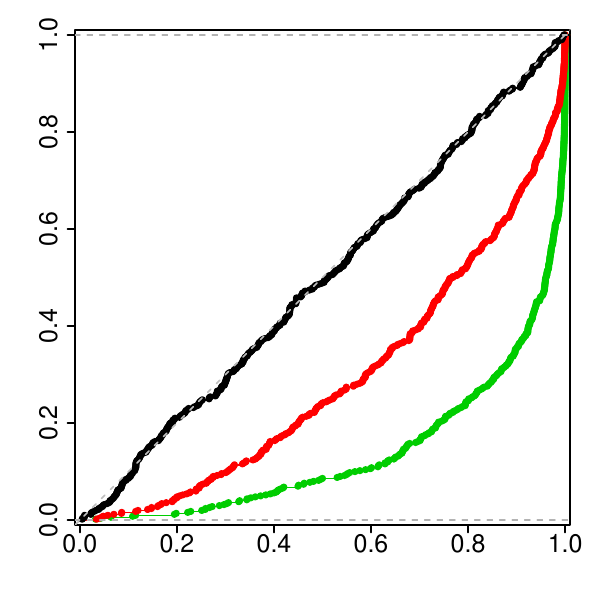}
		\caption{normal + constant}
	\end{subfigure}%
	\begin{subfigure}{.33\textwidth}
		\centering
		\includegraphics[width=1\linewidth]{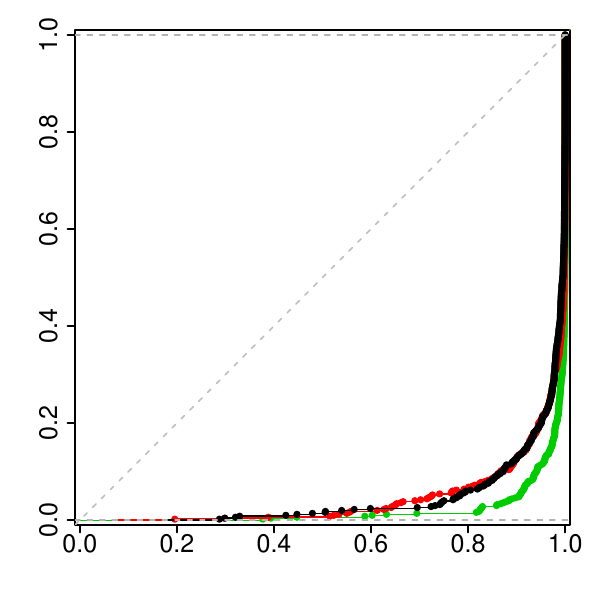}
		\caption{normal + lognormal}
	\end{subfigure}%
	\begin{subfigure}{.33\textwidth}
		\centering
		\includegraphics[width=1\linewidth]{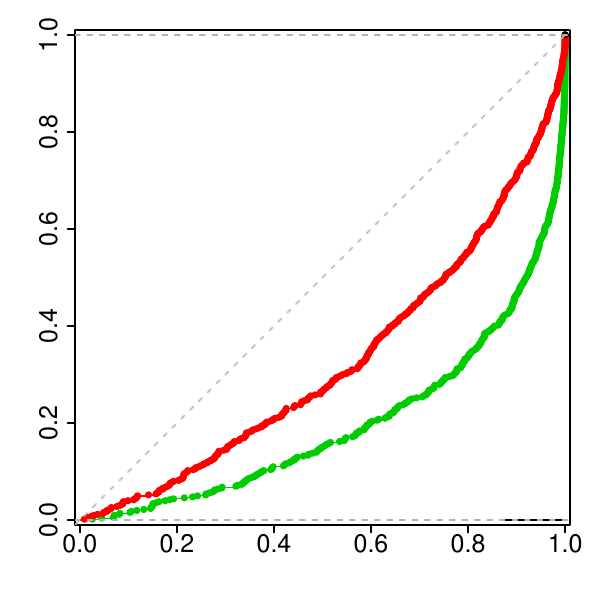}
		\caption{normal + outlier}
	\end{subfigure}
	\begin{subfigure}{.33\textwidth}
		\centering
		\includegraphics[width=1\linewidth]{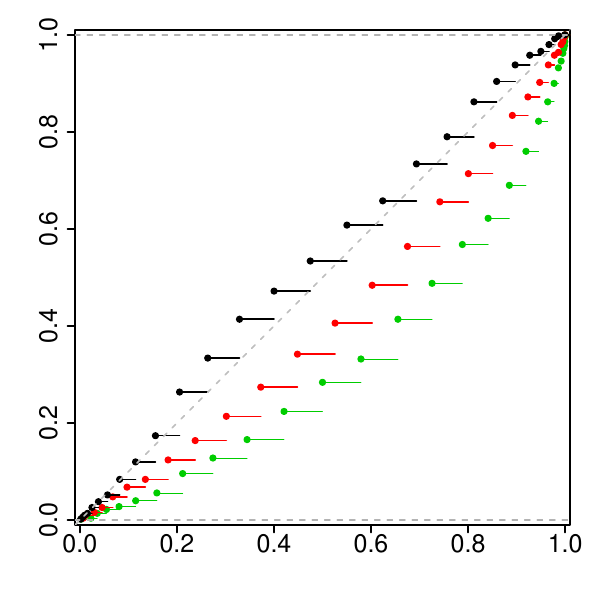}
		\caption{binary + constant} 
	\end{subfigure}
	\begin{subfigure}{.33\textwidth}
		\centering
		\includegraphics[width=1\linewidth]{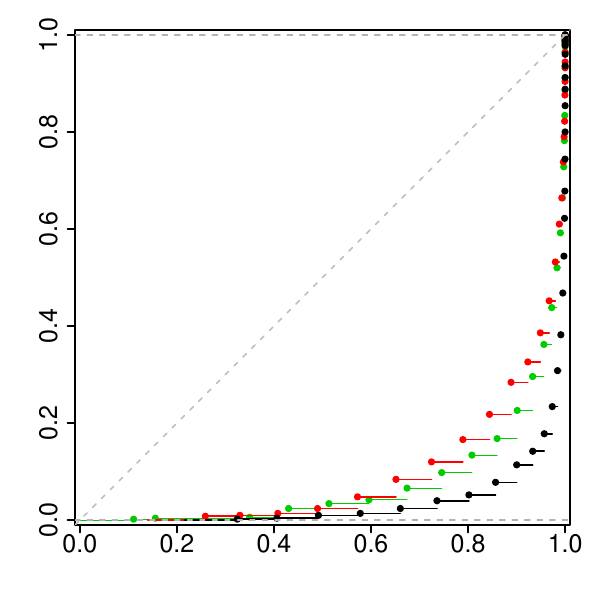}
		\caption{binary + lognormal} 
	\end{subfigure}%
	\begin{subfigure}{.33\textwidth}
	\centering
	\includegraphics[width=1\linewidth]{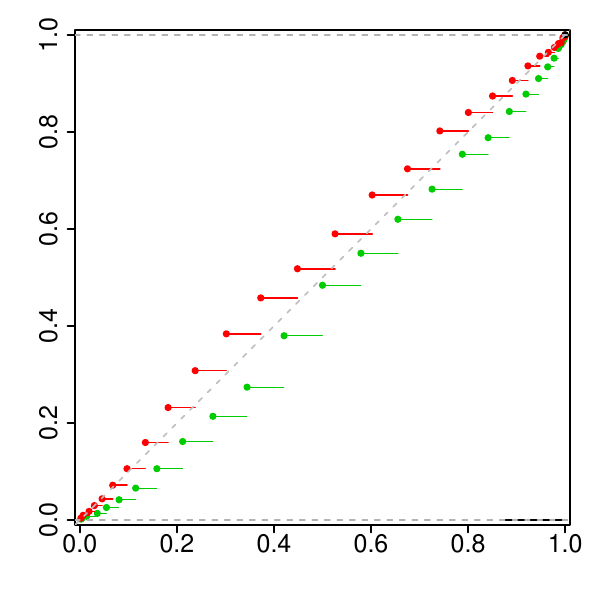}
	\caption{binary + outlier}
	\end{subfigure}
	\caption{
	Empirical distributions of the $p$-value in Corollary \ref{cor:sen_pval_set} under various data generating models in a matched pair study with $I=200$ and equal set size $n = 2$. 
	The caption of each subfigure describes the models for generating potential outcomes and hidden biases as in \eqref{eq:gen_outcome} and \eqref{eq:gen_hidden_bias}, respectively. 
	The black, red and green lines show the empirical distribution functions of the $p$-value under sensitivity analysis for $\Gamma_{(I)}^\sta$, $\Gamma_{(0.95I)}^\sta$ and $\Gamma_{(0.9I)}^\sta$, respectively. 
	The grey dashed lines show the distribution function of the uniform distribution on the interval $(0,1)$. 
	Note that in (c) and (f), the $p$-value from the sensitivity analysis based on $\Gamma_{(I)}^\sta$ are all $1$, under which its empirical distribution becomes a point mass at $1$ and cannot be easily seen in the figure. 
	}\label{fig:pair_I200}
\end{figure}

\begin{figure}[htbp]
	\begin{subfigure}{.33\textwidth}
		\centering
		\includegraphics[width=1\linewidth]{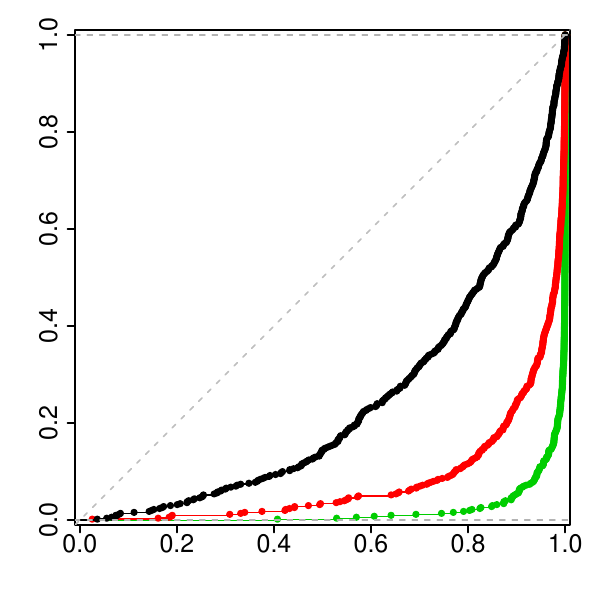}
		\caption{normal + constant}
	\end{subfigure}%
	\begin{subfigure}{.33\textwidth}
		\centering
		\includegraphics[width=1\linewidth]{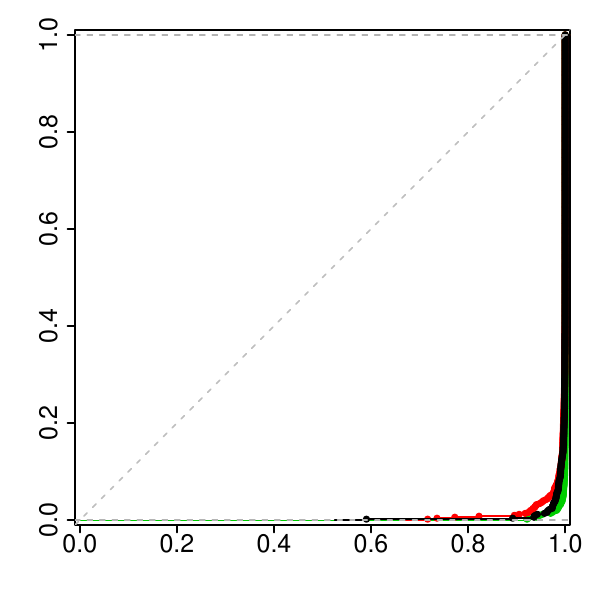}
		\caption{normal + lognormal}
	\end{subfigure}%
	\begin{subfigure}{.33\textwidth}
		\centering
		\includegraphics[width=1\linewidth]{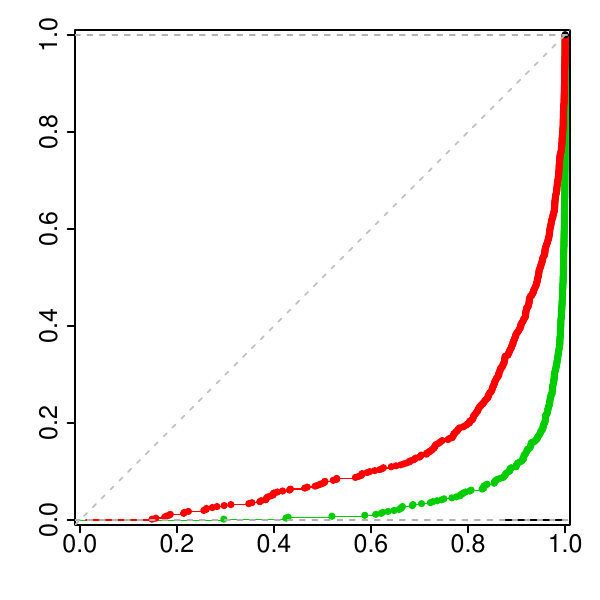}
		\caption{normal + outlier}
	\end{subfigure}
	\begin{subfigure}{.33\textwidth}
		\centering
		\includegraphics[width=1\linewidth]{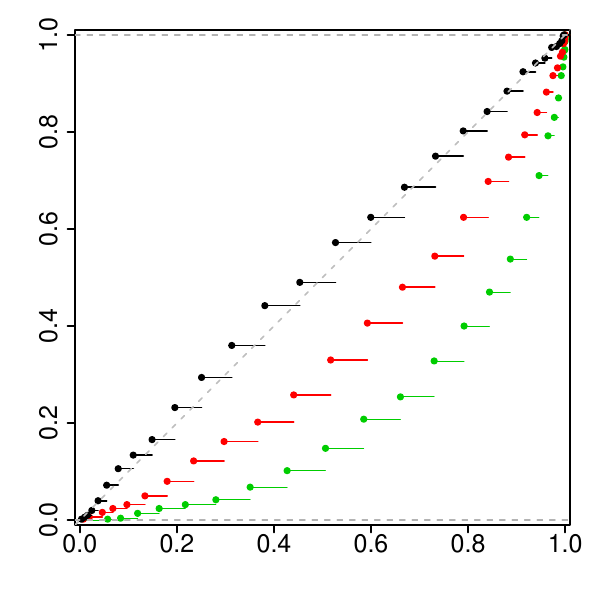}
		\caption{binary + constant} 
	\end{subfigure}
	\begin{subfigure}{.33\textwidth}
		\centering
		\includegraphics[width=1\linewidth]{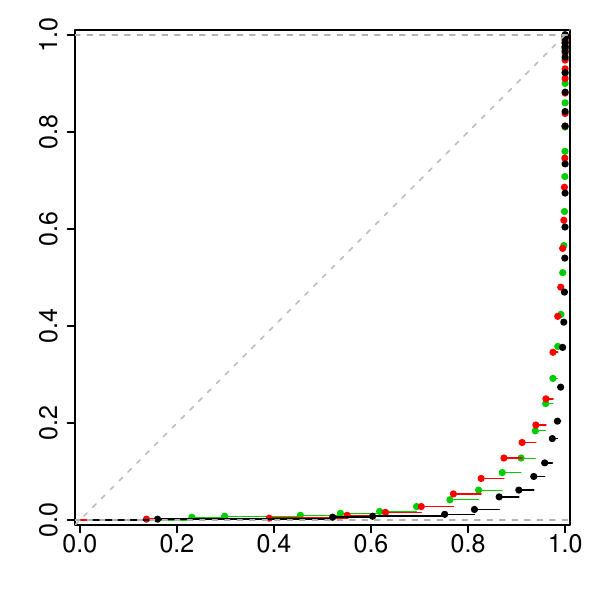}
		\caption{binary + lognormal} 
	\end{subfigure}%
	\begin{subfigure}{.33\textwidth}
	    \centering
	    \includegraphics[width=1\linewidth]{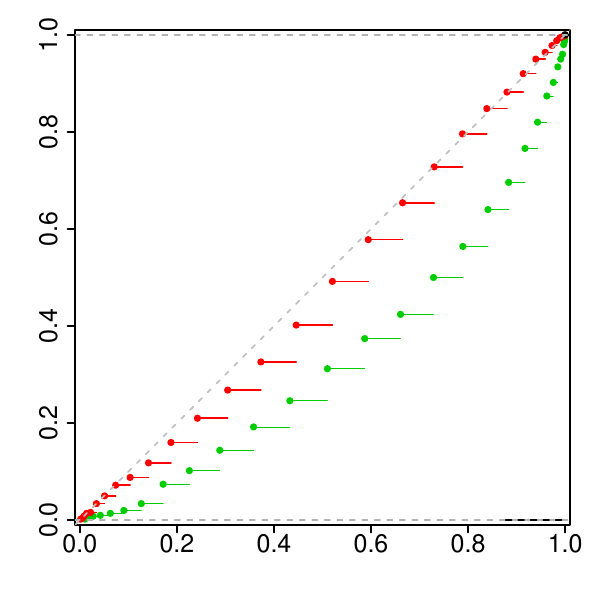}
	    \caption{binary + outlier}
	\end{subfigure}
	\caption{
	Empirical distributions of the $p$-value in Corollary \ref{cor:sen_pval_set} under various data generating models in a matched triple study with $I=200$ and equal set size $n = 3$. 
	The caption of each subfigure describes the models for generating potential outcomes and hidden biases as in \eqref{eq:gen_outcome} and \eqref{eq:gen_hidden_bias}, respectively. 
	The black, red and green lines show the empirical distribution functions of the $p$-value under sensitivity analysis for $\Gamma_{(I)}^\sta$, $\Gamma_{(0.95I)}^\sta$ and $\Gamma_{(0.9I)}^\sta$, respectively. 
	The grey dashed lines show the distribution function of the uniform distribution on the interval $(0,1)$. 
	Note that in (c) and (f), the $p$-value from the sensitivity analysis based on $\Gamma_{(I)}^\sta$ are all $1$, under which its empirical distribution becomes a point mass at $1$ and cannot be easily seen in the figure. 
	}\label{fig:triple_I200}
\end{figure}

\begin{figure}[htbp]
	\begin{subfigure}{.34\textwidth}
		\centering
		\includegraphics[width=0.97\linewidth]{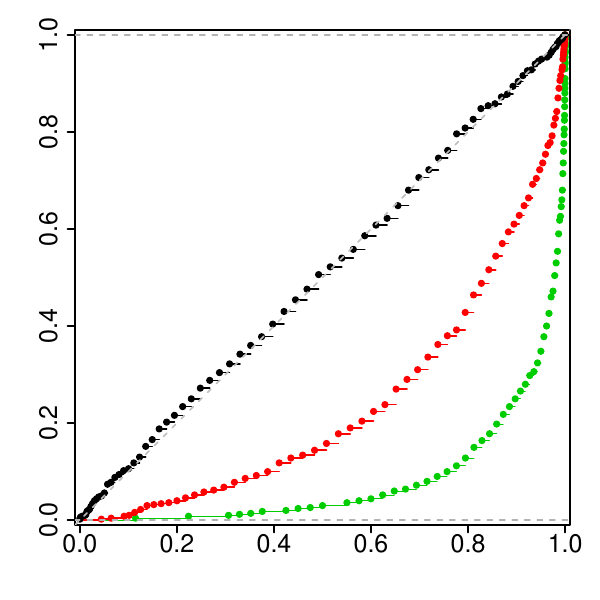}
		\caption{binary + constant (2)}
	\end{subfigure}%
	\begin{subfigure}{.34\textwidth}
		\centering
		\includegraphics[width=0.97\linewidth]{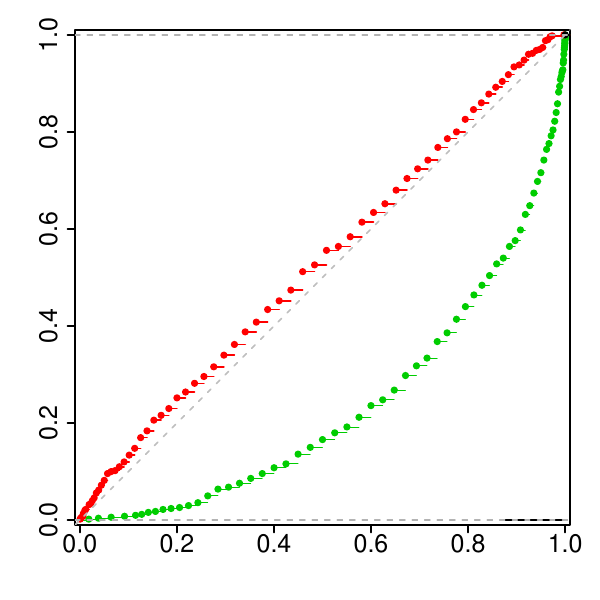}
		\caption{binary + outlier (2)}
	\end{subfigure}%
	\begin{subfigure}{.34\textwidth}
		\centering
		\includegraphics[width=0.97\linewidth]{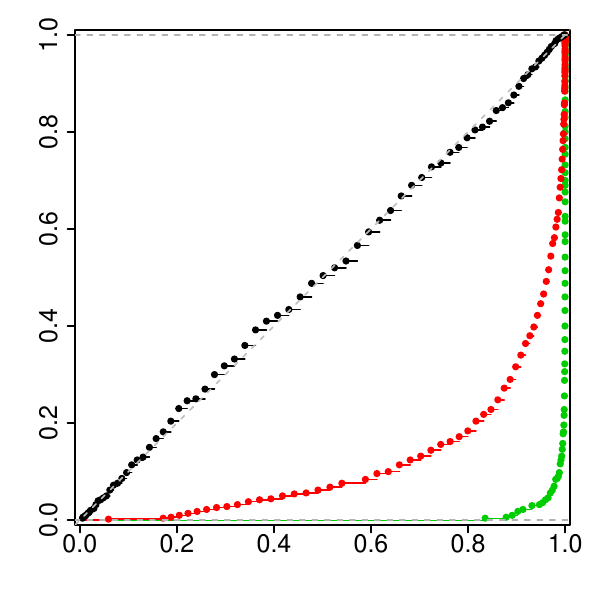}
		\caption{binary + constant (3)}
	\end{subfigure}
	\caption{
	Empirical distributions of the $p$-value in Corollary \ref{cor:sen_pval_set} under various data generating models in a matched study with $I=2000$ and equal set size. 
	The caption of each subfigure describes the models for generating potential outcomes and hidden biases  as in \eqref{eq:gen_outcome} and \eqref{eq:gen_hidden_bias}, respectively, with the number in the parentheses representing the matched set size. 
	The black, red and green lines show the empirical distribution functions of the $p$-value under sensitivity analysis for $\Gamma_{(I)}^\sta$, $\Gamma_{(0.95I)}^\sta$ and $\Gamma_{(0.9I)}^\sta$, respectively. 
	The grey dashed lines show the distribution function of the uniform distribution on the interval $(0,1)$. 
	Note that in (b), the $p$-value from the sensitivity analysis based on $\Gamma_{(I)}^\sta$ are all $1$, under which its empirical distribution becomes a point mass at $1$ and cannot be easily observed in the figure. 
	}\label{fig:I2000}
\end{figure}

}

\subsection{Sensitivity analysis using different {\lxr choices} of test statistics}\label{sec:simu_stat}
We 
investigate the power of our sensitivity analysis using various {\lxr choices} of test statistics. In particular, we will investigate how the inner trimming in the m-statistic affects the sensitivity analysis. We consider a matched pair study with $I=500$ pairs, assume that the treatment effects are constant $0.5$ across all units, and generate the control potential outcomes $Y_{ij}(0)$'s as i.i.d.~samples from Gaussian distribution with mean $0$ and variance $0.5$.

\begin{figure}[htb]
    \centering
        \includegraphics[width=.325\textwidth]{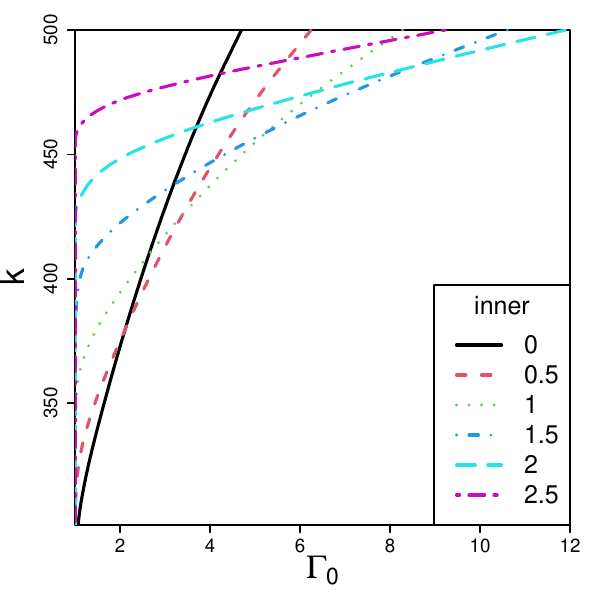}
    \caption{$95\%$ lower confidence limits for quantiles of hidden biases in a matched pair study assuming that Fisher's null $H_0$ holds, averaging over all simulated datasets. 
    The data generation is described in 
    {\lxr Appendix \ref{sec:simu_stat}}.
    Different lines correspond to m-statistics with various inner trimming. 
    }
    \label{fig:sen_simu_diff_test}
\end{figure}

We use the m-statistic discussed in Section \ref{sec:test_rand_sen} to conduct sensitivity analysis for quantiles of hidden biases. 
We fix the outer trimming at $3$, and vary the inner trimming from 0 to 2.5 by a grid of 0.5. 
Assuming Fisher's null holds, 
Figure \ref{fig:sen_simu_diff_test} shows the $95\%$ lower confidence limits for quantiles of hidden biases using various inner trimming, 
averaging over 200 simulated datasets. 
From Figure \ref{fig:sen_simu_diff_test}, 
the average lower confidence limit for the maximum hidden bias first increases when the inner trimming increases, and it achieves its maximum when inner trimming equals to 2. 
It then decreases when inner trimming further increases to 2.5. 
This is intuitive. On the one hand, inner trimming focuses more on matched pairs with larger treatment effects and thus improves the power of sensitivity analysis for maximum hidden bias; on the other hand, it also reduces the effective sample size. 
With large sample size, more inner trimming generally improves the power of conventional sensitivity analysis that focuses on the maximum hidden bias. 

\begin{table}[htb]
    \centering
    \small
    \caption{Average of the $95\%$ lower confidence limits for all quantiles of hidden biases from a matched pair study,  averaging over all simulated datasets from the model described in 
    {\lxr Appendix \ref{sec:simu_stat}}.
    With $\tilde{\Gamma}_{(k)}$ denoting the lower confidence limit for $\Gamma^\sta_{(k)}$, the 2nd--6th columns report the average values of $g^{-1}( I^{-1}\sum_{k=1}^I g(\tilde{\Gamma}_{(k)}) )$ over all simulated datasets, where $g$ is specified in the 1st column. 
    }
    \label{tab:average_sensitivity}
    \begin{tabular}{ccccccccc}
    \toprule
    inner & $0$ & $0.5$ & $1$ & $1.5$ & $2$ & $2.5$\\
    \midrule
    $g(x)=x$
    & 2.591072 & 2.932655& 3.074130& 2.930389 & 2.460424 & 1.652486\\[5pt]
    $g(x) = \log(x)$
    & 2.345861 & 2.459622 & 2.287098 & 1.922898 & 1.543919 & 1.244339 \\[5pt]
    $g(x) = x/(1+x)$
    & 2.256807 & 2.291832 & 2.056154 & 1.706557 & 1.401990 & 1.190056 \\
    \bottomrule
\end{tabular}
\end{table}
However, as shown in Figure \ref{fig:sen_simu_diff_test}, greater inner trimming, although improves the power of sensitivity analysis for maximum hidden bias, can deteriorate the power of sensitivity analysis for quantiles of hidden biases. 
Specifically, when the inner trimming is 2, although the sensitivity analysis for maximum hidden bias becomes most powerful, with causal effects being insensitive to hidden biases at {\lxr a} magnitude {\lxr of} about 12, the sensitivity analysis for quantiles of hidden biases becomes less powerful. Specifically, the lower confidence limit for the $430/500 = 86\%$ quantiles of hidden biases is almost the noninformative 1. 
From Figure \ref{fig:sen_simu_diff_test}, in general, the smaller the inner trimming, the more informative the sensitivity analysis is for lower quantiles of hidden biases. 
Obviously, there is a trade-off for the choice of inner trimming. 
In Table 2, we report the average value of the lower confidence limits (or their transformations) for all quantiles of hidden biases, averaging over all simulated datasets. 
From the discussion at the end of Section \ref{sec:simultaneous_pair}, these are actually $95\%$ simultaneous lower confidence limits for the average hidden biases, averaging over all the simulations. 
From Table \ref{tab:average_sensitivity}, inner trimming at 
$0.5$ 
seems to provide the most informative sensitivity analysis for the average hidden biases.

\subsection{Simulation for sensitivity analysis under various treatment effect patterns}\label{sec:simu_effect}

\begin{figure}[htbp]
    \centering
        \centering
        \includegraphics[width=.33\textwidth]{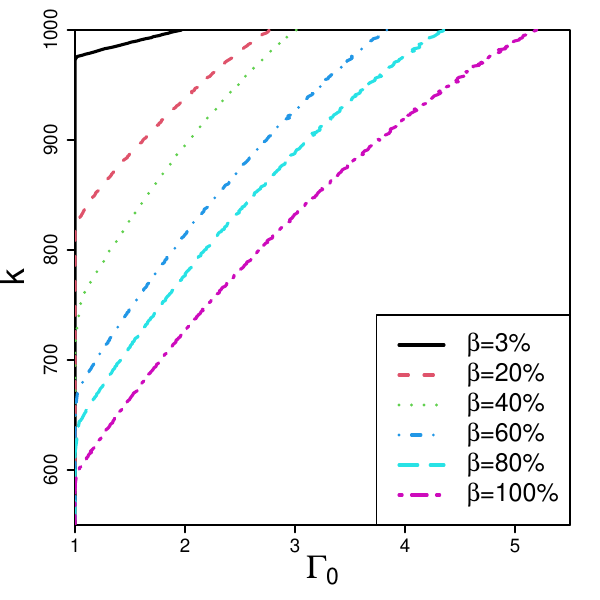}
    \caption{$95\%$ lower confidence limits for quantiles of hidden biases in a matched pair study assuming that Fisher's null $H_0$ holds, averaging over all simulated datasets. 
    The data are generated from the model described in 
    {\lxr Appendix}
    \ref{sec:simu_effect} with various proportion $\beta$ of units that are affected by the treatment. 
    }
    \label{fig:sen_simu_treatment_pattern}
\end{figure}

We consider a matched pair study where the individual treatment effects for units across sets have the following form: 
\begin{align*}
   Y_{ij}(1)-Y_{ij}(0)=
\begin{cases}
c/\beta&1\le i \le \beta I,j=1,2;\\
0& \beta I+1\le i \le I,j=1,2,
\end{cases} 
\end{align*}
where $c>0$ and $\beta >0$. 
Specifically, we assume that $1-\beta$ proportion of the units have zero treatment effects, while the remaining $\beta$ proportion of units have positive effects. 
Note that the average treatment effect is fixed at $c$, independent of $\beta$. 
Intuitively, when $\beta=1$, the treatment has a constant effect $c$ for all units. 
When $\beta$ becomes smaller, the treatment effects become more heterogeneous and are nonzero for a smaller proportion of the units, with the total treatment effects kept fixed. 
In the following, we will fix $I=1000$, and generate control potential outcomes $Y_{ij}(0)$'s as 
{\lxr i.i.d.~}samples from Gaussian distribution $\mathcal{N}(0, 0.5)$. 
Intuitively, the treated-minus-control difference within each matched pair will follow a Gaussian distribution with mean $c/\beta$ or $0$ and standard deviation 1. 
We use the difference-in-means as our test statistic in \eqref{eq:test_stat}, which corresponds to the m-statistic discussed in Section \ref{sec:test_rand_sen} {\lxr and Appendix \ref{sec:choice_test_stat_appendix}} {\lxr without any inner or outer trimming.}

Assuming Fisher's null holds, 
Figure \ref{fig:sen_simu_treatment_pattern} {\lxr shows} the $95\%$ lower confidence limits for the quantiles of hidden biases averaging over 100 simulated datasets when the average effect $c=0.5$, with the proportion of units with positive effects $\beta$ taking values $3\%, 20\%, 40\%, 60\%, 80\%,$ $100\%$. 
From Figure \ref{fig:sen_simu_treatment_pattern}, we can see that more homogeneous positive effects provide us more powerful sensitivity analysis results, under which the inferred causal effects will be more robust to {\lxr the} existence of unmeasured confounding. 
To explain away the observed association between treatment and outcome, 
the maximum hidden bias needed increases with $\beta$, 
and this is further manifested in the quantiles of hidden biases. 
Specifically, {\lxr when only 3\% of units are affected by the treatment, even with a large effect size,} the resulting study will be {\lxr highly} sensitive to hidden confounding if the investigator suspects that a few matched sets may have exceptionally large hidden biases.
From Figure \ref{fig:sen_simu_treatment_pattern}, the study loses power to detect significant treatment effects when about $3\%$ of the sets are suspected to have large hidden biases. 
Intuitively, a matched observational study with more homogeneous nonzero treatment effects across matched sets (given that the total effects are kept fixed) can provide us more powerful sensitivity analysis and thus more robust causal conclusions, especially when we suspect extreme hidden biases in some or a 
{\lxr certain}
proportion of the matched sets.

\section{Matching for studying the effect of smoking}\label{app:match}

We use the data from the 2005–2006 National Health and Nutrition Examination Survey, which are also available in \citet{bigmatch2020}. 
\begin{figure}[h]
    \centering
    \includegraphics[width=1\textwidth]{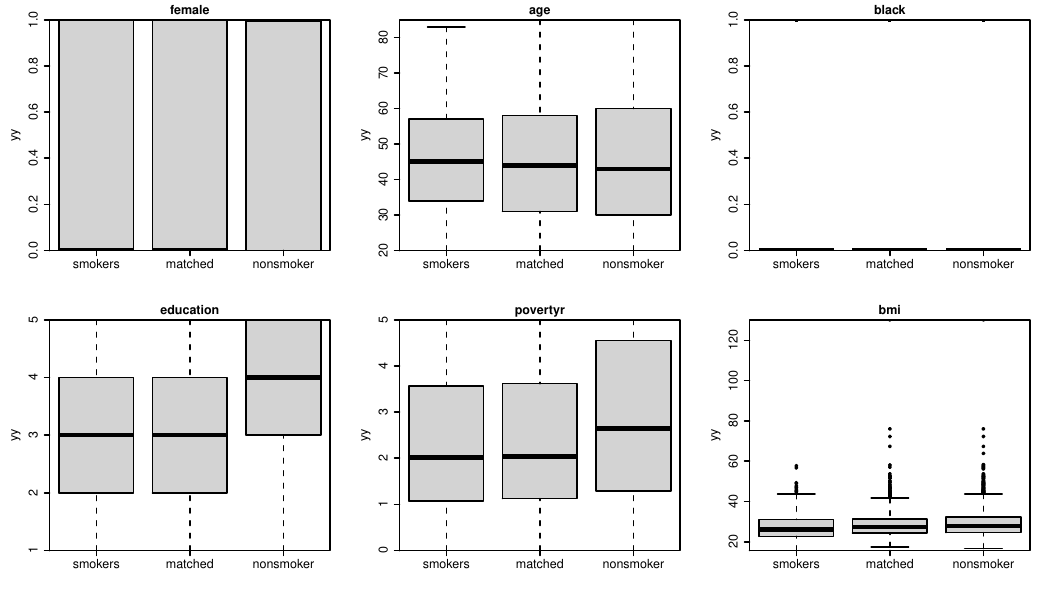}
    \caption{Box plots of covariates for smokers, matched nonsmokers and original nonsmokers. 
    These six plots, from left to right and from top to bottom, are for covariates gender, age, race, education level, household income level and body mass index, respectively. 
    }
    \label{fig:boxplot}
\end{figure}
The data include 2475 observations, including 521 daily smokers and 1963 nonsmokers. 
We perform the minimum-distance near-fine matching in \citet{YuRosenbaum2019}, and construct a matched comparison consisting of 512 smoker-nonsmoker matched {\lxr sets}.
The matching takes into account six covariates including gender, age, race, education level, household income level and body mass index; for a more detailed description of these covariates, see, e.g., \citet{bigmatch2020}. 
Figure \ref{fig:boxplot} shows the box plots of the six covariates for smokers, matched nonsmokers and original nonsmokers. 
From Figure \ref{fig:boxplot}, the balance of these covariates has been substantially improved by matching.

\end{document}